\NeedsTeXFormat{LaTeX2e}
\documentclass[10pt,reqno]{amsart}

\usepackage{hyperref}
\usepackage{latexsym,amsmath, bm}
\usepackage{enumerate}
\usepackage{amsfonts}
\usepackage{amssymb} 
\usepackage{geometry}
\usepackage{latexsym}
\usepackage{fixmath}
\usepackage{faktor}
\usepackage{mathtools}

\usepackage[T1]{fontenc}
\usepackage{fourier}
\usepackage{bbm}
\usepackage{color}

\usepackage{subcaption}

\usepackage{graphicx}
\usepackage{fullpage}

\usepackage[boxed, linesnumbered, noend, noline]{algorithm2e}
\SetKwInput{KwData}{Input}
\SetKwInput{KwResult}{Output}

\numberwithin{equation}{section}

\newcommand\kdef{k=\lceil n^\theta\rceil}

\newcommand{\cchen}{c_{\mathrm{ls}}}
\newcommand{\cchenges}{c_{\mathrm{CS}}}

\newcommand{\poi}{p_{01}}
\newcommand{\pii}{p_{11}}
\newcommand{\pio}{p_{10}}

\DeclareMathOperator*{\argmin}{arg\,min}

\newcommand{\cex}{c_{\mathrm{ex}}}
\newcommand{\ccentre}{c_{\mathrm{ex},2}}
\newcommand{\cbound}{c_{\mathrm{ex},1}}
\newcommand{\cyz}{c_{\mathrm{ex},0}}
\newcommand{\cchan}{c_{\mathrm{Sh}}}
\newcommand{\dchan}{d_{\mathrm{Sh}}}

\newcommand{\cn}{c_n}
\newcommand{\dn}{d_n}

\newcommand{\wx}[1]{\vW_{{#1}}}
\newcommand{\wxj}[2]{\vW_{{#1},{#2}}}

\renewcommand\log{\ln}
\renewcommand\subset{\subseteq}
\newcommand\disteq{\sim}

\newcommand\SPARC{\ASPIV}
\newcommand\ASPIV{{\tt SPARC}}
\newcommand\SPEX{{\tt SPEX}}
\newcommand\SPIV{{\tt SPIV}}
\newcommand\SPAX{\ASPIV}
\newcommand\DD{{\tt DD}}

\newcommand{\fz}{\mathfrak z}
\newcommand{\fU}{\mathfrak U}

\newcommand{\vX}{\vec X}

\renewcommand{\epsilon}{\eps}

\newcommand{\Gsp}{\mathbf G_{\mathrm{sc}}}
\newcommand{\Gsc}{\Gsp}
\newcommand{\Gcc}{\mathbf G_{\mathrm{cc}}}

\newcommand\vY{\vec Y}

\newcommand\vm{\vec m}

\newcommand\aSIGMA{\SIGMA'}
\newcommand\dSIGMA{\SIGMA''}
\newcommand\DSIGMA{\dSIGMA'}

\newcommand\optDs{\cD}

\newcommand\vU{\vec U}

\newcommand\nix{\,\cdot\,}

\newcommand\vW{\vec W}

\newcommand\dd{{\mathrm d}}

\renewcommand{\vec}[1]{\boldsymbol{#1}}

\newcommand\KL[2]{D_{\mathrm{KL}}\bc{{{#1}\|{#2}}}}
\newcommand\Kl[2]{D_{\mathrm{KL}}({#1}\|{#2})}

\newcommand\SIGMA{\vec\sigma}

\newtheorem{definition}{Definition}[section]
\newtheorem{claim}[definition]{Claim}

\newtheorem{remark}[definition]{Remark}
\newtheorem{theorem}[definition]{Theorem}
\newtheorem{lemma}[definition]{Lemma}
\newtheorem{proposition}[definition]{Proposition}
\newtheorem{corollary}[definition]{Corollary}

\newtheorem{fact}[definition]{Fact}

\newcommand\fE{\mathfrak{E}}

\newcommand\cA{\mathcal{A}}

\newcommand\cD{\mathcal{D}}

\newcommand\cE{\mathcal{E}}
\newcommand\cU{\mathcal{U}}
\newcommand\cN{\mathcal{N}}

\newcommand\cS{\mathcal{S}}

\newcommand\cI{\mathcal{I}}

\newcommand\cL{\mathcal{L}}
\newcommand\cM{\mathcal{M}}

\newcommand\cP{\mathcal{P}}
\newcommand\cX{\mathcal{X}}
\newcommand\cY{\mathcal{Y}}

\newcommand\cZ{\mathcal{Z}}

\def\cE{{\mathcal E}}

\newcommand\vZ{\vec Z}

\newcommand\eul{\mathrm{e}}
\newcommand\eps{\varepsilon}

\newcommand\ZZpos{\mathbb{Z}_{\geq0}}

\newcommand\Erw{\mathbb{E}}
\newcommand{\vecone}{\mathbb{1}}

\newcommand{\Bin}{{\rm Bin}}
\newcommand{\Hyp}{{\rm Hyp}}

\newcommand{\Be}{{\rm Be}}

\newcommand\bc[1]{\left({#1}\right)}
\newcommand\cbc[1]{\left\{{#1}\right\}}
\newcommand\bcfr[2]{\bc{\frac{#1}{#2}}}

\newcommand\brk[1]{\left\lbrack{#1}\right\rbrack}

\newcommand\norm[1]{\left\|{#1}\right\|}
\newcommand\abs[1]{\left|{#1}\right|}

\newcommand\QQ{\mathbb{Q}}
\newcommand\RR{\mathbb{R}}
\newcommand\RRpos{\RR_{\geq0}}
\def\?#1{}
\makeatletter
\def\whp{w.h.p\@ifnextchar-{.}{\@ifnextchar.{.\?}{\@ifnextchar,{.}{\@ifnextchar){.}{\@ifnextchar:{.:\?}{.\ }}}}}}
\makeatother

\makeatletter
\def\Whp{W.h.p\@ifnextchar-{.}{\@ifnextchar.{.\?}{\@ifnextchar,{.}{\@ifnextchar){.}{\@ifnextchar:{.:\?}{.\ }}}}}}
\makeatother

\newcommand\pr{\mathbb{P}} 
\renewcommand\Pr{\pr} 

\newcommand\Lem{Lemma}
\newcommand\Prop{Proposition}
\newcommand\Thm{Theorem}

\newcommand\Cor{Corollary}
\newcommand\Sec{Section}

\newcommand{\ceil}[1]{\left\lceil#1\right\rceil}

 \def\G{{\vec G}}

\def\ex{{\mathbb E}}
\def\pr{{\mathbb P}}

\newcommand{\remove}[1]{}

\newcommand{\zeroplus}{V_{0}^+}

\newcommand{\zeroplusi}[1]{V_{0}^+[{#1}]}

\begin{document}
	
\title{Noisy group testing via spatial coupling}
	
\thanks{Amin Coja-Oghlan and Lena Krieg are supported by DFG CO 646/3 and DFG CO 646/5.
	Max Hahn-Klimroth and Olga Scheftelowitsch are supported by DFG CO 646/5.}

\author{Amin Coja-Oghlan, Max Hahn-Klimroth, Lukas Hintze, Dominik Kaaser, Lena Krieg, Maurice Rolvien, Olga~Scheftelowitsch}
\address{Amin Coja-Oghlan, {\tt amin.coja-oghlan@tu-dortmund.de}, TU Dortmund, Faculty of Computer Science and Faculty of Mathematics, 12 Otto-Hahn-St, Dortmund 44227, Germany.}
\address{Max Hahn-Klimroth, {\tt maximilian.hahnklimroth@tu-dortmund.de}, TU Dortmund, Faculty of Computer Science, 12 Otto-Hahn-St, Dortmund 44227, Germany.}
\address{Lukas Hintze, {\tt lukas.rasmus.hintze@uni-hamburg.de}, Universit\"at Hamburg, Fakult\"at f\"ur Mathematik, Informatik und Naturwissenschaften, Informatik, Vogt-K\"olln-Str.\ 30, 22527 Hamburg, Germany.}
\address{Lena Krieg, {\tt lena.krieg@tu-dortmund.de}, TU Dortmund, Faculty of Computer Science, 12 Otto-Hahn-St, Dortmund 44227, Germany.}
\address{Maurice Rolvien, {\tt maurice.rolvien@tu-dortmund.de}, TU Dortmund, Faculty of Computer Science, 12 Otto-Hahn-St, Dortmund 44227, Germany.}
\address{Dominik Kaaser, {\tt dominik@kaaser.at}, TU Hamburg, Institute for Data Engineering, E-19, Blohmstra\ss e 15, 21079 Hamburg, Germany}
\address{Olga Scheftelowitsch, {\tt olga.scheftelowitsch@tu-dortmund.de}, TU Dortmund, Faculty of Computer Science, 12 Otto-Hahn-St, Dortmund 44227, Germany.}

\begin{abstract}%
	We study the problem of identifying a small set $k\sim n^\theta$, $0<\theta<1$, of infected individuals within a large population of size $n$ by testing groups of individuals simultaneously.
	All tests are conducted concurrently.
	The goal is to minimise the total number of tests required.
	In this paper we make the (realistic) assumption that tests are noisy, i.e.\ that a group that contains an infected individual may return a negative test result or one that does not contain an infected individual may return a positive test results with a certain probability.
	The noise need not be symmetric.
	We develop an algorithm called \SPARC\ that correctly identifies the set of infected individuals up to $o(k)$ errors with high probability with the asymptotically minimum number of tests.
	Additionally, we develop an algorithm called \SPEX\ that exactly identifies the set of infected individuals \whp with a number of tests that matches the information-theoretic lower bound for the constant column design, a powerful and well-studied test design.
\hfill {\em MSc: 05C80,	62B10, 68P30, 68R05}
\end{abstract}

\maketitle

\section{Introduction}\label{sec_intro}

\subsection{Background and motivation.}\label{sec_motiv}
Few mathematical disciplines offer as abundant a supply of easy-to-state but hard-to-crack problems as combinatorics does.
Group testing is a prime example.
The problem goes back to the 1940s~\cite{Dorfman_1943}.
Within a population of size $n$ we are to identify a subset of $k$ infected individuals.
To this end we test groups of individuals simultaneously.
In an idealised scenario called `noiseless group testing' each test returns a positive result if and only if at least one member of the group is infected.
All tests are conducted in parallel.
The problem is to devise a (possibly randomised) test design that minimises the total number of tests required.

Noiseless group testing has inspired a long line of research, which has led to optimal or near-optimal results for several parameter regimes~\cite{Aldridge_2019,opt}.
But the assumption of perfectly accurate tests is unrealistic.
Real tests are noisy~\cite{Long}.
More precisely, in medical terms the {\em sensitivity} of a test is defined as the probability that a test detects an actual infection, viz.\ that a group that contains an infected individual displays a positive test result.
Moreover, the {\em specificity} of a test refers to the probability that a healthy group returns a negative test result.
If these accuracies are reasonable high (say 99\%), one might be tempted to think that noiseless testing provides a good enough approximation.
Yet remarkably we will discover that this is far from correct.
Even a seemingly tiny amount of noise has an enormous impact not only on the number of tests required, but also on the choice of a good test design; we will revisit this point in \Sec~\ref{sec_sim}.
Hence, in group testing, like in several other inference problems, the presence of noise adds substantial mathematical depth.
As a rough analogy think of error-correcting linear codes.
In the absence of noise the decoding problem just boils down to solving linear equations.
By contrast, the noisy version, the closest vector problem, is NP-hard~\cite{Dinur}.

In the present paper we consider a very general noise model that allows for arbitrary sensitivities and specificities.
To be precise, we assume that if a test group contains an infected individual, then the test displays a positive result with probability $p_{11}$ and a negative result with probability $p_{10}=1-p_{11}$.
Similarly, if the group consists of healthy individuals only, then the test result will display a negative outcome with probability $p_{00}$ and positive result with probability $p_{01}=1-p_{00}$.
Every test is subjected to noise independently.

Under the common assumption that the number $k$ of infected individuals scales as a power $k\sim n^\theta$ of the population size $n$ with an exponent $0<\theta<1$ we contribute new {\em approximate} and {\em exact recovery algorithms} \SPARC\ and \SPEX.
These new algorithms come with randomised test designs.
We will identify a threshold $m_{\SPARC}=m_{\SPARC}(n,k,\vec p)$ such that \SPARC\ correctly identifies the set of infected individuals up to $o(k)$ errors with high probability over the choice of the test design, provided that at least $(1+\eps)m_{\SPARC}$ tests are deployed.
\SPARC\ is efficient, i.e.\ has polynomial running time in terms of $n$.
By contrast, we will prove that with $(1-\eps)m_{\SPARC}$ tests it is {\em impossible} to identify the set of infected individuals up to $o(k)$ errors, regardless the choice of test design or the running time allotted.
In other words, \SPARC\ solves the approximate recovery problem optimally.

In addition, we develop a polynomial time algorithm \SPEX\ that correctly identifies the status of {\em all} individuals \whp, provided that at least $(1+\eps)m_{\SPEX}$ tests are available, for a certain $m_{\SPEX}(n,k,\vec p)$.
Exact recovery has been the focus of much of the previous literature on group testing~\cite{Aldridge_2019}.
In particular, for noisy group testing the best previous exact recovery algorithm is the {\tt DD} algorithm from~\cite{Maurice}.
{\tt DD} comes with a simple randomised test design called the {\em constant column design}.
Complementing the positive result on \SPEX, we show that on the constant column design exact recovery is information-theoretically impossible with $(1-\eps)m_{\SPEX}$ tests.
As a consequence, the number $m_{\SPEX}$ of tests required by \SPEX\ is an asymptotic lower bound on the number of tests required by {\em any} algorithm on the constant column design, including {\tt DD}.
Indeed, as we will see in \Sec~\ref{sec_sim}, for most choices of the specificity/sensitivity and of the infection density $\theta$, \SPEX\ outperforms {\tt DD} dramatically.

Throughout the paper we write $\vec p=(p_{00},p_{01},p_{10},p_{11})$ for the noisy channel to which the test results are subjected.
We may assume without loss that
\begin{align}\label{eqnoise}
	p_{11}&>p_{01},
\end{align}
i.e.\ that a test is more likely to display a positive result if the test group actually contains an infected individual; for otherwise we could just invert the test results.
A test design can be described succinctly by a (possibly random) bipartite graph $G=(V\cup F,E)$, where $V$ is a set of $n$ individuals and $F$ is a set of $m$ tests.
Write $\SIGMA\in\{0,1\}^V$ for the (unknown) vector of Hamming weight $k$ whose $1$-entries mark the infected individuals.
Further, let $\aSIGMA\in\{0,1\}^F$ be the vector of {\em actual test results}, i.e.\ $\aSIGMA_a=1$ if and only if $\SIGMA_v=1$ for at least one individual $v$ in test $a$.
Finally, let $\dSIGMA\in\{0,1\}^F$ be the vector of {\em displayed tests results}, where noise has been applied to $\aSIGMA$, i.e.,
\begin{align}\label{eqnoisemodel}
		\pr\brk{\dSIGMA_a=\sigma''\mid G,\,\aSIGMA_a=\sigma'}&=p_{\sigma'\,\sigma''}&&\mbox{independently for every }a\in F.
	\end{align}
The objective is to infer $\SIGMA$ from $\dSIGMA$ given $G$.
As per common practice in the group testing literature, we assume throughout that $\kdef$ and that $k$ and the channel $\vec p$ are known to the algorithm~\cite{Aldridge_2019}.%
		\footnote{These assumption could be relaxed at the expense of increasing the required number of tests (details omitted).}

\subsection{Approximate recovery}\label{sec_intro_apx}
The first main result provides an algorithm that identifies the status of all but $o(k)$ individuals correctly with the optimal number of tests.
For a number $z\in[0,1]$ let $h(z)=-z\log(z)-(1-z)\log(1-z)$.
Further, for $y,z\in[0,1]$ let $\KL yz=y\log(y/z)+(1-y)\log((1-y)/(1-z))$ signify the Kullback-Leibler divergence. 
We use the convention that $0\log0=0\log\frac00=0$.
Given the channel $\vec p$ define
\begin{align}\label{eqPhil}
	\phi&=\phi(\vec p)=\frac{h(p_{00})-h(p_{10})}{p_{00}-p_{10}}\enspace,&
	\cchan&=\cchan(\vec p)=\frac1{\KL{p_{10}}{(1-\tanh(\phi/2))/2}}\enspace.
\end{align}
The value $1/\cchan=\KL{p_{10}}{(1-\tanh(\phi/2))/2}$ equals the capacity of the $\vec p$-channel~\cite[\Lem~F.1]{Maurice}.
Let $$m_{\SPARC}(n,k,\vec p)=\cchan k\log(n/k).$$

\begin{theorem}\label{thm_alg_apx}
	For any $\vec p$, $0<\theta<1$ and $\eps>0$ there exists $n_0=n_0(\vec p,\theta,\eps)$ such that for every $n>n_0$ there exist a randomised test design $\Gsc$ with $m\leq(1+\eps)m_{\SPARC}(n,k,\vec p)$ tests and a deterministic polynomial time inference algorithm $\ASPIV$ such that
	\begin{align}\label{eq_alg_apx}
		\pr\brk{\|\ASPIV(\Gsc,\dSIGMA_{\Gsc})-\SIGMA\|_1<\eps k}>1-\eps.
	\end{align}
\end{theorem}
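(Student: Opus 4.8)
The plan is to pair a \emph{spatially coupled} test design with a greedy decoder that sweeps through the population one compartment at a time, peeling off already-classified individuals as it goes, and then to control the decoder by coupling the local inference task faced at each compartment to an idealised, genie-aided problem whose per-test error exponent is exactly the channel capacity $1/\cchan$ that appears in $m_{\SPARC}$. The role of spatial coupling here is the usual one: a plain design forces an algorithm to classify everybody simultaneously, which wastes information, whereas the coupled design lets the decoder classify each individual using roughly the right slice of its tests, which is what it takes to saturate the capacity bound efficiently.

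\emph{The design.} I would take $\ell=\ell(\eps)$ equally sized compartments $V_1,\dots,V_\ell$ on a line (with a truncation or wrap-around convention at the ends) and test groups $F_1,\dots,F_\ell$; an individual of $V_i$ joins $\Delta=\Theta(\log(n/k))$ tests chosen from the window $F_{i-s+1}\cup\cdots\cup F_i$, so that a test in $F_j$ contains individuals from $V_j,\dots,V_{j+s-1}$. Two parameters must be tuned: the window width $s=s(\eps)$, and the mean number $c=\Delta k/m$ of infected individuals per test, which I would fix at the value that maximises the per-test information, so that by \cite[\Lem~F.1]{Maurice} the effective local rate comes out to $1/\cchan$ when $m=(1+\eps)m_{\SPARC}$. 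A small \emph{seed} attaches extra tests to $V_1,\dots,V_s$, enough that these compartments can be decoded with no help from peeling; the bulk uses $(1+\eps/2)m_{\SPARC}$ tests and the seed contributes only a lower-order $o(m_{\SPARC})$, so the total is at most $(1+\eps)m_{\SPARC}$. The algorithm $\SPARC$ keeps an estimate $\hat\SIGMA$ and processes $V_1,\dots,V_\ell$ from left to right: at compartment $i$ it discards every test one of whose already-decoded neighbours has been declared infected, and for each $v\in V_i$ it forms a log-likelihood score $\Lambda_v$ from the surviving tests --- weighting a surviving test in $F_j$ by how much of its window has already been decoded, since a test lying deep in the decoded zone is the more trustworthy --- and sets $\hat\SIGMA_v=1$ iff $\Lambda_v$ exceeds a threshold calibrated to $\phi$ and to the prior $k/n$. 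All steps run in polynomial time, and the seed compartments are handled identically using their extra tests.

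\emph{The analysis.} The heart of the proof is a wave-propagation estimate: \whp\ over $\Gsc$ and the noise, for every $i$ the number of individuals of $V_i$ that $\SPARC$ mis-classifies is at most $\eta k/\ell$ for a small $\eta=\eta(\eps)<\eps$. I would prove this by induction on $i$, with the seed as the base case. For the step, condition on compartments $1,\dots,i-1$ each carrying at most $\eta k/\ell$ errors. Then for a typical $v\in V_i$: (a) a Chernoff/union bound shows that only $o(\Delta)$ of $v$'s tests are corrupted by an \emph{earlier} mistake --- the harmful case being a missed earlier infection, which can turn a saturated test into a spuriously informative one and so manufacture false positives in $V_i$ --- so the corrupted tests shift $\Lambda_v$ negligibly; (b) conditioning on the statuses of everybody other than $v$ (the genie) and on the typical profile of how much of each of $v$'s tests has been decoded, $\Lambda_v$ becomes a sum of $\Theta(\log(n/k))$ bounded independent increments whose mean separates the two hypotheses by the capacity gap, so a large-deviations estimate gives $\pr[v\text{ mis-classified}]\le\exp(-(1+\Omega(\eps))\log(n/k))=o(k/n)$. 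Hence the expected number of errors in $V_i$ is $o(k/\ell)$ plus a feedback term linear in the earlier error count with coefficient strictly below $1$; a bounded-differences inequality turns this expectation bound into a \whp\ bound, closing the induction. Summing over the $\ell$ compartments yields $\|\SPARC(\Gsc,\dSIGMA_{\Gsc})-\SIGMA\|_1\le\eta k<\eps k$, which is \eqref{eq_alg_apx}.

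\emph{The main obstacle.} The step I expect to be hardest is showing that the feedback in the induction is genuinely contractive, i.e.\ that the few mistakes in compartments $<i$ do not snowball as the decoding wave advances. This requires a carefully chosen threshold together with honest bookkeeping of how missed infections (which inflate scores in later compartments) and spurious infections (which merely cost a discarded, hence wasted, test) propagate through the peeling; picking a large enough window $s$ and a sub-critical $\eta$ is the crux. A secondary delicate point is that the large-deviations rate in (b) must equal $1/\cchan$ \emph{exactly}, not just up to constants, which pins down both the optimal density $c$ and the test weights (morally, the tilt $\phi$) and forces a uniform treatment of the boundary compartments, where the decoded-zone profile differs from that in the bulk.
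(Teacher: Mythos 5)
Your proposal follows essentially the same route as the paper: a ring of compartments with a seed decoded by a simple (DD-type) algorithm, a left-to-right sweep that scores each individual by a weighted sum over the tests containing no already-declared-infected neighbour (the weights being precisely the one-step BP log-likelihood increments, graded by how much of each test's window is already decoded), a large-deviations computation showing the misclassification exponent matches the channel capacity $1/\cchan$ exactly, and a union-bound expansion argument showing the per-compartment error count contracts under the induction. The only substantive deviation is your choice of a constant number of compartments $\ell(\eps)$: the paper instead takes $\ell=\lceil\sqrt{\log n}\rceil$ and $s=\lceil\log\log n\rceil$ growing, so that the seed automatically costs $o(m_{\SPARC})$ tests, whereas with constant $\ell$ the seed holds a constant fraction $s/\ell$ of the infected individuals, its decoding budget is $\Theta(k\log(n/k))$ rather than lower order, and you must take $\ell(\eps)$ large enough that this fraction is below $\eps\cchan/(2\cDD)$ to stay within the test budget.
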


In other words, once the number of tests exceeds $m_{\SPARC}=\cchan k\log(n/k)$, \SPARC\ applied to the test design $\Gsc$ identifies the status of all but $o(k)$ individuals correctly \whp.
The test design $\Gsc$ employs an idea from coding theory called `spatial coupling'~\cite{Takeuchi_2011,Wang}.
As  we will elaborate in \Sec~\ref{sec_over}, spatial coupling blends a randomised and a topological construction.
A closely related design has been used in noiseless group testing~\cite{opt}.

The following theorem shows that \Thm~\ref{thm_alg_apx} is optimal in the strong sense that it is information-theoretically impossible to approximately recover the set of infected individuals with fewer than $(1-\eps)m_{\SPARC}$ tests on \emph{any} test design.
In fact, approximate recovery \whp is impossible even if we allow \emph{adaptive} group testing where tests are conducted one-by-one and the choice of the next group to be tested may depend on all previous results.

\begin{theorem}\label{thm_inf_apx}
	For any $\vec p$, $0<\theta<1$ and $\eps>0$ there exist $\delta=\delta(\vec p,\theta,\eps)>0$ and $n_0=n_0(\vec p,\theta,\eps)$ such that for all $n>n_0$, all adaptive test designs with $m\leq(1-\eps)m_{\SPARC}(n,k,\vec p)$ tests in total and any function $\cA:\{0,1\}^m\to\{0,1\}^n$ we have
	\begin{align}\label{eq_inf_apx}
		\pr\brk{\|\cA(\dSIGMA)-\SIGMA\|_1<\delta k}<1-\delta.
	\end{align}
\end{theorem}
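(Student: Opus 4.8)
The plan is to prove Theorem~\ref{thm_inf_apx} by the classical information-theoretic converse: even an adaptive, computationally unbounded procedure extracts from $m$ tests at most $m$ times the capacity of the $\vec p$-channel worth of information about $\SIGMA$; for $m\le(1-\eps)m_{\SPARC}$ this falls strictly short of the entropy of $\SIGMA$, and the residual uncertainty rules out recovery up to $\delta k$ errors (for $\delta$ small) via a rate--distortion (generalised Fano) estimate. All entropies and mutual informations are in nats, consistent with $\log=\ln$; $\SIGMA$ is taken uniform over the $\binom nk$ weight-$k$ vectors as usual; and throughout, $o(1)$ denotes a quantity tending to $0$ as $n\to\infty$ with the remaining parameters fixed.

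First I would set up the information inequality, starting from $H(\SIGMA)=\log\binom nk\ge k\log(n/k)$. Fixing any adaptive design --- which, by conditioning on its internal randomness, we may take deterministic given the test history --- write $\dSIGMA_1,\dots,\dSIGMA_m$ for the displayed outcomes in execution order and apply the chain rule $I(\SIGMA;\dSIGMA)=\sum_{t=1}^m I(\SIGMA;\dSIGMA_t\mid\dSIGMA_1,\dots,\dSIGMA_{t-1})$, which is valid no matter how the identity of the $t$-th test depends on the history. Conditioned on any fixed history, the $t$-th test group is determined, so its true outcome $\aSIGMA_t$ is a deterministic bit-valued function of $\SIGMA$, and by \eqref{eqnoisemodel} $\dSIGMA_t$ depends on $\SIGMA$ only through $\aSIGMA_t$; hence $I(\SIGMA;\dSIGMA_t\mid\dSIGMA_1,\dots,\dSIGMA_{t-1})\le I(\aSIGMA_t;\dSIGMA_t\mid\dSIGMA_1,\dots,\dSIGMA_{t-1})$, which is a single use of the $\vec p$-channel and hence at most its capacity $1/\cchan=\KL{p_{10}}{(1-\tanh(\phi/2))/2}$ by \cite[\Lem~F.1]{Maurice}. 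Summing, $I(\SIGMA;\dSIGMA)\le m/\cchan\le(1-\eps)m_{\SPARC}/\cchan=(1-\eps)k\log(n/k)$, so $H(\SIGMA\mid\dSIGMA)\ge\eps\,k\log(n/k)$.

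Next I would run the generalised Fano argument. Suppose towards a contradiction that $\pr[\|\cA(\dSIGMA)-\SIGMA\|_1<\delta k]\ge1-\delta$; put $\hat\SIGMA=\cA(\dSIGMA)$ and $E=\vecone\{\|\hat\SIGMA-\SIGMA\|_1\ge\delta k\}$. As $\hat\SIGMA$ is a function of $\dSIGMA$, $H(\SIGMA\mid\dSIGMA)\le H(\SIGMA\mid\hat\SIGMA)\le H(E)+\pr[E=0]H(\SIGMA\mid\hat\SIGMA,E=0)+\pr[E=1]H(\SIGMA\mid\hat\SIGMA,E=1)$. Given $\hat\SIGMA$ and $E=0$, the weight-$k$ vector $\SIGMA$ lies in the radius-$\delta k$ Hamming ball about $\hat\SIGMA\in\{0,1\}^n$, so $H(\SIGMA\mid\hat\SIGMA,E=0)\le\log\sum_{j<\delta k}\binom nj=(\delta+o(1))k\log(n/k)$; trivially $H(\SIGMA\mid\hat\SIGMA,E=1)\le\log\binom nk=(1+o(1))k\log(n/k)$; and $\pr[E=1]\le\delta$, $H(E)\le\log2$. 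Hence $H(\SIGMA\mid\dSIGMA)\le(2\delta+o(1))k\log(n/k)$, and comparing with the previous paragraph gives $\eps\le2\delta+o(1)$. Choosing $\delta=\delta(\vec p,\theta,\eps):=\eps/4$ and then $n_0=n_0(\vec p,\theta,\eps)$ large enough that the $o(1)$ term stays below $\eps/4$ for all $n>n_0$ yields the contradiction $\eps<\eps/2+\eps/4$. This establishes \eqref{eq_inf_apx} for all $n>n_0$, all adaptive designs, and all $\cA$.

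The capacity step is routine once \cite[\Lem~F.1]{Maurice} is granted, and adaptivity causes no difficulty: the chain rule and the single-bit-input observation are both insensitive to it, so nothing in the argument is special to the non-adaptive case. The one point I expect to require care is the Fano step, precisely because $\cA$ may output an arbitrary vector in $\{0,1\}^n$ rather than a weight-$k$ vector: the ambiguity set and its logarithm must be bounded honestly, and one must check that the $\log(1/\delta)$-type corrections hidden in the ball count are $o(k\log(n/k))$ --- which holds because $\log(n/k)=(1-\theta)\log n\to\infty$, and this is exactly why a single fixed small $\delta$ suffices.
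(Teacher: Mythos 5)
Your proposal is correct and follows essentially the same route as the paper: a reduction to deterministic adaptive designs, a chain-rule/data-processing argument showing $I(\SIGMA,\dSIGMA)\leq m/\cchan$ (the content of the paper's \Lem~\ref{lem_mutual_upper}), and a comparison of $H(\SIGMA\mid\dSIGMA)\geq\eps k\log(n/k)$ with what approximate recovery would allow. The only difference is that you spell out the generalised Fano/ball-counting step that the paper leaves implicit in the phrase ``which implies the assertion,'' and your treatment of that step (including the check that $\delta k\log(1/\delta)=o(k\log(n/k))$) is sound.
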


\Thm~\ref{thm_inf_apx} and its proof are a relatively simple adaptation of \cite[\Cor~2.3]{Maurice}, where $\cchan k\log(n/k)$ was established as a lower bound on the number of tests required for {\em exact} recovery.

\subsection{Exact recovery}\label{sec_intro_ex}
How many tests are required in order to infer the set of infected individuals precisely, not just up to $o(k)$ mistakes?
Intuitively, apart from an information-theoretic condition such as \eqref{eqPhil}, exact recovery requires a kind of local stability condition.
More precisely, imagine that we managed to correctly diagnose all individuals $y\neq x$ that share a test with individual $x$.
Then towards ascertaining the status of $x$ itself only those tests are relevant that contain $x$ but no other infected individual $y$; for the outcome of these tests hinges on the status of $x$.
Hence, to achieve exact recovery we need to make certain that it is possible to tell the status of $x$ itself from these tests \whp.

The required number of tests to guarantee local stability on the test design $\Gsc$ from \Thm~\ref{thm_alg_apx} can be expressed in terms of a mildly involved optimisation problem.
For $c,d>0$ and $\theta\in(0,1)$ let
\begin{align}\label{eqYinterval}
	\cY(c,d,\theta)&=\cbc{y\in[0,1]:cd(1-\theta)\KL y{\exp(-d)}<\theta}.
\end{align}
This set is a non-empty interval, because $y\mapsto\KL y{\exp(-d)}$ is convex and $y=\exp(-d)\in\cY(c,d,\theta)$.
Let
\begin{align}\label{eqcyz}
	\cyz(d,\theta)&=
	\begin{cases}
		\inf\cbc{c>0:\inf_{y\in\cY(c,d,\theta)}cd(1-\theta)\bc{\KL y{\exp(-d)}+y\KL{p_{01}}{p_{11}}}\geq\theta}&\mbox{if }p_{11}<1,\\
		\inf\{c>0:0\not\in\cY(c,d,\theta)\}&\mbox{otherwise}.
	\end{cases}
\end{align}
If $p_{11}=1$ let $\fz(y)=1$ for all $y\in\cY(c,d,\theta)$.
Further, if $p_{11}<1$ then the function $z\mapsto\KL z{p_{11}}$ is strictly decreasing on $[p_{01},p_{11}]$;
therefore, for any $c>\cyz(d,\theta)$ and $y\in\cY(c,d,\theta)$ there exists a unique $\fz(y)=\fz_{c,d,\theta}(y)\in[p_{01},p_{11}]$ such that
\begin{align}\label{eq_def_zy}
	cd(1-\theta)\bc{\KL y{\exp(-d)}+y\KL{\fz(y)}{p_{11}}}=\theta.
\end{align}
In either case set
\begin{align}\label{eqsep1}
	\cbound(d,\theta)&= 
	\begin{cases}
		\inf\cbc{c>\cyz(d,\theta):\inf_{y\in\cY(c,d,\theta)}cd(1-\theta)\bc{\KL y{\exp(-d)}+y\KL{\fz(y)}{p_{01}}}\geq1}&\mbox{if }p_{01}>0,\\
		\cyz(d,\theta)&\mbox{otherwise}.
	\end{cases}
\end{align}
Finally, define
\begin{align}\label{eqccentre}
	\ccentre(d)&=1\big/\bc{h(p_{00}\exp(-d)+p_{10}(1-\exp(-d)))-\exp(-d) h(p_{00})-(1-\exp(-d))h(p_{10})},\\
	\cex(\theta)&=\inf_{d>0}\max\{\cbound(d,\theta),\ccentre(d)\},\label{eqcex}\\
	m_{\SPEX}(n,k,\vec p)&=\cex(\theta) k\log(n/k).\nonumber
\end{align}

\begin{theorem}\label{thm_alg_ex}
	For any $\vec p$, $0<\theta<1$ and $\eps>0$ there exists $n_0=n_0(\vec p,\theta,\eps)$ such that for every $n>n_0$ there exist a randomised test design $\Gsp$ with $m\leq(1+\eps)m_{\SPEX}(n,k,\vec p)$ tests and a deterministic polynomial time inference algorithm $\SPEX$ such that
	\begin{align}\label{eq_alg_ex}
		\pr\brk{\SPEX(\Gsc,\dSIGMA_{\Gsc})=\SIGMA}>1-\eps.
	\end{align}
\end{theorem}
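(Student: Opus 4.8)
The plan is to realise $\SPEX$ as a two-stage procedure on a spatially coupled design $\Gsp$ whose column weight is tuned to the exact-recovery optimiser of~\eqref{eqcex}, and to analyse the two stages separately: a \emph{bulk stage} that approximately diagnoses the whole population, and a \emph{local stage} that commits a final decision for every individual from its private tests. Concretely, fix $d=d^\star$ attaining, up to an additive $\eps/100$, the infimum in~\eqref{eqcex}; put $c=\cex(\theta)+\eps/3$; and let $\Gsp$ be the spatial-coupling construction of \Sec~\ref{sec_over} with $m=\lceil ck\log(n/k)\rceil$ tests, $L=L(\eps)$ compartments, a seed, and constant column weight $\Delta\sim cd^\star\log(n/k)$, so that a bulk test contains $\sim d^\star$ infected individuals in expectation and $\exp(-d^\star)$ is the probability that it is genuinely negative. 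By the choice of $d^\star$, $c>\max\{\cbound(d^\star,\theta),\ccentre(d^\star)\}$, and since $\cbound\ge\cyz$ by~\eqref{eqsep1}, also $c>\cyz(d^\star,\theta)$; these strict margins drive both stages.

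Because $c>\ccentre(d^\star)$, the approximate-recovery argument of Theorem~\ref{thm_alg_apx}, run on $\Gsp$, yields a polynomial-time routine returning an estimate $\TAU$ with $\|\TAU-\SIGMA\|_1=o(k)$ \whp. Running it as a compartment-by-compartment decoding sweep and keeping per-individual confidence scores, I would extract a set $\cU$ of \emph{unresolved} individuals, outside of which $\TAU=\SIGMA$, and argue that the decoding wave leaves only a vanishing fraction of each compartment unresolved; combined with the girth and expansion of $\Gsp$ and concentration of the compartment-wise error counts, this should give $|\cU|=o(k/\operatorname{polylog} n)$ and, crucially, that \whp no individual has a private test containing a member of $\cU$.

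For the local stage, for each individual $x$ I would isolate the tests $a\ni x$ that according to $\TAU$ contain no other infected individual and classify $x$ by a likelihood-ratio test on their displayed results, with decision threshold $\fz(\cdot)$. The core of the proof is the large-deviations estimate that this succeeds simultaneously for all $n$ individuals. A healthy $x$ with a $y$-fraction of genuinely private tests errs with probability $\approx\exp(-\Delta y\KL{\fz(y)}{p_{01}})$; after a union bound over the $\le n$ healthy individuals, combined with the cost $\approx\exp(-\Delta\KL y{\exp(-d^\star)})$ of $y$ being that small, and using $\Delta\sim cd^\star\log(n/k)$ together with $\log(n/k)=(1-\theta)\log n$, this is exactly the requirement $cd^\star(1-\theta)(\KL y{\exp(-d^\star)}+y\KL{\fz(y)}{p_{01}})\ge1$ from~\eqref{eqsep1}. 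The symmetric union bound of size $k$ over the infected individuals gives the defining identity~\eqref{eq_def_zy} for $\fz(y)$, whose edge case $y=0$, namely $c>\cyz(d^\star,\theta)$, rules out \whp the fatal event that some infected individual has no private test at all; optimising the threshold over $y$ forces $\fz(y)\in[p_{01},p_{11}]$ and yields the nested infimum in~\eqref{eqsep1}. With $c>\max\{\cbound(d^\star,\theta),\ccentre(d^\star)\}$ all the required inequalities are strict, which establishes~\eqref{eq_alg_ex}.

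The step I expect to be the main obstacle is the coupling between the two stages: showing that the $o(k)$ residual errors of the bulk stage corrupt the private-test set of none of the $n$ individuals. A crude union bound fails, because a typical individual co-tests with $\omega(1)$ misclassified individuals in expectation; the leverage is that a private test of $x$ can only be \emph{mis}-labelled ``explained'' by a \emph{falsely infected} member of $\cU$ (the opposite error cannot occur), so the relevant count is $|\cU|$ times a polylogarithmic factor, which the $n$-fold union bound absorbs once $|\cU|=o(k/\operatorname{polylog} n)$. If the bulk guarantee cannot be driven that far in one pass, a bounded iteration of the local stage together with a careful accounting of which individuals can be freshly contaminated should close the gap; this is the most delicate point. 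The rest is routine: checking that the optimisation in~\eqref{eqYinterval}--\eqref{eqcex} is well posed (the interval $\cY$, continuity of $\fz$, the cases $p_{01}=0$ and $p_{11}=1$) and that all the concentration statements above hold at the stated rates.
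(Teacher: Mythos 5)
Your overall architecture is the paper's: run the approximate-recovery routine to get $\tau$ with $o(k)$ errors, then re-classify every individual by thresholding the number of positively displayed ``untainted'' tests against the total number of untainted tests via the function $\fz(\nix)$, with the two first-moment computations matching \eqref{eq_def_zy} (union bound over $k$ infected) and \eqref{eqsep1} (union bound over $n$ healthy). That part, including the role of $\cyz$ and the well-posedness of $\cY$ and $\fz$, is sound and is exactly what \Prop~\ref{lem_yz} and \Lem s~\ref{lem_ykl}--\ref{lem_distphixstar} deliver.

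The genuine gap is the step you yourself flag, and neither of your two patches closes it. First, the assertion that \whp\ no individual shares a test with the residual error set $\cU$ is false: with $|\cU|\leq k\exp(-\log^{\Omega(1)}n)$, $\Delta=\Theta(\log n)$ tests per individual and $\Theta(n/k)$ individuals per test, about $n\exp(-\log^{\Omega(1)}n)\gg1$ individuals meet $\cU$, so the local statistics of polynomially many individuals are corrupted. Second, the claimed one-sidedness is wrong in the dangerous direction: a truly infected $y$ that $\tau$ misclassifies as healthy (and such false negatives are an unavoidable part of the \SPARC\ error, e.g.\ all of $V_1\setminus V^+$) causes a genuinely tainted, actually positive test to be counted as ``private'' for its healthy neighbours $x$, inflating both $Y_x$ and the ratio $Z_x/Y_x$ towards the infected signature --- precisely the error that produces false positives in the clean-up. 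What actually closes the gap in the paper is not a bounded iteration but $\Theta(\log n)$ iterations with geometric decay (\Prop~\ref{prop_endgame}): one builds an $\eps'$-margin into the threshold (conditions {\bf Z2}--{\bf Z4} and the $3\eps'$ slack in \Lem~\ref{lem_distphixstar}), so that $x$ can only be misclassified in round $t$ if at least $\eps'\Delta$ of its tests meet the round-$(t-1)$ error set $\cM_{t-1}$; the expansion property of $\Gsc$ (\Lem~\ref{lemma_endgame_misclassified}) then bounds the number of such $x$ by $|\cM_{t-1}|/3$, and after $\lceil\log n\rceil$ rounds the error set is empty. Without this expansion-plus-margin mechanism (or an equivalent substitute) your argument does not reach exact recovery.
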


Like \SPARC\ from \Thm~\ref{thm_alg_apx}, \SPEX\ uses the spatially coupled test design $\Gsc$.
Crucially, apart from the numbers $n$ and $m$ of individuals and tests, the value of $d$ at which the infimum \eqref{eqcex} is attained also enters into the construction of that test design.
Specifically, the average size of a test group equals $dn/k$.
Remarkably, while the optimal value of $d$ for approximate recovery turns out to depend on the channel $\vec p$ only, a different value of $d$ that also depends on $k$ may be the right choice to facilitate exact recovery.
We will revisit this point in \Sec~\ref{sec_sim}.

\subsection{Lower bound on the constant column design}\label{sec_lower}

Unlike in the case of approximate recovery we do not have a proof that the positive result on exact recovery from \Thm~\ref{thm_alg_ex} is optimal for {\em any} choice of test design.
However, we can show that exact recovery with $(1-\eps)\cex k\log(n/k)$ tests is impossible on the {\em constant column design} $\Gcc$.
Under $\Gcc$ each of the $n$ individuals independently joins an equal number $\Delta$ of tests, drawn uniformly without replacement from the set of all $m$ available tests.
Let $\Gcc=\Gcc(n,m,\Delta)$ signify the outcome.
The following theorem shows that exact recovery on $\Gcc$ is information-theoretically impossible with fewer than $m_{\SPEX}$ tests.

%

\begin{theorem}\label{thm_inf_ex}
	For any $\vec p$, $0<\theta<1$ and $\eps>0$ there exists $n_0=n_0(\vec p,\theta,\eps)$ such that for every $n>n_0$ and all $m\leq(1-\eps)m_{\SPEX}(n,k,\vec p)$, $\Delta>0$ and any $\cA_{\Gcc}:\{0,1\}^m\to\{0,1\}^n$ we have
	\begin{align}\label{eq_inf_ex}
		\pr\brk{\cA_{\Gcc}(\dSIGMA_{\G})=\SIGMA}<\eps.
	\end{align}
\end{theorem}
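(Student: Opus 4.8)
The plan is to establish the impossibility of exact recovery on $\Gcc$ via a second-moment / first-moment argument on the number of planted-configuration-compatible vectors, combined with a ``local stability'' obstruction that mirrors the intuition spelled out in Section~\ref{sec_intro_ex}. The essential point is that with $(1-\eps)m_{\SPEX}$ tests, either the global information-theoretic budget is violated (the $\cbound$ term), or there is no way to pin down the status of a typical individual from the tests that isolate it (the $\ccentre$ term) --- and both thresholds are encoded in $\cex(\theta)=\inf_{d>0}\max\{\cbound(d,\theta),\ccentre(d)\}$. Since $m\le(1-\eps)m_{\SPEX}$ forces $m\le(1-\eps)\max\{\cbound(d,\theta),\ccentre(d)\}k\log(n/k)$ for \emph{every} $d>0$, and on $\Gcc$ the relevant degree parameter is exactly $\Delta$ with $d\approx \Delta k/m$ (the expected number of infected individuals per test), we get to pick the ``bad'' value of $d$ adversarially against whatever $\Delta$ the designer chose.

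The steps, in order, are as follows. First, reduce to the optimal (MAP) estimator: if MAP fails to recover $\SIGMA$ with probability bounded away from $0$, so does every $\cA_{\Gcc}$, so it suffices to lower bound $\pr[\SIGMA\mid\dSIGMA_{\G}\text{ is not the unique maximiser}]$. Second, set $d=d(\Delta,m,k):=\Delta k/m$ and split into two regimes according to which term realises the max. Third, in the ``centre'' regime ($\ccentre(d)$ dominant), carry out a local analysis: condition on the planted $\SIGMA$, pick a uniformly random infected individual $x$, and consider the set $V_x$ of tests containing $x$ and \emph{no other} infected individual. A Poissonisation/Chen--Stein argument shows $|V_x|$ concentrates around $\Delta\exp(-d)$ and the displayed results on $V_x\cup(\text{a healthy decoy's isolated tests})$ are, conditionally, a pair of product-Bernoulli strings whose laws differ by a channel with Kullback--Leibler separation controlled by $h(p_{00}\exp(-d)+\cdots)-\exp(-d)h(p_{00})-(1-\exp(-d))h(p_{10})$; when $\ccentre(d)k\log(n/k)$ exceeds $m$, a union bound over the $\sim n$ healthy candidates produces, w.h.p., a healthy $x'$ indistinguishable from $x$, so flipping $x\to x',\,x'\to x$ yields another configuration of equal posterior weight. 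Fourth, in the ``boundary'' regime ($\cbound(d,\theta)$ dominant), run the analogous first-moment computation over configurations that differ from $\SIGMA$ by relocating a $y$-fraction of the infected set; the quantities $\cY(c,d,\theta)$, $\fz(y)$ and the exponents in~\eqref{eqYinterval}--\eqref{eqsep1} are precisely the large-deviation rate functions governing the expected number of such competing configurations that are at least as likely as $\SIGMA$, so for $m<\cbound(d,\theta)k\log(n/k)$ this expectation tends to infinity and a routine second-moment argument upgrades it to an ``exists w.h.p.'' statement. Fifth, assemble: for the designer's $\Delta$, if $d(\Delta,m,k)$ falls where $\ccentre$ is binding use step three, otherwise use step four; in both cases MAP is non-unique w.h.p., hence~\eqref{eq_inf_ex}.

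I expect the main obstacle to be the second-moment computation in the boundary regime (step four): one must show the competing configurations are sufficiently uncorrelated that the variance of their count is $o$ of the square of the mean, which on $\Gcc$ requires a careful overlap expansion --- controlling, for a typical pair of relocated configurations, how many tests change their actual result, and verifying the dominant contribution comes from near-independent pairs. A secondary technical nuisance is handling the endpoints and degenerate cases $p_{11}=1$ or $p_{01}=0$ uniformly with the generic case, since the definitions~\eqref{eqcyz} and~\eqref{eqsep1} branch there; but those reduce to simpler threshold computations. Since Theorem~\ref{thm_inf_apx} already gives the matching lower bound against the $\cchan$-part of the picture via~\cite[\Cor~2.3]{Maurice}, and since $\cex(\theta)\ge\cchan$, the new content here is exactly the $\ccentre$/$\cbound$ local-stability obstruction, so the write-up should foreground steps three and four.
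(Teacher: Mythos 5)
Your overall architecture (reduce to the MAP/posterior, set $d=\Delta k/m$, and split into two regimes matching the two terms in $\cex(\theta)=\inf_d\max\{\cbound(d,\theta),\ccentre(d)\}$) is the right one, but you have attached the two sub-arguments to the \emph{wrong} thresholds, and with the wrong exponents, so neither regime closes as written. The local pair-swap argument of your step three — find an infected $v$ and a healthy decoy $w$ with identical statistics on their untainted tests and exchange them — is the argument for the \emph{local stability} threshold $\cbound(d,\theta)$, not for $\ccentre(d)$. The probability that a healthy individual has $y\Delta$ untainted tests of which a fraction $z$ display positive is $\exp(-\Delta(\KL y{\exp(-d)}+y\KL z{p_{01}})+o(\Delta))$, so the union bound over $\sim n$ healthy candidates succeeds iff $cd(1-\theta)(\KL y{\exp(-d)}+y\KL z{p_{01}})<1$, and the existence of a matching infected $v$ requires the companion inequality with $p_{11}$ and $\theta$ — these are exactly \eqref{eq_prop_XYZ_1}--\eqref{eq_prop_XYZ_2}, i.e.\ the defining inequalities of $\cbound$, not the mutual-information expression $h(p_{00}\eul^{-d}+p_{10}(1-\eul^{-d}))-\eul^{-d}h(p_{00})-(1-\eul^{-d})h(p_{10})$ that you invoke. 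A union bound run with that mutual-information exponent does not produce the decoy. Symmetrically, $\cY(c,d,\theta)$ and $\fz(y)$ are large-deviation rate functions for the \emph{per-individual} statistics $(\vY_x,\vZ_x)$; they are not the rate functions governing the count of configurations obtained by relocating a macroscopic $y$-fraction of the infected set, so your step four computes the wrong first moment for the regime it is assigned to. The entropy obstruction at macroscopic overlap is governed by $k\log(n/k)$ versus $m$ times the per-test information, which is precisely the $\ccentre$ comparison.

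Beyond the mislabelling, the paper's route for the entropy regime avoids the difficulty you flag as the main obstacle. Rather than a second moment over configurations at overlap distance $\Omega(k)$ (which would indeed require a delicate overlap expansion), it proves a lower bound on the partition function $Z_{\Gcc,\dSIGMA}$ directly via a planted-versus-null-model identity: for a uniformly random test-result vector $\DSIGMA$ of the observed weight, $\Erw[Z_{\Gcc,\DSIGMA}]$ is computed exactly, the planted distribution is shown to be the $Z$-size-biased version of the null model, and Markov's inequality then gives $Z_{\Gcc,\dSIGMA}\geq\delta\,\Erw[Z_{\Gcc,\DSIGMA}]$ with probability at least $1-\delta$. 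Comparing with the separately computed weight $\psi_{\Gcc,\dSIGMA}(\SIGMA)$ of the truth shows $\mu_{\Gcc,\dSIGMA}(\SIGMA)=o(1)$ whenever $c<\ccentre(d)-\eps$, with no overlap analysis at all. The second-moment work that \emph{is} needed appears only in the local regime, where it is comparatively benign: one shows concentration of $|\cX_r(y\Delta,z\Delta)|$, the number of individuals of status $r$ with prescribed untainted-test statistics, exploiting that distinct infected individuals have disjoint sets of untainted tests. If you rewire your steps three and four to the correct thresholds and replace the global second moment by the partition-function argument, the proof goes through.
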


An immediate implication of \Thm~\ref{thm_inf_ex} is that the positive result from \Thm~\ref{thm_alg_ex} is at least as good as the best prior results on exact noisy recovery from~\cite{Maurice}, which are based on running a simple combinatorial algorithm called {\tt DD} on $\Gcc$.
In fact, in \Sec~\ref{sec_sim} we will see that the new bound from \Thm~\ref{thm_alg_ex} improves over the bounds from~\cite{Maurice} rather significantly for most $\theta,\vec p$.

The proof of \Thm~\ref{thm_inf_ex} confirms the combinatorial meaning of the threshold $\cex(\theta)$.
Specifically, for $c=m/(k\log(n/k))<\ccentre(d)$ from \eqref{eqccentre} a moment calculation reveals that \whp\ $\Gcc$ contains `solutions' $\sigma$ of at least the same posterior likelihood as the true $\SIGMA$ such that $\sigma$ and $\SIGMA$ differ significantly, i.e.\ $\|\SIGMA-\sigma\|_1=\Omega(k)$.
By contrast, the threshold $\cbound(d,\theta)$ marks the onset of local stability.
This means that for $c<\cbound(d,\theta)$ there will be numerous $\sigma$ close to but not identical to $\SIGMA$ (i.e.\ $0<\|\SIGMA-\sigma\|_1=o(k)$) of the at least same posterior likelihood.
In either case any inference algorithm, efficient or not, is at a loss identifying the actual $\SIGMA$.

In recent independent work Chen and Scarlett~\cite{scarlettchen} obtained \Thm~\ref{thm_inf_ex} in the special case of symmetric noise (i.e.\ $p_{00}=p_{11}$).
While syntactically their expression for the threshold $\cex(\theta)$ differs from \eqref{eqYinterval}--\eqref{eqcex}, it can be checked that both formulas yield identical results (see Appendix~\ref{apx_scarlettchen}).
Apart from the information-theoretic lower bound (which is the part most relevant to the present work), Chen and Scarlett also proved that it is information-theoretically possible (by means of an exponential algorithm) to infer $\SIGMA$ \whp on the constant column design with $m\geq(1+\eps)\cex(\theta)k\log(n/k)$ tests if $p_{00}=p_{11}$.
Hence, the bound $\cex(\theta)$ is tight in the case of symmetric noise.


\subsection{Examples}\label{sec_sim}
We illustrate the improvements that \Thm s~\ref{thm_alg_apx} and~\ref{thm_alg_ex} contribute by way of concrete examples of channels $\vec p$.
Specifically, for the binary symmetric channel and the $Z$-channel it is possible to obtain partial analytic results on the optimisation behind $\cex(\theta)$ from \eqref{eqcex} (see Appendices~\ref{sec_lukas_bsc}--\ref{sec_lukas_Z}).
As we will see, even a tiny amount of noise has a dramatic impact on both the number of tests required and the parameters that make a good test design.

\subsubsection{The binary symmetric channel}\label{sec_sim_bsc}
Although symmetry is an unrealistic assumption from the viewpoint of applications~\cite{Long}, the expression \eqref{eqPhil} and the optimisations \eqref{eqsep1}--\eqref{eqcex} get much simpler on the binary symmetric channel, i.e.\ in the case $p_{00}=p_{11}$.
For instance, the value of $d$ that minimises $\ccentre(d)$ from \eqref{eqccentre} turns out to be $d=\log2$.
The parameter $d$ enters into the constructions of the randomised test designs $\Gsc$ and $\Gcc$ in a crucial role.
Specifically, the average size of a test group equals $dn/k$.
In effect, any test is actually negative with probability $\exp(-d+o(1))$ (see \Prop~\ref{prop_basic} and \Lem~\ref{lemma_basic} below).
Hence, if $d=\log2$ then \whp\ about half the tests contain an infected individual.
In effect, since $p_{11}=p_{00}$, after the application of noise about half the tests display a positive result \whp.

In particular, in the noiseless case $p_{11}=p_{00}=1$ a bit of calculus reveals that the value $d=\log2$ also minimises the other parameter $\cbound(d,\theta)$ from \eqref{eqsep1} that enters into $\cex(\theta)$ from \eqref{eqcex}.
Therefore, in noiseless group testing $d=\log2$ unequivocally is the optimal choice, the optimisation on $d$ \eqref{eqcex} effectively disappears and we obtain
\begin{align}\label{eqcnoiseless}
	\cex(\theta)&=\max\cbc{\frac\theta{(1-\theta)\log^22},\frac1{\log2}},
\end{align}
thereby reproducing the optimal result on noiseless group testing from~\cite{opt}.

But remarkably, as we verify analytically in Appendix~\ref{sec_lukas_bsc} at positive noise $\frac12<p_{11}=p_{00}<1$ the value of $d$ that minimises $\cbound(d,\theta)$ does \emph{not} generally equal $\log 2$.
Hence, if we aim for exact recovery, then at positive noise it is no longer optimal for all $0<\theta<1$ to aim for about half the tests being positive/negative. 
The reason is the occurrence of a phase transition in terms of $\theta$ where the `local stability' term $\cbound(d,\theta)$ takes over as the overall maximiser in \eqref{eqccentre}.
Consequently, the objective of minimising $\cbound(d,\theta)$ and the optimal choice $d=\log2$ for $\ccentre(d)$ clash.
In effect, the overall minimiser $d$ for $\cex(d,\theta)$ depends on both $\vec p$ and the infection density parameter $\theta$ in a non-trivial way.
Thus, the presence and level of noise has a discernible impact on the choice of a good test design.%
\footnote{This observation confirms a hypothesis stated in~\cite[Appendix~F]{Maurice}.
	As mentioned in \Sec~\ref{sec_lower}, independent work of Chen and Scarlett~\cite{scarlettchen} on the case of symmetric noise reaches the same conclusion.}

\begin{figure}
	\begin{subfigure}[l]{0.49\textwidth}
		\includegraphics[height=63mm]{./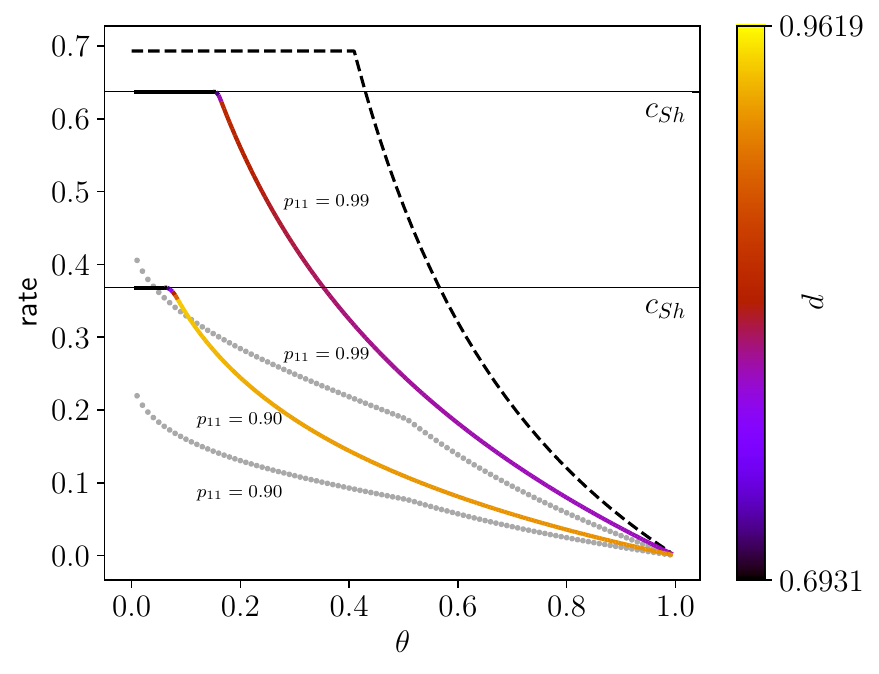}
		\caption{Rates on the binary symmetric channel for \\$p_{00}=p_{11}=0.99$ and $p_{00}=p_{11}=0.9$.}
		\label{left}
	\end{subfigure}
	\begin{subfigure}[r]{0.49\textwidth}
		\includegraphics[height=63mm]{./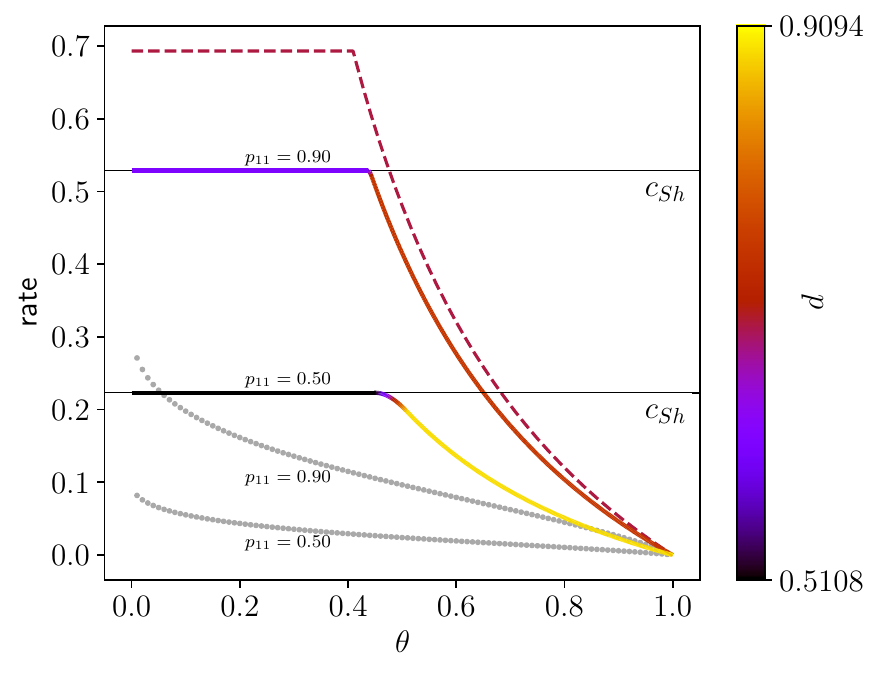}
		\caption{Rates on the $Z$-channel ($p_{00}=1$) with $p_{11}=0.9$ and \\$p_{11}=0.5$.}
		\label{right}
	\end{subfigure}
	\caption{Information rates on different channels in nats. The horizontal axis displays the infection density parameter $0<\theta<1$. The colour indicates the optimal value of $d$ for a given $\theta$.
	}\label{fig_sym}
\end{figure}

Figure~\ref{fig_sym} displays the performance of the algorithms $\SPARC$ and $\SPEX$ from \Thm s~\ref{thm_alg_apx} and~\ref{thm_alg_ex} on the binary symmetric channel.
For the purpose of graphical representation the figure does not display the values of $\cex(\theta)$, which diverge as $\theta\to1$, but the value $1/\cex(\theta)$.
This value has a natural information-theoretic interpretation: it is the average amount of information that a test reveals about the set of infected individuals, measured in `nats'.
In other words, the plots display the information rate of a single test (higher is better).
The optimal values of $d$ are colour-coded into the curves.
While in the noiseless case $d=\log2$ remains constant, in the noisy cases $d$ varies substantially with both $\theta$ and $p_{00}=p_{11}$.

For comparison the figure also displays the rate in the noiseless case (dashed line on top) and the best previous rates realised by the {\tt DD} algorithm on $\Gcc$ from~\cite{Maurice} (dotted lines).
As is evident from Figure~\ref{fig_sym}, even noise as small as $1\%$ already reduces the rate markedly: the upper coloured curve remains significantly below the noiseless black line.
That said, Figure~\ref{fig_sym} also illustrates how the rate achieved by \SPEX\ improves over the best previous algorithm {\tt DD} from~\cite{Maurice}.
Somewhat remarkably, the $10\%$-line for $\cex(\theta)$ intersects the $1\%$-line for {\tt DD} for an interval of $\theta$.
Hence, for these $\theta$ the algorithm from \Thm~\ref{thm_alg_ex} at $10\%$ noise requires fewer tests than {\tt DD} at $1\%$.

Figure~\ref{fig_sym} also illustrates how approximate and exact recovery compare.
Both coloured curves start out as black horizontal lines.
These bits of the curves coincide with the rate of the \SPARC\ algorithm from \Thm~\ref{thm_alg_apx}.
The rate achieved by \SPARC, which does not depend on $\theta$, is therefore just the extension of this horizontal line to the entire interval $0<\theta<1$ at the height $\cchan$ from  \eqref{eqPhil}.
Hence, particularly for large $\theta$ approximate recover achieves {\em much} better rates than exact recovery.

\subsubsection{The $Z$-channel}\label{sec_sim_Z}
In the case $p_{00}=1$ of perfect specificity,  known as the $Z$-channel, it is possible to derive simple expressions for the optimisation problems~\eqref{eqYinterval}--\eqref{eqcex} (see Appendix~\ref{sec_lukas_Z}):
%
\begin{align}
	\cbound(d,\theta)&= -\frac{\theta}{d(1-\theta)  \log(1 - \exp(-d) \pii)},\label{eqLukZ1}\\
	\ccentre(d)&= \bc{h(p_{10}+(1-p_{10})\exp(-d))-(1-\exp(-d))h(p_{10})}^{-1}\label{eqLukZ2}.
\end{align}
As in the symmetric case, there is a tension between the value of $d$ that minimises \eqref{eqLukZ1} and the objective of minimising \eqref{eqLukZ2}.
Recall that since the size of the test groups is proportional to $d$, the optimiser $d$ has a direct impact on the construction of the test design.

Figure~\ref{right} displays the rates achieved by of \SPEX\ (solid line) and, for comparison, the {\tt DD} algorithm from~\cite{Maurice} (dotted grey) on the $Z$-channel with sensitivities $p_{11}=0.9$ and $p_{11}=0.5$.
Additionally, the dashed red line indicates the noiseless rate.
Once again the optimal value of $d$ is colour-coded into the solid \SPEX\ line.
Remarkably, the \SPEX\ rate at $p_{11}=0.5$ (high noise) exceeds the {\tt DD} rate at $p_{11}=0.9$ for a wide range of $\theta$.
As in the symmetric case the horizontal $\cchan$-lines indicate the performance of the \SPARC\ approximate recovery algorithm.

\subsubsection{General (asymmetric) noise}\label{sec_sim_gen}
While in the symmetric case the term $\ccentre(d)$ from \eqref{eqccentre} attains its minimum simply at $d=\log 2$, with $\phi$ from \eqref{eqPhil} the minimum for general $\vec p$ is attained at 
\begin{align}\label{eqdcentre}
	d&=\dchan=\log(p_{11}-p_{01})-\log\bc{(1-\tanh(\phi/2))/2-p_{10}}&&\mbox{\cite[\Lem~F.1]{Maurice}}.
\end{align}
Once again the design of $\Gsc$ (as well as $\Gcc$) ensures that \whp a $\exp(-d)$-fraction of tests are actually negative \whp.
The choice \eqref{eqdcentre} ensures that under the $\vec p$-channel the mutual information between the channel input and the channel output is maximised~\cite[\Lem~F.1]{Maurice}:
\begin{align}\label{eqcchandchan}
	\frac1{\cchan}=\frac1{\ccentre(\dchan)}=\KL{p_{10}}{(1-\tanh(\phi/2))/2},
\end{align}
As can be checked numerically, the second contributor $\cbound(d,\theta)$ to $\cex(\theta)$ may take its minimum at  another $d$.
However, we are not aware of a simple explicit expressions for $\ccentre(\theta)$ from \eqref{eqccentre} for general noise.

\subsection{Related work}\label{sec_related}

\noindent
The monograph of Aldridge, Johnson and Scarlett~\cite{Aldridge_2019} provides an excellent overview of the group testing literature.
The group testing problem comes in various different flavours: non-adaptive (where all tests are conducted concurrently) or adaptive (where tests are conducted in subsequent stages such that the tests at later stages may depend on the outcomes of earlier stages), as well as noiseless or noisy.
An important result of Aldridge shows that noiseless non-adaptive group testing does not perform better than plain individual testing if $k=\Omega(n)$, i.e.\ if the number of infected individuals is linear in the size of the population~\cite{Aldridge_linea2}.
Therefore, research on non-adaptive group testing focuses on the case $k\sim n^\theta$ with $0<\theta<1$.
For non-adaptive noiseless group testing with this scaling of $k$ two different test designs (Bernoulli and constant column) and various elementary algorithms have been proposed~\cite{Chan_2011}.
Among these elementary designs and algorithms the best performance to date is achieved by the {\tt DD} greedy algorithm on the constant column design~\cite{Johnson_2019}.
However, the {\tt DD} algorithm does not match the information-theoretic bound on the constant column design for all $\theta$~\cite{Coja_2019}.

Coja-Oghlan, Gebhard, Hahn-Klimroth and Loick proposed a more sophisticated test design for noiseless group testing based on spatial coupling~\cite{opt}, along with an efficient inference algorithm called {\tt SPIV}.
Additionally, they improved the information-theoretic lower bound for non-adaptive noiseless group testing.
The number of tests required by the {\tt SPIV} algorithm matches this new lower bound. 
In effect, the combination of {\tt SPIV} with the spatially coupled test design solves the noiseless non-adaptive group testing problem optimally both for exact and approximate recovery.

The present article deals with the {\em noisy} non-adaptive variant of group testing.
A noisy version of the efficient {\tt DD} algorithm was previously studied on both the Bernoulli and the constant column design~\cite{Maurice,Sca20}.
The best previous exact recovery results for general noise were obtained by Johnson, Gebhard, Loick and Rolvien~\cite{Maurice} by means of {\tt DD} on the constant column design (see \Thm~\ref{thm_DD} below).
\Thm~\ref{thm_inf_ex} shows in combination with \Thm~\ref{thm_alg_ex} that the new {\tt SPEX} algorithm performs at least as well as {\em any} algorithm on the constant column design, including and particularly {\tt DD}.

Apart from the articles~\cite{Maurice,Sca20} that dealt with the same general noise model as we consider here,
several contributions focused on special noise models, particularly symmetric noise ($p_{00}=p_{11}$). 
In this scenario Chen and Scarlett~\cite{scarlettchen} recently determined the information-theoretically optimal number of tests required for exact recovery on the Bernoulli and constant column designs.
The constant column design outperforms the Bernoulli design.
The information-theoretic threshold identified by Chen and Scarlett matches the threshold $\cex(\theta)$ from \eqref{eqccentre} in the special case of symmetric noise (see Appendix~\ref{apx_scarlettchen}).
However, Chen and Scarlett do not investigate the issue of {\em efficient} inference algorithms.
Instead, they pose the existence of an efficient inference algorithm that matches the information-theoretic threshold as an open problem.
\Thm~\ref{thm_alg_ex} applied to symmetric noise answers their question in the affirmative.

A further contribution of Scarlett and Cevher~\cite{Scarlett_2016} contains a result on approximate recovery under the assumption of symmetric noise.
In this case Scarlett and Cevher obtain matching information-theoretic upper and lower bounds, albeit without addressing the question of efficient inference algorithms.
\Thm~\ref{thm_alg_apx} applied to the special case of symmetric noise provides a polynomial time inference algorithm that matches their lower bound.

From a practical viewpoint non-adaptive group testing (where all tests are conducted in parallel) is preferable because results are available more rapidly than in the adaptive setting, where several rounds of testing are required.
That said, adaptive schemes may require a smaller total number of tests.
The case of noiseless adaptive group testing has been studied since the seminal work of Dorfman~\cite{Dorfman_1943} from the 1940s.
For the case $k\sim n^\theta$ a technique known as generalized binary splitting gets by with the optimal number of tests~\cite{Hwang_1972, Baldassini_2013}. 
Aldridge~\cite{Aldridge_linea1} extended this approach to the case $k=\Theta(n)$, obtaining near-optimal rates.
Recently there has been significant progress on upper and lower bounds for noisy and adaptive group testing, although general optimal results remain elusive~\cite{Scarlett_2018,Teo22}. 

Beyond group testing in recent years important progress has been made on several inference problems by means of a combination of spatial coupling and message passing ideas.
Perhaps the most prominent case in point is the compressed sensing problem~\cite{Donoho_2013,Krzakala_2012}.
Further applications include the pooled data problem~\cite{Lenka_pooled_data,Lenka_pooled_data2,MaxNoela} and CDMA~\cite{Takeuchi_2011}, a signal processing problem.
The basic idea of spatial coupling, which we are going to discuss in some detail in \Sec~\ref{sec_gsc}, goes back to work on capacity-achieving linear codes~\cite{Kumar_2014,Kudekar_2010_2,Kudekar_2013,Takeuchi_2011,Wang}.
The {\tt SPIV} algorithm from~\cite{opt} combines a test design inspired by spatial coupling with a combinatorial inference algorithm.
A novelty of the present work is that we replace this elementary algorithm by a novel variant of the Belief Propagation message passing algorithm~\cite{MM,Pearl} that lends itself to a rigorous analysis.

\subsection{Organisation}
After introducing a bit of notation and recalling some background in \Sec~\ref{Sec_pre} we given an outline of the proofs of the main results in \Sec~\ref{sec_over}.
Subsequently \Sec~\ref{sec_alg_apx} deals with the details of the proof of \Thm~\ref{thm_alg_apx}.
Moreover, \Sec~\ref{sec_exact} deals with the proof of \Thm~\ref{thm_alg_ex}, while in \Sec~\ref{sec_cc_lower} we prove \Thm~\ref{thm_inf_ex}.
The proof of \Thm~\ref{thm_inf_apx}, which is quite short and uses arguments that are well established in the literature, follows in Appendix \ref{sec_inf_apx}.
Appendix~\ref{sec_lemma_endgame_misclassified} contains the proof of a routine expansion property of the test design $\Gsc$.
Finally, in Appendices~\ref{sec_lukas_bsc} and \ref{sec_lukas_Z} we investigate the optimisation problems \eqref{eqsep1}--\eqref{eqcex} on the binary symmetric channel and the $Z$-channel, and in Appendix \ref{apx_scarlettchen} we compare our result to the recent result of Chen and Scarlett \cite{scarlettchen}. 

\subsection{Preliminaries}\label{Sec_pre}
As explained in \Sec~\ref{sec_motiv}, a test design is a bipartite graph $G=(V\cup F,E)$ whose vertex set consists of a set $V$ individuals and a set $F$ of tests.
The ground truth, i.e.\ the set of infected individuals is encoded as a vector $\SIGMA\in\{0,1\}^V$ of Hamming weight $k$.
Since we will deal with randomised test designs, we may assume that $\SIGMA$ is a {\em uniformly random} vector of Hamming weight $k$ (by shuffling the set of individuals).
Also recall that for a test $a$ we let $\aSIGMA_a\in\{0,1\}$ denote the {\em actual} result of test $a$ (equal to one if and only if $a$ contains an infected individual), while $\dSIGMA_a\in\{0,1\}$ signifies the {\em displayed} test result obtained via \eqref{eqnoisemodel}.
It is convenient to introduce the shorthands
\begin{align*}
	V_0&=\cbc{x\in V:\SIGMA_x=0},&
	V_1&=\cbc{x\in V:\SIGMA_x=1},&
	F_0&=\cbc{a\in F:\aSIGMA_a=0},&
	F_1&=\cbc{a\in F:\aSIGMA_a=1},\\
	F^-&=\cbc{a\in F:\dSIGMA_a=0},&
	F^+&=\cbc{a\in F:\dSIGMA_a=1},&
	F_0^\pm&=F^\pm\cap F_0,&
	F_1^\pm&=F^\pm\cap F_1
\end{align*}
for the set of infected/healthy individuals, the set of actually negative/positive tests, the set of negatively/positively displayed tests and the tests that are actually negative/positive and display a positive/negative result, respectively.
For each node $u$ of $G$ we denote by $\partial u=\partial_Gu$ the set of neighbours of $u$.
For an individual $x\in V$ we also let $\partial^\pm x=\partial_G^\pm x=\partial_G x\cap F^\pm$ be the set of positively/negatively displayed tests that contain $x$.

We need Chernoff bounds for the binomial and the hypergeometric distribution.
Recall that the {\em hypergeometric distribution} $\Hyp(L,M,N)$ is defined by
\begin{align}\label{eqHyp}
	\pr\brk{\Hyp(L,M,N)=k}&=\binom Mk\binom{L-M}{N-k}\binom LN^{-1}&&(k\in\{0,1,\ldots,\min\{M,N\}\}).
\end{align}
(Out of a total of $L$ items of which $M$ are special we draw $N$ items without replacement and count the number of special items in the draw.)
The mean of the hypergeometric distribution equals $MN/L$.

\begin{lemma}[{\cite[Equation 2.4]{Janson_2011}} ] \label{lem_chernoff}
Let $\vX$ be a binomial random variable with parameters $N,p$.
Then
\begin{align}\label{eqChernoff1}
    \Pr\brk{\vX \geq {qN}} &\leq \exp \bc{-N\KL{q}{p}} \quad \text{for $p<q<1$,} \\
    \Pr\brk{\vX \leq {qN}} &\leq \exp \bc{-N\KL{q}{p}} \quad \text{for $0<q<p$.}\label{eqChernoff2}
\end{align}
\end{lemma}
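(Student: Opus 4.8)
\emph{Proof idea.} The bound is the classical large-deviation inequality for sums of independent Bernoulli variables, and the plan is to prove it by Chernoff's exponential-moment method. First I would write $\vX=\sum_{i=1}^N Y_i$ with $Y_1,\dots,Y_N$ independent $\Be(p)$ random variables, so that $\Erw\brk{\exp(\lambda\vX)}=(1-p+p\exp(\lambda))^N$ for every $\lambda\in\RR$. For the upper tail \eqref{eqChernoff1}, fix $\lambda>0$ and apply Markov's inequality to the nonnegative variable $\exp(\lambda\vX)$:
\begin{align*}
  \Pr\brk{\vX\geq qN}=\Pr\brk{\exp(\lambda\vX)\geq\exp(\lambda qN)}\leq\exp\bc{-N\bc{\lambda q-\log(1-p+p\exp(\lambda))}}.
\end{align*}
It then remains to optimise the exponent $g(\lambda)=\lambda q-\log(1-p+p\exp(\lambda))$ over $\lambda>0$.

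Next I would carry out this optimisation. The function $g$ is concave, and $g'(\lambda)=q-p\exp(\lambda)/(1-p+p\exp(\lambda))$ vanishes at the point $\exp(\lambda^\star)=q(1-p)/(p(1-q))$. Here the hypothesis $q>p$ is exactly what guarantees $\lambda^\star>0$, so the optimiser lies in the admissible range. Substituting $\exp(\lambda^\star)$ back in gives $1-p+p\exp(\lambda^\star)=(1-p)/(1-q)$, and hence
\begin{align*}
  g(\lambda^\star)=q\log\frac{q(1-p)}{p(1-q)}-\log\frac{1-p}{1-q}=q\log\frac qp+(1-q)\log\frac{1-q}{1-p}=\KL qp,
\end{align*}
which yields \eqref{eqChernoff1}. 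For the lower tail \eqref{eqChernoff2} I would reduce to the upper tail: since $N-\vX\sim\Bin(N,1-p)$ and $\Pr\brk{\vX\leq qN}=\Pr\brk{N-\vX\geq(1-q)N}$, and since $0<q<p$ implies $1-q>1-p$, the bound just proved applies with $(p,q)$ replaced by $(1-p,1-q)$; finally $\KL{1-q}{1-p}=\KL qp$ because the binary Kullback--Leibler divergence is invariant under simultaneously complementing both of its arguments.

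The only step requiring any care is the algebraic identification of the Legendre transform of $\lambda\mapsto\log(1-p+p\exp(\lambda))$ at the point $q$ with the divergence $\KL qp$; the rest is entirely routine. Indeed the inequality is precisely equation~(2.4) of~\cite{Janson_2011}, so one may alternatively just cite it.
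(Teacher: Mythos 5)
Your proof is correct: the exponential-moment bound, the computation of the optimal $\lambda^\star$ (where $q>p$ indeed guarantees $\lambda^\star>0$), the identification of the optimised exponent with $\KL qp$, and the reduction of the lower tail to the upper tail via $N-\vX\sim\Bin(N,1-p)$ together with the invariance of the binary divergence under complementing both arguments all check out. The paper itself offers no proof — it simply cites this as Equation~(2.4) of~\cite{Janson_2011} — so your self-contained derivation by the standard Chernoff method is exactly the argument one would supply.
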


\begin{lemma}[\cite{Hoeffding}] \label{lem_hyperchernoff}
	For a  hypergeometric variable $\vX\sim\Hyp(L,M,N)$ the bounds \eqref{eqChernoff1}--\eqref{eqChernoff2} hold with $p=M/L$.
\end{lemma}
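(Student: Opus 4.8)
The plan is to deduce the hypergeometric tail bounds from the binomial ones of \Lem~\ref{lem_chernoff} via the classical convex-order comparison between sampling without and with replacement due to Hoeffding. Write $\vX\sim\Hyp(L,M,N)$ and set $p=M/L$, and let $\vX'\sim\Bin(N,p)$; both have mean $Np$. Hoeffding's comparison \cite{Hoeffding} asserts that $\Erw\brk{\phi(\vX)}\leq\Erw\brk{\phi(\vX')}$ for every continuous convex $\phi:\RR\to\RR$ (drawing without replacement is `more concentrated' than drawing with replacement, so every convex test function sees a smaller expectation under $\vX$). From this the two tail bounds follow by the standard exponential-moment argument.

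\textbf{Upper tail.} To establish \eqref{eqChernoff1}, fix $p<q<1$. For any $t>0$ the map $x\mapsto\eul^{tx}$ is convex, so Hoeffding's comparison gives $\Erw\brk{\eul^{t\vX}}\leq\Erw\brk{\eul^{t\vX'}}=(1-p+p\eul^t)^N$. Combining this with Markov's inequality I would obtain
\begin{align*}
	\Pr\brk{\vX\geq qN}\;\leq\;\eul^{-tqN}\,\Erw\brk{\eul^{t\vX}}\;\leq\;\bc{(1-p+p\eul^t)\,\eul^{-tq}}^{N}.
\end{align*}
Optimising the right-hand side over $t>0$ is then the very same computation that underlies the binomial Chernoff bound: the minimiser is $\eul^{t}=q(1-p)/(p(1-q))>1$, and substituting it collapses the bound to $\exp(-N\KL qp)$, which is \eqref{eqChernoff1}.

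\textbf{Lower tail and obstacle.} For \eqref{eqChernoff2} with $0<q<p$ I would run the same argument with $t<0$ (the exponential is still convex), or equivalently apply the upper-tail bound to $N-\vX\sim\Hyp(L,L-M,N)$, whose relevant parameter is $1-p$, using the identity $\KL{1-q}{1-p}=\KL qp$. Either route yields $\Pr\brk{\vX\leq qN}\leq\exp(-N\KL qp)$. There is essentially no obstacle: the only ingredient beyond elementary calculus is Hoeffding's convex-domination inequality, which we simply cite, and the exponential-moment optimisation is verbatim the one already invoked for \Lem~\ref{lem_chernoff}.
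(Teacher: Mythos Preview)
Your argument is correct and is precisely Hoeffding's original proof: convex domination of the hypergeometric by the binomial combined with the Cram\'er--Chernoff optimisation. The paper, however, gives no proof at all; it simply cites \cite{Hoeffding} and states the lemma as a known fact, so there is nothing to compare beyond noting that you have supplied exactly the standard derivation that the citation points to.
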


Throughout we use asymptotic notation $o(\nix),\omega(\nix),O(\nix),\Omega(\nix),\Theta(\nix)$ to refer to limit $n\to\infty$.
It is understood that the constants hidden in, e.g., a $O(\nix)$-term may depend on $\theta,\vec p$ or other parameters, and that a $O(\nix)$-term may have a positive or a negative sign.
To avoid case distinctions we sometimes take the liberty of calculating with the values $\pm\infty$.
The usual conventions $\infty+\infty=\infty\cdot\infty=\infty$ and $0\cdot\infty=0$ apply.
Furthermore, we set $\tanh(\pm\infty)=\pm1$.
Also recall that $0\log0=0\log\frac00=0$.
Additionally, $\log0=-\infty$ and $\frac 1 0 = \log\frac10=\infty$.

Finally, for two random variables $\vX,\vY$ defined on the same finite probability space $(\Omega,\pr\brk\nix)$ we write
\begin{align*}
I(\vX,\vY)&=\sum_{\omega,\omega'\in\Omega}\pr\brk{\vX=\omega,\,\vY=\omega'}\log\frac{\pr\brk{\vX=\omega,\,\vY=\omega'}}{\pr\brk{\vX=\omega}\pr\brk{\vY=\omega'}}
\end{align*}
for the {\em mutual information} of $\vX,\vY$.
We recall that $I(\vX,\vY)\geq0$.

\section{Overview}\label{sec_over}

\noindent
We proceed to survey the functioning of the algorithms \SPARC\ and \SPEX. 
To get started we briefly discuss the best previously known algorithm for noisy group testing, the {\tt DD} algorithm from~\cite{Maurice}, which operates on the constant column design.
We will discover that {\tt DD} can be viewed as a truncated version of the Belief Propagation (BP) message passing algorithm.
BP is a generic heuristic for inference problems backed by physics intuition~\cite{MM,Zdeborova_2016}.
Yet unfortunately BP is notoriously difficult to analyse.
Even worse, it seems unlikely that full Belief Propagation will significantly outperform {\tt DD} on the constant column design; evidence of this was provided in~\cite{ilias} in the noiseless case.
The basic issue is the lack of a good initialisation of the BP messages.

To remedy this issue we resort to the spatially coupled test design $\Gsc$, which combines a randomised and a spatial construction.
The basic idea resembles domino toppling.
Starting from an easy-to-diagnose `seed', the algorithm works its way forward in a well-defined direction until all individuals have been diagnosed.
Spatial coupling has proved useful in related inference problems, including noiseless group testing~\cite{opt}.
Therefore, a natural stab at group testing would be to run BP on $\Gsc$.
Indeed, the BP intuition provides a key ingredient to the \SPARC\ algorithm, namely the update equations (see \eqref{eqWtau} below), of which the correct choice of the weights (Eq.\ \eqref{eqweights} below) is the most important ingredient.
But in light of the difficulty of analysing textbook BP, \SPARC\ relies on a modified version of BP that better lends itself to a rigorous analysis.
Furthermore, the \SPEX\ algorithm for exact recovery combines \SPARC\ with a clean-up step.

\SPARC\ and \SPEX\ can be viewed as generalised versions of the noiseless group testing algorithm called \SPIV\ from~\cite{opt}.
However, \cite{opt} did not exploit the connection with BP.
Instead, in the noiseless case the correct weights were simply `guessed' based on combinatorial intuition, an approach that it seems difficult to generalise.
Hence, the present, systematic derivation of the weights \eqref{eqweights} also casts new light on the noiseless case.
In fact, we expect that the paradigm behind \SPARC\ and \SPEX, namely to use BP heuristically to find the correct parameters for a simplified message passing algorithm, potentially generalises to other inference problems as well.

\subsection{The {\tt DD} algorithm}\label{sec_dd}

\noindent
The {\tt DD} algorithm from~\cite{Maurice} utilises the constant column design $\Gcc$.%
	\footnote{The article~\cite{Maurice} also investigates the performance of {\tt DD} on the Bernoulli design, which turns out to be inferior.}
Thus, each individual independently joins an equal number $\Delta$ of random tests.
Given the displayed test results, {\tt DD} first declares certain individuals as uninfected by thresholding the number of negatively displayed tests.
More precisely, {\tt DD} declares as uninfected any individual that appears in at least $\alpha\Delta$ negatively displayed tests, with $\alpha$ a diligently chosen threshold.
Having identified the respective individuals as uninfected, {\tt DD} looks out for tests $a$ that display a positive result and that only contain a single individual $x$ that has not been identified as uninfected yet.
Since such tests $a$ hint at $x$ being infected, in its second step {\tt DD} declares as infected any individual $x$ that appears in at least $\beta\Delta$ positively displayed tests $a$ where all other individuals $y\in\partial a\setminus x$ were declared uninfected by the first step.
Once again $\beta$ is a carefully chosen threshold.
Finally, {\tt DD} declares as uninfected all remaining individuals.

The {\tt DD} algorithm exactly recovers the infected set \whp provided the total number $m$ of tests is sufficiently large such that the aforementioned thresholds $\alpha,\beta$ exist.
The required number of tests, which comes out in terms of a mildly delicate optimisation problem, was determined in~\cite{Maurice}.
Let
\begin{align}\label{eqq}
	q_0^-&=\exp(-d)p_{00}+(1-\exp(-d))p_{10},&q_0^+&=\exp(-d)p_{01}+(1-\exp(-d))p_{11}.
\end{align}

\begin{theorem}[{\cite[\Thm~2.2]{Maurice}}]\label{thm_DD}
	Let $\eps>0$ and with $\alpha\in(p_{10},q_0^-)$ and $\beta\in(0,\exp(-d)p_{11})$ let
		\begin{align*}
			c_{\DD}&=\min_{\alpha,\beta,d} \max\cbc{c_{\DD,1}(\alpha,d),c_{\DD,2}(\alpha,d),c_{\DD,3}(\beta,d),c_{\DD,4}(\alpha,\beta,d)},\qquad\mbox{where}\\
			c_{\DD,1}(\alpha,d)&=\frac{\theta}{(1-\theta)\KL\alpha{p_{10}}},\qquad
			c_{\DD,2}(\alpha,d)=\frac1{d\KL{\alpha}{q_0^-}},\qquad
			c_{\DD,3}(\beta,d)=\frac{\theta}{d(1-\theta)\KL{\beta}{p_{11}\exp(-d)}},\\
			c_{\DD,4}(\alpha,\beta,d)&=\max_{(1-\alpha)\vee\beta\leq z\leq1}\bc{d(1-\theta)\bc{\KL z{q_0^+}+\vecone\cbc{\beta>\frac{z\exp(-d)p_{01}}{q_0^+}}z\KL{\beta/z}{\exp(-d)p_{01}/q_0^+}}}^{-1}.
		\end{align*}
	If $m\geq(1+\eps)c_{\DD}k\log(n/k)$, then there exists $\Delta>0$ and $0\leq \alpha,\beta\leq1$ such that the {\tt DD} algorithm outputs $\SIGMA$ \whp.
\end{theorem}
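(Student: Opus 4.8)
The plan is to match each of the four terms $c_{\DD,1},\dots,c_{\DD,4}$ with exactly one failure mode of the two-step \DD\ procedure and to bound each failure probability by $o(1)$ via a Chernoff estimate (Lemmas~\ref{lem_chernoff} and~\ref{lem_hyperchernoff}) together with a union bound. First I would fix thresholds $\alpha\in(p_{10},q_0^-)$, $\beta\in(0,\exp(-d)p_{11})$ and $d$ that attain the minimum defining $c_{\DD}$ up to an arbitrarily small error, and set the column weight $\Delta:=\lceil dm/k\rceil$. Since the $n$ columns are i.i.d.\ uniform $\Delta$-subsets of the test set, each test independently misses all $k$ infected columns with probability $(1-\Delta/m)^k=\exp(-d+o(1))$; hypergeometric/binomial concentration (Lemma~\ref{lem_hyperchernoff}) then gives the high-probability ``niceness'' event that every test has size $(1+o(1))dn/k$ and that the set $F_0$ of actually negative tests has $|F_0|=(1+o(1))\exp(-d)m$. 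I would condition on this event and on $\SIGMA$ (uniform of weight $k$ by symmetry); given the resulting test profile, the noise and the incidences become product-form up to $(1+o(1))$ corrections, which legitimises the binomial approximations below.

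\emph{Step~1.} An infected $x$ lies in $\Delta$ actually positive tests, each independently displaying negative with probability $p_{10}$, so $|\partial^-x|\sim\Bin(\Delta,p_{10})$; as $\alpha>p_{10}$, Lemma~\ref{lem_chernoff} gives $\Pr[|\partial^-x|\geq\alpha\Delta]\leq\exp(-\Delta\KL\alpha{p_{10}})$, and a union bound over the $k$ infected individuals shows that whp no infected individual is declared uninfected in Step~1 — this is the requirement $c\geq(1+\eps)c_{\DD,1}(\alpha,d)$. Symmetrically, a healthy $y$ lies in $\Delta$ tests, each actually negative with probability $\exp(-d+o(1))$ and hence displaying negative with probability $q_0^-+o(1)$; as $\alpha<q_0^-$ this gives $\Pr[|\partial^-y|<\alpha\Delta]\leq\exp(-(1+o(1))\Delta\KL\alpha{q_0^-})$, and a union bound over all healthy individuals shows that whp every healthy individual is declared uninfected — the requirement $c\geq(1+\eps)c_{\DD,2}(\alpha,d)$.

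\emph{Step~2.} Work on the Step~1 outcome just controlled: whp no infected individual is declared and every healthy individual is declared uninfected. \emph{Completeness} (every infected $x$ gets confirmed): among $x$'s $\Delta$ tests, ``being clean'' (containing no other infected individual) occurs independently with probability $\exp(-d+o(1))$ and independently of the noise, and a clean test displays positive with probability $p_{11}$; since all healthy individuals are resolved, every clean positive-displayed test of $x$ has all its other members declared uninfected, so the number of tests confirming $x$ dominates $\Bin(\Delta,\exp(-d)p_{11}-o(1))$, and (using $\beta<\exp(-d)p_{11}$) a Chernoff bound and union over the $k$ infected individuals yields completeness under $c\geq(1+\eps)c_{\DD,3}(\beta,d)$. \emph{Soundness} (no healthy $y$ gets confirmed): \DD\ examines in Step~2 only individuals not already declared uninfected, so a healthy $y$ can be wrongly confirmed only if (i) $|\partial^-y|<\alpha\Delta$, equivalently $|\partial^+y|=z\Delta$ with $z>1-\alpha$, and (ii) at least $\beta\Delta$ of those $z\Delta$ positive-displayed tests have all their other members declared uninfected — which, since exactly the healthy individuals are resolved, means at least $\beta\Delta$ of them are actually negative. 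Given that a test contains $y$ and displays positive, Bayes' rule makes it actually negative with probability $\exp(-d)p_{01}/q_0^+$, so among the $z\Delta$ positive tests of $y$ the number of actually negative ones is $\approx\Bin(z\Delta,\exp(-d)p_{01}/q_0^+)$. Combining the large-deviation rate $\KL z{q_0^+}$ for (i) (note $z>1-\alpha=q_0^+$ since $q_0^-+q_0^+=1$) with the rate for (ii), which contributes the extra term $z\KL{\beta/z}{\exp(-d)p_{01}/q_0^+}$ precisely when $\beta>z\exp(-d)p_{01}/q_0^+$ (hence the indicator), and taking the worst admissible $z\in[(1-\alpha)\vee\beta,1]$, a union bound over the $n$ healthy individuals gives soundness under $c\geq(1+\eps)c_{\DD,4}(\alpha,\beta,d)$.

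Putting the four conditions together: choosing $(\alpha,\beta,d)$ with $\max_i c_{\DD,i}$ within $\eps$ of $c_{\DD}$, once $m\geq(1+\eps)c_{\DD}k\log(n/k)$ all four requirements hold with room to spare, so each failure event has probability $n^{-\Omega(\eps)}=o(1)$ and \DD\ outputs $\SIGMA$ whp. I expect the genuinely delicate estimate to be the soundness term $c_{\DD,4}$: unlike the other three it is not a plain binomial tail, since whether a positive-displayed test counts towards confirming a healthy $y$ depends on the Step~1 classification of the $\Theta(dn/k)$ other individuals in that test, and these classifications are correlated across the $\Delta$ tests of $y$ through shared individuals. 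The way around this is to first expose the set of Step~1-resolved individuals (which by the $c_{\DD,2}$ analysis equals all of $V_0$ whp, but one should carry a polylogarithmic slack rather than invoke the clean statement), condition on it, and then verify that, given this set, the events ``test $a\ni y$ displays positive and $\partial a\setminus\{y\}$ is resolved'' are independent across $a\ni y$ up to $(1+o(1))$ factors — after which the two-regime maximisation over $z$ is exactly the optimisation of the resulting rate function. The remaining bookkeeping (Poissonising test sizes, propagating the $o(1)$ corrections to $\exp(-d)$, and checking that $\Delta=\Theta(\log n)$ makes the Chernoff bounds beat $1/n$) is routine.
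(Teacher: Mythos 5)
The paper does not actually prove this statement: \Thm~\ref{thm_DD} is imported verbatim from \cite[\Thm~2.2]{Maurice}, so there is no in-paper proof to compare against. Judged on its own terms, your sketch is a faithful reconstruction of the argument in \cite{Maurice}: the four constants do correspond exactly to the four failure modes you list (infected individual killed in Step~1; healthy individual surviving Step~1; infected individual not confirmed in Step~2; healthy individual wrongly confirmed in Step~2), each is controlled by the Chernoff/hypergeometric rate you write down, and your treatment of $c_{\DD,4}$ — splitting on the fraction $z$ of positively displayed tests, using Bayes to get the conditional probability $\exp(-d)p_{01}/q_0^+$ that such a test is actually negative, and keeping the extra rate only when $\beta/z$ exceeds that probability — is precisely the two-regime optimisation behind the stated formula. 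You also correctly isolate the one genuinely delicate point, namely that the events entering the $c_{\DD,2}$ and $c_{\DD,4}$ bounds are not exactly independent across the $\Delta$ tests of a fixed individual because tests share other individuals; exposing the Step-1-resolved set first (with a polylogarithmic slack rather than the clean ``equals $V_0$'' event) and then arguing product-form up to $(1+o(1))$ factors is how this is handled in \cite{Maurice} as well.

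One consistency remark that supports rather than undermines your derivation: with $\Delta=dm/k=cd(1-\theta)\log n$, your union bounds give $c>\theta/(d(1-\theta)\KL{\alpha}{p_{10}})$ and $c>1/(d(1-\theta)\KL{\alpha}{q_0^-})$ for the first two conditions, which agree with the displayed $c_{\DD,3}$ and $c_{\DD,4}$ in normalisation but differ from the displayed $c_{\DD,1}$ and $c_{\DD,2}$ by factors of $1/d$ and $1/(1-\theta)$ respectively. Since $c_{\DD,1}$ is declared as a function of $d$ even though no $d$ appears in its formula, these are almost certainly transcription slips in the quoted statement rather than errors in your argument.
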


The distinct feature of {\tt DD} is its simplicity.
However, the thresholding that {\tt DD} applies does seem to leave something on the table.
For a start, whether {\tt DD} identifies a certain individual $x$ as infected depends only on the results of tests that have distance at most three from $x$ in the graph $\Gcc$.
Moreover, it seems wasteful that {\tt DD} takes only those positively displayed tests into consideration where all but one individual were already identified as uninfected.

\subsection{Belief Propagation}\label{sec_bp}
Belief Propagation is a message passing algorithm that is expected to overcome these deficiencies.
In fact, heuristic arguments suggests that BP might be the ultimate recovery algorithm for a wide class of inference algorithms on random graphs~\cite{Zdeborova_2016}.
That said, rigorous analyses of BP are few and far between.

Following the general framework from~\cite{MM}, in order to apply BP to a group testing design $G=(V\cup F,E)$ we equip each test $a\in F$ with a weight function
\begin{align}\label{eqpsidef}
	\psi_a=\psi_{G,\dSIGMA,a}&:\{0,1\}^{\partial a}\to\RRpos,&
	\sigma_{\partial a_i}&\mapsto
	\begin{cases}
		\vecone\{\|\sigma\|_{1}=0\}p_{00}+\vecone\{\|\sigma\|_{1}>0\}p_{10}&\mbox{ if $a\in F^-$},\\
		\vecone\{\|\sigma\|_{1}=0\}p_{01}+\vecone\{\|\sigma\|_{1}>0\}p_{11}&\mbox{ if $a\in F^+$}.
		\end{cases}
\end{align}
Thus, $\psi_a$ takes as argument a $\{0,1\}$-vector $\sigma=(\sigma_x)_{x\in\partial a}$ indexed by the individuals that take part in test $a$.
The weight $\psi_a(\sigma)$ equals the probability of observing the result that $a$ displays if the infection status were $\sigma_x$ for every individual $x\in\partial a$.
In other words, $\psi_a$ encodes the posterior under the $\vec p$-channel.
Given $G,\dSIGMA$ the weight functions give rise to the total weight of $\sigma\in\{0,1\}^V$ by letting
\begin{align}\label{eqpsiG}
	\psi(\sigma)=\psi_{G,\dSIGMA}(\sigma)&=\vecone\cbc{\|\sigma\|_1=k}\prod_{a\in F}\psi_{a}(\sigma_{\partial a}).
\end{align}
Thus, we just multiply up the contributions \eqref{eqpsidef} of the various tests, and add in the prior assumption that precisely $k$ individuals are infected.
The total weight \eqref{eqpsiG} induces a probability distribution
\begin{align}\label{eqBoltzmann}
	\mu_{G,\dSIGMA}(\sigma)&=\psi_{G,\dSIGMA}(\sigma)/Z_{G,\dSIGMA},&\mbox{where}&&Z_{G,\dSIGMA}&=\sum_{\sigma\in\{0,1\}^V}\psi_{G,\dSIGMA}(\sigma).
\end{align}
A simple application of Bayes' rule shows that $\mu_G$ matches the posterior of the ground truth $\SIGMA$ given the test results.

\begin{fact}\label{fact_Nishi}
	For any test design $G$ we have $\mu_{G,\dSIGMA}(\sigma)=\pr\brk{\SIGMA=\sigma\mid \dSIGMA}$.
\end{fact}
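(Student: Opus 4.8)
The plan is a direct application of Bayes' rule, exploiting that the test weights $\psi_a$ in \eqref{eqpsidef} were defined precisely to coincide with the transition probabilities of the $\vec p$-channel. First I would record the prior: since $\SIGMA$ is a uniformly random vector of Hamming weight $k$, we have $\pr\brk{\SIGMA=\sigma}=\vecone\cbc{\|\sigma\|_1=k}\binom nk^{-1}$ for every $\sigma\in\{0,1\}^V$.

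Next I would compute the likelihood $\pr\brk{\dSIGMA\mid\SIGMA=\sigma}$. Conditioned on $\SIGMA=\sigma$ the actual result of a test $a$ is the deterministic bit $\tau_a(\sigma)=\vecone\cbc{\|\sigma_{\partial a}\|_1>0}$, and by the noise model \eqref{eqnoisemodel} the displayed results are independent across $a\in F$, with test $a$ taking value $\dSIGMA_a$ with probability $p_{\tau_a(\sigma)\,\dSIGMA_a}$. Reading off the two cases of \eqref{eqpsidef} (recall $a\in F^-$ means $\dSIGMA_a=0$ and $a\in F^+$ means $\dSIGMA_a=1$) one sees that this probability equals $\psi_a(\sigma_{\partial a})$, so $\pr\brk{\dSIGMA\mid\SIGMA=\sigma}=\prod_{a\in F}\psi_a(\sigma_{\partial a})$. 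Hence $\pr\brk{\dSIGMA\mid\SIGMA=\sigma}\,\pr\brk{\SIGMA=\sigma}=\binom nk^{-1}\psi_{G,\dSIGMA}(\sigma)$; summing over $\sigma\in\{0,1\}^V$ gives $\pr\brk{\dSIGMA}=\binom nk^{-1}Z_{G,\dSIGMA}$, which is positive for every $\dSIGMA$ in the support of the model. Bayes' rule then yields $\pr\brk{\SIGMA=\sigma\mid\dSIGMA}=\psi_{G,\dSIGMA}(\sigma)/Z_{G,\dSIGMA}=\mu_{G,\dSIGMA}(\sigma)$, the factors $\binom nk^{-1}$ cancelling.

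There is no real obstacle: the statement is bookkeeping. The only points deserving a line of care are verifying against \eqref{eqpsidef} that $\psi_a(\sigma_{\partial a})$ really is the channel probability $p_{\tau_a(\sigma)\,\dSIGMA_a}$ in both the positive- and negative-display cases, and observing that conditioning on a displayed-result vector of probability zero --- which can occur only if $\vec p$ has a vanishing entry --- is vacuous, so the identity is understood for $\dSIGMA$ in the support of the model.
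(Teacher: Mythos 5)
Your proof is correct and is exactly the ``simple application of Bayes' rule'' that the paper invokes without spelling out: the weights $\psi_a$ of \eqref{eqpsidef} are the channel likelihoods, the indicator in \eqref{eqpsiG} is the uniform prior on weight-$k$ vectors up to the constant $\binom nk^{-1}$, and normalising gives the posterior. The only (harmless) notational point is that all probabilities should be understood as conditioned on $G$, as in \eqref{eqnoisemodel}, since the design may be random.
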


BP is a heuristic to calculate the marginals of $\mu_{G,\dSIGMA}$ or, in light of Fact~\ref{fact_Nishi}, the posterior probabilities $\pr[\SIGMA_{x_i}=1\mid\dSIGMA]$.
To this end BP associates {\em messages} with the edges of $G$.
Specifically, for any adjacent individual/test pair $x,a$ there is a message $\mu_{x\to a,t}(\nix)$ from $x$ to $a$, and another one $\mu_{a\to x,t}(\nix)$ in the reverse direction.
The messages are updated in rounds and therefore come with a time parameter $t\in\ZZpos$.
Moreover, being probability distributions on $\{0,1\}$, the messages always satisfy
\begin{align}\label{eq_msg_norm}
	\mu_{x\to a,t}(0)+\mu_{x\to a,t}(1)=\mu_{a\to x,t}(0)+\mu_{a\to x,t}(1)=1.
\end{align}
The intended semantics is that $\mu_{x\to a,t}(1)$ estimates the probability that $x$ is infected given all known information except the result of test $a$.
Analogously, $\mu_{a\to x,t}(1)$ estimates the probability that $x$ is infected if we disregard all other tests $b\in\partial x\setminus\{a\}$.

This slightly convoluted interpretation of the messages facilitates simple heuristic formulas for computing the messages iteratively.
To elaborate, in the absence of a better a priori estimate, at time $t=0$ we simply initialise in accordance with the prior, i.e.,
\begin{align}\label{eqBPinit}
	\mu_{x\to a,0}(0)&=1-k/n&\mu_{x\to a,0}(1)&=k/n.
\end{align}
Subsequently we use the weight function \eqref{eqpsidef} to update the messages: inductively for $t\geq1$ and $r\in\{0,1\}$ let
\begin{align}\label{eqBPupdate1}
	\mu_{a\to x,t}(r)&\propto
		\begin{cases}
			p_{11}&\mbox{if }r=1,\,a\in F^+,\\
			p_{11}+(p_{01}-p_{11})\prod_{y\in\partial a\setminus\{x\}}\mu_{y\to a,t-1}(0)&\mbox{if }r=0,\,a\in F^+,\\
			p_{10}&\mbox{if }r=1,\,a\in F^-,\\
			p_{10}+(p_{00}-p_{10})\prod_{y\in\partial a\setminus\{x\}}\mu_{y\to a,t-1}(0)&\mbox{if }r=0,\,a\in F^-,
		\end{cases}\\
	\mu_{x\to a,t}(r)&\propto\bcfr{k}{n}^r\bc{1-\frac kn}^{1-r}\prod_{b\in \partial x\setminus\{a\}}\mu_{b\to x,t-1}(r).\label{eqBPupdate2}
\end{align}
The $\propto$-symbol hides the necessary normalisation to ensure that the messages satisfy \eqref{eq_msg_norm}.
Furthermore, the $(k/n)^r$ and $(1-k/n)^{1-r}$-prefactors in \eqref{eqBPupdate2} encode the prior that precisely $k$ individuals are infected.
The expressions \eqref{eqBPupdate1}--\eqref{eqBPupdate2} are motivated by the hunch that for most tests $a$ the values $(\SIGMA_y)_{y\in \partial a}$ should be stochastically dependent primarily through their joint membership in test $a$.
An excellent exposition of BP can be found in \cite{MM}.

How do we utilise the BP messages to infer the actual infection status of each individual?
The idea is to perform the update \eqref{eqBPupdate1}--\eqref{eqBPupdate2} for a `sufficiently large' number of rounds, say until an approximate fixed point is reached.
The (heuristic) BP estimate of the posterior marginals after $t$ rounds then reads
\begin{align}\label{eqBPmarg}
	\mu_{x,t}(r)&\propto\bcfr{k}{n}^r\bc{1-\frac kn}^{1-r}\prod_{b\in \partial x}\mu_{b\to x,t-1}(r)&&(r\in\{0,1\}).
\end{align}
Thus, by comparison to \eqref{eqBPupdate2} we just take the incoming messages from {\em all} tests $b\in\partial x$ into account.
In summary, we `hope' that after sufficiently many updates we have $\mu_{x,t}(r)\approx\pr[\SIGMA_x=r\mid\dSIGMA]$.
We could then, for instance, declare the $k$ individuals with the greatest $\mu_{x,t}(1)$ infected, and everybody else uninfected.


For later reference we point out that the BP updates \eqref{eqBPupdate1}--\eqref{eqBPupdate2} and \eqref{eqBPmarg} can be simplified slightly by passing to log-likelihood ratios.
Thus, define
\begin{align}\label{eqBPLog}
	\eta_{x\to a,t}&=\log\frac{\mu_{G,\dSIGMA,x\to a,t}(1)}{\mu_{G,\dSIGMA,x\to a,t}(0)},&
	\eta_{G,\dSIGMA,a\to x,t}&=\log\frac{\mu_{G,\dSIGMA,a\to x,t}(1)}{\mu_{G,\dSIGMA,a\to x,t}(0)},
\end{align}
with the initialisation $\eta_{G,\dSIGMA,x\to a,0}=\log(k/(n-k))\sim(\theta-1)\log n$ from \eqref{eqBPinit}.
Then \eqref{eqBPupdate1}--\eqref{eqBPmarg} transform into
\begin{align}\label{eqBPLogUpdate1}
	\eta_{a\to x,t}&=\begin{cases}
		\log p_{11}-\log\brk{p_{11}+(p_{01}-p_{11})\prod_{y\in\partial a\setminus\{x\}}\frac12\bc{1-\tanh\bc{\frac12\eta_{y\to a,t-1}}}}&\mbox{if }a\in F^+,\\
		\log p_{10}-\log\brk{p_{10}+(p_{00}-p_{10})\prod_{y\in\partial a\setminus\{x\}}\frac12\bc{1-\tanh\bc{\frac12\eta_{y\to a,t-1}}}}&\mbox{if }a\in F^-,\\
	\end{cases}\\
	\eta_{x\to a,t}&=(\theta-1)\log n+\sum_{b\in\partial x\setminus\{a\}}\eta_{b\to x,t},\label{eqBPLogUpdate2},\quad \eta_{x,t}=(\theta-1)\log n+\sum_{b\in\partial x}\eta_{b\to x,t}.
\end{align}
In this formulation BP ultimately diagnoses the $k$ individuals with the largest $\eta_{x,t}$ as infected.

Under the assumptions of \Thm~\ref{thm_DD} the {\tt DD} algorithm can be viewed as the special case of BP with $t=2$ applied to $\Gcc$.
Indeed, the analysis of {\tt DD} evinces that on $\Gcc$ the largest $k$ BP estimates \eqref{eqBPmarg} with $t\geq2$ correctly identify the infected individuals \whp%
\footnote{In the noiseless case {\tt DD} is actually a special case of a discrete message passing algorithm called Warning Propagation~\cite{MM}.}

It is therefore an obvious question whether BP on the constant column fits the bill of \Thm~\ref{thm_alg_apx}.
Clearly, BP remedies the obvious deficiencies of {\tt DD} by taking into account information from a larger radius around an individual (if we iterate beyond $t=2$).
Also in contrast to {\tt DD}'s hard thresholding the update rules \eqref{eqBPupdate1}--\eqref{eqBPupdate2} take information into account in a more subtle, soft manner.
Nonetheless, we do not expect that {\tt BP} applied to the constant column design meets the information-theoretically optimal bound from \Thm~\ref{thm_alg_apx}.
In fact, there is strong evidence that {\tt BP} does not suffices to meet the information-threshold for all $\theta$ even in the noiseless case~\cite{ilias}.
The fundamental obstacle appears to be the `cold' initialisation \eqref{eqBPinit}, which (depending on the parameters) can cause the BP messages to approach a meaningless fixed point.
Yet for symmetry reasons on the constant column design no better starting point than the prior \eqref{eqBPinit} springs to mind; after all, $\Gcc$ is a nearly biregular random graph, and thus all individuals look alike.
To overcome this issue we will employ a different type of test design that enables a warm start for BP.
This technique goes by the name of spatial coupling.

\subsection{Spatial coupling}\label{sec_gsc}

The thrust of spatial coupling is to blend a randomised construction, in our case the constant column design, with a spatial arrangement so as to provide a propitious starting point for BP.
Originally hailing from coding theory, spatial coupling has also been used in previous work on noiseless group testing~\cite{opt}.
In fact, the construction that we use is essentially identical to that from~\cite{opt} (with suitably adapted parameters).

To set up the spatially coupled test design $\Gsp$ we divide the set $V=V_n=\{x_1,\ldots,x_n\}$ of individuals into
\begin{align}\label{eqell}
	\ell=\lceil\sqrt{\log n}\rceil
\end{align}
pairwise disjoint \emph{compartments} $V[1],\ldots,V[\ell]\subset V$ such that $\lfloor n/\ell\rfloor\leq|V[i]|\leq\lceil n/\ell\rceil$.
We think of these compartments as being spatially arranged so that $V[i+1]$ comes `to the right' of $V[i]$ for $1\leq i<\ell$.
More precisely, we arrange the compartments in a ring topology such that $V[\ell]$ is followed again by $V[1]$.
Hence, for notational convenience let $V[\ell+j]=V[j]$ and $V[1-j]=V[\ell-j+1]$ for $1\leq j\leq\ell$.
Additionally, we introduce $\ell$ compartments $F[1],\ldots,F[\ell]$ of tests arranged in the same way: think of $F[i]$ as sitting `above' $V[i]$.
We assume that the total number $m$ of tests in $F[1]\cup\ldots\cup F[\ell]$ is divisible by $\ell$ and satisfies $m=\Theta(k\log(n/k))$.
Hence, let each compartment $F[i]$ contain precisely $m/\ell$ tests.
As in the case of the individuals we let $F[\ell+j]=F[j]$ for $0<j\leq\ell$.
Additionally, let 
\begin{align}\label{eqs}
	s&=\lceil\log\log n\rceil&&\mbox{and}&\Delta&=\Theta(\log n)
\end{align}
be integers such that $\Delta$ is divisible by $s$.
Construct $\Gsp$ by letting each $x\in V[i]$ join precisely $\Delta/s$ tests from $F[i+j-1]$ for $j=1,\ldots,s$. 
These tests are chosen uniformly without replacement and independently for different $x$ and $j$.
Additionally, $\Gsp$ contains a compartment $F[0]$ of
\begin{align}\label{eqF0}
	m_0&=2c_{\DD}\frac{ks}\ell\log(n/k)=o(k\log(n/k))
\end{align}
tests.
Every individual $x$ from the first $s$ compartments $V[1],\ldots,V[s]$ joins an equal number $\Delta_0$ of tests from $F[0]$.
These tests are drawn uniformly without replacement and independently.
For future reference we let
\begin{align}\label{eqcndn}
	c=\cn&=\frac{m}{k\log(n/k)},&d=\dn&=\frac{k\Delta}{m};
\end{align}
the aforementioned assumptions on $m,\Delta$ ensure that $c,d=\Theta(1)$ and the total number of tests of $\Gsp$ comes to
\begin{align}\label{eqtotaltests}
	\sum_{i=0}^{\ell}|F[i]|&=(c+o(1))k\log(n/k).
\end{align}

In summary, $\Gsp$ consists of $\ell$ equally sized compartments $V[i]$, $F[i]$ of tests plus one extra serving $F[0]$ of tests.
Each individual $V[i]$ joins random tests in the $s$ consecutive compartments $F[i+j-1]$ with $1\leq j\leq s$.
Additionally, the individuals in the first $s$ compartments $V[1]\cup\cdots\cup V[s]$, which we refer to as the {\em seed}, also join the tests in $F[0]$.

We will discover momentarily how $\Gsp$ facilitates inference via BP.
But first let us make a note of some basic properties of $\Gsp$.
Recall that $\SIGMA\in\{0,1\}^V$, which encodes the true infection status of each individual, is chosen uniformly and independently of $\Gsp$ from all vectors of Hamming weight $k$.
Let $V_1=\{x\in V:\SIGMA_x=1\}$, $V_0=V\setminus V_1$ and let $V_r[i]=V_r\cap V[i]$ be the set of individuals with infection status $r\in\{0,1\}$ in compartment $i$.
Furthermore, recall that $\aSIGMA_a\in\{0,1\}$ denotes the actual result of test $a\in F=F[0]\cup\cdots\cup F[\ell]$, and that $\dSIGMA_a$ signifies the displayed result of $a$ as per~\eqref{eqnoisemodel}.
For $r\in\{0,1\}$ and $0\leq i\leq\ell$ let
\begin{align*}
	F_r[i]&=F_r\cap F[i],&F_r^+[i]&=F_r[i]\cap F^+,&F_r^-[i]&=F_r[i]\cap F^-.
\end{align*}
Thus, the subscript indicates the actual test result, while the superscript indicates the displayed result. 
In \Sec~\ref{sec_prop_basic} we will prove the following. 

\begin{proposition}\label{prop_basic}
	The test design $\Gsp$ enjoys the following properties with probability $1-o(n^{-2})$.
	\begin{description}
		\item[G1]  The number of infected individuals in the various compartments satisfy
			\begin{align}\label{eqG1}
				\frac k\ell-\sqrt{\frac k\ell}\log n\leq\min_{i\in[\ell]}|V_1[i]|\leq\max_{i\in[\ell]}|V_1[i]|\leq \frac k\ell+\sqrt{\frac k\ell}\log n.
			\end{align}
		\item[G2] For all $i\in[\ell]$ the numbers of tests that are actually/displayed positive/negative satisfy
			\begin{align}
				\frac{m}{\ell}\exp(-d)p_{00}-\sqrt m\ln^3 n&\leq\abs{F_0^-[i]}\leq\frac m{\ell}\exp(-d)p_{00}+\sqrt m\ln^3 n,\label{eqG3_1}\\
				\frac{m}{\ell}\exp(-d)p_{01}-\sqrt m\ln^3 n&\leq\abs{F_0^+[i]}\leq\frac m{\ell}\exp(-d)p_{01}+\sqrt m\ln^3 n,\label{eqG3_2}\\
				\frac{m}{\ell}(1-\exp(-d))p_{10}-\sqrt m\ln^3 n&\leq\abs{F_1^-[i]}\leq\frac m{\ell}(1-\exp(-d))p_{10}+\sqrt m\ln^3 n,\label{eqG3_3}\\
				\frac{m}{\ell}(1-\exp(-d))p_{11}-\sqrt m\ln^3 n&\leq\abs{F_1^+[i]}\leq\frac m{\ell}(1-\exp(-d))p_{11}+\sqrt m\ln^3 n.\label{eqG3_4}
			\end{align}
	\end{description}
\end{proposition}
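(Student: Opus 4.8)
\emph{Proof proposal.} The plan is a three-stage argument. Property \textbf{G1} involves only the random infection vector $\SIGMA$, so I prove it first; I then fix a realisation of $\SIGMA$ compatible with \textbf{G1} and establish concentration of the counts $|F_0[i]|$ of actually negative tests; finally, conditioning also on the test design, I use the independence of the channel noise to control all four displayed counts in \textbf{G2}. Throughout, the conditioning on $\SIGMA$ (and later on $(\SIGMA,\Gsp)$) is absorbed by averaging rather than by a union bound, so no blow-up in the number of $\SIGMA$'s occurs; since each individual failure probability will be $\exp(-\omega(\log^2 n))$ and there are only $O(\ell)=o(\log n)$ relevant events, all remaining union bounds stay well within $o(n^{-2})$. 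For \textbf{G1}: as $\SIGMA$ is a uniformly random weight-$k$ vector independent of $\Gsp$ and $V[i]$ is fixed with $\lfloor n/\ell\rfloor\leq|V[i]|\leq\lceil n/\ell\rceil$, the count $|V_1[i]|$ is hypergeometric, $|V_1[i]|\sim\Hyp(n,k,|V[i]|)$, with mean $k|V[i]|/n=k/\ell+O(k/n)$; applying \Lem~\ref{lem_hyperchernoff} with $p=k/n$, and noting $k/\ell=\omega(\log^2 n)$, a deviation of $\sqrt{k/\ell}\log n$ about the mean has exponent at least $(1-o(1))\log^2 n/2$ by a standard quadratic estimate of the Kullback--Leibler divergence, so $|V_1[i]|$ deviates from $k/\ell$ by more than $\sqrt{k/\ell}\log n$ with probability at most $2\exp(-\log^2 n/3)$, and a union bound over $i\in[\ell]$ shows \textbf{G1} fails with probability $o(n^{-2})$.

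\emph{Concentration of $|F_0[i]|$.} Fix $i\in[\ell]$ and a value of $\SIGMA$ satisfying \textbf{G1}. By construction a test $a\in F[i]$ receives individuals only from $V[i-s+1]\cup\cdots\cup V[i]$, so the event $\{a\in F_0[i]\}$ depends solely on the test choices of the infected individuals in $W=W(\SIGMA):=V_1[i-s+1]\cup\cdots\cup V_1[i]$, and $|W|=sk/\ell+O(s\sqrt{k/\ell}\log n)$ by \textbf{G1}. For fixed $a$ and $x\in W$ the event $\{a\in\partial x\}$ has probability $(\Delta/s)/(m/\ell)=\ell d/(sk)$ exactly (recall $d=k\Delta/m$), independently over $x\in W$; hence $\Erw[\,|F_0[i]|\mid\SIGMA\,]=(m/\ell)(1-\ell d/(sk))^{|W|}$, and a short expansion (using $\ell d/(sk)=o(1)$ and the estimate for $|W|$) shows this equals $(m/\ell)\exp(-d)$ up to an additive $o(\sqrt m\log^3 n)$, uniformly over $\SIGMA$ satisfying \textbf{G1}. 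For concentration I apply McDiarmid's bounded-differences inequality to the independent choice vectors of the $|W|=O(sk/\ell)$ individuals of $W$, each of which shifts $|F_0[i]|$ by at most $2\Delta/s$: with $t=\tfrac14\sqrt m\log^3 n$ this gives a tail of $2\exp(-\Omega(t^2 s\ell/(k\Delta^2)))=\exp(-\omega(\log^2 n))$ after substituting $m=\Theta(k\log n)$, $\Delta=\Theta(\log n)$, $s=\Theta(\log\log n)$, $\ell=\Theta(\sqrt{\log n})$. Averaging over $\SIGMA$ and taking a union bound over $i$, with probability $1-o(n^{-2})$ the event \textbf{G1} holds and each $|F_0[i]|$ is within $\tfrac12\sqrt m\log^3 n$ of $(m/\ell)\exp(-d)$; since $|F[i]|=m/\ell$, the complementary count $|F_1[i]|$ is then within $\tfrac12\sqrt m\log^3 n$ of $(m/\ell)(1-\exp(-d))$.

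\emph{The displayed counts.} Condition on a realisation of $(\SIGMA,\Gsp)$ in the good event just obtained. By \eqref{eqnoisemodel} the bits $(\dSIGMA_a)_{a\in F}$ are independent given $(\SIGMA,\Gsp)$, with $\Pr[\dSIGMA_a=1]=p_{01}$ for $a\in F_0$ and $\Pr[\dSIGMA_a=1]=p_{11}$ for $a\in F_1$; hence $|F_0^+[i]|\sim\Bin(|F_0[i]|,p_{01})$ and $|F_1^+[i]|\sim\Bin(|F_1[i]|,p_{11})$, while $|F_0^-[i]|=|F_0[i]|-|F_0^+[i]|$ and $|F_1^-[i]|=|F_1[i]|-|F_1^+[i]|$. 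Since $|F_0[i]|,|F_1[i]|=\Theta(m/\ell)$, the binomial Chernoff bound \Lem~\ref{lem_chernoff} (trivial when $p_{01}$ or $p_{11}\in\{0,1\}$) shows that, except with probability $\exp(-\Omega(\ell\log^6 n))=\exp(-\omega(\log^2 n))$, the count $|F_0^+[i]|$ is within $\tfrac14\sqrt m\log^3 n$ of $p_{01}|F_0[i]|$ and $|F_1^+[i]|$ is within $\tfrac14\sqrt m\log^3 n$ of $p_{11}|F_1[i]|$. On the intersection of these events over $i\in[\ell]$ with the previous good event, $|F_0^+[i]|$ lies within $\sqrt m\log^3 n$ of $(m/\ell)\exp(-d)p_{01}$, which is \eqref{eqG3_2}, and \eqref{eqG3_1}, \eqref{eqG3_3}, \eqref{eqG3_4} follow in the same way upon passing to complements within $F_0[i]$ and $F_1[i]$. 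Averaging over $(\SIGMA,\Gsp)$ and combining the failure probabilities of the three stages keeps the total error at $o(n^{-2})$.

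\emph{Main obstacle.} Everything but one estimate is routine bookkeeping with the Chernoff and bounded-differences inequalities, comfortable because $\sqrt m\log^3 n$ dwarfs the standard-deviation scale of every quantity involved. The delicate point is the expectation $\Erw[\,|F_0[i]|\mid\SIGMA\,]=(m/\ell)\exp(-d)+o(\sqrt m\log^3 n)$: one must verify that, after multiplication by $m/\ell$, both the gap between $(1-\ell d/(sk))^{sk/\ell}$ and $\exp(-d)$ (which contributes $O(md^2/(sk))=O(\log n/\log\log n)$) and the effect of $|W|$ fluctuating around $sk/\ell$ as permitted by \textbf{G1} (which contributes $O(m\log n/\sqrt{k\ell})$) are of order $o(\sqrt m\log^3 n)$; the second uses $\sqrt{m/(k\ell)}=\Theta((\log n)^{1/4})=o(\log^2 n)$.
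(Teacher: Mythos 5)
Your proof is correct and follows essentially the same route as the paper: \textbf{G1} via a (hyper)geometric Chernoff bound, then concentration of $|F_0[i]|$ around $(m/\ell)\exp(-d)$ via a conditional expectation computation plus a bounded-differences inequality over the individuals' independent test choices, and finally the displayed counts as conditional binomials handled by \Lem~\ref{lem_chernoff}. The only (cosmetic) difference is that you compute $\Erw[\,|F_0[i]|\mid\SIGMA\,]=(m/\ell)(1-\ell d/(sk))^{|W|}$ in closed form, whereas the paper multiplies the per-compartment emptiness probabilities; both expansions land on the same $o(\sqrt m\log^3 n)$ error.
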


\subsection{Approximate recovery}
We are going to exploit the spatial structure of $\Gsp$ in a manner reminiscent of domino toppling.
To get started we will run {\tt DD} on the seed $V[1]\cup\cdots\cup V[s]$ and the tests $F[0]$ only; this is our first domino.
The choice \eqref{eqF0} of $m_0$ ensures that {\tt DD} diagnoses all individuals in $V[1]\cup\cdots\cup V[s]$ correctly \whp.
The seed could then be used as an informed starting point from which we could iterate BP to infer the status of the individuals in $V[s+1]\cup\ldots\cup V[\ell]$.
However, this algorithm appears to be difficult to analyse.
Instead we will show under that the assumptions of \Thm~\ref{thm_alg_apx} a modified, `paced' version of BP that diagnoses one compartment (or `domino') at a time and then re-initialises the messages ultimately classifies all but $o(k)$ individuals correctly.
Let us flesh this strategy out in detail.

\subsubsection{The seed}
Recall that each individual $x\in V[1]\cup\cdots\cup V[s]$ independently joins $\Delta_0$ random tests from $F[0]$.
In its the initial step $\SPARC$ runs {\tt DD} on the test design $\G_0$ comprising $V[1]\cup\cdots\cup V[s]$ and the tests $F[0]$ only.
Throughout $\SPARC$ maintains a vector $\tau\in\{0,1,*\}^{V}$ that represents the algorithm's current estimate of the ground truth $\SIGMA$, with $*$ indicating `undetermined as yet'.

\IncMargin{1em}
\begin{algorithm}[h!]
 \KwData{$\G$, $\dSIGMA$}
 \KwResult{an estimate of $\SIGMA$}
 Let $(\tau_x)_{x\in V[1]\cup\cdots\cup V[s]}\in\cbc{0,1}^{V[1]\cup\cdots\cup V[s]}$ be the result of applying {\tt DD} to $V[1]\cup\cdots\cup V[s]$ and $F[0]$\;
   Set $\tau_{x}=*$ for all individuals $x\in V\setminus(V[1]\cup\cdots\cup V[s])$\;
 \caption{\SPARC, steps 1--2}\label{SC_algorithm1}
 \label{Alg_SC}
\end{algorithm}
\DecMargin{1em}

Since \Prop~\ref{prop_basic} shows that the seed contains $(1+o(1))ks/\ell$ infected individuals \whp, the choice \eqref{eqF0} of $m_0$ and  \Thm~\ref{thm_DD} imply that {\tt DD} will succeed to diagnose the seed correctly for a suitable $\Delta_0$.

\begin{proposition}[{\cite[\Thm~2.2]{Maurice}}] \label{prop_seed}
	There exist $\Delta_0=\Theta(\log n)$ such that the output of~\DD\ satisfies $\tau_x=\SIGMA_x$ for all $x\in V[1]\cup\cdots\cup V[s]$ \whp.
\end{proposition}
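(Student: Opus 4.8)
The plan is to derive \Prop~\ref{prop_seed} directly from \Thm~\ref{thm_DD}: restricted to the seed individuals $V[1]\cup\cdots\cup V[s]$ and the extra tests $F[0]$, the design $\Gsp$ is just a constant column design, and the choice \eqref{eqF0} of $m_0$ was made precisely so that it clears the DD threshold for the seed with room to spare.

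First I would record the structural observation. By construction each $x\in V[1]\cup\cdots\cup V[s]$ joins exactly $\Delta_0$ tests of $F[0]$, drawn uniformly without replacement and independently over the individuals, and the displayed results $(\dSIGMA_a)_{a\in F[0]}$ arise from $\SIGMA$ via the $\vec p$-channel \eqref{eqnoisemodel}. Hence the sub-instance $\G_0$ is distributed exactly as the constant column design $\Gcc(n',m_0,\Delta_0)$ on a population of size $n'\coloneqq|V[1]\cup\cdots\cup V[s]|=(1+o(1))ns/\ell$ (using $\lfloor n/\ell\rfloor\leq|V[i]|\leq\lceil n/\ell\rceil$), so that \Thm~\ref{thm_DD} applies to it verbatim. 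Next I would pin down the number $k'\coloneqq|V_1\cap(V[1]\cup\cdots\cup V[s])|$ of infected seed individuals: summing property \textbf{G1} of \Prop~\ref{prop_basic} over the first $s$ compartments gives, with probability $1-o(1)$,
\begin{align*}
	k'=\frac{ks}{\ell}+O\!\bc{s\sqrt{k/\ell}\,\log n}=(1+o(1))\frac{ks}{\ell},
\end{align*}
because $k=n^\theta$ while $s,\ell$ are polylogarithmic. Conditioning on the (likely) value of $k'$, the status of the seed individuals is a uniformly random weight-$k'$ vector, independent of the edges of $\G_0$; hence it suffices to verify the hypothesis of \Thm~\ref{thm_DD} for each admissible pair $(n',k')$ and then average over $k'$.

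The only step needing a bit of care is the threshold check, and here the factor $2$ in \eqref{eqF0} does the work. From $n'/k'=(1+o(1))n/k$ we get $\log(n'/k')=(1+o(1))\log(n/k)$, and since $\log(s/\ell)=o(\log n)$ the effective exponent $\theta'\coloneqq\log k'/\log n'$ satisfies $\theta'=\theta+o(1)$ and stays bounded away from $0$ and $1$. As $\theta\mapsto c_{\DD}(\theta)$ is continuous (being an infimum over $d$ of a maximum of functions continuous in $\theta$), we obtain $c_{\DD}(\theta')=(1+o(1))c_{\DD}(\theta)$, whence by \eqref{eqF0}
\begin{align*}
	m_0=2\,c_{\DD}(\theta)\,\frac{ks}{\ell}\,\log(n/k)=(2+o(1))\,c_{\DD}(\theta')\,k'\log(n'/k')\geq(1+\eps)\,c_{\DD}(\theta')\,k'\log(n'/k')
\end{align*}
for any fixed $\eps<1$ and $n$ large. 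Thus \Thm~\ref{thm_DD} yields a choice of $\Delta_0>0$ for which {\tt DD} run on $\G_0$ outputs $\tau$ with $\tau_x=\SIGMA_x$ for every seed individual $x$ with probability $1-o(1)$, and the DD analysis of~\cite{Maurice} takes this $\Delta_0$ to be $\Theta(\log n')=\Theta(\log n)$; since the event controlling $k'$ above also has probability $1-o(1)$, this proves the proposition. The main obstacle is purely bookkeeping: checking that the perturbed seed parameters $(n',k',\theta')$ still meet the hypotheses of \Thm~\ref{thm_DD} — which is exactly why $m_0$ carries a factor $2$ rather than $1+\eps$ — and handling the randomness of $k'$ by conditioning. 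No idea beyond \Thm~\ref{thm_DD} and \Prop~\ref{prop_basic} is needed.
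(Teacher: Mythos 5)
Your proposal is correct and follows essentially the same route as the paper, which likewise justifies \Prop~\ref{prop_seed} by observing that the seed together with $F[0]$ is a constant column design, that \textbf{G1} of \Prop~\ref{prop_basic} gives $(1+o(1))ks/\ell$ infected seed individuals, and that the factor $2$ in \eqref{eqF0} leaves enough slack to invoke \Thm~\ref{thm_DD} (the paper states this in one sentence and attributes the rest to \cite{Maurice}). Your write-up merely fills in the bookkeeping on $(n',k',\theta')$ that the paper leaves implicit.
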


\subsubsection{A combinatorial condition}

To simplify the analysis of the message passing step in \Sec~\ref{sec_weights} we observe that certain individuals can be identified as likely uninfected purely on combinatorial grounds.
More precisely, consider $x\in V[i]$ for $s<i\leq\ell$.
If $x$ is infected, then any test $a\in\partial x$ is actually positive.
Hence, we expect that $x$ appears in about $p_{11}\Delta$ tests that display a positive result.
In fact, the choice \eqref{eqs} of $s$ ensures that \whp even within each separate compartment $F[i+j-1]$, $1\leq j\leq s$ the individual $x$ appears in about $p_{11}\Delta/s$ positively displayed tests.
Thus, let
\begin{align}\label{eqVi+}
	V^+[i]&=\cbc{x\in V[i]:\sum_{j=1}^s\abs{|\partial x\cap F^+[i+j-1]|-\Delta p_{11}/s}\leq\log^{4/7}n}.
\end{align}
The following proposition confirms that all but $o(k/s)$ infected individuals $x\in V[i]$ belong to $V^+[i]$.
Additionally, the proposition determines the approximate size of $V^+[i]$.
For notational convenience we define
\begin{align*}
	V_0^+[i]&=V^+[i]\cap V_0,&V_1^+[i]&=V^+[i]\cap V_1,&V^+&=\bigcup_{s<i\leq \ell}V^+[i].
\end{align*}
Recall $q_0^+$ from \eqref{eqq}.
The proof of the following proposition  can be found in \Sec~\ref{sec_plausible}.

\begin{proposition}\label{prop_plausible}
	\Whp we have
	\begin{align}\label{eq_prop_plausible1}
		\sum_{i=s+1}^\ell |V_1[i]\setminus V^+[i]|&\leq k\exp(-\Omega(\log^{1/7}n))&&\mbox{and}\\
	\label{eq_prop_plausible2}
		\abs{V^+[i]\setminus V_1[i]}&\leq\frac n\ell\exp\bc{-\Delta\KL{p_{11}}{q_0^+}+O(\log^{4/7}n)}
					&&\mbox{for all $s+1\leq i\leq \ell$}.
	\end{align}
\end{proposition}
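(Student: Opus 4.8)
The plan is to analyse, for a single individual $x\in V[i]$ with $s<i\le\ell$, the vector $\bc{|\partial x\cap F^+[i+j-1]|}_{j=1}^s$ counting the positively displayed tests at $x$ in each of the $s$ compartments it joins, conditioning on the infection status $\SIGMA_x$ and then summing the resulting tail/expectation bounds over all individuals. First I would fix $x$ and $j$ and recall that $x$ joins exactly $\Delta/s$ tests in $F[i+j-1]$, chosen uniformly without replacement. Conditioning on $\SIGMA$ and on the remaining graph, each such test independently displays a positive result with probability $p_{11}$ if $x$ is infected (since then the test is actually positive regardless of the other individuals in it), and with probability that concentrates around $q_0^+=\exp(-d)p_{01}+(1-\exp(-d))p_{11}$ if $x$ is healthy — here one uses Proposition \ref{prop_basic}(G2) together with the fact that, given $\SIGMA_x=0$, the other individuals of a random test containing $x$ behave like a fresh random test, so the test is actually negative with probability $\exp(-d)+o(1)$ by the construction \eqref{eqcndn} and a short second-moment/concentration argument on the number of infected individuals a random test sees. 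Thus $|\partial x\cap F^+[i+j-1]|$ is (stochastically close to) a $\Bin(\Delta/s,p_{11})$ or $\Bin(\Delta/s,q_0^+)$ variable.

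For the first claim \eqref{eq_prop_plausible1}: if $x\in V_1[i]$, then by the above each $|\partial x\cap F^+[i+j-1]|$ is $\Bin(\Delta/s,p_{11})$ up to a negligible coupling error, so by the Chernoff bound of Lemma \ref{lem_chernoff} the probability that $\abs{|\partial x\cap F^+[i+j-1]|-\Delta p_{11}/s}>\log^{4/7}n/s$ is at most $\exp(-\Omega((\log^{4/7}n/s)^2/(\Delta/s)))=\exp(-\Omega(\log^{1/7}n))$, using $\Delta=\Theta(\log n)$, $s=\lceil\log\log n\rceil$ and $(\log^{4/7}n)^2/\log n=\log^{1/7}n$. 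A union bound over the $s$ compartments keeps the exponent of the same order, so $\pr[x\notin V^+[i]\mid\SIGMA_x=1]\le\exp(-\Omega(\log^{1/7}n))$. Summing over the (at most $k$) infected individuals and applying Markov's inequality — or simply bounding the expected size of $\bigcup_i(V_1[i]\setminus V^+[i])$ and using that $k=n^{\theta+o(1)}$ swamps any polynomial loss — yields $\sum_{i>s}|V_1[i]\setminus V^+[i]|\le k\exp(-\Omega(\log^{1/7}n))$ \whp, after a routine adjustment of the constant in $\Omega(\cdot)$ to absorb the Markov step.

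For the second claim \eqref{eq_prop_plausible2}: the upper bound on $|V^+[i]\setminus V_1[i]|$ is an upper-tail estimate for healthy individuals. For $x\in V_0[i]$, membership in $V^+[i]$ forces $\sum_{j=1}^s|\partial x\cap F^+[i+j-1]|\ge \Delta p_{11}-s\log^{4/7}n$, i.e.\ the total number of positively displayed tests at $x$, a $\Bin(\Delta,q_0^+)$-like variable, must exceed its mean $\Delta q_0^+$ by roughly $\Delta(p_{11}-q_0^+)=\Delta\exp(-d)(p_{11}-p_{01})>0$ (positive by \eqref{eqnoise}). By Lemma \ref{lem_chernoff}, $\pr[x\in V^+[i]\mid\SIGMA_x=0]\le\exp(-\Delta\KL{p_{11}}{q_0^+}+O(s\log^{4/7}n))=\exp(-\Delta\KL{p_{11}}{q_0^+}+O(\log^{4/7}n))$ once the $o(1)$ drift in the per-test success probability is folded into the $O(\log^{4/7}n)$ error (here one uses continuity of $z\mapsto\KL zw$ and $\Delta=\Theta(\log n)$). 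Taking expectations over the $\le n/\ell$ healthy individuals in $V[i]$ gives the claimed bound on $\Erw|V^+[i]\setminus V_1[i]|$, and a further Markov/union bound over the $\ell=O(\sqrt{\log n})$ compartments upgrades it to a \whp\ statement with the error term still $O(\log^{4/7}n)$.

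The main obstacle is the conditioning: the events ``$x$ is in test $a$'' and the realisations of $\dSIGMA_a$ across the tests at $x$ are not literally independent, because the test sizes and the global infection count are random and shared. The cleanest route is to condition on the high-probability event of Proposition \ref{prop_basic}, and then observe that, conditionally on $\SIGMA$, the displayed outcomes of distinct tests are independent by \eqref{eqnoisemodel}, while the only remaining dependence — through how many infected individuals each test at $x$ contains — is controlled by a one-line second-moment bound showing that a uniformly random test containing a fixed healthy $x$ contains at least one further infected individual with probability $1-\exp(-d)+o(1)$, with fluctuations small enough to be absorbed into the $O(\log^{4/7}n)$ slack. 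Making this coupling quantitative, uniformly over all $x$, is the part that requires care; everything after it is a Chernoff-plus-union-bound computation.
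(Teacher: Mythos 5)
Your proposal follows essentially the same route as the paper: for infected $x$ every test in $\partial x$ is actually positive, so $|\partial x\cap F^+[i+j-1]|\sim\Bin(\Delta/s,p_{11})$ exactly and Chernoff applies; for healthy $x$ one conditions on \textbf{G2} and uses that the displayed result of a test is independent of whether $x$ belongs to it. On the latter point the paper is slightly cleaner than your coupling sketch: conditional on $\fE$ the counts $|\partial x\cap F^+[i+j-1]|$ are \emph{exactly} hypergeometric $\Hyp(m/\ell,\,q_0^+m/\ell+O(\sqrt m\log^3n),\,\Delta/s)$ and mutually independent over $j$, so no second-moment argument on the number of infected individuals per test is needed at all — \Lem~\ref{lem_hyperchernoff} finishes it directly.

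Two small quantitative slips to fix. First, in \eqref{eq_prop_plausible1} your union bound over $j$ with per-compartment threshold $\log^{4/7}n/s$ gives exponent $(\log^{4/7}n/s)^2/(\Delta/s)=\log^{8/7}n/(s\Delta)=\Theta(\log^{1/7}n/\log\log n)$, not $\Omega(\log^{1/7}n)$ as you wrote; to recover the stated rate you should bound the sum $\sum_j\abs{|\partial x\cap F^+[i+j-1]|-\Delta p_{11}/s}$ as a whole (it is subgaussian with variance proxy $O(\Delta)$), rather than splitting the budget evenly across the $s$ terms. Second, in \eqref{eq_prop_plausible2} membership in $V^+[i]$ forces the total count to lie within $\log^{4/7}n$ of $\Delta p_{11}$ by the triangle inequality, so the slack is $\log^{4/7}n$, not $s\log^{4/7}n$; with your weaker slack the error term would be $O(\log^{4/7}n\log\log n)$ rather than the claimed $O(\log^{4/7}n)$. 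Both are cosmetic and do not affect the downstream use of the proposition.
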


\subsubsection{Belief Propagation redux}\label{sec_weights}
The main phase of the \SPARC\ algorithm employs a simplified version of the BP update rules \eqref{eqBPupdate1}--\eqref{eqBPupdate2} to diagnose one compartment $V[i]$, $s<i\leq\ell$, after the other.
The textbook way to employ BP would be to diagnose the seed, initialise the BP messages emanating from the seed accordingly and then run BP updates until the messages converge.
However, towards the proof of \Thm~\ref{thm_alg_apx} this way of applying BP seems too complicated both to run and to analyse.
Instead, \SPARC\ relies on a `paced' version of BP.
Rather than updating the messages to convergence from the seed, we perform one round of message updates, then diagnose the next compartment, re-initialise the messages to coincide with the newly diagnosed compartment and proceed to the next undiagnosed compartment.

We work with the log-likelihood versions of the BP messages from \eqref{eqBPLog}--\eqref{eqBPLogUpdate2}.
Suppose we aim to process compartment $V[i]$, $s<i\leq\ell$, having completed $V[1],\ldots,V[i-1]$ already.
Then for a test $a\in F[i+j-1]$, $j\in[s]$, and a adjacent variable $x\in V[i+j-s]\cup V[i+j-1]$ we initialise
\begin{align}\label{eqetainit}
	\eta_{x\to a,0}&=\begin{cases}
		-\infty&\mbox{ if }\tau_x=0,\\
		+\infty&\mbox{ if }\tau_x=1,\\
		\log(k/(n-k))&\mbox{ if }\tau_x=*.
		\end{cases}
\end{align}
The third case above occurs if and only if $x\in V[i]\cup\cdots\cup V[i+j-1]$; that is, if $x$ belongs to an as yet undiagnosed compartment.
For the compartments that have been diagnosed already we set $\eta_{x\to a,0}$ to $\pm\infty$, depending on whether $x$ has been classified as infected or uninfected.

Let us now investigate the ensuing messages $\eta_{a\to x,1}$ for $x\in V[i]$ and tests $a\in\partial x\cap F[i+j-1]$.
A glimpse at \eqref{eqBPLogUpdate1} reveals that for any test $a$ that contains an individual $y\in V[i+j-s]\cup\cdots\cup V[i-1]$ with $\tau_y=1$ we have $\eta_{a\to x,1}=0$. 
This is because \eqref{eqetainit} ensures that $\eta_{y\to a,0}=\infty$ and $\tanh(\infty)=1$.
Hence, the test $a$ contains no further information as to the status of $x$.
Therefore, we call a test $a\in F[i+j-1]$ \emph{informative} towards $V[i]$ if $\tau_y=0$ for all $y\in\partial a\cap(V[i+j-s]\cup\cdots\cup V[i-1])$.

Let $\vW_{i,j}(\tau)$ be the set of all informative $a\in F[i+j-1]$.
Then any $a\in \vW_{i,j}(\tau)$ receives $\eta_{y\to a,0}=-\infty$ from all individuals $y$ that have been diagnosed already, i.e.\ all $y\in V[h]$ with $i+j-s\leq h<i$.
Another glance at the update rule shows that the corresponding terms simply disappear from the product on the r.h.s.\ of \eqref{eqBPLogUpdate1}, because $\tanh(-\infty)=-1$.
Consequently, only the factors corresponding to undiagnosed individuals $y\in V[i]\cup\cdots\cup V[i+j-1]$ remain.
Hence, with $r=\vecone\{a\in F^+\}$ the update rule \eqref{eqBPLogUpdate1} simplifies to
\begin{align}\label{eqetasimplified}
	\eta_{a\to x,1}&=
	\log p_{1r}-\log\brk{p_{1r}+(p_{0r}-p_{1r})\bc{1-k/n}^{\abs{\partial a\cap (V[i]\cup\cdots\cup V[i+j-1])}-1}}.
\end{align}

The only random element in the expression \eqref{eqetasimplified} is the number $\abs{\partial a\cap (V[i]\cup\cdots\cup V[i+j-1])}$ of members of test $a$ from compartments $V[i]\cup\cdots\cup V[i+j-1]$.
But by the construction of $\Gsc$ this number has a binomial distribution with mean
\begin{align*}
	\ex\abs{\partial a\cap (V[i]\cup\cdots\cup V[i+j-1])}&=\frac{j\Delta n}{ms}+o(1)=\frac{djn}{ks}+o(1)&&\mbox{[using \eqref{eqcndn}]}.
\end{align*}
Since the fluctuations of $\abs{\partial a\cap(V[i]\cup\cdots\cup V[i+j-1])}$ are of smaller order than the mean, we conclude that \whp \eqref{eqetasimplified} can be well approximated by a deterministic quantity:
\begin{align}\label{eqetaapx}
	\eta_{a\to x,1}&=\begin{cases}
	w_j^++o(1),&\mbox{ if }a\in F^+,\\
	-w_j^-+o(1),&\mbox{ if }a\in F^-
	\end{cases},\qquad\mbox{where}\\
\label{eqweights}
	w_j^+&=\log\frac{p_{11}}{p_{11}+(p_{01}-p_{11})\exp(-dj/s)}\geq 0,&
	w_j^-&=-\log\frac{p_{10}}{p_{10}+(p_{00}-p_{10})\exp(-dj/s)}\geq 0.
\end{align}
Note that in the case $p_{10} = 0$, the negative test weight $W_j^-$ evaluates to $w_j^- = \infty$, indicating that individual contained in negative test definitely are uninfected.
Finally, the messages \eqref{eqetaapx} lead to the BP estimate of the posterior marginal of $x$ via \eqref{eqBPLogUpdate2}, i.e.\ by summing on all informative tests $a\in\partial x$.
To be precise, letting
\begin{align*}
	\vW_{x,j}^\pm(\tau)&=\partial x\cap \vW_{i,j}(\tau)\cap F^{\pm}
\end{align*}
be the positively/negatively displayed informative tests adjacent to $x$ and setting
\begin{align}\label{eqWtau}
	\vW_x^+(\tau)& = \sum_{j=1}^{s} w_j^+\abs{\vW_{x,j}^+(\tau)},&
	\vW_x^-(\tau)&=\sum_{j=1}^{s}w_j^- \abs{\vW_{x,j}^-(\tau)},
\end{align}
we obtain
\begin{align}
	\eta_{x,1}=&\vW_x^+(\tau)-\vW_x^-(\tau)+\mbox{`lower order fluctuations'}.\label{eqetaxapx}
\end{align}

One issue with the formula \eqref{eqetaapx} is the analysis of the `lower order fluctuations', which come from the random variables $|\partial a\cap (V^+[i]\cup\cdots\cup V^+[i+s-1])|$.
Of course, one could try to analyses theses deviations caefully by resorting to some kind of a normal approximation.
But for our purposes this is unnecessary.
It turns out that we may simply use the sum on the r.h.s.\ of \eqref{eqetaxapx} to identify which individuals are infected.
Specifically, instead of computing the actual BP approximation $\eta_{x,1}$ after one round of updating, we just compare $\vW_x^+(\tau)$ and $\vW_x^-(\tau)$ with the values that we would expect these random variables to take if $x$ were infected.
These conditional expectations work out to be
\begin{align}\label{eqWest}
	W^+&=p_{11}\Delta \sum_{j=1}^{s}\exp(d(j-s)/s)w_j^+,&
	W^-&= \begin{cases} p_{10}\Delta \sum_{j=1}^{s}\exp(d(j-s)/s)w_j^- &\mbox{ if } p_{10}>0\\ 0 &\mbox{ otherwise. } \end{cases}
\end{align}
Thus, \SPARC\ will diagnose $V[i]$ by comparing $\vW_x^{\pm}(\tau)$ with $W^{\pm}$.
Additionally, \SPARC\ takes into account that infected individuals likely belong to $V^+[i]$, as we learned from \Prop~\ref{prop_plausible}.

\IncMargin{1em}
\begin{algorithm}[ht]
  \setcounter{AlgoLine}{2}
  \For{$i=s+1, \dots, \ell $}{
 \For{$x \in V[i]$}{
	 \If{$x\not\in V^+[i]$ or $\vW_x^+(\tau) < (1-\zeta) W^+\mbox{ or }\vW_x^-(\tau) > (1+\zeta) W^-$}{
 $\tau_x = 0$ \tcp*[h]{classify as uninfected}}
 \Else{$\tau_x = 1$ \tcp*[h]{classify as infected}}
  }}
  \Return $\tau$
\caption{\SPARC, steps 3--9.}
 \end{algorithm}
\DecMargin{1em}

Let
\begin{align}\label{eqzeta}
	\zeta=(\log\log\log n)^{-1}
\end{align}
be a term that tends to zero slowly enough to absorb error terms.
The following proposition, which we prove in \Sec~\ref{sec_prop_dist_psi}, summarises the analysis of phase~3.
Recall from \eqref{eqcndn} that $c=m/(k\log(n/k))$.

\begin{proposition} \label{prop_dist_psi}
	Assume that for a fixed $\eps>0$ we have
	\begin{align}\label{eqprop_dist_psi}
		c>\ccentre(d)+\eps.
	\end{align}
	Then \whp the output $\tau$ of \SPARC\ satisfies
	$$\sum_{x\in V^+}\vecone\cbc{\tau_x\neq\SIGMA_x}\leq k\exp\bc{-\Omega\bcfr{\log n}{(\log\log n)^5}}.$$
\end{proposition}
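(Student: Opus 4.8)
The plan is to analyse, separately for infected and for healthy individuals $x\in V^+[i]$, the probability that the test of Algorithm~3 (step~5) misclassifies $x$, and then sum these probabilities over all compartments $s<i\le\ell$. Throughout I would condition on the high-probability events from \Prop~\ref{prop_basic}, \Prop~\ref{prop_seed} and \Prop~\ref{prop_plausible}, and — crucially — proceed compartment by compartment: when processing $V[i]$ I would condition on the (correct, up to the tiny error budget) diagnosis $\tau$ of $V[1],\ldots,V[i-1]$, so that the randomness left to exploit is only the placement of the tests incident to $V[i]$ and the channel noise on those tests. Since the decision for $x\in V[i]$ depends only on $\vW_x^+(\tau)$, $\vW_x^-(\tau)$ and membership in $V^+[i]$, and these are determined by the $\Delta$ tests containing $x$ together with which of the already-diagnosed neighbours of those tests were called infected, the relevant quantities are (conditionally) sums of independent contributions, amenable to the Chernoff bounds of \Lem~\ref{lem_chernoff}--\ref{lem_hyperchernoff}.

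First I would treat an \emph{infected} $x\in V_1^+[i]$. Here every $a\in\partial x$ is actually positive, and $a$ is informative towards $V[i]$ iff none of its already-diagnosed neighbours from $V[i+j-s],\ldots,V[i-1]$ is infected; by \Prop~\ref{prop_basic}\,(G1) each such compartment contributes infection density $(1+o(1))/\ell$, so a test in $F[i+j-1]$ of size $\approx dn/k$ is informative with probability $\exp(-d(s-j)/s+o(1))$, independently over tests. Consequently $\abs{\vW_{x,j}^+(\tau)}$ is a sum of $\Delta/s$ independent Bernoulli-type variables whose success probability is $p_{11}\exp(-d(s-j)/s)(1+o(1))$ (positive result \emph{and} informative), so $\Erw\vW_x^+(\tau)=(1+o(1))W^+$ and likewise $\Erw\vW_x^-(\tau)=(1+o(1))W^-$ with the $W^\pm$ of \eqref{eqWest}; a Chernoff bound then gives $\pr[\vW_x^+(\tau)<(1-\zeta)W^+]\le\exp(-\Omega(\zeta^2\Delta))$ and similarly for the $W^-$ side, while \Prop~\ref{prop_plausible} already controls $\pr[x\notin V^+[i]]$. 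Summing over the $\le k$ infected individuals and recalling $\Delta=\Theta(\log n)$, $\zeta=(\log\log\log n)^{-1}$ yields the claimed bound for the infected part.

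Next, for a \emph{healthy} $x\in V_0^+[i]$ the algorithm errs only if both $\vW_x^+(\tau)\ge(1-\zeta)W^+$ and $\vW_x^-(\tau)\le(1+\zeta)W^-$. Now every $a\in\partial x$ is actually positive only if some \emph{other} individual in $a$ is infected; here the informative-and-positively-displayed tests have the `centre' statistics of \Prop~\ref{prop_basic}\,(G2), and the relevant large-deviation rate for $\vW_x^+(\tau)$ to reach the atypically large value $(1-\zeta)W^+$, combined with the rate for $\vW_x^-(\tau)$ to stay atypically small, is exactly what the quantity $\ccentre(d)$ in \eqref{eqccentre} encodes; this is where hypothesis \eqref{eqprop_dist_psi}, $c>\ccentre(d)+\eps$, enters. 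I would set up the joint event as a product over the $s$ sub-compartments $j$ and over $a\in\partial x\cap F[i+j-1]$, optimise the exponent (the optimiser being the `tilted' test-negativity parameter underlying $h(p_{00}e^{-d}+p_{10}(1-e^{-d}))-e^{-d}h(p_{00})-(1-e^{-d})h(p_{10})$), and conclude $\pr[x\text{ misclassified}]\le\exp(-(c/\ccentre(d))(1-o(1))\log(n/k))\cdot(\text{poly})$, which by \eqref{eqprop_dist_psi} beats $(n/\ell)^{-1}$ with room to spare; summing over the $\le n$ healthy individuals and the $\ell$ compartments gives the bound, with the $(\log\log n)^5$ loss in the exponent absorbing the discretisation of $d/s$, the fluctuation of $\abs{\partial a\cap(V[i]\cup\cdots\cup V[i+j-1])}$ around $djn/(ks)$, and the slack $\zeta$.

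The main obstacle, and the step I would spend the most care on, is the healthy-individual large-deviation calculation: one must show that the \emph{joint} tilt needed to make $\vW_x^+(\tau)$ large while keeping $\vW_x^-(\tau)$ small has combined rate exactly $1/\ccentre(d)$ per $\log(n/k)$, matching the variational formula \eqref{eqccentre}. This requires correctly identifying that the worst case is a single common tilt of the test-negativity probability $e^{-d}$ (rather than independent tilts of the $+$ and $-$ sides), carrying the weights $w_j^\pm$ from \eqref{eqweights} through the sub-compartment sum, and checking that the $s=\lceil\log\log n\rceil$ sub-compartments together with the $\zeta$-slack cost only a sub-polynomial factor; the union bound over $n$ individuals is what forces the exponent to genuinely exceed $\log(n/k)$, so the $+\eps$ in \eqref{eqprop_dist_psi} must be used quantitatively, not just qualitatively. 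The propagation of the $o(k)$ error budget across the $\ell=\lceil\sqrt{\log n}\rceil$ compartments is routine by comparison, since each compartment's misclassification count is $k\,\ell^{-1}\exp(-\Omega(\log n/(\log\log n)^5))$ and these sum to the stated bound.
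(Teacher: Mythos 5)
Your overall architecture --- separate large-deviation analyses for infected and healthy individuals in $V^+[i]$, with the healthy case carrying the hypothesis $c>\ccentre(d)+\eps$ via a joint tilt whose rate works out to $1/\ccentre(d)=I(\vX,\vY)$ --- is essentially the paper's (\Lem s~\ref{lem_1dev} and~\ref{lem_0dev}, the latter resolved via the optimisation problems \eqref{eqM+}--\eqref{eqM-} and \Lem~\ref{lem_calc}). One minor and one substantive point.

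Minor: for a healthy $x$ you aim to show the misclassification probability ``beats $(n/\ell)^{-1}$''. That target is both stronger than needed and, for $c$ only slightly above $\ccentre(d)$, false: the correct requirement is that $|V_0^+|$ times the conditional deviation probability be $k^{1-\Omega(1)}$, as in \eqref{eqLaqBegin}. At the approximate-recovery threshold some healthy individuals \emph{are} misclassified --- which is precisely why exact recovery needs the additional condition $c>\cbound(d,\theta)$.

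Substantive gap: you dismiss the propagation of errors across compartments as ``routine'', but this is where the argument needs an ingredient you have not supplied. Your large-deviation computations are naturally carried out for the idealised sums $\vW_x^\pm(\SIGMA)$, i.e.\ with informativeness decided by the true $\SIGMA$; the algorithm uses $\vW_x^\pm(\tau)$, where $\tau$ on the preceding $s$ compartments contains up to $k\exp(-\Omega(\log n/(\log\log n)^5))$ errors. A single misclassified $y$ in an earlier compartment flips the informativeness of every test containing $y$, and each such test contains $\Theta(n/k)$ individuals of the current compartment, so the set of potentially affected $x$ is huge and conditioning on an imperfect $\tau$ destroys the clean product structure you invoke. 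What saves the day is that perturbing one test changes $\vW_x^\pm$ by only $O(1)\ll\zeta W^\pm$, so only individuals sharing \emph{many} (at least $\log^{1/2}n$, say) tests with previously misclassified individuals can actually be flipped, and the expansion property of $\Gsc$ (\Lem~\ref{lemma_endgame_misclassified}) bounds the number of such individuals by $|T|/(8\log\log n)$. The paper's proof needs exactly this, organised as the decomposition $\cM[h]\subseteq\cM_1[h]\cup\cM_2[h]\cup\cM_3[h]$ together with an induction over $h$ showing that the inherited error contracts. Without some such contraction argument your per-compartment error counts do not simply ``sum to the stated bound'': the error could in principle amplify over the $\ell=\lceil\sqrt{\log n}\rceil$ compartments. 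You should add this expansion/contraction step explicitly.
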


The proof of \Prop~\ref{prop_dist_psi}, which can be found in \Sec~\ref{sec_prop_dist_psi}, is the centerpiece of the analysis of \SPARC.
The proof is based on a large deviations analysis that bounds the number of individuals $x\in V^+[i]$ whose corresponding sums $\vW_x^\pm(\tau)$ deviate form their conditional expectations given $\SIGMA_x$.
We have all the pieces in place to complete the proof of the first theorem.

\begin{proof}[Proof of \Thm~\ref{thm_alg_apx}]
	Setting $d=\dchan$ from \eqref{eqdcentre} and invoking~\eqref{eqcchandchan}, we see that the theorem is an immediate consequence of \Prop s~\ref{prop_seed}, \ref{prop_plausible} and \ref{prop_dist_psi}.
\end{proof}

\subsection{Exact recovery}\label{sec_disc_ex}

As we saw in \Sec~\ref{sec_intro_ex} the threshold $\cbound(d,\theta)$ encodes a local stability condition.
This condition is intended to ensure that \whp no other `solution' $\sigma\neq\SIGMA$ with $\|\sigma-\SIGMA\|_1=o(k)$ of a similarly large posterior likelihood exists.
In fact, because $\Gsp$ enjoys fairly good expansion properties the test results provide sufficient clues for us to home in on $\SIGMA$ once we get close, provided the number of tests is as large as prescribed by \Thm~\ref{thm_alg_ex}.
Thus, the idea behind exact recovery is to run \SPARC\ first and then apply local corrections to fully recover $\SIGMA$; a similar strategy was employed in the noiseless case in~\cite{opt}.

Though this may sound easy and a simple greedy strategy does indeed do the trick the noiseless case~\cite{opt}, in the presence of noise it takes a good bit of care to get the local search step right.
Hence, as per \eqref{eqcex} suppose that $c,d$ from \eqref{eqcndn} satisfy $c>\max\{\ccentre(d),\cbound(d,\theta)\}+\eps$.
Also suppose that we already ran \SPARC\ to obtain $\tau\in\{0,1\}^V$ with $\|\tau-\SIGMA\|_1=o(k)$ (as provided by \Prop~\ref{prop_dist_psi}).
How can we set about learning the status of an individual $x$ with perfect confidence?

Assume for the sake of argument that $\tau_y=\SIGMA_y$ for all $y$ that share a test $a\in\partial x$ with $x$.
If $a$ contains another infected individual $y\neq x$, then unfortunately nothing can be learned from $a$ about the status of $x$.
In this case we call the test $a$ {\em tainted}.
By contrast, if $\tau_y=0$ for all $y\in\partial a\setminus\{x\}$, i.e.\ if $a$ is {\em untainted}, then the displayed result $\dSIGMA_a$ hinges on the infection status of $x$ itself.
Hence, the larger the number of untainted positively displayed $a\in\partial x$, the likelier $x$ is infected.
Consequently, to accomplish exact recovery we are going to threshold the number of untainted positively displayed $a\in\partial x$.
But crucially, to obtain an optimal algorithm we cannot just use a scalar, one-size-fits-all threshold.
Instead, we need to carefully choose a threshold function that takes into account the total number of untainted tests $a\in\partial x$.

To elaborate, let
\begin{align}\label{eqYx}
	\vY_x&=\abs{\cbc{a\in\partial x\setminus F[0]:\forall y\in\partial a\setminus\{x\}:\SIGMA_y=0}}
\end{align}
be the total number of untainted tests $a\in\partial x$; to avoid case distinctions we omit seed tests $a\in F[0]$.
Routine calculations reveal that $\vY_x$ is well approximated by a binomial variable with mean $\exp(-d)\Delta$.
Therefore, the fluctuations of $\vY_x$ can be estimated via the Chernoff bound.
Specifically, the numbers of infected/uninfected individuals with $\vY_x=\alpha\Delta$ can be approximated as
\begin{align}\label{eqChernoffvY1}
	\ex\abs{\cbc{x\in V_1:\vY_x=\alpha\Delta}}&=k\exp\bc{-\Delta\KL{\alpha}{\exp(-d)}+o(\Delta)},\\
	\ex\abs{\cbc{x\in V_0:\vY_x=\alpha\Delta}}&=n\exp\bc{-\Delta\KL{\alpha}{\exp(-d)}+o(\Delta)}.\label{eqChernoffvY2}
\end{align}
Consequently, since $\kdef=o(n)$ `atypical' values of $\vY_x$ occur more frequently on healthy than on infected individuals.
In fact, recalling \eqref{eqYinterval}, we learn from a brief calculation that for $\alpha\not\in\cY(c,d,\theta)$ not a single $x\in V_1$ with $\vY_x=\alpha\Delta$ exists \whp.
Hence, if $\vY_x/\Delta\not\in\cY(c,d,\theta)$ we deduce that $x$ is uninfected.

For $x$ such that $\vY_x/\Delta\in\cY(c,d,\theta)$ more care is required.
In this case we need to compare the number
\begin{align}\label{eqZdef}
	\vZ_x&=\abs{\cbc{a\in\partial^+ x\setminus F[0]:\forall y\in\partial a\setminus\{x\}:\SIGMA_y=0}}
\end{align}
of {\em positively displayed} untainted tests to the total number $\vY_x$ of untainted tests.
Since the test results are put through the $\vec p$-channel independently, $\vZ_x$ is a binomial variable given $\vY_x$.
The conditional mean of $\vZ_x$ equals $p_{11}\vY_x$ if $x$ is infected, and $p_{01}\vY_x$ otherwise.
Therefore, the Chernoff bound shows that
\begin{align}\label{eqChernoffvZ}
	\pr\brk{\vZ_x=\alpha\beta\Delta\mid\vY_x=\alpha\Delta}&=\exp\bc{-\alpha\Delta\KL{\beta}{p_{\SIGMA_x1}}+o(\Delta)}.
\end{align}

In light of \eqref{eqChernoffvZ} we set up the definition \eqref{eqsep1} of $\cbound(d,\theta)$ so that $\fz(\nix)$ can be used as a threshold function to tell infected from uninfected individuals.
Indeed, 
given $\vY_x,\vZ_x$ we should declare $x$ uninfected if either $\vY_x/\Delta\not\in\cY(c,d,\theta)$ or $\vZ_x/\vY_x<\fz(\vZ_x/\Delta)$, and infected otherwise; then the choice of $\fz(\nix)$ and $\cex(\theta)$ would ensure that all individuals get diagnosed correctly \whp.
Figure~\ref{fig_lena} displays a characteristic specimen of the function $\fz(\nix)$ and the corresponding rate function from \eqref{eqsep1}.

\begin{figure}
	\includegraphics[height=55mm]{./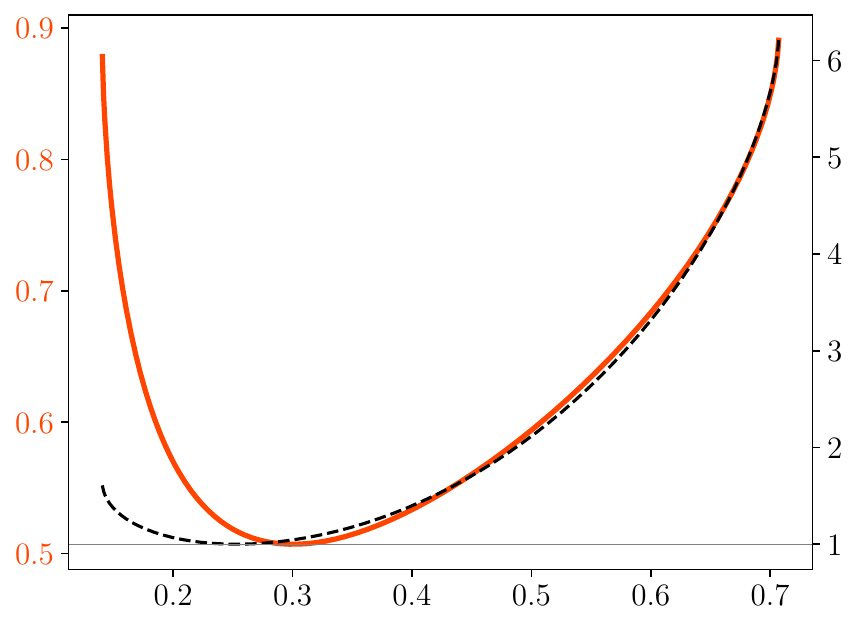}
	\caption{The threshold function $\fz(\nix)$ (red) on the interval $\cY(\cbound(d,\theta),d,\theta)$ and the resulting large deviations rate $\cbound(d,\theta)d(1-\theta)\bc{\KL \alpha{\exp(-d)}+\alpha\KL{\fz(\alpha)}{p_{01}}}$ (black) with $\theta=1/2$, $p_{00}=0.972$, $p_{11}=0.9$ at the optimal choice of $d$.}\label{fig_lena}
\end{figure} 

Yet trying to distil an algorithm from these considerations, we run into two obvious obstacles.
First, the threshold $\fz(\nix)$ may be hard to compute precisely.
Similarly, the limits of the interval $\cY(c,d,\theta)$ may be irrational (or worse).
The following proposition, which we prove in \Sec~\ref{sec_lem_yz}, remedies these issues.

\begin{proposition}\label{lem_yz}
	Let $\eps>0$ and assume that $c>\cex(d,\theta)+\eps$.
	Then there exist $\delta>0$ and an open interval $\emptyset\neq\cI=(l,r)\subset[\delta,1-\delta]$ with endpoints $l,r\in\QQ$ 
	such that for any $\eps'>0$ there exist $\delta'>0$ and a step function $\cZ:\cI\to(p_{01},p_{11})\cap\QQ$ such that the following conditions are satisfied.
	\begin{description}
		\item[Z1] $cd(1-\theta)\KL{y}{\exp(-d)}>\theta+\delta$ for all $y\in(0,1)\setminus(l+\delta,r-\delta)$.
		\item[Z2] $cd(1-\theta)\bc{\KL y{\exp(-d)}+y\KL{\cZ(y)}{p_{11}}}>\theta+\delta$ for all $y\in\cI$.
		\item[Z3] $cd(1-\theta)\bc{\KL y{\exp(-d)}+y\KL{\cZ(y)}{p_{01}}}>1+\delta$ for all $y\in\cI$.
		\item[Z4] If $y,y'\in\cI$ satisfy $|y-y'|<\delta'$, then $|\cZ(y)-\cZ(y')|<\eps'$.
	\end{description}
\end{proposition}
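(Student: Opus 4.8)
The plan is to read off Z1--Z4 from the inequalities $c>\cbound(d,\theta)+\eps$ and $c>\cyz(d,\theta)+\eps$ (both consequences of $c>\cex(d,\theta)+\eps=\max\{\cbound(d,\theta),\ccentre(d)\}+\eps$) by first producing an \emph{ideal} continuous threshold function which satisfies strengthened versions of Z2--Z3 with a definite positive slack, and only then rationalising everything by a uniform--continuity argument. Write $g(y)=cd(1-\theta)\KL y{\exp(-d)}$ and, for $c>\cyz(d,\theta)$ and $y\in\cY(c,d,\theta)$, $R_c(y)=cd(1-\theta)(\KL y{\exp(-d)}+y\KL{\fz(y)}{p_{01}})$, so that by \eqref{eqsep1} the constant $\cbound(d,\theta)$ is exactly the value of $c$ at which $\inf_{y\in\cY(c,d,\theta)}R_c(y)=1$. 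I would dispose of the degenerate branches of \eqref{eqcyz}--\eqref{eqsep1} ($p_{11}=1$, or $p_{01}=0$) separately, where $\fz\equiv1$ or $\cbound=\cyz$ and the conditions simplify, and assume $0<p_{01}<p_{11}<1$ in the main argument.

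The first real step is to convert strict inequalities into uniform slack. Using the defining equation \eqref{eq_def_zy} one checks that, for fixed $y$, the quantities $g(y)$, $\fz_{c,d,\theta}(y)$ (increasing towards $p_{11}$, by the strict monotonicity of $z\mapsto\KL z{p_{11}}$ on $[p_{01},p_{11}]$) and hence $R_c(y)$ are all non-decreasing in $c$, while $\cY(c,d,\theta)$ shrinks as $c$ grows. Moreover $\cY(c,d,\theta)=(y_-,y_+)$ has compact closure inside $(0,1)$: if $0\in\cY(c,d,\theta)$ then $\inf_{y\in\cY}R_c(y)\le R_c(0)=g(0)<\theta<1$, contradicting $c>\cbound(d,\theta)$, and one argues similarly that $y_+<1$. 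Comparing $c$ with $c':=\cbound(d,\theta)+\eps/2<c$ on the compact set $\overline{\cY(c,d,\theta)}\subseteq\cY(c',d,\theta)$ then yields a constant $\delta_0>0$ with $R_c(y)\ge1+\delta_0$ for all $y\in\cY(c,d,\theta)$; comparing with $\cyz(d,\theta)+\eps/2$ similarly gives $cd(1-\theta)(\KL y{\exp(-d)}+y\KL{p_{01}}{p_{11}})\ge\theta+\delta_1$ on $\cY(c,d,\theta)$, which in view of \eqref{eq_def_zy} forces $\fz(y)\ge p_{01}+\delta_2$ for a fixed $\delta_2>0$ and all such $y$. Finally $\fz$ extends continuously to $[y_-,y_+]$ with $\fz(y_\pm)=p_{11}$.

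Next I build the ideal threshold: fix a small $\gamma\in(0,\delta_2)$ and put $\tilde\cZ(y)=\fz_{c,d,\theta}(y)-\gamma$ for $y\in[y_-,y_+]$ and $\tilde\cZ(y)=p_{11}-\gamma$ otherwise; this is continuous (since $\fz(y_\pm)=p_{11}$) and valued in $[p_{01}+\delta_2-\gamma,\,p_{11}-\gamma]\subset(p_{01},p_{11})$. On $\cY(c,d,\theta)$ the identity \eqref{eq_def_zy} turns the left side of Z2 minus $\theta$ into $cd(1-\theta)\,y\,(\KL{\fz(y)-\gamma}{p_{11}}-\KL{\fz(y)}{p_{11}})$, which is bounded below by a positive constant (compactness of $[y_-,y_+]$, continuity of $\fz$, and a uniform lower bound on $\KL{z-\gamma}{p_{11}}-\KL z{p_{11}}$ for $z\in[p_{01}+\delta_2,p_{11}]$, together with $y\ge y_->0$), while the left side of Z3 equals $R_c(y)+cd(1-\theta)y(\KL{\tilde\cZ(y)}{p_{01}}-\KL{\fz(y)}{p_{01}})\ge1+\delta_0-O(\gamma)$, which is $\ge1+\delta_0/2$ once $\gamma$ is small. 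Off $\cY(c,d,\theta)$, where $\tilde\cZ\equiv p_{11}-\gamma$, the left side of Z2 is $\ge g(y)\ge\theta$ plus the positive constant $cd(1-\theta)\,l\,\KL{p_{11}-\gamma}{p_{11}}$, and the left side of Z3 stays close to its boundary value $\ge1+\delta_0$ at $y_\pm$ provided $l<y_-$ and $r>y_+$ are chosen close to $y_-,y_+$ (so the ``buffer'' $\cI\setminus[y_-,y_+]$ is narrow) and using that $g$ increases as $y$ leaves $[y_-,y_+]$. Then I fix rationals $l<y_-$, $r>y_+$ and a small $\delta>0$ with $[y_-,y_+]\subset(l+\delta,r-\delta)\subset\cI:=(l,r)\subset[\delta,1-\delta]$, $\delta$ below all the slacks produced so far, and $g(y)\ge\theta+\delta$ on $(0,1)\setminus(l+\delta,r-\delta)$ (possible since $g$ is continuous, $g>\theta$ off $(y_-,y_+)$, and $g\to\theta$ only at $y_\pm$); this already delivers Z1. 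Given $\eps'>0$, I partition $\cI$ into finitely many subintervals of lengths $\asymp\delta'$, and on each one take for $\cZ$ a rational within a small fixed tolerance $\rho'$ of the values of $\tilde\cZ$ there; picking $\rho'<\min\{\delta_2-\gamma,\gamma\}$ and using uniform continuity of $z\mapsto\KL z{p_{11}},\KL z{p_{01}}$ on the relevant compact range keeps $\cZ$ valued in $(p_{01},p_{11})\cap\QQ$ and degrades the slacks only from (say) $2\delta$ down to $\ge\delta$, giving Z2--Z3; taking $\delta'$ small enough that the modulus of continuity of $\tilde\cZ$ at $2\delta'$ is below $\min\{\rho',\eps'/3\}$ makes the jumps of $\cZ$ smaller than $\eps'$, i.e.\ Z4.

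The hard part is the behaviour near the endpoints $y_-,y_+$ of $\cY(c,d,\theta)$. There $\fz\to p_{11}$, so a naive ``$\cZ(y)=\fz(y)-(\text{tiny})$'' gains only second order in Z2 and the argument collapses; the remedy is the \emph{fixed} downward offset $\gamma$, which makes $\KL{\fz(y)-\gamma}{p_{11}}-\KL{\fz(y)}{p_{11}}$ bounded below uniformly, but $\gamma$ must simultaneously be kept small so that the opposite-pointing requirement Z3 survives the $O(\gamma)$ loss coming from $\inf_{\cY}R_c\ge1+\delta_0$ --- and the buffer $\cI\setminus[y_-,y_+]$ must be narrow enough that Z3 there does not dip below its endpoint value. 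Getting the dependency order of the constants $\delta_0,\delta_1,\delta_2,\gamma,\delta,\rho',\delta'$ right is the main bookkeeping; once Step~2 furnishes strict inequalities with a gap, the rationalisation in the last paragraph is routine.
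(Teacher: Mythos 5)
Your proposal is correct and follows essentially the same route as the paper's proof: fix a rational interval slightly larger than the closure of $\cY(c,d,\theta)$ to obtain {\bf Z1}, perturb the exact threshold function $\fz_{c,d,\theta}$ (the paper's $z_y$, defined by the equality version of {\bf Z2}) downward into a continuous function enjoying uniform slack in {\bf Z2}--{\bf Z3}, and then discretise it into a rational step function via uniform continuity to secure {\bf Z4}. You merely carry out the perturbation more explicitly (a fixed offset $\gamma$ with careful endpoint and buffer bookkeeping) than the paper, which simply asserts the existence of the perturbed continuous $\hat z_y$; both arguments rest on the same underlying fact that $\overline{\cY(c,d,\theta)}$ is contained in the open interval $(0,1)$.
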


The second obstacle is that in the above discussion we assumed that $\tau_y=\SIGMA_y$ for all $y\in\partial a\setminus\{x\}$.
But all that the analysis of \SPARC\ provides is that \whp $\tau_y\neq\SIGMA_y$ for at most $o(k)$ individuals $y$.
To cope with this issue we resort to the expansion properties of $\Gsc$.
Roughly speaking, we show that \whp for any small set $S$ of individuals (such as $\{y:\tau_y\neq\SIGMA_y\}$) the set of individuals $x$ that occur in `many' tests that also contain a second individual from $S$ is significantly smaller than $S$.
As a consequence, as we apply the thresholding procedure repeatedly the number of misclassified individuals decays geometrically.
We thus arrive at the following algorithm.

\IncMargin{1em}
\begin{algorithm}[ht]
 \KwData{$\Gsp$, $\dSIGMA$}
 \KwResult{an estimate of $\SIGMA$}
  Let $\tau^{(1)}$ be the output of $\SPAX(\Gsp,\dSIGMA)$\;
  \For{ $i = 1,\dots, \lceil\log n\rceil$} {
  For all $x\in V[s+1]\cup\cdots\cup V[\ell]$ calculate\\
  $\displaystyle \quad Y_x(\tau^{(i)})=\sum_{a\in\partial x\setminus F[0]}\vecone\cbc{\forall y\in\partial a\setminus\cbc x:\tau^{(i)}_y=0}$,\quad 
   $\displaystyle Z_x(\tau^{(i)})=\sum_{a\in\partial x\setminus F[0]:\dSIGMA_a=1}\vecone\cbc{\forall y\in\partial a\setminus\cbc x:\tau^{(i)}_y=0}$ 
   \;
  Let 
  $\displaystyle\tau_x^{(i+1)}=
  	\begin{cases}
  	\tau_x^{(i)}&\mbox{ if }x\in V[1]\cup\cdots\cup V[s],\\
	\vecone\cbc{
		Y_x(\tau^{(i)})/\Delta\in\cI\mbox{ and }Z_x(\tau^{(i)})/\Delta>\cZ(Y_x(\tau^{(i)})/\Delta)}&\mbox{ otherwise }
  	\end{cases}$\;
  	}
  \KwRet{$\tau^{(\lceil\log n\rceil)}$}
  \caption{The $\SPEX$ algorithm.}\label{SC_algorithm}
\end{algorithm}
\DecMargin{1em}

\begin{proposition}\label{prop_endgame}
	Suppose that $c>\cex(d,\theta)+\eps$ for a fixed $\eps>0$. 
	There exists $\eps'>0$, a rational interval $\cI$ and a rational step function $\cZ$ such that \whp for all $1\leq i<\log n$ we have 
		$$\|\SIGMA-\tau^{(i+1)}\|_1\leq\frac13\|\SIGMA-\tau^{(i)}\|_1.$$
\end{proposition}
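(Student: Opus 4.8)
The plan is to compare the counters $Y_x(\tau^{(i)}),Z_x(\tau^{(i)})$ maintained by \SPEX\ with the \emph{idealised} quantities $\vY_x,\vZ_x$ from \eqref{eqYx} and \eqref{eqZdef}, in which the running estimate $\tau^{(i)}$ is replaced by the ground truth $\SIGMA$. On a suitable high-probability event the thresholding rule applied to the idealised counters recovers $\SIGMA_x$ for \emph{every} individual, and does so with a uniform safety margin; the discrepancy between the idealised and the actual counters is then controlled by an expansion property of $\Gsp$, which forces the misclassified set to shrink by a factor of three per round. Concretely, I first invoke \Prop~\ref{lem_yz} to obtain $\delta>0$ and the rational interval $\cI=(l,r)\subseteq[\delta,1-\delta]$, then fix a constant $\eps'>0$ small enough (how small is dictated by the margin computed below), and finally obtain from \Prop~\ref{lem_yz} the associated $\delta'>0$ and the rational step function $\cZ$. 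Throughout I condition on the events of \Prop s~\ref{prop_basic}, \ref{prop_plausible}, \ref{prop_seed} and \ref{prop_dist_psi}, all of which hold \whp; in particular $\tau^{(1)}_x=\SIGMA_x$ for every $x$ in the seed $V[1]\cup\cdots\cup V[s]$, \SPEX\ never revises those entries, and $\|\tau^{(1)}-\SIGMA\|_1=o(k)$ with the disagreement set contained in $V[s+1]\cup\cdots\cup V[\ell]$.

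Next I establish the idealised-classification event $\mathcal E$: \whp\ there is a constant $\gamma>0$ such that for every $x\in V[s+1]\cup\cdots\cup V[\ell]$, both (i) the rule ``$\vY_x/\Delta\in\cI$ and $\vZ_x/\Delta>\cZ(\vY_x/\Delta)$'' returns $\SIGMA_x$, and (ii) this output is unchanged if $\vY_x$ and $\vZ_x$ are each altered by at most $\gamma\Delta$. This is a first-moment argument. The Chernoff estimates \eqref{eqChernoffvY1}, \eqref{eqChernoffvY2} and \eqref{eqChernoffvZ} bound the expected number of infected, respectively uninfected, individuals whose idealised counters $(\vY_x,\vZ_x)$ violate the rule even after a $\gamma\Delta$-relaxation, and, recalling $\Delta=(1+o(1))cd(1-\theta)\log n$, conditions Z1--Z3 of \Prop~\ref{lem_yz} are exactly the statements that the relevant large-deviation rates strictly exceed $\theta$ (for infected individuals, of which there are $n^\theta$) and $1$ (for uninfected individuals, of which there are $n$). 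Hence, after summing over the $O(\Delta)$ admissible values of $\vY_x/\Delta$ and $\vZ_x/\Delta$ and absorbing the $\gamma$-slack — legitimate for small $\gamma$ because $\cZ$ takes finitely many values, all interior to $(p_{01},p_{11})$ — these expectations tend to $0$, and Markov's inequality yields $\mathcal E$. Part (ii) additionally uses Z4 to bound the jump of $\cZ$ caused by a $\gamma\Delta$-shift of its argument, so $\gamma$ is chosen below $\delta'$; and the uniformity of the margin (as opposed to $\vY_x$-dependence) is what we gain from $\cZ$ being bounded away from the endpoints of $(p_{01},p_{11})$ and from Z1 furnishing a $\delta$-gap to $\partial\cI$.

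For the reduction, set $S_i=\{x:\tau^{(i)}_x\ne\SIGMA_x\}$ and, for $T\subseteq V$, let $D_x(T)$ be the number of tests $a\in\partial x\setminus F[0]$ with $\partial a\cap(T\setminus\{x\})\ne\emptyset$. The ``untainted'' indicator of a test $a\ni x$ differs between $\tau^{(i)}$ and $\SIGMA$ only if $a$ meets $S_i\setminus\{x\}$, whence $|Y_x(\tau^{(i)})-\vY_x|\le D_x(S_i)$ and $|Z_x(\tau^{(i)})-\vZ_x|\le D_x(S_i)$. Together with part (ii) of $\mathcal E$ this gives: if $D_x(S_i)<\gamma\Delta$ then $\tau^{(i+1)}_x=\SIGMA_x$; since the seed is never misclassified, we conclude $S_{i+1}\subseteq\{x\in V[s+1]\cup\cdots\cup V[\ell]:D_x(S_i)\ge\gamma\Delta\}$ for every $i$. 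I then invoke the expansion property of $\Gsp$ established in Appendix~\ref{sec_lemma_endgame_misclassified}: \whp, for every $T\subseteq V[s+1]\cup\cdots\cup V[\ell]$ with $1\le|T|\le\rho k$ (for a suitable constant $\rho=\rho(\gamma,d)>0$) one has $|\{x:D_x(T)\ge\gamma\Delta\}|\le\tfrac13|T|$; for bounded $|T|$ this follows since \whp\ no two individuals lie in $\ge\gamma\Delta/|T|$ common tests, and for larger $|T|$ it is the usual density/counting estimate on $\Gsp$. As $|S_1|=\|\tau^{(1)}-\SIGMA\|_1=o(k)\le\rho k$ for $n$ large, induction on $i$ using the displayed inclusion and the expansion property keeps $|S_i|\le\rho k$ and yields $\|\SIGMA-\tau^{(i+1)}\|_1=|S_{i+1}|\le\tfrac13|S_i|=\tfrac13\|\SIGMA-\tau^{(i)}\|_1$ for all $1\le i<\log n$, which is the claim.

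The main obstacle is part (ii) of the event $\mathcal E$, i.e.\ producing a \emph{uniform} perturbation margin $\gamma>0$ valid at every individual: this is precisely what the careful construction of the step function $\cZ$ in \Prop~\ref{lem_yz} (conditions Z2--Z4) is designed to enable, and it requires handling the step discontinuities of $\cZ$ and the behaviour near the endpoints of $\cI$ with some care. By contrast the expansion step, while it is where the spatially coupled structure of $\Gsp$ is genuinely used — to guarantee contraction rather than mere stability — is comparatively routine and is relegated to the appendix.
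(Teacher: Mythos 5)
Your proposal is correct and follows essentially the same route as the paper: a first-moment/Chernoff argument (via {\bf Z1}--{\bf Z3} of \Prop~\ref{lem_yz}) showing that the idealised counters $\vY_x,\vZ_x$ classify every individual correctly with a uniform margin, the observation that $|Y_x(\tau^{(i)})-\vY_x|$ and $|Z_x(\tau^{(i)})-\vZ_x|$ are bounded by the number of tests at $x$ meeting the current misclassified set, and the expansion property of Appendix~\ref{sec_lemma_endgame_misclassified} to obtain geometric contraction. The only caveat is that you assert the expansion property for all $|T|\leq\rho k$ with $\rho$ a constant, whereas \Lem~\ref{lemma_endgame_misclassified} is only proved (and only holds with the stated contraction) for $|T|\leq\exp(-\log^{\alpha}n)k$; since \Prop s~\ref{prop_plausible} and~\ref{prop_dist_psi} place $|S_1|$ well inside that range and the sets only shrink, this overstatement is harmless.
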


\noindent
We prove \Prop~\ref{prop_endgame} in \Sec~\ref{sec_prop_endgame}.

\begin{proof}[Proof of \Thm~\ref{thm_alg_ex}]
	Since \Prop~\ref{prop_endgame} shows that the number of misclassified individuals decreases geometrically as we iterate Steps 3--5 of \SPEX, we have $\tau^{(\lceil\log n\rceil)}=\SIGMA$ \whp.
	Furthermore, thanks to \Thm~\ref{thm_alg_apx} and \Prop~\ref{lem_yz} \SPEX\ is indeed a polynomial time algorithm.
\end{proof}

\begin{remark}\label{rem_noiseless}\upshape
	In the noiseless case $p_{00}=p_{11}=1$ \Thm~\ref{thm_alg_ex} reproduces the analysis of the \SPIV\ algorithm from~\cite{opt}.
	One key difference between \SPARC\ and \SPEX\ on the one hand and \SPIV\ on the other is that the former are based on Belief Propagation, while the latter relies on combinatorial intuition.
	More precisely, the \SPIV\ algorithm infers from positive tests by means of a weighted sum identical to $\vW_x^+(\tau)$ from \eqref{eqWtau} with the special values $p_{00}=p_{11}=1$ and $d=\log2$.
	In the noiseless case the weights $w_j^+$ were `guessed' based on combinatorial intuition.
	Furthermore, in noiseless case we can be certain that any individual contained in a negative test is healthy, and therefore the \SPIV\ algorithm only takes negative tests into account in this simple, deterministic manner.
	By contrast, in the noisy case the negative tests give rise to a second weighted sum $\vW^-_x(\tau)$.
	An important novelty is that rather than `guessing' the weights $w_j^\pm$, here we discovered how they can be derived systematically from the BP formalism.
	Apart from shedding new light on the noiseless case as well, we expect that this type of approach can be generalised to quite a few other inference problems as well.
	A second novelty in the design of \SPEX\ is the use of the threshold function $\cZ(\nix)$ that depends on the number untainted tests.
	The need for such a threshold function is encoded in the optimisation problem \eqref{eqsep1} that gives rise to a non-trivial choice of the coefficient $d$ that governs the size of the tests.
	This type of optimisation is unnecessary in the noiseless case, where simply $d=\log2$ is the optimal choice for all $\theta\in(0,1)$.
\end{remark}

\subsection{The constant-column lower bound}
Having completed the discussion of the algorithms \SPARC\ and \SPEX\ for \Thm s~\ref{thm_alg_apx} and~\ref{thm_alg_ex}, we turn to the proof of \Thm~\ref{thm_inf_ex} on the information-theoretic lower bound for the constant column design.
The goal is to show that for $m<(\cex(\theta)-\eps)k\log(n/k)$ no algorithm, efficient or otherwise, can exactly recover $\SIGMA$ on $\Gcc$.

To this end we are going to estimate the probability of the actual ground truth $\SIGMA$ under the posterior given the test results.
We recall from Fact~\ref{fact_Nishi} that the posterior coincides with the Boltzmann distribution from \eqref{eqBoltzmann}.
The following proposition, whose proof can be found in \Sec~\ref{sec_cc_ground}, estimates the Boltzmann weight $\psi_{\Gcc,\dSIGMA}(\SIGMA)$ of the ground truth.
Recall from \eqref{eqcndn} that $d=k\Delta/m$.
Also recall the weight function from \eqref{eqpsiG}.

\begin{proposition}\label{prop_cc_ground}
	\whp the weight of $\SIGMA$ satisfies
	\begin{align}\label{eqprop_cc_ground}
		\frac1m\log\psi_{\Gcc,\dSIGMA}(\SIGMA)&=-\exp(-d)h(p_{00})-(1-\exp(-d))h(p_{11})+O(m^{-1/2}\log^3).
	\end{align}
\end{proposition}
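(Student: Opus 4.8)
The idea is to write $\log\psi_{\Gcc,\dSIGMA}(\SIGMA)$ as a sum over the $m$ tests, group the tests according to their actual status $\aSIGMA_a\in\{0,1\}$ and displayed status $\dSIGMA_a\in\{0,1\}$, and then invoke the empirical test counts provided by the analogue of \Prop~\ref{prop_basic} for the constant column design $\Gcc$. Concretely, by \eqref{eqpsidef}--\eqref{eqpsiG} and since $\|\SIGMA\|_1=k$, we have
\begin{align*}
\log\psi_{\Gcc,\dSIGMA}(\SIGMA)
&=\sum_{a\in F}\log\psi_a(\SIGMA_{\partial a})
=\abs{F_0^-}\log\poo+\abs{F_0^+}\log\poi+\abs{F_1^-}\log\pio+\abs{F_1^+}\log\pii.
\end{align*}
So the whole statement reduces to controlling the four cardinalities $\abs{F_r^{\pm}}$ up to an additive error of order $\sqrt m\log^3 n$, which is exactly the kind of concentration statement that \Prop~\ref{prop_basic} supplies in the spatially coupled setting; one needs the (easier) analogue for $\Gcc$, where every individual joins $\Delta$ tests uniformly without replacement and $d=k\Delta/m$.

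First I would record the first-moment computation for $\Gcc$. The probability that a fixed test $a$ is actually negative is $\pr[\aSIGMA_a=0]=\binom{m-1}{\Delta}^k\big/\binom m\Delta^k$, since each of the $k$ infected individuals must avoid $a$; a short estimate gives $\pr[\aSIGMA_a=0]=(1-\Delta/m)^k+o(1)=\exp(-d)+o(1)$ using $\Delta=\Theta(\log n)$, $d=k\Delta/m=\Theta(1)$. Hence $\Erw\abs{F_0}=m\exp(-d)+o(m)$ and $\Erw\abs{F_1}=m(1-\exp(-d))+o(m)$. Then, conditioning on $\aSIGMA$ and applying the noise channel \eqref{eqnoisemodel} independently across tests, $\abs{F_0^-}$ is (conditionally) a sum of independent Bernoulli$(\poo)$ variables over $a\in F_0$, and similarly for the other three classes. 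So $\Erw\abs{F_0^-}=m\exp(-d)\poo+o(m)$, $\Erw\abs{F_0^+}=m\exp(-d)\poi+o(m)$, $\Erw\abs{F_1^-}=m(1-\exp(-d))\pio+o(m)$, $\Erw\abs{F_1^+}=m(1-\exp(-d))\pii+o(m)$.

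Next I would upgrade these expectations to concentration with window $\sqrt m\log^3 n$. For the noise layer this is immediate: conditionally on $\aSIGMA$ the four counts are sums of independent indicators, so the Chernoff bound \Lem~\ref{lem_chernoff} gives deviations of order $\sqrt{m}\log n$ with probability $1-\exp(-\Omega(\log^2 n))$. For the combinatorial layer, i.e.\ the concentration of $\abs{F_0}$, one can use a bounded-differences (Azuma--Hoeffding) argument: revealing the test-neighbourhood of one individual $x$ changes $\abs{F_0}$ by at most $\Delta=O(\log n)$, and there are $k=O(n^\theta)$ individuals, so the fluctuation is $O(\sqrt k\,\Delta)=O(\sqrt{k}\log n)=O(\sqrt m\log n)$ since $m=\Theta(k\log n)$. (Alternatively, a direct second-moment/switching computation works; this is exactly the routine step behind \Prop~\ref{prop_basic}, and I would cite \Sec~\ref{sec_prop_basic} for the matching argument.) Combining the two layers via a union bound yields that \whp\ all four cardinalities equal their means up to $O(\sqrt m\log^3 n)$.

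Finally I would assemble the pieces. Substituting the concentration estimates into the displayed identity for $\log\psi_{\Gcc,\dSIGMA}(\SIGMA)$, dividing by $m$, and collecting terms gives
\begin{align*}
\frac1m\log\psi_{\Gcc,\dSIGMA}(\SIGMA)
&=\exp(-d)\bc{\poo\log\poo+\poi\log\poi}
+(1-\exp(-d))\bc{\pio\log\pio+\pii\log\pii}+O(m^{-1/2}\log^3 n)\\
&=-\exp(-d)h(\poo)-(1-\exp(-d))h(\pii)+O(m^{-1/2}\log^3 n),
\end{align*}
where in the last line I used $\poi=1-\poo$, $\pio=1-\pii$ and the definition $h(z)=-z\log z-(1-z)\log(1-z)$, together with the conventions $0\log 0=0$ that cover the degenerate cases $p_{11}=1$ or $\poo=1$. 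This is exactly \eqref{eqprop_cc_ground}. The only mildly delicate point is the combinatorial concentration of $\abs{F_0}$ with the stated $\sqrt m\log^3 n$ window; everything else is the channel-wise Chernoff bound and bookkeeping. I expect the bounded-differences estimate for $\abs{F_0}$ to be the main (though still routine) obstacle, and it is the natural place to lean on the already-established machinery of \Prop~\ref{prop_basic}.
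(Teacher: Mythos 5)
Your proposal is correct and follows essentially the same route as the paper: the paper likewise writes $\log\psi_{\Gcc,\dSIGMA}(\SIGMA)=|F_0^-|\log p_{00}+|F_0^+|\log p_{01}+|F_1^-|\log p_{10}+|F_1^+|\log p_{11}$ and substitutes the concentration bounds of \Lem~\ref{lemma_basic}, whose proof (omitted as ``similar to \Prop~\ref{prop_basic}'') is exactly the first-moment plus Azuma--Hoeffding plus Chernoff argument you sketch.
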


We are now going to argue that for $c<\cex(\theta)$ the partition function $Z_{\Gcc,\dSIGMA}$ dwarfs the Boltzmann weight \eqref{eqprop_cc_ground} \whp.
The definition \eqref{eqcex} of $\cex(\theta)$ suggests that there are two possible ways how this may come about.
The first occurs when $c=m/(k\log(n/k))$ is smaller than the local stability bound $\cbound(d,\theta)$ from \eqref{eqsep1}.
In this case we will essentially put the analysis of \SPEX\ into reverse gear.
That is, we will show that there are plenty of individuals $x$ whose status could be flipped from infected to healthy or vice versa without reducing the posterior likelihood.
In effect, the partition function will be far greater than the Boltzmann weight of $\SIGMA$.

\begin{proposition}\label{prop_XYZ}
	Assume that there exists $y\in\cY(c,d,\theta)$ such that there exists $z\in(p_{01},p_{11})$ such that
	\begin{align}\label{eq_prop_XYZ_1}
	cd(1-\theta)\bc{\KL y{\exp(-d)}+y\KL{z}{p_{11}}}&<\theta&&\mbox{and}\\
	cd(1-\theta)\bc{\KL y{\exp(-d)}+y\KL{z}{p_{01}}}&<1.\label{eq_prop_XYZ_2}
\end{align}
Then \whp there exist $n^{\Omega(1)}$ pairs $(v,w)\in V_1\times V_0$ such that for the configuration $\SIGMA^{[v,w]}$ obtained from $\SIGMA$ by inverting the $v,w$-entries we have 
	\begin{align}\label{eq_prop_XYZ_3}
		\mu_{\Gcc,\dSIGMA}(\SIGMA^{[v,w]})=\mu_{\Gcc,\dSIGMA}(\SIGMA).
	\end{align}
\end{proposition}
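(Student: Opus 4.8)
The plan is to run the local-stability analysis behind \SPEX\ in reverse: we look for a combinatorial ``signature'' $(Y,Z)$ --- an individual lying in $Y$ untainted tests of which exactly $Z$ display a positive result --- that is realised by $n^{\Omega(1)}$ individuals of $V_1$ and, for the \emph{same} $(Y,Z)$, by $n^{\Omega(1)}$ individuals of $V_0$; any two such individuals then form a neutral swap. First I would compute the effect of a single swap on the Boltzmann weight. Fix $v\in V_1$ and $w\in V_0$ that share no test in $\Gcc$. Flipping $\SIGMA_v$ from $1$ to $0$ turns every untainted test $a\in\partial v$ (as counted by \eqref{eqYx}) from actually positive to actually negative, so its factor $\psi_a$ gets multiplied by $p_{01}/p_{11}$ if $a$ is positively displayed and by $p_{00}/p_{10}$ otherwise; all other test factors and the Hamming-weight indicator are untouched. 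Symmetrically, flipping $\SIGMA_w$ from $0$ to $1$ turns each of the $\vY_w$ tests $a\in\partial w$ that have no infected member into an actually positive one, multiplying $\psi_a$ by $p_{11}/p_{01}$ or $p_{10}/p_{00}$; since $\partial v\cap\partial w=\emptyset$ the two sets of affected tests are disjoint and the modifications do not interfere. As $\SIGMA^{[v,w]}$ again has Hamming weight $k$,
\begin{align*}
	\log\frac{\psi_{\Gcc,\dSIGMA}(\SIGMA^{[v,w]})}{\psi_{\Gcc,\dSIGMA}(\SIGMA)}
	&=(\vZ_v-\vZ_w)\log\frac{p_{01}}{p_{11}}+\bc{(\vY_v-\vY_w)-(\vZ_v-\vZ_w)}\log\frac{p_{00}}{p_{10}}.
\end{align*}
This vanishes --- and hence $\mu_{\Gcc,\dSIGMA}(\SIGMA^{[v,w]})=\mu_{\Gcc,\dSIGMA}(\SIGMA)$ --- as soon as $\vY_v=\vY_w$ and $\vZ_v=\vZ_w$. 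Thus it suffices to produce $n^{\Omega(1)}$ pairs $(v,w)\in V_1\times V_0$ with $\partial v\cap\partial w=\emptyset$, $\vY_v=\vY_w$ and $\vZ_v=\vZ_w$. (The hypothesis forces $\vec p\in(0,1)^4$, so the logarithms are finite, unless the only admissible $y$ is $0$, in which case the swap involves individuals in no untainted test and is trivially neutral; we henceforth ignore this degenerate case.)

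Next I would fix integers $Y=y\Delta+O(1)$ and $Z=zY+O(1)$ and carry out a first-moment computation. On $\Gcc$ the number $\vY_x$ of untainted tests at $x$ obeys the large-deviation asymptotics of \eqref{eqChernoffvY1}--\eqref{eqChernoffvY2} (the argument for $\Gcc$ being the same as for $\Gsp$: a test at $x$ is untainted with probability $\exp(-d)+o(1)$, and the relevant counts are binomial/hypergeometric, so \Lem s~\ref{lem_chernoff} and~\ref{lem_hyperchernoff} and a binomial local limit theorem apply); moreover, conditionally on the design and on $\SIGMA$, the number $\vZ_x$ of positively displayed untainted tests at $x$ is \emph{exactly} $\Bin(\vY_x,p_{11})$ if $\SIGMA_x=1$ and $\Bin(\vY_x,p_{01})$ if $\SIGMA_x=0$, since noise acts on those tests independently, so \eqref{eqChernoffvZ} holds as well. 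Combining these with $k=n^\theta$ and $\Delta=(1+o(1))cd(1-\theta)\log n$ from \eqref{eqcndn} gives
\begin{align*}
	\ex\abs{\cbc{v\in V_1:\vY_v=Y,\ \vZ_v=Z}}&=n^{\theta-cd(1-\theta)(\KL y{\exp(-d)}+y\KL z{p_{11}})+o(1)},\\
	\ex\abs{\cbc{w\in V_0:\vY_w=Y,\ \vZ_w=Z}}&=n^{1-cd(1-\theta)(\KL y{\exp(-d)}+y\KL z{p_{01}})+o(1)}.
\end{align*}
Assumptions \eqref{eq_prop_XYZ_1} and \eqref{eq_prop_XYZ_2} say exactly that the two exponents are positive and bounded away from $0$ (strictness of the inequalities and of $z\in(p_{01},p_{11})$ absorbing the $O(1)$ rounding), so both expectations are $n^{\Omega(1)}$.

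Finally I would upgrade this to a statement that holds \whp\ by a second-moment argument applied to $X=\abs{\cbc{(v,w)\in V_1\times V_0:\partial v\cap\partial w=\emptyset,\ \vY_v=\vY_w=Y,\ \vZ_v=\vZ_w=Z}}$. Two fixed individuals of $\Gcc$ lie in a common test with probability $O(\Delta^2/m)=o(1)$, and the test choices of distinct individuals are independent, so the two single-individual events decouple up to a factor $1+o(1)$ and $\ex X=(1+o(1))$ times the product of the two expectations above; hence $\ex X=n^{\Omega(1)}$. In $\ex[X^2]$ the dominant contribution comes from quadruples of four distinct individuals with pairwise disjoint test neighbourhoods and equals $(1+o(1))(\ex X)^2$, while all coincidence terms (shared individuals or shared tests) are of smaller order by the same independence and the bound $O(\Delta^2/m)$; thus $\ex[X^2]=(1+o(1))(\ex X)^2$ and Chebyshev's inequality yields $X\geq\frac12\ex X=n^{\Omega(1)}$ \whp. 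Every pair counted by $X$ satisfies \eqref{eq_prop_XYZ_3} by the first step, which finishes the proof. The hard part will be the bookkeeping in this last step --- quantifying the weak correlations between different individuals' statistics induced by shared tests and by the common infected set, and carrying the ``$\partial v\cap\partial w=\emptyset$'' restriction cleanly through both moments --- together with the matching large-deviation \emph{lower} bounds for $(\vY_x,\vZ_x)$ on $\Gcc$; both are technical but follow the well-established template of second-moment arguments for the constant column design~\cite{Coja_2019,opt,Maurice}.
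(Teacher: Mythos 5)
Your proposal is correct and follows essentially the same route as the paper: the paper likewise identifies the signature $(\vY_x,\vZ_x)$ of untainted and positively displayed untainted tests, computes first moments whose exponents are positive precisely by \eqref{eq_prop_XYZ_1}--\eqref{eq_prop_XYZ_2}, and upgrades to a w.h.p.\ statement via a second moment/Chebyshev argument (its \Lem~\ref{lem_EXY} and \Cor~\ref{cor_EXY}). The only cosmetic difference is that you count neutral pairs directly under the restriction $\partial v\cap\partial w=\emptyset$, whereas the paper concentrates $|\cX_1(Y,Z)|$ and $|\cX_0(Y,Z)|$ separately and then takes arbitrary pairs, noting that the two relevant sets of untainted tests are automatically disjoint; your explicit disjointness condition is, if anything, the more careful way to justify the swap identity.
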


\noindent
We prove \Prop~\ref{prop_XYZ} in \Sec~\ref{sec_cc_stab}.

The second case is that $c$ is smaller than the entropy bound $\ccentre(d)$ from \eqref{eqccentre}.
In this case we will show by way of a moment computation that $Z_{\Gcc,\dSIGMA}$ exceeds the Boltzmann weight of $\SIGMA$ \whp.
More precisely, in \Sec~\ref{sec_cc_mmt} we prove the following.

\begin{proposition}\label{prop_mmt}
	Let $\eps>0$.
	If $c<\ccentre(d)-\eps$, then
	\begin{align*}
		\pr\brk{ \log Z_{\Gcc,\dSIGMA}\geq k\log(n/k)\brk{1-c/\ccentre(d)+o(1)}}>1-\eps+o(1).
	\end{align*}
\end{proposition}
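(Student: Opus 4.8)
The plan is to establish the lower bound on $Z_{\Gcc,\dSIGMA}$ by a conditional moment computation that exhibits exponentially many configurations whose Boltzmann weight rivals that of $\SIGMA$, which forces the partition function to be large. Throughout I would condition on the high-probability event of \Prop~\ref{prop_basic}, which fixes the degree structure of $\Gcc$ and the numbers $|F_r^{\pm}|$ of tests of each actual/displayed type; conditioning on the realised noise pattern (rather than averaging over it) is what keeps the annealed count from being inflated by rare noise realisations, and I would work in the configuration-model view of $\Gcc$, conditioning on its degree sequence. Rather than estimate $Z_{\Gcc,\dSIGMA}$ directly, for a parameter $\alpha\in(0,1)$ let $Z_\alpha$ be the contribution to $Z_{\Gcc,\dSIGMA}$ of those $\sigma\in\{0,1\}^V$ of Hamming weight $k$ that differ from $\SIGMA$ in exactly $\lfloor\alpha k\rfloor$ infected and $\lfloor\alpha k\rfloor$ uninfected coordinates and that realise the typical split of their $\sigma$-positive tests across $F^+$ and $F^-$. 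Since $Z_{\Gcc,\dSIGMA}\geq Z_\alpha$, it suffices to pick $\alpha$ so that $\log Z_\alpha\geq k\log(n/k)(1-c/\ccentre(d)+o(1))$ with probability $>1-\eps+o(1)$.

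\textbf{First moment.} There are $\binom k{\lfloor\alpha k\rfloor}\binom{n-k}{\lfloor\alpha k\rfloor}=\exp(\alpha k\log(n/k)(1+o(1)))$ configurations of this type, and after the conditioning the weight $\psi_{\Gcc,\dSIGMA}(\sigma)$ of such a $\sigma$ is, up to lower-order fluctuations, a deterministic quantity $\exp(m\Lambda_0(\alpha)+o(m))$ assembled from $\exp(-d)$, $\vec p$, binary entropies and Kullback--Leibler divergences; one reads this off by counting, for a $\sigma$ of this type, how many of its $\sigma$-positive resp.\ $\sigma$-negative tests fall into $F^+$ resp.\ $F^-$, using $\Hyp$/$\Bin$ concentration. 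Hence $\Erw[Z_\alpha]=\exp\bc{k\log(n/k)(\alpha+c\Lambda_0(\alpha))+o(k\log(n/k))}$. The heart of the argument is the one-dimensional calculus fact that $\sup_{\alpha\in(0,1)}(\alpha+c\Lambda_0(\alpha))\geq 1-c/\ccentre(d)$ whenever $c<\ccentre(d)$, attained at an interior $\alpha^\star$; this is precisely where the definition \eqref{eqccentre} of $\ccentre(d)$ enters, and the optimisation is tractable because the exponent turns out to be affine in the effective displayed-negative rate of a test under $\sigma$. Fix $\alpha=\alpha^\star$ from here on.

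\textbf{Concentration.} To turn the first moment into a high-probability statement I would estimate $\Erw[Z_{\alpha^\star}^2]=\sum_{\sigma,\sigma'}\Erw[\psi_{\Gcc,\dSIGMA}(\sigma)\psi_{\Gcc,\dSIGMA}(\sigma')]$, organised by the mutual overlap of $(\sigma,\sigma')$ and by their overlaps with $\SIGMA$. The contribution of generic pairs, whose only correlation is the one forced by both lying at distance $\alpha^\star k$ from $\SIGMA$, equals $(1+o(1))\Erw[Z_{\alpha^\star}]^2$, and one checks that the large-deviations rate governing the overlap profile is maximised at this generic profile, so that aligned pairs (large extra overlap, hence much larger joint weight per test) contribute a vanishing fraction---or, after an $\eps$-dependent truncation of the admissible overlaps, at most an $O_\eps(1)$ fraction, which accounts for the $\eps$-loss in the statement. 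Chebyshev's inequality (or Paley--Zygmund) then yields $Z_{\alpha^\star}\geq\tfrac12\Erw[Z_{\alpha^\star}]$ with probability $>1-\eps+o(1)$, whence $\log Z_{\Gcc,\dSIGMA}\geq\log Z_{\alpha^\star}\geq k\log(n/k)(1-c/\ccentre(d)+o(1))$. Combined with \Prop~\ref{prop_cc_ground}, which pins down $\log\psi_{\Gcc,\dSIGMA}(\SIGMA)$, this makes $\mu_{\Gcc,\dSIGMA}(\SIGMA)=\psi_{\Gcc,\dSIGMA}(\SIGMA)/Z_{\Gcc,\dSIGMA}$ exponentially small, which is how the proposition feeds into \Thm~\ref{thm_inf_ex}; the proposition itself, however, needs only the lower bound on $Z_{\Gcc,\dSIGMA}$.

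\textbf{Main obstacle.} I expect the delicate step to be the second-moment estimate: a pair $(\sigma,\sigma')$ that agrees with itself but disagrees with $\SIGMA$ on many tests has a per-test joint weight strictly exceeding the square of the marginal per-test weight, so such aligned pairs are individually far heavier than generic ones, and the whole argument hinges on showing they are too sparse to dominate the exponent. This is the usual second-moment balancing act, and it is also the reason one must condition on the degree sequence (and on the displayed-test counts) rather than work with the plain annealed model. A secondary, purely computational obstacle is carrying out the optimisation in the first-moment step and verifying that \eqref{eqccentre} is exactly the threshold at which $\sup_\alpha(\alpha+c\Lambda_0(\alpha))$ falls below $1-c/\ccentre(d)$.
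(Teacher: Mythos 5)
Your proposal takes a genuinely different route from the paper, and as described it contains two gaps that I do not see how to close. First, for contrast, here is what the paper actually does: there is no second moment and no restriction to a sphere around $\SIGMA$. The paper introduces a null model $(\Gcc,\DSIGMA)$ in which the displayed results are a \emph{uniformly random} $0/1$-vector of the observed weight $m^+$, computes the single annealed moment $\ex[Z_{\Gcc,\DSIGMA}]=\binom nk\binom m{m^+}^{-1}\pr[\vm^+=m^+]$ (\Lem~\ref{lem_mmt}), and proves a planting/size-biasing identity (\Lem~\ref{lem_nishicc}) showing that the law of the real pair $(\Gcc,\dSIGMA)$ given $\vm^+=m^+$ is the null law reweighted by $Z_{G,\sigma''}/\ex[Z_{\Gcc,\DSIGMA}]$. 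This yields $\pr[Z_{\Gcc,\dSIGMA}<\delta\,\ex Z_{\Gcc,\DSIGMA}\mid\vm^+=m^+]<\delta$ for free (\Cor~\ref{cor_mmt}), i.e.\ $\log Z_{\Gcc,\dSIGMA}\geq\log\binom nk-\log\binom m{m^+}+o(m)=k\log(n/k)(1-ch(q_0^+))+o(k\log(n/k))$ with probability $1-\delta-o(1)$. All the variance control you are worried about is bypassed.

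The gaps. (i) Your restricted first moment does not reach the target. If $Z_\alpha$ counts only configurations at distance $2\lfloor\alpha k\rfloor$ from $\SIGMA$ that realise the \emph{typical} split of their tests over $F^{\pm}$, then each such $\sigma$ has $\frac1m\log\psi_{\Gcc,\dSIGMA}(\sigma)=\lambda(\alpha)+o(1)$ for an explicit quenched exponent $\lambda$, and $\log Z_\alpha=k\log(n/k)\bc{\alpha+c\lambda(\alpha)}+o(k\log(n/k))$. Take the symmetric channel with $p_{11}=p_{00}=0.9$, $d=\log 2$, $c=2<\ccentre(d)\approx 2.72$: a direct computation gives that $\alpha\mapsto\alpha+c\lambda(\alpha)$ is decreasing on $[0,1]$, so its supremum is $c\lambda(0)=-ch(p_{11})\approx-0.65$, attained at $\alpha=0$ (i.e.\ your count recovers nothing beyond $\psi(\SIGMA)$ itself), whereas the bound the paper's argument delivers is $1-ch(q_0^+)\approx-0.39$ and the bound in the statement is $1-c/\ccentre(d)\approx+0.26$. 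The dominant mass of $Z$ sits on configurations whose $\sigma$-positive tests are \emph{atypically} aligned with $F^+$; your typicality restriction discards exactly these, and recovering them requires a joint large-deviations optimisation over the distance \emph{and} the full test profile --- not a one-dimensional calculus fact. (ii) If you instead drop the typicality restriction and use the annealed $\ex[Z_\alpha]$, the exponent is inflated by precisely those rare heavy configurations, so $\ex[Z_\alpha^2]\gg\ex[Z_\alpha]^2$ and Paley--Zygmund gives nothing; your assertion that generic-overlap pairs dominate the second moment is the entire difficulty and is neither verified nor true in general for planted models of this type (the separate local-stability threshold $\cbound$ exists exactly because near-$\SIGMA$ solutions proliferate and inflate this second moment). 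Finally, be aware that the inequality as literally displayed is stronger than what the paper's own argument yields, since $h(q_0^+)\geq 1/\ccentre(d)$ with equality only for noiseless channels; what is proved, and what the application in \Thm~\ref{thm_inf_ex} uses, is in effect $\log\bc{Z_{\Gcc,\dSIGMA}/\psi_{\Gcc,\dSIGMA}(\SIGMA)}\geq k\log(n/k)(1-c/\ccentre(d)+o(1))$, so no direct count of configurations should be expected to certify the displayed form.
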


\begin{proof}[Proof of \Thm~\ref{thm_inf_ex}]
	\Prop~\ref{prop_XYZ} readily implies that $\mu_{\Gcc,\dSIGMA}(\SIGMA)=o(1)$ \whp if $c<\cbound(d,\theta)$.
	Furthermore, in the case $c<\ccentre(d)$ the assertion follows from \Prop s~\ref{prop_cc_ground} and~\ref{prop_mmt}.
\end{proof}

\section{Analysis of the approximate recovery algorithm}\label{sec_alg_apx}

\noindent
In this section we carry out the proofs of the various proposition leading up to the proof of \Thm~\ref{thm_alg_apx} (\Prop s~\ref{prop_basic}, \ref{prop_plausible} and \ref{prop_dist_psi}).
Throughout we work with the spatially coupled test design $\Gsc$ from \Sec~\ref{sec_gsc}.
Hence, we keep the notation and assumptions from \eqref{eqell}--\eqref{eqtotaltests}.
We also make a note of the fact that for any $x\in V[i]$ and any $a\in F[i+j-1]$, $j=1,\ldots,s$,
\begin{align}\label{eqax}
	\pr\brk{x\in\partial a}&=1-\pr\brk{x\not\in\partial a}=1-\binom{|F[i+j-1]|-1}{\Delta/s}\binom{|F[i+j-1]|}{\Delta/s}^{-1}=\frac{\Delta\ell}{ms}+O\bc{\bcfr{\Delta\ell}{ms}^2};
\end{align}
this is because the construction of $\Gsp$ ensures that $x$ joins $\Delta/s$ random tests in each of $F[i+j-1]$, drawn uniformly without replacement.

\subsection{Proof of \Prop~\ref{prop_basic}}\label{sec_prop_basic}
Property {\bf G1} is a routine consequence of the Chernoff bound and was previously established as part of \cite[\Prop~4.1]{opt}.
With respect to {\bf G2} we may condition on the event $\fE$ that the bound \eqref{eqG1} from {\bf G1} is satisfied.
Consider a test $a\in F[i]$.
Recall that $a$ comprises individuals from the compartments $V[i-s+1],\ldots,V[i]$.
Since the probability that a specific individual joins a specific test is given by \eqref{eqax} and since individuals choose the tests that they join independently, on $\fE$ for each $i-s+1\leq h\leq i$ we have
\begin{align}\label{eqG31}
	|V_1[h]\cap\partial a|\disteq\Bin\bc{\frac k\ell+O\bc{\sqrt{\frac k\ell}\log n},\frac{\Delta\ell}{ms}+O\bc{\bcfr{\Delta\ell}{m s}^2}}.
\end{align}
Combining \eqref{eqcndn} and \eqref{eqG31}, we obtain 
\begin{align}\label{eqG32}
	\ex[|V_1[h]\cap\partial a|\mid\fE]&=\frac{\Delta k}{ms}+O\bc{\sqrt{\frac\ell k}\log n}=\frac ds+O\bc{k^{-1/2}\log^{3/2}n}.
\end{align}
Further, combining \eqref{eqG31} and \eqref{eqG32}, we get
\begin{align*}
	\pr\brk{V_1[h]\cap\partial a=\emptyset\mid\fE}&=\exp\brk{\bc{\frac k\ell+O\bc{\sqrt{\frac k\ell}\log n}}\log\bc{1-\frac{\Delta\ell}{ms}+O\bc{\bcfr{\Delta\ell}{m s}^2}}}=\exp\bc{-\frac ds+O\bc{k^{-1/2}\log^{3/2}n}}.
\end{align*}
Multiplying these probabilities up on $i-s+1\leq h\leq i$, we arrive at the estimate
\begin{align*}
	\pr\brk{V_1[h]\cap\partial a=\emptyset\mid\fE}&=\exp\bc{-d+O\bc{k^{-1/2}\log^{8/5}n}}.
\end{align*}
Hence,
\begin{align}\label{eqG33}
	\ex[\abs{F_0[i]}\mid\fE]=\frac m\ell\exp(-d)+O(\sqrt m \log^{2}n).
\end{align}
To establish concentration, observe that the set $\partial x$ of tests that a specific infected individual $x\in V_1[h]$ joins can change $|F_0[i]|$ by $\Delta$ at the most.
Moreover, changing the neighbourhood $\partial x$ of a healthy individual cannot change the actual test results at all.
Therefore, by Azuma--Hoeffding,
\begin{align}\label{eqneg2}
	\pr\brk{\abs{|F_0[i]|-\Erw\brk{|F_0[i]| \mid\cE}}\geq \sqrt m\log^2n \mid\cE}&\leq2\exp\bc{-\frac{m\log^4n}{2k\Delta^2}}=o(n^{-3}).
\end{align}
Thus, \eqref{eqG33}, \eqref{eqneg2} and {\bf G1} show that with probability $1-o(n^{-2})$ for all $1\leq i\leq\ell$ we have
\begin{align}\label{eqG34}
	\abs{F_0[i]}&=\frac m\ell\exp(-d)+O(\sqrt m \log^{2}n),&\mbox{and thus }&&
	\abs{F_1[i]}&=\frac m\ell(1-\exp(-d))+O(\sqrt m \log^{2}n).
\end{align}
Since the actual test results are subjected to the $\vec p$-channel independently to obtain the displayed test results, the distributions of $|F_{0/1}^\pm[i]$ given $F_{0/1}[i]$ read
\begin{align}\label{eqG35}
	&\abs{F_0^-[i]}=\Bin(|F_0[i]|,p_{00}),
	&\abs{F_0^+[i]}=\Bin(|F_0[i]|,p_{01}),\\
	&\abs{F_1^-[i]}=\Bin(|F_1[i]|,p_{10}),
	& \abs{F_1^+[i]}=\Bin(|F_1[i]|,p_{11}).
\end{align}
Thus, {\bf G2} follows from \eqref{eqG34}, \eqref{eqG35} and \Lem~\ref{lem_chernoff} (the Chernoff bound).

\subsection{Proof of \Prop~\ref{prop_plausible}}\label{sec_plausible}
	Any $x\in V_1[i]$ has a total of $\Delta/s$ neighbours in each of $F[i],\ldots,F[i+s-1]$.
	Moreover, all tests $a\in\partial x$ are actually positive.
	Since the displayed result is obtained via the $\vec p$-channel independently for every $a$, the number of displayed positive neighbours $|\partial x\cap F^+[i+j-1]|$ is a binomial variable with distribution $\Bin(\Delta/s,p_{11})$.
	Since $\Delta=\Theta(\log n)$ and $s=\Theta(\log\log n)$, the first assertion \eqref{eq_prop_plausible1} is immediate from \Lem~\ref{lem_chernoff}.

	Moving on to the second assertion, we condition on the event $\fE$ that the bounds \eqref{eqG3_1}--\eqref{eqG3_4} hold for all $i\in[\ell]$.
	Then \Prop~\ref{prop_basic} shows that $\pr\brk\fE=1-o(n^{-2})$.
	Given $\fE$ we know that
	\begin{align}\label{eq_prop_plausible_1}
		|F^+[i]|&=\frac{q_0^+m}\ell+O(\sqrt m\log^3n),&
		|F^-[i]|&=\frac{(1-q_0^+)m}\ell+O(\sqrt m\log^3n).
	\end{align}
	Now consider an individual $x\in V_0[i]$.
	Also consider any test $a\in F[i+j-1]$ for $j\in[s]$.
	Then the actual result $\aSIGMA_a$ of $a$ is independent of the event $\{x\in a\}$, because $x$ is uninfected.
	Since the displayed result $\dSIGMA_a$ depends solely on $\aSIGMA_a$, we conclude that $\dSIGMA_a$ is independent of $\{x\in a\}$ as well.
	Therefore, \eqref{eq_prop_plausible_1} shows that on the event $\fE$ the number of displayed positive tests that $x$ is a member of has conditional distribution 
	\begin{align}\label{eq_prop_plausible_2}
		|\partial x\cap F^+[i+j-1]|\disteq\Hyp\bc{\frac m\ell,\frac{q_0^+m}\ell+O(\sqrt m\log^3n),\frac\Delta s}.
	\end{align}
	Since the random variables $(|\partial x\cap F^+[i+j-1]|)_{1\leq j\leq s}$ are mutually independent, \eqref{eq_prop_plausible2} follows from \Lem~\ref{lem_hyperchernoff}.

\subsection{Proof of \Prop~\ref{prop_dist_psi}}\label{sec_prop_dist_psi}


We reduce the proof of \Prop~\ref{prop_dist_psi} to three lemmas.
The first two estimate the sums from \eqref{eqWtau} when evaluated at the actual ground truth $\SIGMA$.

\begin{lemma}\label{lem_1dev}
	Assume that \eqref{eqprop_dist_psi} is satisfied.
	\whp we have 
	\begin{align}\label{eqlem_1dev}
		\sum_{s<i\leq\ell}\sum_{x\in V_1^+[i]}\vecone\cbc{\vW_x^+(\SIGMA) < (1-\zeta/2) W^+\mbox{ or }\vW_x^-(\SIGMA) > (1+\zeta/2) W^-}\leq k\exp\bc{-\Omega\bcfr{\log n}{(\log\log n)^4}}.
	\end{align}
\end{lemma}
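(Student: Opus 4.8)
The plan is to reduce the estimate to a deviation bound for a single infected individual and then finish by Markov's inequality. For $x\in V[i]$ with $s<i\le\ell$ write $\cB_x=\{\vW_x^+(\SIGMA)<(1-\zeta/2)W^+\}\cup\{\vW_x^-(\SIGMA)>(1+\zeta/2)W^-\}$ for the ``bad'' event. Since $V_1^+[i]\subseteq V_1[i]$ and $\Pr\brk{\SIGMA_x=1}=k/n$, linearity of expectation gives
\begin{align*}
	\Erw\sum_{s<i\le\ell}\sum_{x\in V_1^+[i]}\vecone\{\cB_x\}&\le\sum_{s<i\le\ell}\sum_{x\in V[i]}\frac kn\Pr\brk{\cB_x\mid\SIGMA_x=1}\\
	&\le k\max_x\Pr\brk{\cB_x\mid\SIGMA_x=1}.
\end{align*}
Hence it suffices to show $\Pr\brk{\cB_x\mid\SIGMA_x=1}\le\exp(-\Omega(\log n/(\log\log n)^3))$ for every such $x$: Markov's inequality applied to the displayed expectation then yields \eqref{eqlem_1dev}, the slack between $(\log\log n)^3$ and $(\log\log n)^4$ absorbing the loss from Markov.

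Fix such an $x$ and condition on $\SIGMA_x=1$, so that every test $a\in\partial x$ is actually positive. I would reveal the remaining randomness in two stages. \emph{First stage: informativeness.} Reveal $\Gsc$ and the rest of $\SIGMA$, working on the event $\fG$ of \Prop~\ref{prop_basic} (which fails with probability $o(n^{-2})$, negligible after multiplying by $k$ above). This fixes, for each band $j\in[s]$, the number $N_{x,j}$ of tests $a\in\partial x\cap F[i+j-1]$ that are \emph{informative} towards $V[i]$, i.e.\ that contain no infected individual from the compartments $V[i+j-s],\dots,V[i-1]$; in particular $N_{x,s}=\Delta/s$ because band $s$ has no left compartment. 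Using {\bf G1} to pin the number of infected individuals in these left compartments to $(1+o(1))(s-j)k/\ell$, a conditional-expectation calculation gives $\Erw\brk{N_{x,j}\mid\fG}=(1+o(1))\bar N_j$ with $\bar N_j=(\Delta/s)\exp(-(s-j)d/s)$, and by the tower rule this identifies $W^\pm$ with $p_{11}\sum_{j=1}^s w_j^+\bar N_j$ and $p_{10}\sum_{j=1}^s w_j^-\bar N_j$ respectively (with $W^-=0$ when $p_{10}=0$). To see that $N_{x,j}$ concentrates around $\bar N_j$ I would run a martingale exposure over the $\Delta/s$ tests of $\partial x\cap F[i+j-1]$, revealing their membership one at a time: the newly revealed test contributes at most $1$ to the running informative-count, and revealing it shifts the conditional probability that each remaining test is informative by $O(\ell/m)$ per infected left individual, so the total increment is at most $1+(\Delta/s)\cdot O((sk/\ell)(\ell/m))=1+O(k\Delta/m)=O(1)$, using $d=k\Delta/m=\Theta(1)$ from \eqref{eqcndn}. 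Azuma's inequality over these $\Delta/s$ steps then gives $N_{x,j}=(1\pm\zeta/8)\bar N_j$ for all $j$ simultaneously, except with probability $\exp(-\Omega(\zeta^2\Delta/s))$.

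\emph{Second stage: channel noise.} Condition in addition on $\Gsc$ and $\SIGMA$. As every $a\in\partial x$ is actually positive, the displayed results $(\dSIGMA_a)_{a\in\partial x}$ are i.i.d.\ $\Be(p_{11})$ and independent of everything revealed so far, so $|\vW_{x,j}^+(\SIGMA)|\sim\Bin(N_{x,j},p_{11})$ and $|\vW_{x,j}^-(\SIGMA)|\sim\Bin(N_{x,j},p_{10})$ independently over $j$. Consequently $\vW_x^+(\SIGMA)=\sum_{j=1}^s w_j^+|\vW_{x,j}^+(\SIGMA)|$ is a sum of $\sum_j N_{x,j}\ge N_{x,s}=\Delta/s$ independent random variables, each taking values in $\{0,w_j^+\}$, with mean $p_{11}\sum_j w_j^+N_{x,j}$; likewise for $\vW_x^-(\SIGMA)$. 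It is essential to expand $\vW_x^\pm$ over individual tests rather than to group by band: band $s$ alone supplies $\Delta/s=\Theta(\log n/\log\log n)$ summands, and only with this many summands does Hoeffding's inequality give a deviation probability that is stretched-exponentially small in $\log n$ rather than merely in $\log\log n$ (grouping by band would leave only $s=\Theta(\log\log n)$ ``fat'' binomial summands). Since $w_s^+=\Theta(1)$ is bounded away from $0$ and $\infty$ and every $w_j^\pm=O(\log\log\log n)$, Hoeffding's inequality bounds the probability that $\vW_x^+(\SIGMA)<(1-\zeta/4)p_{11}\sum_j w_j^+N_{x,j}$ or $\vW_x^-(\SIGMA)>(1+\zeta/4)p_{10}\sum_j w_j^-N_{x,j}$ by $\exp(-\Omega(\Delta/(s(\log\log\log n)^3)))$. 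Combining this with $N_{x,j}=(1\pm\zeta/8)\bar N_j$ and the identification of $W^\pm$ yields $\vW_x^+(\SIGMA)\ge(1-\zeta/2)W^+$ and $\vW_x^-(\SIGMA)\le(1+\zeta/2)W^-$ for $n$ large, i.e.\ $\Pr\brk{\cB_x\mid\SIGMA_x=1}\le\exp(-\Omega(\log n/(\log\log n)^3))$, as needed.

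The step I expect to be the main obstacle is the concentration of the informativeness counts $N_{x,j}$. These are not sums of independent indicators: the events ``test $a$ contains an infected left individual'' for different $a\in\partial x\cap F[i+j-1]$ share the pool of left-compartment infections, and in fact each such test carries a $\Theta(1)$ expected number of infected left individuals, so $N_{x,j}$ is an occupancy (``empty urns'') statistic, not a binomial. A naive Azuma bound over the $\approx(s-j)k/\ell$ infected left individuals is useless because that number grows polynomially in $n$ and swamps the polylogarithmic deviation one can afford; the remedy is the test-indexed martingale above, whose increments are $O(1)$ precisely because $d=\Theta(1)$. The remaining pieces — the conditional-expectation computation for $\Erw\brk{N_{x,j}\mid\fG}$ and $W^\pm$, the Hoeffding step and the final Markov argument — are routine. (Note that hypothesis \eqref{eqprop_dist_psi}, in force throughout \Sec~\ref{sec_prop_dist_psi}, enters here only through $c,d=\Theta(1)$.)
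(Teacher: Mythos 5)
Your proof is correct and follows essentially the same route as the paper's: condition on the graph to pin down the per-band counts of informative tests adjacent to an infected $x$, then use the conditional independence of the channel noise to obtain a Chernoff/Hoeffding bound on the weighted sums, and finish with Markov's inequality. The only differences are in the choice of concentration machinery — the paper first concentrates the global count $|\cU_{i,j}|$ of informative tests via Azuma--Hoeffding (Claim~\ref{Claim_Qxj}) and then treats $|\partial x\cap\cU_{i,j}|$ as hypergeometric (Claim~\ref{Claim_unexp}), where you run a Doob martingale directly on $N_{x,j}$; and the paper applies a per-band binomial Chernoff bound plus a union bound over $j\in[s]$ where you apply Hoeffding to the full test-indexed sum (so your parenthetical worry that grouping by band is too lossy is unfounded: each band's binomial has $\Theta(\Delta/s)$ trials, which already yields a deviation probability of $\exp(-\Omega(\zeta^2\Delta/s))$ per band).
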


\begin{lemma}\label{lem_0dev}
	Assume that \eqref{eqprop_dist_psi} is satisfied.
	\whp we have 
	\begin{align}\label{eqlem_0+dev}
		\sum_{s\leq i<\ell}\sum_{x\in\zeroplus[i]}\vecone\cbc{\vW_x^+(\SIGMA) \geq (1-2\zeta) W^+\mbox{ and }\vW_x^-(\SIGMA) \leq (1+2\zeta) W^-}&\leq k^{1-\Omega(1)}.
	\end{align}
\end{lemma}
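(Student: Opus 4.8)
The plan is to bound the expectation of the left-hand side of~\eqref{eqlem_0+dev} and to conclude via Markov's inequality. Since the compartments $V[s+1],\dots,V[\ell]$ contain at most $n$ healthy individuals, it suffices to show that a fixed healthy $x\in V_0[i]$, $s\leq i<\ell$, satisfies $\vW_x^+(\SIGMA)\geq(1-2\zeta)W^+$ and $\vW_x^-(\SIGMA)\leq(1+2\zeta)W^-$ with probability at most $n^{\theta-1-\Omega(1)}$; summing over $x$ then bounds the expectation by $n^{\theta-\Omega(1)}=k^{1-\Omega(1)}$, as required. (Dropping the side condition $x\in\zeroplus[i]$ only enlarges the event, so it may be ignored.)

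First I would fix a convenient conditioning. As $x$ is healthy, its incidences affect neither $\aSIGMA$ nor $\dSIGMA$, so I would condition on $\SIGMA$, on $\dSIGMA$ and on the incidences of all individuals other than $x$, restricted to the event $\fE$ that the bounds of \Prop~\ref{prop_basic} hold together with the routine ({\bf G2}-type) refinement that, for every $j\in[s]$, the number of tests $a\in F[i+j-1]$ that are informative towards $V[i]$ with $\dSIGMA_a=1$ equals $(1+o(1))\rho_jq_j^+\,m/\ell$, and likewise with $\dSIGMA_a=0$ and $q_j^+$ replaced by $q_j^-$. Here $\rho_j=\exp(-d(s-j)/s)$, $q_j^+=\exp(-dj/s)p_{01}+(1-\exp(-dj/s))p_{11}$ and $q_j^-=\exp(-dj/s)p_{00}+(1-\exp(-dj/s))p_{10}$, so that $q_j^++q_j^-=1$. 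On $\fE$ the only remaining randomness is the choice of the $\Delta/s$ tests that $x$ joins in each $F[i+j-1]$, made independently over $j$. Consequently the pairs $(|\vW_{x,j}^+(\SIGMA)|,|\vW_{x,j}^-(\SIGMA)|)$, $j\in[s]$, form an independent family, each the multivariate hypergeometric count of how many of $\Delta/s$ tests drawn without replacement from $F[i+j-1]$ fall in the two classes above; in particular its mean equals $(1+o(1))\tfrac\Delta s\rho_j(q_j^+,q_j^-)$.

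The heart of the argument is a large-deviations estimate in which the weights~\eqref{eqweights} play the decisive role. Write $\mu_j=(\rho_jq_j^+,\rho_jq_j^-,1-\rho_j)$ for the ``healthy'' per-test law on the three outcomes (informative-and-positive, informative-and-negative, uninformative). By \Lem~\ref{lem_hyperchernoff} and independence across $j$, and since only $\exp(o(\Delta))$ empirical frequency vectors arise, the conditional probability that $\vW_x^+(\SIGMA)\geq(1-2\zeta)W^+$ and $\vW_x^-(\SIGMA)\leq(1+2\zeta)W^-$ is at most $\exp(-\tfrac\Delta s\,\mathfrak R+o(\Delta))$, where $\mathfrak R$ is the infimum of $\sum_{j\in[s]}\KL{(\alpha_j,\beta_j,1-\alpha_j-\beta_j)}{\mu_j}$ over all $(\alpha_j,\beta_j)_j$ subject to $\sum_jw_j^+\alpha_j\geq\tfrac s\Delta(1-2\zeta)W^+$ and $\sum_jw_j^-\beta_j\leq\tfrac s\Delta(1+2\zeta)W^-$. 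Now the elementary inequality $\KL\nu{\mu_j}\geq\sum_o\nu(o)g(o)-\log\sum_o\mu_j(o)\exp(g(o))$, applied with $g$ equal to $w_j^+$ on the informative-positive outcome, $-w_j^-$ on the informative-negative one and $0$ otherwise, yields $\KL{(\alpha_j,\beta_j,1-\alpha_j-\beta_j)}{\mu_j}\geq w_j^+\alpha_j-w_j^-\beta_j$, because~\eqref{eqweights} is precisely the statement that $\exp(w_j^+)=p_{11}/q_j^+$ and $\exp(-w_j^-)=p_{10}/q_j^-$, so that $\sum_o\mu_j(o)\exp(g(o))=\rho_j(p_{11}+p_{10})+1-\rho_j=1$. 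Summing over $j$ and inserting the two constraints gives $\mathfrak R\geq\tfrac s\Delta\bc{(1-2\zeta)W^+-(1+2\zeta)W^-}$. Since the discussion preceding~\eqref{eqWest} identifies $W^\pm$ as the conditional expectations of $\vW_x^\pm(\SIGMA)$ given $\SIGMA_x=1$, we have $W^+-W^-=\tfrac\Delta s\sum_{j\in[s]}\rho_j(p_{11}w_j^+-p_{10}w_j^-)=\tfrac\Delta s\sum_{j\in[s]}\rho_j\KL{p_{11}}{q_j^+}$, and $\tfrac1s\sum_{j\in[s]}\rho_j\KL{p_{11}}{q_j^+}$ is a Riemann sum converging to $\int_0^1\exp(-d(1-u))\KL{p_{11}}{\exp(-du)p_{01}+(1-\exp(-du))p_{11}}\,\dd u$, which by~\cite[\Lem~F.1]{Maurice} (or a short integration) equals $1/(d\,\ccentre(d))$. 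Hence $\tfrac\Delta s\mathfrak R\geq(1-o(1))\Delta/(d\,\ccentre(d))$, where I have absorbed the error terms $2\zeta(W^++W^-)=o(\Delta)$ and $o(\Delta)$ into the $(1-o(1))$ factor.

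Assembling the estimates and recalling from~\eqref{eqcndn} that $\Delta=(1+o(1))cd(1-\theta)\log n$, the conditional probability of the event in question is at most $\exp\bc{-(1-o(1))\Delta/(d\,\ccentre(d))}=n^{-c(1-\theta)/\ccentre(d)+o(1)}$; adding the negligible contribution of $\fE^c$ and multiplying by $n$ bounds the expectation of the left side of~\eqref{eqlem_0+dev} by $n^{1-c(1-\theta)/\ccentre(d)+o(1)}$. The hypothesis~\eqref{eqprop_dist_psi} gives $c/\ccentre(d)>1+\eps/\ccentre(d)$, whence $1-c(1-\theta)/\ccentre(d)<\theta-\eps(1-\theta)/\ccentre(d)$, so the expectation is at most $n^{\theta-\Omega(1)}=k^{1-\Omega(1)}$ and Markov's inequality completes the proof. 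The main obstacle in making this rigorous is the first step: the conditioning and the {\bf G2}-type concentration it invokes, which are what turn the correlated per-test counts into the clean independent hypergeometric family needed for the large deviations. The conceptual pivot is the identity $\sum_o\mu_j(o)\exp(g(o))=1$, i.e.\ the fact that the BP-derived weights~\eqref{eqweights} tilt the healthy per-test law onto the infected one; this makes the ``infected'' type the cheapest deviation into the bad region and pins the exponential rate to $1/\ccentre(d)$, which is exactly what $c>\ccentre(d)+\eps$ controls.
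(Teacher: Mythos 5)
Your proposal is correct, and it reaches the right rate by a genuinely different route from the paper. The paper keeps the restriction $x\in V_0^+[i]$ and splits the exponent in two: the prior cost $\KL{p_{11}}{q_0^+}$ of belonging to $V^+[i]$ (harvested from the cardinality bound in \Prop~\ref{prop_plausible}) plus a \emph{conditional} large-deviations cost, which it obtains by setting up the constrained minimisations $\cM^\pm$ (Claims~\ref{lem_happen1}--\ref{lem_happen0}), solving them with Lagrange multipliers (Claims~\ref{lem_lagrange}--\ref{lem_lagrange-}) and integrating the optimal value (Claim~\ref{lem_int}, \Lem~\ref{lem_calc}) to land on $I(\vX,\vY)/d-\KL{p_{11}}{q_0^+}$. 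You instead discard the $V^+[i]$ restriction (legitimate, since it only enlarges the event) and run a single exponential-tilting bound in which the tilt is the weight vector itself; the identity $\sum_o\mu_j(o)\exp(g(o))=\rho_j(p_{11}+p_{10})+1-\rho_j=1$ — i.e.\ the fact that \eqref{eqweights} are exact log-likelihood ratios mapping the null per-test law onto the planted one — makes the Chernoff normaliser trivial and eliminates the Lagrangian optimisation altogether. This is cleaner and makes the BP-optimality of the weights visible, whereas the paper's route also produces the explicit minimiser $z_j=\exp(-d(s-j)/s)$. Two caveats: (i) the closing identity $\int_0^1\exp(-d(1-u))\KL{p_{11}}{\exp(-du)p_{01}+(1-\exp(-du))p_{11}}\,\dd u=1/(d\,\ccentre(d))$ is correct (I checked it against the noiseless and symmetric cases, and it is equivalent to what \Lem~\ref{lem_calc} computes), but it is \emph{not} literally \cite[\Lem~F.1]{Maurice}, which concerns the capacity at $d=\dchan$; you still owe essentially the same integration as the paper's Claim~\ref{lem_int}, so the computational work is relocated rather than removed. (ii) The concentration of the per-compartment counts of informative positive/negative tests that you invoke as a ``{\bf G2}-type refinement'' is exactly the content of Claim~\ref{lem_U}, and the multivariate hypergeometric LDP/MGF bound needs Hoeffding's domination by the binomial (as in \Lem~\ref{lem_hyperchernoff}); both are routine and present in the paper, but should be stated.
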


\noindent
We defer the proofs of \Lem s~\ref{lem_1dev} and~\ref{lem_0dev} to \Sec s~\ref{Sec_lem_1dev} and \ref{sec_lem_0dev}.
While the proof of \Lem~\ref{lem_1dev} is fairly routine, the proof of \Lem~\ref{lem_0dev} is the linchpin of the entire analysis of \SPARC, as it effectively vindicates the BP heuristics that we have been invoking so liberally in designing the algorithm.

Additionally, to compare $\vW_x^\pm(\SIGMA)$ with the algorithm's estimate $\vW_x^\pm(\tau)$ we resort to the following expansion property of $\Gsp$.

\begin{lemma} \label{lemma_endgame_misclassified}
	Let $0<\alpha,\beta<1$ be such that $\alpha+\beta>1$.
	Then \whp for any $T \subset V$ of size $|T|\leq\exp(-\log^\alpha n)k$,
	\begin{align*}
		\abs{\cbc{x\in V:\sum_{a\in\partial x\setminus F[0]}\vecone\cbc{T\cap\partial a\setminus \cbc x\neq\emptyset}\geq\ln^{\beta}n}}\leq\frac{|T|}{8\log\log n}.
	\end{align*}
\end{lemma}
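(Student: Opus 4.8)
The plan is to run a first-moment (union bound) argument over all candidate sets $T$ of the relevant size. Fix $t=|T|\leq\exp(-\log^\alpha n)k$; there are at most $\binom nt\leq(en/t)^t$ ways to choose $T$, so $\log\binom nt\leq t\log(en/t)\leq t\log n\,(1+o(1))$. Call an individual $x$ \emph{bad for $T$} if $\sum_{a\in\partial x\setminus F[0]}\vecone\{T\cap\partial a\setminus\{x\}\neq\emptyset\}\geq\log^\beta n$. The key estimate is a bound on $\pr[x\text{ bad for }T]$ for a fixed $x$ and $T$. For a fixed test $a\in\partial x\setminus F[0]$, by the construction of $\Gsc$ the probability that $a$ also contains some individual of $T\setminus\{x\}$ is, using \eqref{eqax} and a union bound over $y\in T$, at most $|T|\cdot\Delta\ell/(ms)\cdot(1+o(1))=O(t\Delta\ell/m)=O(t\log n/k)$ since $\Delta,\ell,m$ are as in \eqref{eqs}--\eqref{eqcndn}. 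Moreover, conditioning on the (at most $\Delta$) tests incident with $x$, these events for distinct $a\in\partial x$ are negatively associated (each $y\in T$ independently picks its $s\Delta/s=\Delta$ tests), so the number of such tests is stochastically dominated by a binomial variable $\Bin(\Delta, p)$ with $p=O(t\log n/k)$. Because $t\leq\exp(-\log^\alpha n)k$ we have $p=\exp(-\Omega(\log^\alpha n))$, hence $\Delta p=\exp(-\Omega(\log^\alpha n))\cdot\Theta(\log n)=o(\log^\beta n)$ is far below the threshold $\log^\beta n$.

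The Chernoff bound (\Lem~\ref{lem_chernoff}) then gives $\pr[x\text{ bad for }T]\leq\exp(-\Delta\,\KL{\log^\beta n/\Delta}{p})$. Since $\log^\beta n/\Delta\to0$ but dominates $p$, the divergence is at least $\tfrac{\log^\beta n}{\Delta}\log\!\big(\tfrac{\log^\beta n}{e\Delta p}\big)(1+o(1))$, so $\Delta\,\KL{\cdot}{\cdot}\geq\log^\beta n\cdot\Omega(\log^\alpha n)=\Omega(\log^{\alpha+\beta}n)$. The number of bad individuals $N_T$ has mean $\Erw N_T\leq n\exp(-\Omega(\log^{\alpha+\beta}n))$. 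To turn this into a high-probability statement for \emph{all} $T$ simultaneously, bound $\pr[N_T\geq t/(8\log\log n)]$ via Markov (or, for a cleaner exponent, via a second union bound over the at most $\binom{n}{\lceil t/(8\log\log n)\rceil}$ choices of a putative bad set together with independence of the badness events once $x$ is fixed): $\pr[N_T\geq t/(8\log\log n)]\leq\binom{n}{t/(8\log\log n)}\exp(-\Omega(\log^{\alpha+\beta}n)\cdot t/(8\log\log n))\leq\exp\big(\tfrac{t}{8\log\log n}(\log n-\Omega(\log^{\alpha+\beta}n))\big)$, which is $\exp(-\Omega(\log^{\alpha+\beta}n)\,t/(8\log\log n))$ because $\alpha+\beta>1$ makes $\log^{\alpha+\beta}n$ dominate $\log n$.

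Finally, take a union bound over all $T$: summing over $t$ from $1$ to $\exp(-\log^\alpha n)k$ and over the $\binom nt$ choices of $T$ of size $t$, the total failure probability is at most $\sum_t\binom nt\exp(-\Omega(\log^{\alpha+\beta}n)\,t/(8\log\log n))\leq\sum_t\exp\big(t\log n-\Omega(\log^{\alpha+\beta}n)\,t/(8\log\log n)\big)=o(1)$, again since $\log^{\alpha+\beta}n/\log\log n\gg\log n$ when $\alpha+\beta>1$. This establishes the claim with room to spare. The main obstacle is the stochastic-domination/negative-association step: one must argue carefully that conditioning on $\partial x$ the events ``$a$ meets $T\setminus\{x\}$'' for $a\in\partial x$ can be bounded by an i.i.d.\ (or negatively associated) model so that the Chernoff bound of \Lem~\ref{lem_chernoff} applies; this follows because, given the tests each individual of $T$ joins, the badness of $x$ is a monotone function of independent choices, but it should be spelled out. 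Everything else is routine concentration plus counting, driven by the gap $\alpha+\beta>1$.
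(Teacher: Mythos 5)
Your single-individual estimate is sound: conditioning on the test choices $(\partial y)_{y\in T}$, the number of tests of $\partial x$ meeting $\bigcup_{y\in T}\partial y$ is a sum of $s$ independent hypergeometrics with total mean $O(t\Delta^2\ell/(ms))=\exp(-\Omega(\log^\alpha n))$, and \Lem~\ref{lem_hyperchernoff} gives $\pr[x\text{ bad}]\leq\exp(-\Omega(\log^{\alpha+\beta}n))$. You also correctly observe that Markov alone cannot survive the union bound over the $\binom nt$ choices of $T$ (the per-individual failure probability does not improve as $t$ grows), so the second union bound over a putative bad set $R$ of size $r=\lceil t/(8\log\log n)\rceil$ is essential. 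The gap is precisely there: you invoke ``independence of the badness events once $x$ is fixed'', but the events $\{x\text{ bad}\}$ for distinct $x$ are all functions of the test choices of the members of $T$ and are positively correlated through configurations in which $T$ is spread over many tests. They are conditionally independent given $(\partial y)_{y\in T}$ only for $x\notin T$. The lemma, however, quantifies over all $x\in V$, and in the application $T$ is the set of currently misclassified individuals, so the bad set may well lie inside $T$ itself. If $R\subseteq T$, the badness of each $x\in R$ depends on the test choices of the other elements of $R$, and conditioning on some of them being bad biases the configuration towards shared tests; the product bound $\prod_{x\in R}\pr[x\text{ bad}]$ is then not justified as written. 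This is a genuine missing step, not merely a detail to be ``spelled out''.

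The paper avoids the issue with a pair-counting device: if every $x\in R$ lies in at least $\gamma=\lceil\log^\beta n\rceil$ tests meeting $T\setminus\{x\}$, then there is a set $U$ of tests each containing two distinct individuals of $R\cup T$, and for fixed $(R,T,U)$ the probability of this containment event is at most $\brk{\binom{r+t}2\bc{(1+o(1))\Delta\ell/(ms)}^2}^{|U|}$ by negative correlation; a union bound over $(R,T,U)$ then closes the argument, treating $R\cap T$ and $R\setminus T$ on an equal footing. Your conditional-Chernoff route does recover the paper's bound for the part of the bad set disjoint from $T$ (condition on $(\partial y)_{y\in T}$, note the conditional badness probability is uniformly $\exp(-\Omega(\log^{\alpha+\beta}n))$, and use conditional independence over $x\in R\setminus T$), but you need an additional argument of the pair-counting type for bad individuals inside $T$.
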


\noindent
In a nutshell, \Lem~\ref{lemma_endgame_misclassified} posits that for any `small' set $T$ of individuals there are even fewer individuals that share many tests with individuals from $T$.
\Lem~\ref{lemma_endgame_misclassified} is an generalisation of~\cite[\Lem~4.16]{opt}.
The proof, based on a routine union bound argument, is included in Appendix~\ref{sec_lemma_endgame_misclassified} for completeness.

\begin{proof}[Proof of \Prop~\ref{prop_dist_psi}]
	\Prop~\ref{prop_seed} shows that for all individuals $x$ in the seed $V[1],\ldots,V[s]$ we have $\tau_x=\SIGMA_x$ \whp.
	Let $\cM[i] = \cbc{x \in V^+[i]: \tau_x \neq \SIGMA_x}$ be the set of misclassified individuals in $V^+[i]$.
	We are going to show that \whp for all $1\leq i\leq \ell$ and for large enough $n$,
	\begin{align}\label{eq_lemma_endgame_misclassified_1}
		\abs{\cM[i]}\leq k\exp\bc{-\frac{\ln n}{(\ln\ln n)^{5}}}.
	\end{align}
	
	We proceed by induction on $i$.
	As mentioned above, \Prop~\ref{prop_seed} ensures that $\cM[1]\cup\cdots\cup\cM[s]=\emptyset$ \whp.
	Now assume that \eqref{eq_lemma_endgame_misclassified_1} is correct for all $i<h\leq\ell$; we are going to show that \eqref{eq_lemma_endgame_misclassified_1} holds for $i=h$ as well.
	To this end, recalling $\zeta$ from \eqref{eqzeta} and $W^\pm$ from \eqref{eqWest}, we define for $p_{10} > 0$
	\begin{align*}
		\cM_1[h]&=\cbc{x\in V_1^+[h]:\vW_x^+(\SIGMA) < (1- \zeta/2) W^+\mbox{ or }\vW_x^-(\SIGMA) > (1+\zeta/2) W^-},\\
		\cM_2[h]&=\cbc{x\in V_0^+[h]:\vW_x^+(\SIGMA) > (1- 2\zeta) W^+\mbox{ and }\vW_x^-(\SIGMA) < (1+2\zeta) W^-},\\
		\cM_3[h]&=\cbc{x\in V^+[h]:|\vW_x^+(\SIGMA)-\vW_x^+(\tau)|+|\vW_x^-(\SIGMA)-\vW_x^-(\tau)|>\zeta(W^+\wedge W^-)/8}
	\end{align*}
	and further for $p_{10} = 0$ 
		\begin{align*}	
			\cM_1[h]&=\cbc{x\in V_1^+[h]:\vW_x^+(\SIGMA) < (1- \zeta/2) W^+},\\
			\cM_2[h]&=\cbc{x\in V_0^+[h]:\vW_x^+(\SIGMA) > (1- 2\zeta) W^+\mbox{ and }\vW_x^-(\SIGMA) = 0 },\\
			\cM_3[h]&=\cbc{x\in V^+[h]:|\vW_x^+(\SIGMA)-\vW_x^+(\tau)|>\zeta(W^+)/8}.
		\end{align*}
	
	We claim that $\cM[h]\subseteq\cM_1[h]\cup\cM_2[h]\cup\cM_3[h]$.
	To see this, assume that $x\in\cM[h]\setminus(\cM_1[h]\cup\cM_2[h])$.
	Then for \SPARC\ to misclassify $x$, it must be the case that 
	\begin{align*}
		|\vW_x^+(\tau)-\vW_x^+(\SIGMA)|+|\vW_x^-(\tau)-\vW_x^-(\SIGMA)|>\frac\zeta8(W^+\wedge W^-),
	\end{align*}
	and thus $x\in\cM_3[h]$.

	Thus, to complete the proof we need to estimate $|\cM_1[h]|,|\cM_2[h]|,|\cM_3[h]|$.
	\Lem s~\ref{lem_1dev} and~\ref{lem_0dev} readily show that 
	\begin{align}\label{eq_prop_dist_psi_1}
		|\cM_1[h]|+|\cM_2[h]|\leq k\exp\bc{-\Omega\bcfr{\log n}{(\log\log n)^4}}.
	\end{align}
	Furthermore, in order to bound $|\cM_3[h]|$ we will employ \Lem~\ref{lemma_endgame_misclassified}.
	Specifically, consider $x\in\cM_3[h]$.
	Since $W^+\wedge W^-=\Omega(\Delta)=\Omega(\log n)$ for $p_{10}>0$ by \eqref{eqweights} and \eqref{eqWest}, there exists $j\in[s]$ such that
	\begin{align*}
		|\,|\vW_{x,j}^+(\tau)|-|\vW_{x,j}^+(\SIGMA)|\,|&>\log^{1/2}n&\mbox{or}&&|\,|\vW_{x,j}^-(\tau)|-|\vW_{x,j}^-(\SIGMA)|\,|>\log^{1/2}n.
	\end{align*}
	Since $W^+=\Omega(\log n)$  for $p_{10}=0$ there exists $j\in[s]$ such that
	\begin{align*}
		|\,|\vW_{x,j}^+(\tau)|-|\vW_{x,j}^+(\SIGMA)|\,|&>\log^{1/2}n
	\end{align*}
	
	Assume without loss that the left inequality holds.
	Then there are at least $\log^{1/2}n$ tests $a\in\partial x\cap F^+[h+j-1]$ such that
	$\partial a\cap(\cM[1]\cup\cdots\cup\cM[h-1])\neq\emptyset$.
	Therefore, \Lem~\ref{lemma_endgame_misclassified} shows that
	\begin{align*}
		|\cM_3[h]|\leq\frac{|\cM[h-s]\cup\cdots\cup\cM[h-1]|}{8\log\log n}.
	\end{align*}
	Hence, using induction to bound $|\cM[h-s]|,\ldots,|\cM[h-1]|$ and recalling from \eqref{eqs} that $s\leq1+\log\log n$, we obtain \eqref{eq_lemma_endgame_misclassified_1} for $i=h$. 
\end{proof}


\subsection{Proof of \Lem~\ref{lem_1dev}}\label{Sec_lem_1dev}

The thrust of \Lem~\ref{lem_1dev} is to verify that the definition \eqref{eqVi+} of the set $V^+[i]$ faithfully reflects the typical statistics of positively/negatively displayed tests in the neighbourhood of an infected individual $x\in V_1[i]$ with $s<i\leq\ell$.
Recall from the definition of $\Gsp$ that such an individual $x$ joins tests in $F[i+j-1]$ for $j\in[s]$.
Moreover, apart from $x$ itself a test $a\in F[i+j-1]\cap\partial x$ recruits from $V[i+j-s],\ldots,V[i+j-1]$.
In particular, $a$ recruits participants from the $s-j$ compartments $V[i+j-s],\ldots,V[i-1]$ preceding $V[i]$.
Let
\begin{align}\label{eqUij}
	\cU_{i,j}={\cbc{a\in F[i+j-1]:(V_1[1]\cup\cdots\cup V_1[i-1])\cap\partial a=\emptyset}}={\cbc{a\in F[i+j-1]:\bigcup_{h=i+j-s}^{i-1}V_1[h]\cap\partial a=\emptyset}}
\end{align}
be the set of tests in $F[i+j-1]$ that do not contain an infected individual from $V[i+j-s],\ldots,V[i-1]$.

\begin{claim}\label{Claim_Qxj}
	With probability $1-o(n^{-2})$ for all $s\leq i<\ell$, $j\in[s]$ we have
	\begin{align*}
		|\cU_{i,j}|&=\bc{1+O(n^{-\Omega(1)})}\frac m{\ell}\cdot\exp(d(j-s)/s).
	\end{align*}
\end{claim}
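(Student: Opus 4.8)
The plan is to condition on the high-probability event $\fE$ that property {\bf G1} of \Prop~\ref{prop_basic} holds; note that $\fE$ is determined by $\SIGMA$ alone and is therefore independent of the randomness in the incidences between individuals and tests. On $\fE$ I would first compute $\ex[|\cU_{i,j}|\mid\fE]$ via a first-moment calculation, then establish concentration by a bounded-differences argument, and finally take a union bound over the $O(\ell s)=\mathrm{polylog}(n)$ relevant pairs $(i,j)$.

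For the first moment, fix $s\le i<\ell$ and $j\in[s]$ and recall that by the construction of $\Gsc$ a test $a\in F[i+j-1]$ recruits its participants only from the compartments $V[i+j-s],\dots,V[i+j-1]$; hence the event $a\in\cU_{i,j}$ depends solely on the infected individuals in the $s-j$ compartments $V_1[i+j-s],\dots,V_1[i-1]$. Since every individual picks the tests it joins independently, and each infected individual $y$ in one of these compartments lands in $a$ with probability $\Delta\ell/(ms)+O((\Delta\ell/(ms))^2)$ by \eqref{eqax}, on $\fE$ I get
\[
	\pr\brk{a\in\cU_{i,j}\mid\fE}=\prod_{h=i+j-s}^{i-1}\bc{1-\frac{\Delta\ell}{ms}+O\bc{\bcfr{\Delta\ell}{ms}^2}}^{|V_1[h]|}.
\]
Substituting $|V_1[h]|=k/\ell+O(\sqrt{k/\ell}\log n)$ from {\bf G1} and $\Delta k/m=d$ from \eqref{eqcndn}, each of the $s-j$ factors contributes $-d/s+O(n^{-\Omega(1)})$ to the exponent, so $\pr\brk{a\in\cU_{i,j}\mid\fE}=(1+O(n^{-\Omega(1)}))\exp(-(s-j)d/s)=(1+O(n^{-\Omega(1)}))\exp(d(j-s)/s)$, where $\exp(d(j-s)/s)=\Theta(1)$ since $(j-s)/s\in[-1,0]$. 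Multiplying by $|F[i+j-1]|=m/\ell$ gives $\ex[|\cU_{i,j}|\mid\fE]=(1+O(n^{-\Omega(1)}))(m/\ell)\exp(d(j-s)/s)$.

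For concentration, I would note that $|\cU_{i,j}|$ is a function of the neighbourhoods $(\partial y)_{y\in V_1}$ of the infected individuals only---re-routing a healthy individual changes no actual test result and so cannot alter $\cU_{i,j}$---and that re-routing a single infected $y$ changes $|\cU_{i,j}|$ by $O(\Delta/s)=\mathrm{polylog}(n)$. Exposing these $k$ neighbourhoods one at a time and applying Azuma--Hoeffding, just as in \eqref{eqneg2}, yields for a fixed small $\delta\in(0,\theta/2)$
\[
	\pr\brk{\abs{\,|\cU_{i,j}|-\ex[|\cU_{i,j}|\mid\fE]\,}\ge(m/\ell)\,n^{-\delta}\mid\fE}\le2\exp\bc{-\Omega\bc{\frac{(m/\ell)^2n^{-2\delta}}{k\,\mathrm{polylog}(n)}}}=\exp(-n^{\Omega(1)}),
\]
because $(m/\ell)^2/k=\Theta(k\log n)$ is polynomial in $n$. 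Since $(m/\ell)\exp(d(j-s)/s)=\Theta(m/\ell)$, the deviation $(m/\ell)n^{-\delta}$ is of the required form $n^{-\Omega(1)}\cdot(m/\ell)\exp(d(j-s)/s)$. A union bound over the $O(\ell s)$ pairs $(i,j)$, combined with $\pr[\fE]=1-o(n^{-2})$, then shows that with probability $1-o(n^{-2})$ property {\bf G1} holds and simultaneously $|\cU_{i,j}|=(1+O(n^{-\Omega(1)}))(m/\ell)\exp(d(j-s)/s)$ for all $s\le i<\ell$ and $j\in[s]$, which is the claim.

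The only real work, and the closest thing to an obstacle, is the bookkeeping: one must check that the error terms coming from \eqref{eqax}, from {\bf G1}, and from the martingale step all remain of order $n^{-\Omega(1)}$ after being accumulated over the (polylogarithmically many) compartments, and that the polylogarithmic factors $s$, $\ell$ and $\Delta/s$ are dominated by the polynomial $k=n^{\theta}$ sitting in the exponent of the Azuma bound. Taking, say, $\delta=\theta/4$ makes the concentration bound comfortably smaller than $n^{-2}$. Conceptually the estimate is routine and of the same flavour as the proof of \Prop~\ref{prop_basic}.
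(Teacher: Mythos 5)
Your proposal is correct and follows essentially the same route as the paper: condition on the event that \textbf{G1} holds, compute $\ex[|\cU_{i,j}|\mid\fE]$ from the independence of the individuals' test choices and \eqref{eqax} (the paper phrases this via the binomial law \eqref{eqG31} for $|V_1[h]\cap\partial a|$, which is the same calculation), and then apply Azuma--Hoeffding to the Doob martingale exposing the neighbourhoods of the infected individuals, noting that healthy individuals cannot affect $\cU_{i,j}$. Your Lipschitz constant $O(\Delta/s)$ and deviation window $(m/\ell)n^{-\delta}$ differ cosmetically from the paper's $\Delta$ and $\sqrt m\log^2 n$, but both yield the stated $1+O(n^{-\Omega(1)})$ relative error with probability $1-o(n^{-2})$ after the union bound.
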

\begin{proof}
	We condition on the event $\fE$ that {\bf G1} from \Prop~\ref{prop_basic} holds.
	Then for any $a\in F[i+j-1]$ and any $i+j-s\leq h<i$ 
	the number $V_1[h]\cap\partial a$ of infected individuals in $a$ from $V[h]$ is a binomial variable as in \eqref{eqG31}.
	Since the random variables $(V_1[h]\cap\partial a)_h$ are mutually independent, we therefore obtain
		\begin{align}\label{eqClaim_Amin1_1a}
			\pr\brk{(V_1[i+j-s]\cup\cdots V_1[i-1])\cap\partial a=\emptyset\mid\fE}=\exp(d(j-s)/s)+O(n^{-\Omega(1)}).
		\end{align}
	Hence,
	\begin{align}\label{eqClaim_Amin1_2}
		\Erw\brk{|\cU_{i,j}|\mid\fE}&=\frac m\ell\exp\brk{d(j-s)/s+O(n^{-\Omega(1)})}.
	\end{align}
	Further, changing the set $\partial x$ of tests that a single $x\in V_1$ joins can alter $|\cU_{i,j}|$ by $\Delta$ at the most, while changing the set of neighbours of any $x\in V_0$ does not change $|\cU_{i,j}|$ at all.
	Therefore, Azuma--Hoeffding shows that
	\begin{align}\label{eqClaim_Amin1_3}
		\pr\brk{\abs{|\cU_{i,j}|-\Erw[|\cU_{i,j}|\mid\fE]}>\sqrt m\log^2n\mid\fE}&\leq2\exp\bc{-\frac{m\log^4n}{2k\Delta^2}}=o(n^{-2}).
	\end{align}
	The assertion follows from \eqref{eqClaim_Amin1_2}--\eqref{eqClaim_Amin1_3}.
\end{proof}

Let 
\begin{align}\label{eqq1}
	q_{1,j}^-&=p_{01}\exp(d(j-s)/s),&q_{1,j}^+&=p_{11}\exp(d(j-s)/s).
\end{align}

\begin{claim} \label{Claim_unexp}
	For all $s<i\leq\ell$, $x \in V_1[i]$ and $j \in [s]$ we have
	\begin{align*}
		\pr\brk{|\wxj xj^+|<(1-\zeta/2)q_{1,j}^+\frac\Delta s}+\pr\brk{|\wxj xj^-|>(1+\zeta/2)q_{1,j}^-\frac\Delta s}&\leq\exp\bc{-\Omega\bcfr{\log n}{(\log\log n)^4}}.
	\end{align*}
\end{claim}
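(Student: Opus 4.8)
The plan is to prove the bound for a fixed triple $x\in V_1[i]$, $s<i\le\ell$, $j\in[s]$ by a two-stage concentration argument, the two summands being symmetric: I will bound $\pr[\,|\vW_{x,j}^+(\SIGMA)|<(1-\zeta/2)q_{1,j}^+\Delta/s\,]$ in detail and then note that $\pr[\,|\vW_{x,j}^-(\SIGMA)|>(1+\zeta/2)q_{1,j}^-\Delta/s\,]$ is handled verbatim, with $p_{10}$ in place of $p_{11}$ and the upper Chernoff tail in place of the lower one. The structural point to isolate first is that, since $x\in V[i]$ and the $s-j$ compartments $V[i+j-s],\dots,V[i-1]$ that control informativeness towards $V[i]$ do not include $V[i]$, the set $\cU_{i,j}$ from \eqref{eqUij} — which coincides with the set of tests in $F[i+j-1]$ that are \emph{informative towards $V[i]$} under $\SIGMA$ — is determined by the neighbourhoods $(\partial y)_{y\ne x}$ alone, hence is independent of the $\Delta/s$ tests that $x$ joins in $F[i+j-1]$. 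Moreover $\vW_{x,j}^+(\SIGMA)=\partial x\cap\cU_{i,j}\cap F^+$.

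With this in hand I would proceed as follows. First condition on the neighbourhoods $(\partial y)_{y\ne x}$ together with the high-probability event of Claim~\ref{Claim_Qxj} (failure probability $o(n^{-2})$), on which $|\cU_{i,j}|=(1+O(n^{-\Omega(1)}))(m/\ell)\exp(d(j-s)/s)$. Under this conditioning $\partial x\cap F[i+j-1]$ is a uniformly random $(\Delta/s)$-subset of $F[i+j-1]$, so $|\partial x\cap\cU_{i,j}|\disteq\Hyp(m/\ell,|\cU_{i,j}|,\Delta/s)$ has mean $(1+o(1))\exp(d(j-s)/s)\Delta/s$, and the hypergeometric Chernoff bound \Lem~\ref{lem_hyperchernoff} shows that $|\partial x\cap\cU_{i,j}|\ge(1-\zeta/4)\exp(d(j-s)/s)\Delta/s$ except with probability $\exp(-\Omega(\zeta^2\Delta/s))$. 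Next condition additionally on the value of $|\partial x\cap\cU_{i,j}|$: because $x$ is infected, every test $a\in\partial x$ is actually positive, so by \eqref{eqnoisemodel} the displayed results of the tests in $\partial x\cap\cU_{i,j}$ are i.i.d.\ $\Be(p_{11})$, whence $|\vW_{x,j}^+(\SIGMA)|\disteq\Bin(|\partial x\cap\cU_{i,j}|,p_{11})$; a second application of \Lem~\ref{lem_chernoff} gives $|\vW_{x,j}^+(\SIGMA)|\ge(1-\zeta/4)p_{11}|\partial x\cap\cU_{i,j}|$ except with probability $\exp(-\Omega(\zeta^2\Delta/s))$. Chaining the two deviations and recalling $q_{1,j}^+=p_{11}\exp(d(j-s)/s)$ yields $|\vW_{x,j}^+(\SIGMA)|\ge(1-\zeta/2)q_{1,j}^+\Delta/s$ for $n$ large, outside an event of probability $o(n^{-2})+\exp(-\Omega(\zeta^2\Delta/s))$.

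Finally I would insert the parameter sizes $\Delta=\Theta(\log n)$, $s=\lceil\log\log n\rceil$ and $\zeta=(\log\log\log n)^{-1}$ to get $\zeta^2\Delta/s=\Omega(\log n/(\log\log n)^4)$; since $\exp(-\Omega(\log n/(\log\log n)^4))=n^{-o(1)}$ this term also swallows the $o(n^{-2})$ coming from Claim~\ref{Claim_Qxj}, which gives the claimed bound. The negative-test summand is obtained in exactly the same way with conditional mean $p_{10}|\partial x\cap\cU_{i,j}|$ and the upper Chernoff tails (and is trivially zero when $p_{10}=0$). I do not expect a genuine obstacle here; the only things requiring a little care are the order in which randomness is revealed (informativeness first, then $x$'s hypergeometric degree, then the channel noise) so that each Chernoff step sees fresh independent randomness, and the bookkeeping needed to confirm that both the $o(n^{-2})$ and the $O(n^{-\Omega(1)})$ correction in $|\cU_{i,j}|$ are negligible against the target $\exp(-\Omega(\log n/(\log\log n)^4))$.
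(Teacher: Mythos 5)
Your proposal is correct and follows essentially the same route as the paper: condition on the high-probability event of Claim~\ref{Claim_Qxj} controlling $|\cU_{i,j}|$, observe that $|\partial x\cap\cU_{i,j}|$ is hypergeometric and concentrate it via \Lem~\ref{lem_hyperchernoff}, then note that given this degree the displayed results are i.i.d.\ so that $|\vW_{x,j}^\pm(\SIGMA)|$ is binomial and apply \Lem~\ref{lem_chernoff}, finally checking that $\Delta=\Theta(\log n)$, $s=O(\log\log n)$ and the choice of $\zeta$ give the stated exponent. Your explicit remark that $\cU_{i,j}$ is determined by the neighbourhoods of individuals other than $x$ (so that the hypergeometric step sees fresh randomness) is a point the paper leaves implicit, and is a welcome clarification rather than a deviation.
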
			
\begin{proof}
	Fix $s\leq i<\ell,1\leq j\leq s$ and $x\in V_1[i]$.
	In light of \Prop~\ref{prop_basic} and Claim~\ref{Claim_Qxj} the event $\fE$ that {\bf G1} from \Prop~\ref{prop_basic} holds and that $|\cU_{i,j}|=\exp(d(j-s)/s)\frac m\ell(1+O(n^{-\Omega(1)}))$ has probability
	\begin{align}\label{eqClaim_unexp_1}
		\pr\brk\fE=1-o(n^{-2}).
	\end{align}

	Let $\vU_{i,j}=|\partial x\cap\cU_{i,j}|$.
	Given $\cU_{i,j}$ we have
	\begin{align*}
		\vU_{i,j}\disteq\Hyp\bc{\frac m\ell+O(1),\exp\bcfr{d(j-s)}s\frac m\ell(1+O(n^{-\Omega(1)})),\frac\Delta s}.
	\end{align*}
	Therefore, \Lem~\ref{lem_hyperchernoff} shows that the event
	$$\fE'=\cbc{\abs{\vU_{i,j}-\frac\Delta s\exp\bcfr{d(j-s)}s}\geq\frac{\log n}{(\log\log n)^2}}$$
has conditional probability
	\begin{align}\label{eqClaim_unexp_2}
		\pr\brk{\fE'\mid\fE}\leq\exp\bc{-\Omega\bcfr{\log n}{(\log\log n)^4}}.
	\end{align}

	Finally, given $\vU_{i,j}$ the number $|\vW_{x,j}^+|$ of positively displayed $a\in\partial x\cap \cU_{i,j}$ has distribution $\Bin(\vU_{i,j},p_{11})$; similarly, $|\vW_{x,j}^-|$ has distribution $\Bin(\vU_{i,j},p_{01})$.
	Thus, since $\Delta=\Theta(\log n)$ while $s=O(\log\log n)$ by \eqref{eqs}, the assertion follows from \eqref{eqClaim_unexp_1}, \eqref{eqClaim_unexp_2} and \Lem~\ref{lem_chernoff}.
\end{proof}
			
\begin{proof}[Proof of \Lem~\ref{lem_1dev}]
	The lemma follows from Claim~\ref{Claim_unexp} and Markov's inequality.
\end{proof}
			
\subsection{Proof of \Lem~\ref{lem_0dev}}\label{sec_lem_0dev}
To prove \Lem~\ref{lem_0dev} we need to estimate the probability that an uninfected individual $x\in V[i]$, $s<i\leq\ell$, `disguises' itself to look like an infected individual.
In addition to the sets $\cU_{i,j}$ from \eqref{eqUij}, towards the proof of \Lem~\ref{lem_0dev} we also need to consider the sets
\begin{align*}
	\cP_{i,j}&=F_1[i+j-1]\cap\cU_{i,j},&\cN_{i,j}&=F_0[i+j-1]\cap\cU_{i,j},\\
	\cP^\pm_{i,j}&=F_1^{\pm}[i+j-1]\cap\cU_{i,j},&\cN^\pm_{i,j}&=F_0^{\pm}[i+j-1]\cap\cU_{i,j}.
\end{align*}
In words, $\cP_{i,j}$ and $\cN_{i,j}$ are the sets of actually positive/negative tests in $\cU_{i,j}$, i.e.\ actually positive/negative tests $a\in F[i+j-1]$ that do not contain an infected individual from $V[i+j-s]\cup\cdots\cup V[i-1]$.
Additionally, $\cP^\pm_{i,j},\cN^\pm_{i,j}$ discriminate based on the displayed tests results.
We begin by estimating the likely sizes of these sets.

\begin{claim}\label{lem_U}
	Let $s<i\leq\ell$ and $j\in[s]$.
	Then with probability $1-o(n^{-2})$ we have
\begin{align}
	|\cP_{i,j}|&=\bc{1+O(n^{-\Omega(1)})}\frac m{\ell}\cdot\frac{\exp(dj/s)-1}{\exp(d)},&|\cN_{i,j}|&=\bc{1+O(n^{-\Omega(1)})}\frac m{\ell}\cdot\exp(-d),\label{eq_lem_U_1}\\
	|\cP^+_{i,j}|&=\bc{1+O(n^{-\Omega(1)})}\frac m{\ell}\cdot\frac{p_{11}(\exp(dj/s)-1)}{\exp(d)},&|\cP^-_{i,j}|&=\bc{1+O(n^{-\Omega(1)})}\frac m{\ell}\cdot\frac{p_{10}(\exp(dj/s)-1)}{\exp(d)},\label{eq_lem_U_2}\\
	|\cN^+_{i,j}|&=\bc{1+O(n^{-\Omega(1)})}\frac m{\ell}\cdot p_{01}\exp(-d),&|\cN^-_{i,j}|&=\bc{1+O(n^{-\Omega(1)})}\frac m{\ell}\cdot p_{00}\exp(-d).\label{eq_lem_U_3}
\end{align}
\end{claim}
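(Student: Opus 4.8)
The plan is to obtain all six estimates by reducing five of them to results already established, leaving only a single routine first/second-moment computation. The starting observation is that an actually negative test contains no infected individual whatsoever, and in particular none from $V[i+j-s]\cup\cdots\cup V[i-1]$; hence $F_0[i+j-1]\subset\cU_{i,j}$, so that $\cN_{i,j}=F_0[i+j-1]$ and, likewise, $\cN^{\pm}_{i,j}=F_0^{\pm}[i+j-1]$. Consequently the second estimate in \eqref{eq_lem_U_1} and both estimates in \eqref{eq_lem_U_3} are precisely the bounds {\bf G2} of \Prop~\ref{prop_basic} (summing $|F_0^-[i+j-1]|+|F_0^+[i+j-1]|=|F_0[i+j-1]|$ for the former), once one notes that the additive error $O(\sqrt m\ln^3 n)$ there is of relative size $O(\ell\ln^3 n/\sqrt m)=O(n^{-\Omega(1)})$ because $m=\Theta(k\log(n/k))$ and $\ell=O(\sqrt{\log n})$.

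For $|\cP_{i,j}|$ I would use that $\cU_{i,j}$ is the disjoint union of $\cP_{i,j}=F_1[i+j-1]\cap\cU_{i,j}$ and $\cN_{i,j}=F_0[i+j-1]$, so that $|\cP_{i,j}|=|\cU_{i,j}|-|F_0[i+j-1]|$, and then invoke Claim~\ref{Claim_Qxj} together with {\bf G2} and the elementary identity $\exp(d(j-s)/s)-\exp(-d)=\exp(-d)(\exp(dj/s)-1)$. The one point needing care is converting the additive errors into the stated multiplicative $O(n^{-\Omega(1)})$ error: the main term $\tfrac m\ell\cdot\tfrac{\exp(dj/s)-1}{\exp(d)}$ is smallest for $j=1$, where it is still $\Omega(m/(\ell s))$ because $\exp(d/s)-1=\Omega(1/s)$, and since $s=O(\log\log n)$ by \eqref{eqs} the relative error remains $n^{-\Omega(1)}$. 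This computation also records that $|\cP_{i,j}|=n^{\Omega(1)}$ on the relevant good event, a fact needed in the next step.

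The only quantities requiring fresh work are $\cP^{+}_{i,j}$ and $\cP^{-}_{i,j}$. Here I would condition on the test design $\Gsc$ and the ground truth $\SIGMA$, which together determine the set $\cP_{i,j}$ and, on the good event just described, fix $|\cP_{i,j}|=n^{\Omega(1)}$. Because the displayed test results are produced from the actual ones through the $\vec p$-channel independently, conditionally on $(\Gsc,\SIGMA)$ the displayed results of the actually positive tests constituting $\cP_{i,j}$ are i.i.d.\ Bernoulli variables with success probability $p_{11}$; hence $|\cP^{+}_{i,j}|$ is $\Bin(|\cP_{i,j}|,p_{11})$ and $|\cP^{-}_{i,j}|=|\cP_{i,j}|-|\cP^{+}_{i,j}|$ is $\Bin(|\cP_{i,j}|,p_{10})$. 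The Chernoff bound (\Lem~\ref{lem_chernoff}) then pins $|\cP^{+}_{i,j}|$ and $|\cP^{-}_{i,j}|$ down to $p_{11}|\cP_{i,j}|(1+O(n^{-\Omega(1)}))$ and $p_{10}|\cP_{i,j}|(1+O(n^{-\Omega(1)}))$ with conditional probability $1-o(n^{-2})$; combined with the estimate for $|\cP_{i,j}|$ this yields \eqref{eq_lem_U_2}. A union bound over the $O(\ell s)=O(\sqrt{\log n}\log\log n)$ pairs $(i,j)$ with $s<i\leq\ell$ and $j\in[s]$ then upgrades all of the estimates to hold simultaneously.

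I do not anticipate any genuine obstacle: the argument is a routine moment computation built entirely on \Prop~\ref{prop_basic} and Claim~\ref{Claim_Qxj}, with no new martingale estimates beyond what those results already supply. The only mildly delicate bookkeeping is the repeated conversion of the additive $O(\sqrt m\cdot\mathrm{polylog})$ and Chernoff fluctuations into the multiplicative $O(n^{-\Omega(1)})$ errors asserted in the statement, for which the lower bound $|\cP_{i,j}|=\Omega(m/(\ell s))$ is the crucial input.
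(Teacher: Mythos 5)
Your proposal is correct and follows essentially the same route as the paper: identify $\cN_{i,j}=F_0[i+j-1]$ to import {\bf G2}, obtain $|\cP_{i,j}|=|\cU_{i,j}|-|\cN_{i,j}|$ from Claim~\ref{Claim_Qxj}, and treat the displayed results of the tests in $\cP_{i,j}$ as conditionally binomial before applying the Chernoff bound. Your extra bookkeeping (the lower bound $|\cP_{i,j}|=\Omega(m/(\ell s))$ for $j=1$ and the observation $\cN^{\pm}_{i,j}=F_0^{\pm}[i+j-1]$) is sound but does not change the argument.
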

\begin{proof}
	Since $\cN_{i,j}=F_0[i+j-1]$, the second equation in \eqref{eq_lem_U_1} just follows from \Prop~\ref{prop_basic}, {\bf G2}.
	Furthermore, since $\cP_{i,j}=\cU_{i,j}\setminus\cN_{i,j}$, the first equation \eqref{eq_lem_U_1} is an immediate consequence of Claim~\ref{Claim_Qxj}.
	Finally, to obtain \eqref{eq_lem_U_2}--\eqref{eq_lem_U_3} we simply notice that given $|\cP_{i,j}|,|\cN_{i,j}|$ we have
	\begin{align*}
		|\cP^+_{i,j}|&=\Bin(|\cP_{i,j}|,p_{11}),& |\cP^-_{i,j}|&=\Bin(|\cP_{i,j}|,p_{10}),& |\cN^+_{i,j}|&=\Bin(|\cN_{i,j}|,p_{01}),& |\cN^-_{i,j}|&=\Bin(|\cN_{i,j}|,p_{00}).
	\end{align*}
	Hence, \eqref{eq_lem_U_2}--\eqref{eq_lem_U_3} just follow from \eqref{eq_lem_U_1} and \Lem~\ref{lem_chernoff}.
\end{proof}

Let $\fU$ be the event that {\bf G1--G2} from \Prop~\ref{prop_basic} hold and that the estimates \eqref{eq_lem_U_1}--\eqref{eq_lem_U_3} hold.
Then by \Prop~\ref{prop_basic} and Claim~\ref{lem_U} we have
\begin{align}\label{eq_U}
	\pr\brk\fU=1-o(n^{-2}).
\end{align}
To facilitate the following computations we let
\begin{align}
	q_{0,j}^-&=\exp(-d)p_{00}+(\exp(d(j-s)/s)-\exp(-d))p_{10},&q_{0,j}^+&=\exp(-d)p_{01}+(\exp(d(j-s)/s)-\exp(-d))p_{11}.\label{eqqj}
\end{align}
Additionally, we introduce the shorthand $\lambda=(\log\log n)/\log^{3/7}n$ for the error term from the definition \eqref{eqVi+} of $V^+[i]$.
Our next step is to determine the distribution of the random variables $|\wxj xj^{\pm}|$ that contribute to $\vW_x^\pm(\SIGMA)$ from \eqref{eqWtau}.

\begin{claim} \label{Lemma_wxj_xj}
	Let $s<i\leq\ell$ and $j\in[s]$.
	Given $\fU$ for every $x \in V_{0}^+[i]$ we have
	\begin{align}
		|\wxj xj^-| \sim \Hyp \bc{\bc{1 + O(n^{-\Omega(1)})}\frac {mq_0^-}{\ell},\bc{1 + O(n^{-\Omega(1)})} \frac{mq_{0,j}^-}{2\ell},(1+O(\lambda))\frac\Delta s p_{10}}\label{eqLemma_wxj-_xj},\\
		|\wxj xj^+| \sim \Hyp \bc{\bc{1 + O(n^{-\Omega(1)})}\frac {mq_0^+}{\ell},\bc{1 + O(n^{-\Omega(1)})} \frac{mq_{0,j}^+}{2\ell},\bc{1+O(\lambda)}\frac\Delta s p_{11}}\label{eqLemma_wxj+_xj}.
	\end{align}
\end{claim}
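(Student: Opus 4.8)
The plan is to pin down $|\vW_{x,j}^+(\SIGMA)|=|\partial x\cap\cU_{i,j}\cap F^+[i+j-1]|$, and symmetrically its negative counterpart, as a hypergeometric count by conditioning on everything about $\Gsc$, $\SIGMA$ and the channel noise \emph{except} the neighbourhood $\partial x$ of the single uninfected individual $x$. Fix $s<i\le\ell$, $j\in[s]$ and $x\in V[i]$, and condition on all of $\SIGMA$, the noise, and the neighbourhoods $\partial y$ of the individuals $y\ne x$; call this information $\mathfrak F$. Since $x\in V_0^+[i]\subseteq V_0$, the neighbourhood $\partial x$ influences neither the actual nor the displayed result of any test, so $\fU$ from \eqref{eq_U}, the sets $F^{\pm}[i+j-1]$ of displayed positive/negative tests, the set $\cU_{i,j}$ from \eqref{eqUij}, and therefore $\cU_{i,j}\cap F^{\pm}[i+j-1]$, are all functions of $\mathfrak F$. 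On the other hand, in $\Gsc$ the neighbourhood $\partial x$ is obtained by drawing $\Delta/s$ tests uniformly without replacement from each of $F[i],\dots,F[i+s-1]$, independently of all other individuals and of $\SIGMA$ and the noise. Hence, conditionally on $\mathfrak F$ (in particular on the event $\fU$), $\partial x\cap F[i+j-1]$ is still a uniformly random $\Delta/s$-subset of $F[i+j-1]$, drawn independently over $j\in[s]$, and the ``special'' sets $\cU_{i,j}\cap F^{\pm}[i+j-1]$ are \emph{fixed}, with the cardinalities prescribed by $\fU$.

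It remains to incorporate the event $x\in V^+[i]$. By \eqref{eqVi+} this event reweights the conditional law of $\partial x$ only through the split sizes $N_j:=|\partial x\cap F^+[i+j-1]|$, $j\in[s]$, and since every summand in \eqref{eqVi+} is nonnegative it forces $|N_j-\Delta p_{11}/s|\le\log^{4/7}n$ for each $j$ separately; with $\Delta=\Theta(\log n)$ and $s=\Theta(\log\log n)$ this reads $N_j=(1+O(\lambda))\Delta p_{11}/s$, and, as $|\partial x\cap F[i+j-1]|=\Delta/s$ exactly and $p_{10}+p_{11}=1$, also $|\partial x\cap F^-[i+j-1]|=\Delta/s-N_j=(1+O(\lambda))\Delta p_{10}/s$. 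Now condition in addition on the entire admissible vector $(N_1,\dots,N_s)$: conditionally, $\partial x\cap F^+[i+j-1]$ is a uniformly random $N_j$-subset of $F^+[i+j-1]$, whereas $\partial x\cap F^-[i+j-1]$ is an independent uniformly random $(\Delta/s-N_j)$-subset of $F^-[i+j-1]$, and all of these are independent over $j$. By the definition \eqref{eqHyp} of the hypergeometric distribution this yields, conditionally, $|\vW_{x,j}^+(\SIGMA)|\sim\Hyp\big(|F^+[i+j-1]|,|\cU_{i,j}\cap F^+[i+j-1]|,N_j\big)$ and $|\vW_{x,j}^-(\SIGMA)|\sim\Hyp\big(|F^-[i+j-1]|,|\cU_{i,j}\cap F^-[i+j-1]|,\Delta/s-N_j\big)$.

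Finally I would substitute the cardinalities that hold on $\fU$: \Prop~\ref{prop_basic}, \textbf{G2}, gives $|F^+[i+j-1]|=(1+O(n^{-\Omega(1)}))mq_0^+/\ell$ and $|F^-[i+j-1]|=(1+O(n^{-\Omega(1)}))mq_0^-/\ell$, whereas $|\cU_{i,j}\cap F^+[i+j-1]|=|\cP^+_{i,j}|+|\cN^+_{i,j}|$, together with \eqref{eq_lem_U_2}--\eqref{eq_lem_U_3} and the identity $q_{0,j}^+=\exp(-d)\big(p_{01}+p_{11}(\exp(dj/s)-1)\big)$ (and its analogue for $q_{0,j}^-$), gives $|\cU_{i,j}\cap F^{\pm}[i+j-1]|=(1+O(n^{-\Omega(1)}))mq_{0,j}^{\pm}/\ell$. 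Since all of these bounds are uniform over the admissible values of $N_j$, averaging over the conditional distribution of $(N_j)_j$ leaves them intact, which establishes \eqref{eqLemma_wxj-_xj} and \eqref{eqLemma_wxj+_xj}. I expect the only genuinely delicate point---and the one to spell out with care---to be this chain of conditionings: one must verify that conditioning on $\fU$ does not perturb the distribution of $\partial x$ (which is exactly where $x\in V_0$ enters) and that the extra conditioning on $x\in V^+[i]$ acts purely through the split counts $N_j$, so that, with $\cU_{i,j}\cap F^{\pm}[i+j-1]$ a function of $\mathfrak F$, the hypergeometric identification is exact; everything else is routine tracking of the $O(n^{-\Omega(1)})$ and $O(\lambda)$ error terms already recorded in $\fU$ and \eqref{eqVi+}.
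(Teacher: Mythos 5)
Your proposal is correct and follows essentially the same route as the paper: since $x\in V_0$ its placement does not affect any test result, the membership $x\in V^+[i]$ only constrains the split sizes $|\partial x\cap F^{\pm}[i+j-1]|$ to $(1+O(\lambda))\Delta p_{11}/s$ and $(1+O(\lambda))\Delta p_{10}/s$, and conditionally the counts $|\vW_{x,j}^{\pm}|$ are hypergeometric with population $F^{\pm}[i+j-1]$ and special set $\cN_{i,j}^{\pm}\cup\cP_{i,j}^{\pm}$, whose sizes on $\fU$ are exactly those you substitute. Your more explicit chain of conditionings is a welcome elaboration of the paper's terser argument, and your computation $|\cN^{\pm}_{i,j}|+|\cP^{\pm}_{i,j}|=(1+O(n^{-\Omega(1)}))mq_{0,j}^{\pm}/\ell$ is the one consistent with \eqref{eq_lem_U_2}--\eqref{eq_lem_U_3} and \eqref{eqqj} (the factor $2\ell$ in the displayed claim appears to be a typo).
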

\begin{proof}
	The definition \eqref{eqVi+} of $V^+[i]$ prescribes that for any $x\in V_0^+[i]$ we have $\partial x\cap F^+[i+j-1]=(p_{11}+O(\lambda))\Delta/s$.
	The absence or presence of $x\in V_0[i]$ in any test $a$ does not affect the displayed results of $a$, because $x$ is uninfected.
	Therefore,  the conditional distributions of $|\vW^{\pm}_{x,j}|$ read
	\begin{align}\nonumber
		|\wxj xj^\pm|\sim \Hyp \bc{|F^\pm[i+j-1]|,|\cN_{i,j}^\pm|+|\cP_{i,j}^\pm|,(1+O(\lambda))\frac\Delta s p_{10}}.
	\end{align}
	Since on $\fU$ the bounds \eqref{eqG3_1}--\eqref{eqG3_4} and \eqref{eq_lem_U_1}--\eqref{eq_lem_U_3} hold, the assertion follows.
\end{proof}

We are now ready to derive an exact expression for the probability that for $x\in V_0[i]$ the value $\vW_x^{+}$ gets (almost) as large as the value $W^+$ that we would expected to see for an infected individual.
Recall the values $q_{1,j}^\pm$ from \eqref{eqq1}.

\begin{claim}\label{lem_happen1}
	Let 
	\begin{align}\label{eqM+}
		\cM^+=&\min\frac1s\sum_{j=1}^{s} 
		\KL{z_j}{\frac{q_{0,j}^+}{q_0^+}}\qquad \mbox{s.t.}\quad\sum_{j=1}^{s}\bc{z_j-\frac{q_{1,j}^+}{p_{11}}} w_j^+ = 0,\qquad z_1,\ldots,z_{s}\in(0,1).
	\end{align}
	Then for all $s < i \leq \ell$ and all $x\in V[i]$ we have
	\begin{align*}
		\pr\brk{\wx x^+ \geq (1-2\zeta)W^+ \mid \fU,\,x\in V_0^+[i]} &\leq \exp(-(1+o(1))p_{11}\Delta\cM^+).
	\end{align*}
\end{claim}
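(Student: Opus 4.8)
The plan is to prove Claim~\ref{lem_happen1} by a Cram\'er-type large deviations bound for the weighted sum $\wx x^+(\SIGMA)=\sum_{j=1}^s w_j^+|\wxj xj^+|$, exploiting that, conditionally on $\fU$ and on the neighbourhood statistics of $x$, the contributions of the $s$ compartments $F[i],\dots,F[i+s-1]$ are mutually independent.

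First I would fix $x\in V[i]$ and condition on the event $\fE$ that $\fU$ holds, that $x\in V_0^+[i]$, that $|\partial x\cap F^+[i+j-1]|$ equals a prescribed value $N_j$ for each $j\in[s]$, and that $|F^+[i+j-1]|$ and $|\cU_{i,j}\cap F^+[i+j-1]|=|\cP^+_{i,j}|+|\cN^+_{i,j}|$ equal prescribed values; by \Prop~\ref{prop_basic}, Claim~\ref{lem_U} and the definition \eqref{eqVi+} of $V^+[i]$ all these quantities are pinned down up to a factor $1\pm o(1)$, so in particular $N_j=(1+o(1))p_{11}\Delta/s$. Since $x$ is uninfected, $x$ picks the tests it joins in $F[i],\dots,F[i+s-1]$ independently of the actual and displayed test results and of the sets $\cU_{i,j}$, and independently across compartments; hence conditionally on $\fE$ the variables $|\wxj xj^+|$, $j\in[s]$, are mutually independent, each hypergeometric with $N_j$ draws and success parameter $p_j=(1+o(1))q_{0,j}^+/q_0^+$. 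Crucially the mean $N_jp_j$ lies strictly below the ``infected target'' $N_j\,q_{1,j}^+/p_{11}=N_j\exp(d(j-s)/s)$ for every $j<s$, since
\begin{align*}
q_{0,j}^+-\exp(d(j-s)/s)\,q_0^+=\exp(-d)\bigl(1-\exp(d(j-s)/s)\bigr)\bigl(p_{01}-p_{11}\bigr)<0
\end{align*}
by \eqref{eqnoise} (while for $j=s$ one has $q_{0,s}^+=q_0^+$ and $|\wxj xs^+|=N_s$).

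Next I would run the exponential moment method. For every $\xi\ge0$, by conditional independence and Hoeffding's comparison of the moment generating function of a hypergeometric variable with that of $\Bin(N_j,p_j)$~\cite{Hoeffding},
\begin{align*}
\pr\brk{\wx x^+(\SIGMA)\ge(1-2\zeta)W^+\mid\fE}\le e^{-\xi(1-2\zeta)W^+}\prod_{j=1}^s\ex\brk{e^{\xi w_j^+|\wxj xj^+|}\mid\fE}\le\exp\Bigl(-\xi(1-2\zeta)W^++\sum_{j=1}^sN_j\log\bigl(1-p_j+p_je^{\xi w_j^+}\bigr)\Bigr).
\end{align*}
Optimising over $\xi\ge0$ and invoking convex duality turns the exponent into $-\inf\{\sum_{j=1}^sN_j\KL{z_j}{p_j}:\sum_{j=1}^sw_j^+z_jN_j\ge(1-2\zeta)W^+,\ z_j\in[0,1]\}$; the duality applies because $(1-2\zeta)W^+$ exceeds the mean $\sum_jw_j^+p_jN_j$ of $\wx x^+(\SIGMA)$, as just observed.

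Finally I would identify this programme with $p_{11}\Delta\,\cM^+$. Using $W^+$ from \eqref{eqWest} together with $N_j=(1+o(1))p_{11}\Delta/s$ and $\exp(d(j-s)/s)=q_{1,j}^+/p_{11}$, the constraint reads $\sum_jw_j^+(z_j-(1-o(1))q_{1,j}^+/p_{11})\ge0$ and the objective equals $(1+o(1))p_{11}\Delta\cdot\frac1s\sum_j\KL{z_j}{q_{0,j}^+/q_0^+}$. Because $p_j<q_{1,j}^+/p_{11}$ the unconstrained minimiser $z_j=p_j$ violates the constraint, so the constraint is active and the inequality may be replaced by an equality, identifying the programme with \eqref{eqM+} up to the perturbations hidden in $\zeta$ and in the $o(1)$ error terms; by continuity of this finite-dimensional convex optimisation its value tends to $\cM^+$. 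Since the bound above is uniform over all choices of the prescribed values $N_j$ and set sizes compatible with $\fU$ and $x\in V_0^+[i]$, taking expectations over those choices gives $\pr\brk{\wx x^+(\SIGMA)\ge(1-2\zeta)W^+\mid\fU,\,x\in V_0^+[i]}\le\exp(-(1+o(1))p_{11}\Delta\cM^+)$, which is the claim. The one genuinely delicate point I expect is this last step: checking that the Cram\'er infimum really realises the constrained Kullback--Leibler optimum, that the constraint binds (so that $\cM^+>0$ as soon as some $w_j^+$ with $j<s$ is positive), and that relaxing the equality of \eqref{eqM+} to the inequality forced by the $(1-2\zeta)$ factor leaves the optimal value unchanged in the limit; the remaining estimates are routine.
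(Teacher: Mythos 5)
Your proposal is correct and follows essentially the same route as the paper: condition on $\fU$ and the pinned-down neighbourhood/compartment statistics, use the conditional independence of the $|\wxj xj^+|$ across the $s$ compartments together with their hypergeometric law (Claim~\ref{Lemma_wxj_xj}), derive the constrained Kullback--Leibler rate, and then argue that the unconstrained minimiser $z_j=q_{0,j}^+/q_0^+$ violates the constraint (precisely because $q_{0,j}^+/q_0^+<q_{1,j}^+/p_{11}$ by \eqref{eqnoise}), so the constraint binds, the inequality may be replaced by the equality of \eqref{eqM+}, and the $(1-2\zeta)$ slack is absorbed by continuity. The only difference is cosmetic: you obtain the large-deviations rate via the exponential moment method and convex duality (with Hoeffding's MGF domination), whereas the paper sums over the $\exp(o(\Delta))$ possible value vectors $(v_1,\ldots,v_s)$ and bounds each point probability by the product of hypergeometric Chernoff bounds; the two mechanisms are interchangeable here.
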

\begin{proof} 
	Since $\Delta=O(\log n)$ and $s=O(\log\log n)$ by \eqref{eqs} we can write
	\begin{align}\nonumber
		\Pr&\brk{\wx x^+(\SIGMA)\geq(1-2\zeta)W^+ \mid \fU,\,x\in \zeroplusi {i}}\\
		&\leq \sum_{0\leq v_1, \ldots, v_s\leq\Delta/s}\vecone\cbc{\sum_{j=1}^nv_jw_j^+\geq(1-2\zeta)W^+}\pr\brk{\forall j\in[s]:\vW^+_{x,j}(\SIGMA)=v_j\mid \fU,\,x\in \zeroplusi {i}}
		\nonumber\\
		   & \leq \exp(o(\Delta))\max_{0\leq v_1, \ldots, v_s\leq\Delta/s}\vecone\cbc{\sum_{j=1}^nv_jw_j^+\geq(1-2\zeta)W^+}\pr\brk{\forall j\in[s]:\vW^+_{x,j}(\SIGMA)=v_j\mid \fU,\,x\in \zeroplusi {i}}.
		   \label{eq_lagrange_1}
	\end{align}
	Further, given $\fU$ and $x\in V_0^+[i]$ the random variables $(\vW^+_{x,j})_{j\in[s]}$ are mutually independent because $x$ joins tests in the compartments $F[i+j-1]$, $j\in[s]$, independently.
	Hence, Claim~\ref{Lemma_wxj_xj}  shows that for any $v_1,\ldots,v_s$,
	\begin{align}
		\Pr\brk{\forall j\in[s]:\vW^+_{x,j}(\SIGMA)=v_j\mid \fU,\,x\in \zeroplusi {i}} &=\prod_{j=1}^s\Pr\brk{\vW^+_{x,j}(\SIGMA)=v_j\mid \fU,\,x\in \zeroplusi {i}}. 
		   \label{eq_lagrange_1a}
	\end{align}
	Thus, consider the optimisation problem
	\begin{align}\label{eq_lagrange_9}
		\cM^+_t=&\min\frac1s\sum_{j=1}^{s} \KL{z_j}{\frac{q_{0,j}^+}{q_0^+}}\qquad\mbox{s.t.}\quad\sum_{j=1}^{s}z_j w_j^+ \geq(1-t)W^+,\quad z_1,\ldots,z_{s}\in[q_{0,j}^+/q_0^+,1].
	\end{align}
	Then combining \eqref{eq_lagrange_1} and \eqref{eq_lagrange_1a} with Claim~\ref{Lemma_wxj_xj} and \Lem~\ref{lem_hyperchernoff} and using the substitution $z_j=v_j/(p_{11}\Delta)$, we obtain
	\begin{align}
		\Pr&\brk{\wx x^+\geq(1-2\zeta)W^+ \mid \fU,\,x\in \zeroplusi {i}}\leq \exp(-\Delta p_{11}\cM^+_{2\zeta}+o(\Delta)). \label{eq_lagrange_1b}
	\end{align}

	We claim that \eqref{eq_lagrange_1b} can be sharpened to
	\begin{align}
		\Pr&\brk{\wx x^+\geq(1-2\zeta)W^+ \mid \fU,\,x\in \zeroplusi {i}}
		\leq \exp(-\Delta p_{11}\cM^+_0+o(\Delta)). \label{eq_lagrange_1c}
	\end{align}
	Indeed, consider any feasible solution $z_1,\ldots,z_s$ to $\cM_\zeta^+$ such that $\sum_{j\geq s}z_jw_j^+<W^+$.
	Obtain $z_1',\ldots,z_s'$ by increasing some of the values $z_1,\ldots,z_s$ such that $\sum_{j\leq s}z_j'w_j^+=W^+$.
	Then because the functions $z\mapsto\Kl z{q_{0,j}^+/q_0^+}$ with $j\in[s]$ are equicontinuous on the compact interval $[0,1]$, we see that
	\begin{align*}
		\frac1s\sum_{j=1}^s\KL{z_j}{q_{0,j}^+/q_0^+}\geq\frac1s\sum_{j=1}^s\KL{z_j'}{q_{0,j}^+/q_0^+}+o(1)
	\end{align*}
	uniformly for all $z_1,\ldots,z_s$ and $z_1',\ldots,z_s'$.
	Thus, \eqref{eq_lagrange_1c} follows from \eqref{eq_lagrange_1b}. 

	Finally, we notice that the condition $z_j\geq q_{0,j}^+/q_0^+$ in \eqref{eq_lagrange_9} is superfluous.
	Indeed, since  $\Kl{q_{0,j}^+/q_0^+}{q_{0,j}^+/q_0^+}=0$  there is nothing to be gained from choosing $z_j<q_{0,j}^+/q_0^+$.
	Furthermore, since the Kullback-Leibler divergence is strictly convex and \eqref{eqnoise} ensures that $q_{0,j}^+/q_0^+<q_{1,j}^+/p_{11}$ for all $j$, the optimisation problem $\cM^+_0$ attains a unique minimum, which is not situated at the boundary of the intervals $[q_{0,j}^+/q_0^+,1]$.
	That said, the unique minimiser satisfies the weight constraint $\sum_{j\geq s}z_jw_j^+$ with equality; for otherwise we could reduce some $z_j$, thereby decreasing the objective function value.
	In summary, we conclude that $\cM_0^+=\cM^+$.
	Thus, the assertion follows from \eqref{eq_lagrange_1c}.
\end{proof}

\begin{claim}\label{lem_happen0}
	Let
	\begin{align}\label{eqM-}
		\cM^-=&
		\min\frac1s\sum_{j=1}^{s} \KL{z_j}{\frac{q_{0,j}^-}{q_0^-}}\qquad\mbox{s.t.}\quad\sum_{j=1}^{s}\bc{z_j-\frac{q_{1,j}^-}{p_{01}}} w_j^- = 0,\qquad z_1,\ldots,z_{s}\in(0,1).
	\end{align}
	Then for all $s < i \leq \ell$ and all $x\in V[i]$ we have
	\begin{align*}
		\pr\brk{\wx x^- \leq (1+2\zeta)W^- \mid \fU,\,x\in V_0^+[i]} &\leq \exp(-(1+o(1))p_{10}\Delta\cM^-).
	\end{align*}
\end{claim}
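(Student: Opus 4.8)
The plan is to run the proof of Claim~\ref{lem_happen1} in mirror image, interchanging positively and negatively displayed tests and replacing the upper Chernoff tail by the lower one; we may assume $p_{10}>0$, the bound being vacuous otherwise. Fix $s<i\le\ell$ and $x\in V[i]$ and condition throughout on $\fU$ and on $x\in V_0^+[i]$. First I would reduce to a single test configuration: since $\Delta=O(\log n)$ and $s=O(\log\log n)$ by \eqref{eqs}, the vector $(|\vW^-_{x,j}(\SIGMA)|)_{j\in[s]}$ takes at most $(\Delta/s+1)^s=\exp(o(\Delta))$ values, so $\pr\brk{\vW_x^-(\SIGMA)\le(1+2\zeta)W^-\mid\fU,x\in V_0^+[i]}$ is at most $\exp(o(\Delta))$ times the maximum, over $0\le v_1,\dots,v_s\le\Delta/s$ with $\sum_j v_jw_j^-\le(1+2\zeta)W^-$, of $\pr\brk{\forall j\in[s]:|\vW^-_{x,j}(\SIGMA)|=v_j\mid\fU,x\in V_0^+[i]}$. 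Because $x$ chooses its $\Delta/s$ neighbours in each compartment $F[i+j-1]$ independently and, being healthy, never affects a displayed test outcome, the events $\cbc{|\vW^-_{x,j}(\SIGMA)|=v_j}$, $j\in[s]$, are conditionally independent given $\fU$ and $x\in V_0^+[i]$, so this probability factorises over $j$.

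Next I would bound each factor via the hypergeometric Chernoff bound. By Claim~\ref{Lemma_wxj_xj}, given $\fU$ and $x\in V_0^+[i]$ the variable $|\vW^-_{x,j}(\SIGMA)|$ is hypergeometric with mean $(1+o(1))\frac\Delta sp_{10}\frac{q_{0,j}^-}{q_0^-}$; hence \Lem~\ref{lem_hyperchernoff}, after the same rescaling of $v_j$ to a variable $z_j$ as in Claim~\ref{lem_happen1}, shows that the factorised probability is at most $\exp(-p_{10}\Delta\,\cM^-_{2\zeta}+o(\Delta))$, where for $t\ge0$
\[
	\cM^-_t=\min\frac1s\sum_{j=1}^s\KL{z_j}{\frac{q_{0,j}^-}{q_0^-}}\qquad\text{subject to}\qquad\sum_{j=1}^s z_jw_j^-\le(1+t)\sum_{j=1}^s\frac{q_{1,j}^-}{p_{01}}w_j^-,\qquad z_j\in\Bigl[0,\tfrac{q_{0,j}^-}{q_0^-}\Bigr].
\]
Here $\sum_j(q_{1,j}^-/p_{01})w_j^-=\sum_j\exp(d(j-s)/s)w_j^-$ is the rescaled value of $W^-$, and the box constraint $z_j\le q_{0,j}^-/q_0^-$ records that we are in the lower-tail regime of the hypergeometric.

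It then remains to identify $\cM^-_{2\zeta}$ with $\cM^-$ up to $o(1)$. The factor $(1+2\zeta)$ is removed exactly as in Claim~\ref{lem_happen1}: any feasible $z$ whose weight sum strictly exceeds $\sum_j(q_{1,j}^-/p_{01})w_j^-$ can be turned, by decreasing some coordinates by $o(1)$, into a feasible $z'$ that saturates the weight constraint, and uniform continuity of $z\mapsto\KL z{q_{0,j}^-/q_0^-}$ on $[0,1]$ changes the objective by only $o(1)$. The box constraint is inactive because \eqref{eqnoise} is equivalent to $p_{00}>p_{10}$, which gives $q_{1,j}^-/p_{01}=\exp(d(j-s)/s)\le q_{0,j}^-/q_0^-$ for every $j\in[s]$; hence the minimiser of the strictly convex objective subject only to the weight constraint already lies in the box, and it saturates the constraint, for otherwise some non-maximal $z_j$ could be raised, strictly lowering the objective while keeping feasibility. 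Strict convexity of the Kullback--Leibler divergence makes this minimiser unique, and saturation of the weight constraint is precisely the condition $\sum_j(z_j-q_{1,j}^-/p_{01})w_j^-=0$ of \eqref{eqM-}; so $\cM^-_0=\cM^-$ and the claim follows.

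The step needing care—not really an obstacle—is the last one: checking that the inequality-constrained large-deviations problem collapses onto the equality-constrained $\cM^-$ of \eqref{eqM-}. This hinges on the monotonicity $q_{1,j}^-/p_{01}\le q_{0,j}^-/q_0^-$, which is the exact form of the heuristic that a healthy individual typically sits in \emph{more} negatively displayed informative tests than an infected one, so that the threshold $W^-$ indeed lies in the lower tail of $\vW_x^-(\SIGMA)$ for $x\in V_0^+[i]$; it is the negative-test mirror of the inequality $q_{0,j}^+/q_0^+\le q_{1,j}^+/p_{11}$ used in Claim~\ref{lem_happen1}. Everything else is routine hypergeometric large-deviations bookkeeping.
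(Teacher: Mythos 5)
Your proposal is correct and follows exactly the route the paper intends: the paper's own proof of this claim is literally the one line ``Analogous to the proof of Claim~\ref{lem_happen1}'', and your writeup is precisely that mirror-image argument (union bound over $\exp(o(\Delta))$ configurations, conditional independence across compartments, hypergeometric lower-tail Chernoff bound, then collapsing the inequality-constrained rate problem onto the equality-constrained $\cM^-$). In particular you correctly verify the one point that genuinely changes sign in the mirror, namely that \eqref{eqnoise} gives $p_{00}>p_{10}$ and hence $q_{1,j}^-/p_{01}=\exp(d(j-s)/s)\leq q_{0,j}^-/q_0^-$, which is what places the threshold in the lower tail and makes the weight constraint active at the optimum.
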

Recall that by convention $0 \cdot \infty = 0$. 
Thus for $p_{10}=0$ the condition of \eqref{eqM-} boils down to  $z_j = {q_{1,j}^-}/{p_{01}}$ and the optimisation becomes trivial.

\begin{proof}
	Analogous to the proof of Claim~\ref{lem_happen1}.
\end{proof}

Clearly, as a next step we need to solve the optimisation problems \eqref{eqM+} and \eqref{eqM-}.
We defer this task to \Sec~\ref{sec_calc}, where we prove the following.

\begin{lemma} \label{lem_calc}
Let $\vX$ have distribution $\Be(\exp(-d))$ and let $\vY$ be the (random) channel output given input $\vX$.
Then
	\begin{align*}
		p_{11}\cM^++p_{10}\cM^-=&\frac{I(\vX,\vY)}d-\KL{p_{11}}{q_0^+}.
	\end{align*}
\end{lemma}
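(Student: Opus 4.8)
The plan is to solve the two constrained optimisation problems $\cM^+$ and $\cM^-$ from \eqref{eqM+}--\eqref{eqM-} by Lagrange multipliers and then verify that the resulting expression collapses to the stated identity involving the mutual information $I(\vX,\vY)$. Since both problems have the same structure---minimise a sum of Kullback--Leibler divergences subject to one linear constraint---I would treat them in parallel, introducing a single Lagrange multiplier $\mu$ (respectively $\nu$) for each. For $\cM^+$, the stationarity condition for the $j$-th coordinate is $\frac1s\log\frac{z_j(1-q_{0,j}^+/q_0^+)}{(1-z_j)q_{0,j}^+/q_0^+}=\mu w_j^+$, which gives the closed form
\begin{align*}
	z_j=z_j(\mu)=\frac{q_{0,j}^+/q_0^+}{q_{0,j}^+/q_0^++(1-q_{0,j}^+/q_0^+)\exp(-s\mu w_j^+)}.
\end{align*}
The key algebraic step is to recognise, using the definitions \eqref{eqweights} of $w_j^+$, \eqref{eqqj} of $q_{0,j}^+$, \eqref{eqq} of $q_0^+$ and \eqref{eqq1} of $q_{1,j}^+$, that the choice $\mu=\mu^\star$ making $s\mu^\star w_j^+$ equal the log-likelihood ratio $\log\frac{q_{1,j}^+/p_{11}}{q_{0,j}^+/q_0^+}$ (i.e. $\mu^\star$ independent of $j$) forces $z_j(\mu^\star)=q_{1,j}^+/p_{11}$, which is precisely the value that makes the constraint $\sum_j(z_j-q_{1,j}^+/p_{11})w_j^+=0$ hold. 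Hence $z_j^\star=q_{1,j}^+/p_{11}$ is the optimiser, and $\cM^+=\frac1s\sum_{j=1}^s\KL{q_{1,j}^+/p_{11}}{q_{0,j}^+/q_0^+}$; the analogous computation gives $\cM^-=\frac1s\sum_{j=1}^s\KL{q_{1,j}^-/p_{01}}{q_{0,j}^-/q_0^-}$ (with the convention handling $p_{10}=0$).

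The second half is bookkeeping. I would substitute the explicit values: $q_{1,j}^+/p_{11}=\exp(d(j-s)/s)$ and, expanding $q_{0,j}^+/q_0^+$ via \eqref{eqqj}, write out $p_{11}\cM^++p_{10}\cM^-$ as a single average over $j\in[s]$ of a combination of KL terms. The crucial observation is that, writing $t=\exp(d(j-s)/s)$ and letting $j$ range over $[s]$, the quantity $t$ ranges (in the limit, after a Riemann-sum argument, since $s=\Theta(\log\log n)\to\infty$) essentially uniformly-in-$\log$ over $[\exp(-d),1]$; more precisely, $\frac1s\sum_{j=1}^s f(\exp(d(j-s)/s))\to\frac1d\int_{\exp(-d)}^1 f(t)\frac{dt}{t}$. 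Identifying the integrand: if $\vX\sim\Be(\exp(-d))$ and $\vY$ is the channel output, then a test is ``actually negative with the $j$-compartment-adjusted probability'' exactly models the mixture weight, and the per-$j$ KL contributions assemble into the pointwise mutual-information density of $(\vX,\vY)$. Carrying the integral out and comparing with the definition of $I(\vX,\vY)$ (using the channel law \eqref{eqnoisemodel} and $p_{\sigma'1}$ notation) yields $p_{11}\cM^++p_{10}\cM^-=\frac{I(\vX,\vY)}{d}-\KL{p_{11}}{q_0^+}$, where the subtracted term accounts for the $j=s$ endpoint normalisation (the test group ``sees'' all $s$ preceding compartments).

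The main obstacle I anticipate is the endpoint/normalisation bookkeeping: getting the $-\KL{p_{11}}{q_0^+}$ term to come out with the correct sign and argument requires carefully tracking which compartment-shifted probabilities $q_{0,j}^+, q_{1,j}^+$ correspond to which ``slice'' of the mixture, and confirming that the discrete sum $\frac1s\sum_j$ approximates the integral $\frac1d\int$ with error $o(1)$ uniformly (this uses equicontinuity of $z\mapsto\KL{z}{\cdot}$ on compacts, exactly as invoked in the proof of Claim~\ref{lem_happen1}). A secondary subtlety is verifying that the unconstrained minimiser of the Lagrangian indeed lies in the open box $(0,1)^s$ and satisfies the linear constraint with equality---this follows from strict convexity of KL and the inequality $q_{0,j}^+/q_0^+<q_{1,j}^+/p_{11}$ guaranteed by \eqref{eqnoise}, just as already used in Claim~\ref{lem_happen1}, so it should be routine.
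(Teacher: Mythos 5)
Your plan follows essentially the same route as the paper: solve $\cM^\pm$ by Lagrange multipliers to find the optimiser $z_j=q_{1,j}^+/p_{11}=\exp(d(j-s)/s)$ (the paper's Claims~\ref{lem_lagrange} and~\ref{lem_lagrange-}), pass from the $\frac1s\sum_j$ to the integral $\frac1d\int_{\exp(-d)}^1$ with $O(1/s)$ error (Claim~\ref{lem_int}), and then match the result against $I(\vX,\vY)=H(\vY)-H(\vY\mid\vX)$. The only differences are cosmetic: your stationarity condition is stated slightly loosely (the multiplier equates $w_j^+$ to the full logit difference $\log\frac{\alpha_j(1-\beta_j)}{(1-\alpha_j)\beta_j}$ with $\alpha_j=q_{1,j}^+/p_{11}$, $\beta_j=q_{0,j}^+/q_0^+$, not to $\log(\alpha_j/\beta_j)$), and the paper carries out the final identification by explicit integration rather than by interpreting the integrand as a mutual-information density, but both are routine.
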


\begin{proof}[Proof of \Lem~\ref{lem_0dev}]
In light of Claims~\ref{lem_happen1} and \ref{lem_happen0} and \Lem~\ref{lem_calc} to work out that all but $o(k)$ positive individuals are identified correctly, using Markov's inequality we need to verify that
	\begin{align}\label{eqLaqBegin}
		\abs{V_0^+} \exp \left(-\Delta \left(p_{11} \cM^+ + p_{10}\cM^-\right) \right) <&  k
	\end{align}
	Taking the logarithm of \eqref{eqLaqBegin} and simplifying, we arrive at
	\begin{align}
		\log(n)\left(1-cd(1-\theta) \left(\KL{p_{11}}{(1-\exp(-d))p_{11}+\exp(-d)p_{01}}+p_{11}\cM^++p_{10}\cM^-\right) \right) < \theta \log(n).
	\end{align}
	Thus we need to show that 
	\begin{align}\label{eqLagFinal}
		cd\brk{\KL{p_{11}}{(1-\exp(-d))p_{11}+\exp(-d)p_{01}}+p_{11}\cM^++p_{10}\cM^-}>&1.
	\end{align}
	This boils down to $cI(\vX,\vY)>1$, which in turn is identical to \eqref{eqprop_dist_psi}.
\end{proof}

\subsection{Proof of \Lem~\ref{lem_calc}}\label{sec_calc}

We tackle the optimisation problems $\cM^{\pm}$ via the method of Lagrange multipliers.
Since the objective functions are strictly convex, these optimisation problems possess unique stationary points.
As the parameters from \eqref{eqq1} satisfy $q_{1,j}^+/p_{11}=\exp(d(j-s)/s)$, the optimisation problem \eqref{eqM+} gives rise to the following Lagrangian.

\begin{claim} \label{lem_lagrange}
The Lagrangian
\begin{align*}
	\cL^+ =& \sum_{j=1}^s \KL{z_j}{\frac{q_{0,j}^+}{q_0^+}} + \lambda w_j^+\bc{z_j - \exp(-d(s-j)/s)}
\end{align*}
has the unique stationary point $z_j=\exp(-d(s-j)/s)$, $\lambda=-1$.
\end{claim}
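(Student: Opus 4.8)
The plan is a direct application of the Lagrange-multiplier stationarity conditions, exploiting that the objective is a sum of Kullback--Leibler terms in the separate variables $z_j$. First I would set up the Lagrangian exactly as stated, namely
\[
	\cL^+(z_1,\ldots,z_s,\lambda)=\sum_{j=1}^s\brk{\KL{z_j}{\tfrac{q_{0,j}^+}{q_0^+}}+\lambda w_j^+\bc{z_j-\exp(-d(s-j)/s)}},
\]
where the equality constraint from \eqref{eqM+} has been rewritten using $q_{1,j}^+/p_{11}=\exp(d(j-s)/s)=\exp(-d(s-j)/s)$, as noted just before the claim. Then I would differentiate with respect to each $z_j$. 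Recalling $\KL yz=y\log(y/z)+(1-y)\log((1-y)/(1-z))$, one has $\frac{\partial}{\partial z_j}\KL{z_j}{r}=\log\frac{z_j}{1-z_j}-\log\frac{r}{1-r}$, so the stationarity condition $\partial\cL^+/\partial z_j=0$ reads
\[
	\log\frac{z_j}{1-z_j}-\log\frac{q_{0,j}^+/q_0^+}{1-q_{0,j}^+/q_0^+}+\lambda w_j^+=0.
\]

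Next I would substitute the conjectured solution $\lambda=-1$, $z_j=\exp(-d(s-j)/s)$ and verify it satisfies both the stationarity equations and the constraint. For the constraint this is immediate: $\sum_j w_j^+\bc{z_j-\exp(-d(s-j)/s)}=0$ holds termwise. For the stationarity equations, with $\lambda=-1$ the identity to check becomes
\[
	\log\frac{z_j}{1-z_j}-w_j^+=\log\frac{q_{0,j}^+/q_0^+}{1-q_{0,j}^+/q_0^+},
\]
i.e.\ $\frac{z_j}{1-z_j}=\eul^{w_j^+}\cdot\frac{q_{0,j}^+}{q_0^+-q_{0,j}^+}$. This is the one genuine computation: plug in $z_j=\exp(-d(s-j)/s)$, the definition $w_j^+=\log\frac{p_{11}}{p_{11}+(p_{01}-p_{11})\exp(-dj/s)}$ from \eqref{eqweights}, the value $q_0^+=\exp(-d)p_{01}+(1-\exp(-d))p_{11}$ from \eqref{eqq}, and $q_{0,j}^+=\exp(-d)p_{01}+(\exp(d(j-s)/s)-\exp(-d))p_{11}$ from \eqref{eqqj}, and check that both sides agree after clearing denominators. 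I expect this to reduce, after multiplying through by $\exp(d)$ and collecting the $p_{01}$ and $p_{11}$ coefficients, to an identity in $\exp(-dj/s)$; the key simplification is that $q_0^+-q_{0,j}^+=(1-\exp(d(j-s)/s))p_{11}$ and that $\eul^{w_j^+}=\frac{p_{11}}{p_{11}+(p_{01}-p_{11})\exp(-dj/s)}$, so the $p_{11}$ factors cancel and one is left verifying $\frac{\exp(-d(s-j)/s)}{1-\exp(-d(s-j)/s)}=\frac{p_{11}\,\exp(-d)p_{01}/\bigl(p_{11}+(p_{01}-p_{11})\exp(-dj/s)\bigr)}{(1-\exp(d(j-s)/s))p_{11}}$, which simplifies to an elementary identity.

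Finally, uniqueness: since each map $z_j\mapsto\KL{z_j}{q_{0,j}^+/q_0^+}$ is strictly convex on $(0,1)$, the objective $\frac1s\sum_j\KL{z_j}{q_{0,j}^+/q_0^+}$ is strictly convex, the feasible set is the intersection of an affine hyperplane with the open box $(0,1)^s$, hence convex, so there is at most one stationary point, and the computation above exhibits it. I would remark that the constraint qualification is trivially met because the single constraint is affine with nonzero gradient $(w_1^+,\ldots,w_s^+)$ (the $w_j^+$ are nonnegative and not all zero by \eqref{eqnoise}). The main obstacle is purely the bookkeeping in the stationarity-identity verification; there is no conceptual difficulty, only the need to keep the exponential factors straight.
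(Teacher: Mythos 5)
Your proposal is correct and follows essentially the same route as the paper: verify that $\lambda=-1$, $z_j=\exp(-d(s-j)/s)$ satisfies the first-order conditions by direct substitution of \eqref{eqweights}, \eqref{eqq} and \eqref{eqqj}, and deduce uniqueness from strict convexity of the Kullback--Leibler objective (the constraint being affine). One minor transcription slip: in your final displayed identity the numerator should be the full $\eul^{w_j^+}q_{0,j}^+$ rather than $\eul^{w_j^+}\exp(-d)p_{01}$ alone (you dropped the $(\exp(d(j-s)/s)-\exp(-d))p_{11}$ summand of $q_{0,j}^+$); with it restored one finds $\eul^{w_j^+}q_{0,j}^+=\exp(-d(s-j)/s)\,p_{11}$ and $q_0^+-q_{0,j}^+=(1-\exp(-d(s-j)/s))p_{11}$, so the identity collapses to $\frac{z_j}{1-z_j}=\frac{z_j}{1-z_j}$ as intended.
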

\begin{proof}
	Since the objective function $\sum_{j=1}^s \KL{z_j}{q_{1,j}^+/p_{11}}$ is strictly convex, we just need to verify that $\lambda=-1$ and $z_j=\exp(-d(s-j)/s)$ is a stationary point.
	To this end we calculate
	\begin{align}\label{eqlem_lagrange1}
		\frac{\partial\cL^+}{\partial z_j} = & \log\frac{z_j^+} {1-z_j^+}- \log\frac {q_{1,j}^+} {p_{11}-q_{1,j}^+} + \lambda w_j^+,&
	\frac{\partial\cL^+}{\partial \lambda} =& \sum_{j=1}^s \bc{z_j - \exp\bc{-d(s-j)/s}}  w_j^+.
\end{align}
Substituting in the definition \eqref{eqweights} of the weights $w_j^+$ 
and the definitions \eqref{eqq1} of $p_{11},q_{1,j}^+$ and simplifying, we obtain
\begin{align*}
	\frac{\partial L^+}{\partial z_j}\bigg|_{\substack{z_j=\exp(-d(s-j)/s)\\\lambda=-1}}= 
	& 
	\log\frac {\exp(-d(s-j)/s)}{1-\exp(-d(s-j)/s)}
	- \log\frac{p_{11}(\exp\bc{dj/s} -1) + p_{01}}{p_{11}(\exp(d)-1)+p_{01}}\\&
		+\log \bc{1-\frac{p_{11}(\exp(d  j/s) -1) + p_{01}}{p_{11}(\exp(d)-1)+p_{01}} } 
	-\log\frac{p_{11}}{p_{11} + (p_{01}-p_{11}) \exp(-d  j/s )}=0. 
\end{align*}
Finally, \eqref{eqlem_lagrange1} shows that setting $z_j=\exp(-d(s-j)/s)$ ensures that $\partial\cL^+/\partial\lambda=0$ as well.
\end{proof}

Analogous steps prove the corresponding statement for $\cM^-$.

\begin{claim} \label{lem_lagrange-}
	Assume $p_{10} > 0$. The Lagrangian
\begin{align*}
	\cL^- =& \sum_{j=1}^s \KL{z_j}{\frac{q_{0,j}^-}{p_{01}}} + \lambda w_j^-\bc{z_j - \exp(-d(s-j)/s)}
\end{align*}
has the unique stationary point $z_j=\exp(-d(s-j)/s)$, $\lambda=-1$.
\end{claim}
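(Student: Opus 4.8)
The plan is to mirror the proof of Claim~\ref{lem_lagrange} with the superscript $+$ replaced by $-$ throughout, exploiting that $\cM^-$ from \eqref{eqM-} has the same shape as $\cM^+$: a sum of Kullback--Leibler terms $z_j\mapsto\KL{z_j}{\cdot}$ (each strictly convex on $(0,1)$, hence the objective is strictly convex), subject to a single affine constraint with positive coefficients $w_j^-$. First I would note that the feasible set of $\cM^-$ is non-empty---indeed $z_j=\exp(-d(s-j)/s)$ satisfies the constraint, since $q_{1,j}^-/p_{01}=\exp(-d(s-j)/s)$ by \eqref{eqq1}---and that the constraint gradient $(w_1^-,\dots,w_s^-)$ does not vanish; together with strict convexity these facts force the Lagrangian stationary point to be unique. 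Hence it suffices to exhibit one solution of the stationarity equations and verify it equals $z_j=\exp(-d(s-j)/s)$, $\lambda=-1$.

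Next I would compute the partials. Using $\frac{\partial}{\partial z}\KL z p=\log\frac{z}{1-z}-\log\frac{p}{1-p}$,
\[
	\frac{\partial\cL^-}{\partial z_j}=\log\frac{z_j}{1-z_j}-\log\frac{q_{0,j}^-/p_{01}}{1-q_{0,j}^-/p_{01}}+\lambda w_j^-,\qquad
	\frac{\partial\cL^-}{\partial\lambda}=\sum_{j=1}^s w_j^-\bc{z_j-\exp(-d(s-j)/s)}.
\]
The $\lambda$-equation vanishes termwise at the proposed point. For the $z_j$-equations I would substitute $w_j^-=-\log\frac{p_{10}}{p_{10}+(p_{00}-p_{10})\exp(-dj/s)}$ from \eqref{eqweights} together with $q_{0,j}^-=\exp(-d)p_{00}+(\exp(d(j-s)/s)-\exp(-d))p_{10}$ from \eqref{eqqj}, and then simplify the sum of logarithms exactly as in the displayed computation inside the proof of Claim~\ref{lem_lagrange}; the logarithmic terms add up to zero, so stationarity holds. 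Combined with the uniqueness observation this proves the claim.

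The hypothesis $p_{10}>0$ is exactly what makes this argument legitimate: it ensures $0<w_j^-<\infty$, so $\cL^-$ and its derivatives are well defined and the constraint is non-degenerate. For $p_{10}=0$ one has $w_j^-=\infty$ for every $j$, the constraint collapses to $z_j=q_{1,j}^-/p_{01}=\exp(-d(s-j)/s)$, and the optimisation is trivial, as already noted after Claim~\ref{lem_happen0}. I expect the only slightly fiddly point to be the bookkeeping in the logarithmic simplification---lining up the denominator $p_{10}+(p_{00}-p_{10})\exp(-dj/s)$ arising from $w_j^-$ against the factors arising from $q_{0,j}^-$ and $p_{01}$---but this is a finite, mechanical identity in $d,j,s,p_{00},p_{10},p_{01}$ with no inequalities involved, so I anticipate no real obstacle.
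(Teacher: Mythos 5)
Your proposal follows exactly the paper's route: the paper proves the $+$ case (Claim~\ref{lem_lagrange}) by invoking strict convexity for uniqueness and then verifying stationarity via the logarithmic cancellation, and dispatches the $-$ case with the single sentence that ``analogous steps'' prove it. Your write-up is in fact more explicit than the paper's own proof of this claim (including the correct remark on why $p_{10}>0$ is needed), and the only caveat is that you, like the paper, do not actually carry out the final logarithmic bookkeeping for the $-$ superscripts.
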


Having identified the minimisers of $\cM^{\pm}$, we proceed to calculate the optimal objective function values. 
Note that for $\cM^-$ the minimisers $z_j$ for the cases $p_{10}>0$ and $p_{10} = 0$ coincide.

\begin{claim} \label{lem_int}
	Let
	\begin{align*}
		\lambda^+&=\log(q_0^+)=\log(p_{01}\exp(-d)+p_{11}(1-\exp(-d))),&\lambda^-&=\log(q_0^-)=\log(p_{00}\exp(-d)+p_{10}(1-\exp(-d))).
	\end{align*}
	Then 
	\begin{align*}
		p_{11}d\exp(d)\cM^+=&(\lambda^++d)\bc{p_{11}\bc{(d-1)\exp(d)+1}-p_{01}}+p_{01}\log(p_{01})\\&\qquad-p_{11}\bc{(d-1)\exp(d)+1}\log(p_{11})-d(d-1)\exp(d)p_{11}+O(1/s),\\
		p_{10}d\exp(d)\cM^-=&(\lambda^-+d)\bc{p_{10}\bc{(d-1)\exp(d)+1}-p_{00}}+p_{00}\log(p_{00})\\&\qquad-p_{10}\bc{(d-1)\exp(d)+1}\log(p_{10})-d(d-1)\exp(d)p_{10}+O(1/s)\quad\mbox{if $p_{10}>0$}.
	\end{align*}
\end{claim}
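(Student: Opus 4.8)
The plan is to reduce $\cM^\pm$ to an explicit one-dimensional integral. By Claims~\ref{lem_lagrange} and~\ref{lem_lagrange-} the unique minimisers of the optimisation problems \eqref{eqM+} and \eqref{eqM-} are $z_j=\exp(-d(s-j)/s)$ for $1\le j\le s$ (for $\cM^-$ we assume $p_{10}>0$, since otherwise the problem is trivial and no formula is claimed). Substituting these in gives $\cM^+=\frac1s\sum_{j=1}^s\KL{\exp(-d(s-j)/s)}{q_{0,j}^+/q_0^+}$ and likewise for $\cM^-$, so the first task is to simplify the summand. Writing $z=\exp(-d(s-j)/s)$ and noting $\exp(-dj/s)=\exp(-d)/z$, the definitions \eqref{eqweights} and \eqref{eqqj} yield $w_j^+=\log(p_{11}z/q_{0,j}^+)$ and $q_0^+-q_{0,j}^+=p_{11}(1-z)$; feeding these into the two logarithms of the Kullback--Leibler divergence, all the $\log z$ and $\log(1-z)$ terms cancel and one is left with the exact identities
\[
	\KL z{q_{0,j}^+/q_0^+}=\log(q_0^+/p_{11})+z\,w_j^+,\qquad
	\KL z{q_{0,j}^-/q_0^-}=\log(q_0^-/p_{10})-z\,w_j^-,
\]
the sign flip in the second being due to the minus sign in the definition \eqref{eqweights} of $w_j^-$. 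Hence $\cM^+=\log(q_0^+/p_{11})+\frac1s\sum_{j=1}^sz_jw_j^+$ and $\cM^-=\log(q_0^-/p_{10})-\frac1s\sum_{j=1}^sz_jw_j^-$, exactly.

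Next I would recognise the remaining sums as right Riemann sums. With $t=j/s$ the summand $z_jw_j^+$ equals $\exp(-d)\exp(dt)\log\frac{p_{11}\exp(dt)}{p_{11}\exp(dt)+p_{01}-p_{11}}$, which is a $C^1$ function of $t$ on $[0,1]$ (it is smooth unless $p_{01}=0$, in which case it merely has an integrable logarithmic singularity at $t=0$, a point that is not sampled, so that the Riemann-sum error there is a harmless $O(s^{-1}\log s)$). Consequently $\frac1s\sum_{j=1}^sz_jw_j^+$ equals $\int_0^1\exp(-d)\exp(dt)\log\frac{p_{11}\exp(dt)}{p_{11}\exp(dt)+p_{01}-p_{11}}\,\dd t$ up to $O(1/s)$, and likewise for $z_jw_j^-$. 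The substitution $r=\exp(dt)$ turns $\exp(dt)\,\dd t$ into $\dd r/d$, so the integral becomes $\frac{\exp(-d)}d\int_1^{\exp(d)}\bc{\log p_{11}+\log r-\log(p_{11}r+p_{01}-p_{11})}\,\dd r$, whose three pieces are elementary: $\int_1^{\exp(d)}\log r\,\dd r=(d-1)\exp(d)+1$ and $\int\log(ar+b)\,\dd r=a^{-1}(ar+b)\bc{\log(ar+b)-1}$, to be evaluated at $r=1$ (where $p_{11}r+p_{01}-p_{11}=p_{01}$) and at $r=\exp(d)$ (where it equals $\exp(d)q_0^+$, so its logarithm is $\lambda^++d$). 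Multiplying by $p_{11}d\exp(d)$ and collecting terms gives the stated expression for $\cM^+$; running the identical computation with $(p_{11},p_{01},q_0^+,\lambda^+)$ replaced by $(p_{10},p_{00},q_0^-,\lambda^-)$ gives the one for $\cM^-$.

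The genuinely delicate part is the final bookkeeping rather than any single step. Because the weight $r$ coming from $\exp(dt)\,\dd t$ collapses to the constant density $1/d$, no $\exp(2d)$-terms or $(p_{01}-p_{11})^2/p_{11}^2$-terms (which would appear if one integrated $r\log(ar+b)$ instead) survive, and the shift $+d$ inside the factor $\lambda^++d$ only materialises after combining the endpoint value of $\int\log(p_{11}r+p_{01}-p_{11})\,\dd r$ with the $\int\log r\,\dd r$ contribution, so a careful sign- and constant-chase is needed to land exactly on the claimed formula. One should also keep the degenerate cases in mind: $p_{01}=0$ is covered with the convention $0\log0=0$ and the remark above on the Riemann-sum error, while $p_{10}=0$ is excluded from the $\cM^-$ statement, as noted. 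Apart from this arithmetic the argument is routine.
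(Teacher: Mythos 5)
Your proof is correct. It follows the same overall route as the paper's proof of Claim~\ref{lem_int} --- substitute the minimiser $z_j=\exp(-d(s-j)/s)$ supplied by Claims~\ref{lem_lagrange} and~\ref{lem_lagrange-}, pass from the Riemann sum to an integral with $O(1/s)$ error, and evaluate the integral via the exponential substitution --- but your opening step is a genuine and worthwhile simplification. The exact identities $\KL{z_j}{q_{0,j}^+/q_0^+}=\log(q_0^+/p_{11})+z_jw_j^+$ and $\KL{z_j}{q_{0,j}^-/q_0^-}=\log(q_0^-/p_{10})-z_jw_j^-$ do hold (they follow from $q_{0,j}^+=z_j\bc{p_{11}+(p_{01}-p_{11})\exp(-dj/s)}$, $q_0^+-q_{0,j}^+=p_{11}(1-z_j)$ and their negative-test analogues), and they collapse the objective at the optimum to a constant plus the Lagrange-constraint quantity $\frac1s\sum_jz_jw_j^\pm$. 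This means you only need one elementary antiderivative, whereas the paper integrates the full Kullback--Leibler divergence and must split it into the three pieces $\cI_{11}^+,\cI_{12}^+,\cI_2^+$; I checked that your endpoint evaluations ($p_{11}r+p_{01}-p_{11}=p_{01}$ at $r=1$ and $=\exp(d)q_0^+$ at $r=\exp(d)$, giving the $\lambda^++d$ factor) reproduce the claimed expression exactly. The one caveat, which you flag yourself, is that for $p_{01}=0$ the Riemann-sum error is $O(s^{-1}\log s)$ rather than $O(1/s)$; the paper's justification via a ``bounded derivative of the integrand'' has the same blind spot in that degenerate case, and either bound is amply sufficient for the downstream use in \Lem~\ref{lem_calc}.
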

\begin{proof}
	We perform the calculation for $\cM^+$ in detail. 
	Syntactically identical steps yield the expression for $\cM^-$, the only required change being the obvious modification of the indices of the channel probabilities.
	Substituting the optimal solution $z_j=\exp(-d(s-j)/s)$ and the definitions \eqref{eqq1} and \eqref{eqq}, \eqref{eqq1} and \eqref{eqqj} of $q^+_{1,j},q_0^+,q_{0,j}^+$ into \eqref{eqM+}, we obtain 
	\begin{align}\label{eq_lem_int_1}
		\cM^+=&\frac1s\sum_{j=1}^{s}\KL{\exp(-d(s-j)/s)}{\frac{p_{01}+(\exp(dj/s)-1)p_{11}}{p_{01}+(\exp(d)-1)p_{11}}}=\cI^++O(1/s),\qquad\mbox{where}\\
		\cI^+=&\int_0^1\KL{\exp(d(x-1))}{\frac{p_{11}(\exp(dx)-1)+p_{01}}{p_{11}(\exp(d)-1)+p_{01}}}\dd x;\nonumber
	\end{align}
	the $O(1/s)$-bound in \eqref{eq_lem_int_1} holds because the derivative of the integrand $x\mapsto\KL{\exp(d(x-1))}{\frac{p_{11}(\exp(dx)-1)+p_{01}}{p_{11}(\exp(d)-1)+p_{01}}}$ is bounded on $[0,1]$.
	Replacing the Kullback-Leibler divergence by its definition, we obtain $\cI^+=\cI^+_1+\cI^+_2$, where
	\begin{align*}
		\cI^+_1&=\int_{0}^{1}  \exp\bc{d(x-1)} \log \frac{\exp\bc{d(x-1)} ( p_{11}(\exp\bc{d}-1)+p_{01} ) }  {p_{11}(\exp\bc{dx}-1)+p_{01} } \dd x,\\
		\cI^+_2&=\int_{0}^{1} (1-\exp\bc{d(x-1)}) \log \brk{ \frac{ 1- \exp\bc{d(x-1)} }{ 1 - \frac{p_{11}(\exp\bc{dx}-1)+p_{01}}{p_{11}(\exp\bc{d}-1)+p_{01}}  }  } \dd x.
	\end{align*}
	Splitting the logarithm in the first integrand, we further obtain
	$\cI^+_1=\cI_{11}^++\cI_{12}^+$, where
\begin{align*}
	\cI^+_{11}&=\int_{0}^{1}  \exp\bc{d(x-1)} \log  \brk{ \exp\bc{d(x-1)} ( p_{11}(\exp\bc{d}-1)+p_{01} )}  \dd x,\\
\cI^+_{12}&=-\int_{0}^{1}   \exp\bc{d(x-1)} \log  \brk{ p_{11}(\exp\bc{dx}-1)+p_{01} }  \dd x.
\end{align*}
	Setting $\Lambda^+=\log( p_{11}(\exp\bc{d}-1)+p_{01})=\lambda^++d$ and introducing $ u =  \exp\bc{d(x-1)}$, we calculate
	\begin{align}
		\cI_{11}^+&=\frac{1}{d} \brk{\int_{\exp\bc{-d}}^{1} \log( u)\dd u + \int_{\exp\bc{-d}}^{1} \Lambda^+ \dd u  }=\frac{1}{d} \brk{(d+1)\exp(-d) -1  + (1 - \exp\bc{-d} ) \Lambda^+}.\label{eq_lem_int_2}
	\end{align}
	Concerning $\cI_{12}^+$, we once again substitute $ u =  \exp\bc{d(x-1)}$ to obtain
	\begin{align}\nonumber
		\cI_{12}^+&=-\frac1d\int_{\exp(-d)}^1\log\bc{p_{11}\exp(d)u+p_{01}-p_{11}}\dd u\\
				  &= - \frac{1}{d} \brk{ \bc{\frac{ p_{01}}{p_{11}}\exp\bc{-d}-\exp\bc{-d}+1}\Lambda^+ -  \exp\bc{-d}\frac{p_{01}}{p_{11}}  \log(p_{01}) + \exp\bc{-d}-1 }\label{eq_lem_int_3}
	\end{align}
	Proceeding to $\cI_2$, we obtain
	\begin{align}\nonumber
		\cI_2^+ &= \int_{0}^{1} (1- \exp\bc{d(x-1)}) \log\frac{ p_{11} (\exp\bc{d} - 1) + p_{01}   }{ p_{11}\exp(d)  }   \dd x\\
& = \int_{0}^{1} (1- \exp\bc{d(x-1)}) \log \bc{  p_{11} (\exp(d) - 1) + p_{01} } \dd x  - \int_{0}^{1} (1- \exp\bc{d(x-1)})  \log(  p_{11}\exp(d) ) \dd x\nonumber \\
&=\frac{ (\Lambda^+ -  \log(  p_{11}) - d ) ( d-1 + \exp\bc{-d} )}{d}.\label{eq_lem_int_4}
	\end{align}
	Finally, recalling that $\cI^+=\cI_1+\cI_2=\cI^+_{11}+\cI^+_{12}+\cI_2^+$ and combining \eqref{eq_lem_int_1}--\eqref{eq_lem_int_4} and simplifying, we arrive at the desired expression for $\cM^+$.
\end{proof}

\begin{proof}[Proof of \Lem~\ref{lem_calc}]
	We have
	\begin{align*}
		I(\vX,\vY)&=H(\vY)-H(\vY|\vX)=h(p_{00}\exp(-d)+p_{10}(1-\exp(-d)))-\exp(-d)h(p_{00})-(1-\exp(-d))h(p_{11}).
	\end{align*}
	Hence, Claim~\ref{lem_int} yields
	\begin{align*}
		p_{11}\cM^++p_{10}\cM^-
		=&-\frac{h(p_{00})}{d\exp(d)}-\frac{(1-\exp(-d))h(p_{11})}{d}+h(p_{11})\\
		 &+\frac{(p_{11}-p_{01})\lambda^++(p_{10}-p_{00})\lambda^-}{d\exp(d)}+\frac{d-1}d\brk{p_{11}\lambda^++p_{10}\lambda^-}\\
		=&-\frac{h(p_{00})}{d\exp(d)}-\frac{(1-\exp(-d))h(p_{11})}{d}-\KL{p_{11}}{q_0^+}-\frac{\lambda^+}d\exp(\lambda^+)-\frac{\lambda^-}d\exp(\lambda^-)\\
		=&\frac{I(\vX,\vY)}d-\KL{p_{11}}{q_0^+},
	\end{align*}
	as desired.
\end{proof}

\section{Analysis of the exact recovery algorithm}\label{sec_exact}

\noindent
In this section we establish \Prop s~\ref{lem_yz} and~\ref{prop_endgame}, which are the building blocks of the proof of \Thm~\ref{thm_alg_ex}.
We continue to work with the spatially coupled design $\Gsc$ from \Sec~\ref{sec_gsc} and keep the notation and assumptions from \eqref{eqell}--\eqref{eqtotaltests}.

\subsection{Proof of \Prop~\ref{lem_yz}}\label{sec_lem_yz}

Assume that $c>\cbound(d,\theta)+\eps$ and let $c'=\cbound(d,\theta)+\eps/2$.
Since $c> c'+\eps/2$ the definitions \eqref{eqcex} of $\cbound(d,\theta)$ and \eqref{eqYinterval} of $\cY(c',d,\theta)$ ensure that for small enough $\delta>0$ we can find an open interval $\cI\subset \cY(c',d,\theta)$ with rational boundary points such that {\bf Z1} is satisfied.

Let $\bar\cI$ be the closure of $\cI$.
Then by the definition of $\cbound(d,\theta)$ there exists a function $y\in\bar\cI\mapsto z_y\in[p_{01},p_{11}]$ such that for all $y\in\bar\cI$ we have
\begin{align}
	cd(1-\theta)\brk{\KL y{\exp(-d)}+y\KL{z_y}{p_{11}}}&=\theta.\label{eq_lem_yz_1}
\end{align}
In fact, because the Kullback-Leibler divergence is strictly convex the equation \eqref{eq_lem_yz_1} defines $z_y$ uniquely.
The inverse function theorem implies that the function $y\mapsto z_y$ is continuous and therefore uniformly continuous on $\bar\cI$.
Additionally, once again because the Kullback-Leibler divergence is convex and $c>\cbound(d,\theta)$, for all $y\in\bar\cI$ we have
\begin{align*}
	cd(1-\theta)\brk{\KL y{\exp(-d)}+y\KL{z_y}{p_{01}}}&>1.
\end{align*}
Therefore, there exists $\hat\delta=\hat\delta(c,d,\theta)>0$ such that for all $y\in\bar\cI$ we have
\begin{align}\label{eq_lem_yz_2}
	cd(1-\theta)\brk{\KL y{\exp(-d)}+y\KL{z_y}{p_{01}}}&>1+\hat\delta.
\end{align}
Combining \eqref{eq_lem_yz_1} and \eqref{eq_lem_yz_2}, we find a continuous $y\in\bar\cI\mapsto \hat z_y$ such that for small enough $\delta>0$ for all $y\in[0,1]$ we have
\begin{align}\label{eq_lem_yz_3}
	cd(1-\theta)\brk{\KL y{\exp(-d)}+y\KL{\hat z_y}{p_{11}}}&>\theta+2\delta,\qquad\mbox{and}\\
	cd(1-\theta)\brk{\KL y{\exp(-d)}+y\KL{\hat z_y}{p_{01}}}&>1+2\delta.\label{eq_lem_yz_4}
\end{align}
Additionally, by uniform continuity for any given $0<\eps'<\delta/2$ (that may depend arbitrarily on $\delta$ and $\cI$) we can choose $\delta'>0$ small enough so that
\begin{align}\label{eq_lem_yz_5}
	|\hat z_y-\hat z_{y'}|&<\eps'/2&\mbox{for all $y,y'\in\bar\cI$ with $|y-y'|<\delta'$}.
\end{align}

Finally, let $y_0,\ldots,y_\nu$ with $\nu=\nu(\delta',\eps')>0$ be a large enough number of equally spaced points in $\bar\cI=[y_0,y_\nu]$.
Then for each $i$ we pick $\cZ(y_i)\in[p_{01},p_{11}]\cap\QQ$ such that $|\hat z_{y_i}-\cZ(y_i)|$ is small enough.
Extend $\cZ$ to a step function $\bar\cI\to\QQ\cap[0,1]$ by letting $\cZ(y)=\cZ(y_{i-1})$ for all $y\in(y_{i-1},y_i)$ for $1\leq i\leq \nu$.
Since $y\mapsto \hat z_y$ is uniformly continuous, we can choose $\nu$ large enough so that \eqref{eq_lem_yz_3}--\eqref{eq_lem_yz_5} imply
\begin{align*}
	cd(1-\theta)\brk{\KL y{\exp(-d)}+y\KL{\cZ(y)}{p_{11}}}&>\theta+\delta&&\mbox{for all }y\in\bar\cI,\\
	cd(1-\theta)\brk{\KL y{\exp(-d)}+y\KL{\cZ(y)}{p_{01}}}&>1+\delta&&\mbox{for all }y\in\bar\cI,\\
	|\cZ(y)-\cZ(y')|&<\eps'&&\mbox{for all $y,y'\in\bar\cI$ such that $|y-y'|<\delta'$},
\end{align*}
as claimed.

\subsection{Proof of \Prop~\ref{prop_endgame}}\label{sec_prop_endgame}
As in the proof of \Prop~\ref{prop_dist_psi} in \Sec~\ref{sec_prop_dist_psi} we will first investigate an idealised scenario where we assume that the ground truth $\SIGMA$ is known.
Then we will use the expansion property provided by \Lem~\ref{lemma_endgame_misclassified} to connect this idealised scenario with the actual steps of the algorithm.

In order to study the idealised scenario, for $x\in V[i]$ and $j\in[s]$ we introduce
\begin{align*}
	\vY_{x,j}&= \abs{ \cbc{ a \in  F[i+j-1]\cap\partial x: V_1\cap\partial a\subseteq\cbc x}},&
	\vZ_{x,j}&= \abs{ \cbc{ a \in  F^+[i+j-1]\cap\partial x: V_1\cap\partial a\subseteq\cbc x}},\\
	\vY_x&= \sum_{j=1}^s\vY_{x,j},&\vZ_x&= \sum_{j=1}^s\vZ_{x,j}.
\end{align*}
Thus, $\vY_{x,j}$ is the number of untainted tests in compartment $F[i+j-1]$ that contain $x$, i.e.\ test that do not contain another infected individual.
Moreover, $\vZ_{x,j}$ is the number of positively displayed untainted tests.
Finally, $\vY_x,\vZ_x$ are the sums of these quantities on $j\in[s]$.
The following lemma provides an estimate of the number of individuals $x$ with a certain value of $\vY_x$.

\begin{lemma}\label{lem_ykl}
	\whp for all $0\leq Y\leq\Delta$ and all $i\in[\ell]$ we have
	\begin{align}\label{eqlem_ykl1}
		\sum_{x\in V_0[i]}\vecone\{\vY_x=Y\}\leq n\exp\bc{-\Delta\KL{Y/\Delta}{\exp(-d)}+o(\Delta)},\\
		\sum_{x\in V_1[i]}\vecone\{\vY_x=Y\}\leq k\exp\bc{-\Delta\KL{Y/\Delta}{\exp(-d)}+o(\Delta)}.\label{eqlem_ykl2}
	\end{align}
\end{lemma}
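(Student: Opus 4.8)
The plan is to show that for a fixed individual $x\in V[i]$ the quantity $\vY_x$ is, up to negligible corrections, a binomial random variable with parameters $\Delta$ and $\exp(-d)$, and then to take a union bound over all $x$ in the relevant class (all of $V_0$, resp.\ all of $V_1$). The two estimates \eqref{eqlem_ykl1} and \eqref{eqlem_ykl2} differ only in the prefactor, which reflects $|V_0|=n-k\le n$ versus $|V_1|=k$; otherwise the argument is identical. I would first condition on the high-probability event from \Prop~\ref{prop_basic} (properties {\bf G1--G2}) so that all compartment sizes and test-type counts are as expected; outside this event we lose only $o(n^{-2})$, which is harmless.

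The key step is to compute, for $x\in V[i]$ and $j\in[s]$, the conditional distribution of $\vY_{x,j}$, the number of untainted tests in $F[i+j-1]$ containing $x$. By the construction of $\Gsc$, individual $x$ joins $\Delta/s$ tests in $F[i+j-1]$ drawn uniformly without replacement, and a test $a\in F[i+j-1]$ is untainted (for $x$) iff it contains no infected individual from $V_1[i+j-s]\cup\cdots\cup V_1[i+j-1]$ other than $x$. Since the set of infected individuals other than $x$ is fixed and independent of $x$'s choices, conditioning on the number of such untainted tests in $F[i+j-1]$ (which by the calculation already performed in Claim~\ref{Claim_Qxj}/the proof of \Prop~\ref{prop_plausible} equals $(1+O(n^{-\Omega(1)}))\,(m/\ell)\exp(-d)$ up to the small correction from whether $x$ itself is counted) makes $\vY_{x,j}$ a hypergeometric variable with mean $(1+o(1))(\Delta/s)\exp(-d)$. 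The variables $(\vY_{x,j})_{j\in[s]}$ are mutually independent because $x$ chooses its tests in the $s$ compartments independently, so $\vY_x=\sum_j\vY_{x,j}$ has mean $(1+o(1))\Delta\exp(-d)$ and concentrates. Applying the hypergeometric Chernoff bound (\Lem~\ref{lem_hyperchernoff}) to each $\vY_{x,j}$ and combining — or, more cleanly, bounding $\vY_x$ from above and below by binomial variables and invoking \Lem~\ref{lem_chernoff} — gives
\begin{align*}
	\pr\brk{\vY_x=Y}\le\exp\bc{-\Delta\KL{Y/\Delta}{\exp(-d)}+o(\Delta)}
\end{align*}
uniformly over $0\le Y\le\Delta$, where the $o(\Delta)$ absorbs the $O(n^{-\Omega(1)})$ relative errors in the parameters, the discrepancy between hypergeometric and binomial, and the $s=O(\log\log n)$ summands. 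Here one uses $\Delta=\Theta(\log n)$ so that $o(\Delta)$ error terms of size $O(\log^{c}n)$ with $c<1$, or $O(1)$ per compartment, are indeed negligible.

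Finally I would sum over $x$. Markov's inequality gives $\sum_{x\in V_0[i]}\vecone\{\vY_x=Y\}\le|V_0[i]|\exp(-\Delta\KL{Y/\Delta}{\exp(-d)}+o(\Delta))$ in expectation, hence \whp after another union bound over the at most $(\Delta+1)\ell=\mathrm{poly}(\log n)$ choices of $(Y,i)$ — the deviation probability we are bounding is at most $\exp(-\Delta\cdot\mathrm{const})=n^{-\Omega(1)}$ once $Y/\Delta$ is bounded away from $\exp(-d)$, and for $Y/\Delta$ near $\exp(-d)$ the bound $\KL{Y/\Delta}{\exp(-d)}\ge0$ is trivially true with the right-hand side $\ge|V_0[i]|$. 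Using $|V_0[i]|\le n$ and $|V_1[i]|\le k$ (the latter from {\bf G1}) yields \eqref{eqlem_ykl1} and \eqref{eqlem_ykl2} respectively. The main obstacle is the bookkeeping in the first step: verifying carefully that conditioning on the (random) number of untainted tests in each compartment does not introduce dependencies across $j$ and does not shift the hypergeometric parameters by more than an $n^{-\Omega(1)}$ relative factor, so that all the error terms genuinely collapse into the stated $o(\Delta)$; this is routine given \Prop~\ref{prop_basic} and the independence built into $\Gsc$, but it is where the care lies.
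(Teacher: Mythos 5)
Your proposal is correct and follows essentially the same route as the paper: condition on {\bf G1--G2}, observe that given the rest of the design the variables $\vY_{x,j}$ are independent hypergeometric variables with mean $(1+o(1))(\Delta/s)\exp(-d)$ (the paper formalises your "the other individuals' choices are independent of $x$'s" remark via an explicit delete-and-reinsert coupling $\Gsp-x\to\Gsp'$), apply \Lem~\ref{lem_hyperchernoff} to get $\pr[\vY_x=Y]\leq\exp(-\Delta\KL{Y/\Delta}{\exp(-d)}+o(\Delta))$ uniformly in the status of $x$, and finish with Markov's inequality and a union bound over the polylogarithmically many pairs $(Y,i)$. The only cosmetic difference is that the relevant count of untainted tests comes from the {\bf G2} bound on $F_0[i+j-1]$ (adjusted by at most $\Delta$ for removing $x$) rather than from Claim~\ref{Claim_Qxj}, which counts a slightly different set; this does not affect the argument.
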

\begin{proof}
	Let $1\leq i\leq\ell$ and consider any $x\in V[i]$.
	Further, obtain $\Gsp-x$ from $\Gsp$ by deleting individual $x$ (and, of course, removing $x$ from all tests).
	Additionally, obtain $\Gsp'$ from $\Gsp-x$ by re-inserting $x$ and assigning $x$ to $\Delta/s$ random tests in the compartments $F[i+j-1]$ for $j\in[s]$ as per the construction of the spatially coupled test design.
	Then the random test designs $\Gsp$ and $\Gsp'$ are identically distributed.

	Let $\fE$ be the event that $\Gsp$ enjoys properties {\bf G1} and {\bf G2} from \Prop~\ref{prop_basic}.
	Then \Prop~\ref{prop_basic} shows that
	\begin{align}\label{eq_lem_ykl_1}
		\pr\brk{\fE}&=1-o(n^{-2}).
	\end{align}
	Moreover, given $\fE$ for every $j\in[s]$ the number of tests in $F[i+j-1]$ that contain no infected individual aside from $x$ satisfies
	\begin{align}\label{eq_lem_ykl_2}
		\sum_{a\in F[i+j-1]}\vecone\{\partial a\cap V_1\setminus\{x\}=\emptyset\}=(1+O(n^{-\Omega(1)}))\frac m\ell\exp(-d);
	\end{align}
	this follows from the bounds on $F_0[i+j-1]$ provided by {\bf G2} and the fact that discarding $x$ can change the numbers of actually positive/negative tests by no more than $\Delta$.

	Now consider the process of re-inserting $x$ to obtain $\Gsp'$.
	Then \eqref{eq_lem_ykl_2} shows that given $\fE$ we have
	\begin{align*}
	\vY_{x,j}&\sim\Hyp\bc{\frac m\ell+O(1),(1+O(n^{-\Omega(1)}))\frac m\ell\exp(-d),\frac\Delta s}&&(j\in[s]).
	\end{align*}
	These hypergeometric variables are mutually independent given $\Gsp-x$.
	Therefore, \Lem~\ref{lem_hyperchernoff} implies that on $\fE$,
	\begin{align}\label{eq_lem_ykl_3}
		\pr\brk{\vY_x=Y\mid\Gsp-x}&\leq\exp\bc{-\Delta\KL{Y/\Delta}{\exp(-d)}+o(\Delta)}.
	\end{align}
	This estimate holds independently of the infection status $\SIGMA_x$.
	Thus, the assertion follows from \eqref{eq_lem_ykl_1}, \eqref{eq_lem_ykl_3} and Markov's inequality.
\end{proof}

As a next step we argue that for $c$ beyond the threshold $\cbound(d,\theta)$ the function $\cZ$ from \Prop~\ref{lem_yz} separates the infected from the uninfected individuals \whp.

\begin{lemma} \label{lem_distphixstar}
	Assume that $c>c^*(d,\theta)+\eps$.
	Let $\cI=(l,r),\delta>0$ be the interval and the number from \Prop~\ref{lem_yz}, choose $\eps'>0$ sufficiently small and let $\delta',\cZ$ be such that {\bf Z1}--{\bf Z4} are satisfied.
	Then \whp the following statements are satisfied.
	\begin{enumerate}[(i)]
		\item For all $x\in V_1$ we have $\vY_x/\Delta\in(l+\eps',r-\eps')$ and $\vZ_x/\Delta>\cZ(\vY_x/\Delta)+3\eps'$.
		\item For all $x\in V_0$ with $\vY_x/\Delta\in\cI$ we have $\vZ_x/\Delta<\cZ(\vY_x/\Delta)-3\eps'$.
	\end{enumerate}
\end{lemma}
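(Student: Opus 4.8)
The plan is to establish both parts by a first-moment computation: I refine \Lem~\ref{lem_ykl} so as to control the pair $(\vY_x,\vZ_x)$ jointly, and then match the resulting large-deviations rates against the conditions {\bf Z1}--{\bf Z3} furnished by \Prop~\ref{lem_yz}. As in the proof of \Lem~\ref{lem_ykl}, on the event $\fE$ that $\Gsp$ enjoys {\bf G1}--{\bf G2} (which has probability $1-o(n^{-2})$) the re-insertion coupling (delete $x$, re-attach it to $\Delta/s$ fresh tests per compartment and re-sample the channel noise) shows that, conditionally on $\Gsp-x$, the family $(\vY_{x,j})_{j\in[s]}$ is independent with each $\vY_{x,j}$ hypergeometric of mean $\frac\Delta s\exp(-d)(1+o(1))$. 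Moreover every untainted test counted by $\vY_{x,j}$ is actually negative in $\Gsp-x$, so after re-attaching $x$ it is actually positive iff $\SIGMA_x=1$; since the noise is applied independently, $\vZ_x$ given $\vY_x$ is $\Bin(\vY_x,p_{11})$ if $x\in V_1$ and $\Bin(\vY_x,p_{01})$ if $x\in V_0$. Combining the hypergeometric and binomial Chernoff bounds (\Lem s~\ref{lem_hyperchernoff} and~\ref{lem_chernoff}) with Markov's inequality, exactly as in the proof of \Lem~\ref{lem_ykl}, this yields that \whp for all $i\in[\ell]$, $0\leq Y\leq\Delta$ and $0\leq Z\leq Y$,
\begin{align*}
	\sum_{x\in V_1[i]}\vecone\{\vY_x=Y,\ \vZ_x=Z\}&\leq k\exp\bc{-\Delta\KL{Y/\Delta}{\exp(-d)}-Y\KL{Z/Y}{p_{11}}+o(\Delta)},\\
	\sum_{x\in V_0[i]}\vecone\{\vY_x=Y,\ \vZ_x=Z\}&\leq n\exp\bc{-\Delta\KL{Y/\Delta}{\exp(-d)}-Y\KL{Z/Y}{p_{01}}+o(\Delta)},
\end{align*}
where the binomial rate is read as $0$ on the ``easy'' side of the respective conditional mean. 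Since $\Delta=cd(1-\theta)\log n\,(1+o(1))$ and $k=n^{\theta+o(1)}$, these rates line up with the left-hand sides of {\bf Z1}--{\bf Z3}.

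First I would prove the opening assertion of (i), namely $\vY_x/\Delta\in(l+\eps',r-\eps')$ for all $x\in V_1$, using only the $\vY_x$-marginal, i.e.\ \Lem~\ref{lem_ykl}. For $y=Y/\Delta\notin(l+\delta,r-\delta)$, condition {\bf Z1} gives $cd(1-\theta)\KL y{\exp(-d)}>\theta+\delta$, whence $k\exp(-\Delta\KL y{\exp(-d)}+o(\Delta))\leq n^{-\delta+o(1)}<1$ for $n$ large. A union bound over the $O(\log n)$ admissible values of $Y$ and the $\ell=O(\sqrt{\log n})$ compartments then shows that \whp no $x\in V_1$ has $\vY_x/\Delta\notin(l+\delta,r-\delta)\subseteq(l+\eps',r-\eps')$, provided $\eps'<\delta$.

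The remaining assertion of (i) and assertion (ii) follow from the joint bounds combined with {\bf Z2} and {\bf Z3}. Write $z=\vZ_x/\vY_x$ for the empirical positive-display frequency among the untainted tests at $x$; by the previous paragraph, on a \whp event $\vY_x/\Delta=:y$ lies in $\cI$ for every $x\in V_1$. For (i), the bad event is that some $x\in V_1$ has $z\leq\cZ(y)+O(\eps')$; since $z\mapsto\KL z{p_{11}}$ is decreasing on $[p_{01},p_{11}]$ and Lipschitz on compact subintervals, this forces $\KL z{p_{11}}\geq\KL{\cZ(y)}{p_{11}}-O(\eps')$, so by {\bf Z2} the exponent in the first joint bound is at least $(\theta+\delta-O(\eps'))\log n+o(\log n)$; hence the expected number of such $x$ is $n^{-\delta+O(\eps')+o(1)}=o(1)$ once $\eps'$ is small enough, and a union bound over the $O(\log^2 n)$ pairs $(Y,Z)$ and the $\ell$ compartments completes (i). Assertion (ii) is symmetric: for $x\in V_0$ with $y=\vY_x/\Delta\in\cI$ the bad event $z\geq\cZ(y)-O(\eps')$ forces $\KL z{p_{01}}\geq\KL{\cZ(y)}{p_{01}}-O(\eps')$ because $z\mapsto\KL z{p_{01}}$ is increasing on $[p_{01},1]$, and now {\bf Z3} — whose threshold is $1+\delta$ rather than $\theta+\delta$, precisely because $|V_0|=\Theta(n)$ instead of $\Theta(k)$ — gives expected count $n^{-\delta+O(\eps')+o(1)}=o(1)$. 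Note that {\bf Z4} plays no role here; it enters only the next step, where these idealised counts are transferred to the algorithm's estimates $Y_x(\tau),Z_x(\tau)$ via \Lem~\ref{lemma_endgame_misclassified}.

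I expect the main technical obstacle to be making the joint re-insertion coupling fully rigorous — in particular verifying that conditioning on $\Gsp-x$ genuinely leaves the displayed results of the untainted tests at $x$ as fresh independent $\Be(p_{\SIGMA_x1})$ variables, so that indeed $\vZ_x\mid\vY_x\sim\Bin(\vY_x,p_{\SIGMA_x1})$ — and, relatedly, keeping the directions of the one-sided Chernoff bounds consistent with the monotonicity of $z\mapsto\KL z{p_{01}}$ and $z\mapsto\KL z{p_{11}}$ throughout. Everything else is routine bookkeeping with $\Delta=cd(1-\theta)\log n(1+o(1))$ and $k=n^{\theta+o(1)}$.
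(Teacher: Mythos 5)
Your proposal is correct and follows essentially the same route as the paper: first dispose of the $\vY_x$-marginal via \Lem~\ref{lem_ykl} and {\bf Z1}, then exploit that $\vZ_x$ given $\vY_x$ is binomial with success probability $p_{\SIGMA_x 1}$ (all tests containing an infected $x$ are actually positive; untainted tests at an uninfected $x$ are actually negative), apply the Chernoff bound, match the combined exponent against {\bf Z2}/{\bf Z3}, and union-bound over the $O(\log^2 n)$ pairs $(Y,Z)$ and compartments. The coupling concern you flag is handled in the paper exactly as you anticipate, via the independence of the channel noise across tests.
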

\begin{proof}
	Let $\fE$ be the event that the bounds \eqref{eqlem_ykl1}--\eqref{eqlem_ykl2} hold for all $0\leq Y\leq\Delta$.
	Then \eqref{eqlem_ykl2} and \Prop~\ref{lem_yz}, {\bf Z1} show that \whp $\vY_x/\Delta\in(l+\eps',r-\eps')$ for all $x\in V_1$, provided $\eps'>0$ is small enough.
	Moreover, for a fixed $0\leq Y\leq\Delta$ such that $Y/\Delta\in\cI$ and $i\in[\ell]$ let $\vX_1(Y)$ be the number of variables $x\in V_1[i]$ such that $\vY_x=Y$ and $\vZ_x\leq\Delta\cZ(Y/\Delta)+3\eps'\Delta$.
	Since $x$ itself is infected, all tests $a\in\partial x$ are actually positive. 
	Therefore, $a$ is displayed positively with probability $p_{11}$.
	As a consequence, \Lem~\ref{lem_chernoff} shows that
	\begin{align}\label{eq_lem_distphixstar_1}
		\pr\brk{\vZ_x\leq\Delta\cZ(\vY_x/\Delta)+3\eps'\Delta\mid\vY_x=Y}&\leq\exp\bc{-Y\KL{\cZ(Y/\Delta)+3\eps'\Delta/Y}{p_{11}}+o(\Delta)}.
	\end{align}
	Combining \eqref{eqlem_ykl2} and \eqref{eq_lem_distphixstar_1}, recalling that $\kdef$ and choosing $\eps'>0$ sufficiently small, we obtain
	\begin{align}\nonumber
		\ex&\brk{\sum_{x\in V_1[i]}\vecone\cbc{\vY_x=Y,\,\vZ_x\leq\Delta\cZ(\vY_x/\Delta)+3\eps'\Delta}\mid\fE}\\
		   &\leq k\exp\bc{-\Delta\KL{Y/\Delta}{\exp(-d)}-Y\KL{\cZ(Y/\Delta)+3\eps'\Delta/Y}{p_{11}}+o(\Delta)}\\&\leq n^{-\Omega(1)}\qquad[\mbox{due to \Prop~\ref{lem_yz}, \bf Z2}].\label{eq_lem_distphixstar_2}
	\end{align}
	Taking a union bound on the $O(\log^2n)$ possible combinations $(i,Y)$, we see that (i) follows from \eqref{eq_lem_distphixstar_2}.
	A similar argument based on \Prop~\ref{lem_yz}, {\bf Z3} yields (ii).
\end{proof}
			 
\begin{proof}[Proof of \Prop~ \ref{prop_endgame}]
	For $t = 1 \dots \ceil{\ln n}$ consider the set of misclassified individuals after $t-1$ iterations:
	\begin{align*}
		\cM_t = \cbc{ x \in V[s+1]\cup\cdots V[\ell]: \tau^{(t)}_x \neq \SIGMA_x}.
	\end{align*}
	\Prop s~\ref{prop_plausible} and~\ref{prop_dist_psi} show that \whp the size of the initial set satisfies
	\begin{align}\label{eqprop_endgame1}
		\abs{\cM_1} \leq k\exp\bc{-\Omega(\log^{1/8} n)}.
	\end{align}

	We are going to argue by induction that $|\cM_t|$ decays geometrically.
	Apart from the bound \eqref{eqprop_endgame1}, this argument depends on only two conditions.
	First, that the random graph $\Gsp$ indeed enjoys the expansion property from \Lem~\ref{lemma_endgame_misclassified}.
	Second, that (i)--(ii) from \Lem~\ref{lem_distphixstar} hold.
	Let $\fE$ be the event that these two conditions are satisfied, and that \eqref{eqprop_endgame1} holds.
	\Prop s~\ref{prop_plausible} and~\ref{prop_dist_psi} and \Lem s~\ref{lemma_endgame_misclassified} and~\ref{lem_distphixstar} show that $\pr\brk\fE=1-o(1)$.

	To complete the proof we are going to show by induction on $t\geq2$ that on $\fE$,
	\begin{align}\label{eqprop_endgame2a}
		|\cM_t|\leq|\cM_{t-1}|/3.
	\end{align}
	Indeed, consider the set	
	\begin{align*}
		\cM_{t}^*=\cbc{x\in V[s+1]\cup\cdots V[\ell]: \sum_{a\in\partial x\setminus F[0]}\abs{\partial a \cap \cM_{t-1}\setminus\cbc x} \geq \Delta/\log\log n}.
	\end{align*}
	Since by \eqref{eqprop_endgame1} and induction we know that $\abs{\cM_{t-1}} \leq k\exp\bc{-\Omega(\log^{1/8} n)}$, the expansion property from \Lem~\ref{lemma_endgame_misclassified} implies that $\cM_t^*\leq\cM_{t-1}/3$.
	Therefore, to complete the proof of \eqref{eqprop_endgame2a} it suffices to show that $\cM_{t}\subseteq\cM_{t}^*$.

	To see this, suppose that $x\in\cM_{t}$.
	\begin{description}
		\item[Case 1: $x\in V_1$ but $Y_x(\tau^{(t-1)})/\Delta\not\in\cI$]
			\Lem~\ref{lem_distphixstar} (i) ensures that $\vY_x/\Delta\in(l+\eps',r-\eps')$.
			Therefore, the case $Y_x(\tau^{(t-1)})/\Delta\not\in\cI$ can occur only if at least $\eps'\Delta$ tests $a\in\partial x$ contain a misclassified individual $x'\in\cM_{t-1}$.
			Hence, $x\in\cM_t^*$.
		\item[Case 2: $x\in V_1$ and $Y_x(\tau^{(t-1)})/\Delta\not\in\cI$ but $Z_x(\tau^{(i)})/\Delta\leq\cZ(Y_x(\tau^{(i)})/\Delta)$] by \Lem~\ref{lem_distphixstar} (i) we have $\vZ_x/\Delta>\cZ(\vY_x/\Delta)+2\eps'$.
			Thus, if $Z_x(\tau^{(t-1)})/\Delta\leq\cZ(Y_x(\tau^{(t-1)})/\Delta)$, then by the continuity property {\bf Z4} we have $|Y_x(\tau^{(t-1)})-\vY_x|>\eps'\Delta$.
			Consequently, as in Case~1 we have $x\in\cM_t^*$.
		\item[Case 3: $x\in V_0$]
			as in the previous cases, due to {\bf Z4} and \Lem~\ref{lem_distphixstar} (ii) the event $x\in\cM_t$ can occur only if $|\vY_x-Y_v(\tau^{(t-1)})|>\eps'\Delta$.
			Thus, $x\in\cM_t^*$.
	\end{description}
		Hence, $\cM_{t}\subseteq\cM_{t}^*$, which completes the proof.
\end{proof}

\section{Lower bound on the constant-column design}\label{sec_cc_lower}

\subsection{Proof of \Prop~\ref{prop_cc_ground}}\label{sec_cc_ground}

The following lemma is an adaptation of \Prop~\ref{prop_basic} ({\bf G2}) to $\Gcc$.

\begin{lemma}\label{lemma_basic}
	The random graph $\Gcc$ enjoys the following properties with probability $1-o(n^{-2})$:
	\begin{align}
		m\exp(-d)p_{00}-\sqrt m\ln^3 n&\leq\abs{F_0^-}\leq m\exp(-d)p_{00}+\sqrt m\ln^3 n,\label{eqCC3_1}\\
		m\exp(-d)p_{01}-\sqrt m\ln^3 n&\leq\abs{F_0^+}\leq m\exp(-d)p_{01}+\sqrt m\ln^3 n,\label{eqCC3_2}\\
		m(1-\exp(-d))p_{10}-\sqrt m\ln^3 n&\leq\abs{F_1^-}\leq m(1-\exp(-d))p_{10}+\sqrt m\ln^3 n,\label{eqCC3_3}\\
		m(1-\exp(-d))p_{11}-\sqrt m\ln^3 n&\leq\abs{F_1^+}\leq m(1-\exp(-d))p_{11}+\sqrt m\ln^3 n.\label{eqCC3_4}
	\end{align}
\end{lemma}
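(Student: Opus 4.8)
The plan is to mirror the proof of \Prop~\ref{prop_basic}~(\textbf{G2}), which in fact simplifies on $\Gcc$ since there is no compartment structure and $|V_1|=k$ deterministically. The first step is to compute $\Erw\abs{F_0}$. Every individual joins $\Delta$ tests chosen uniformly without replacement and the choices of distinct individuals are independent, so for any test $a$ and individual $x$ we have $\pr\brk{x\in\partial a}=\binom{m-1}{\Delta-1}\binom m\Delta^{-1}=\Delta/m$, and hence $\pr\brk{a\in F_0}=(1-\Delta/m)^k$. Recalling $d=k\Delta/m$ from \eqref{eqcndn} and expanding $k\log(1-\Delta/m)=-d+O(1/k)$, this gives $\pr\brk{a\in F_0}=\exp(-d)+O(1/k)$, so $\Erw\abs{F_0}=m\exp(-d)+O(m/k)=m\exp(-d)+O(\log n)$ since $m=\Theta(k\log(n/k))$. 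In particular the mean is within $\sqrt m\log^3 n$ of $m\exp(-d)$.

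Next I would establish concentration of $\abs{F_0}$ (equivalently of $\abs{F_1}=m-\abs{F_0}$) around that mean. The quantity $\abs{F_0}$ is a function of the $n$ independent neighbourhoods $(\partial x)_{x\in V}$; reassigning a healthy individual $x\in V_0$ leaves every actual test result unchanged, while reassigning a single infected individual alters $\abs{F_0}$ by at most $O(\Delta)$. The bounded-differences inequality (Azuma--Hoeffding), applied with the $k$ relevant coordinates, therefore yields
\begin{align*}
	\pr\brk{\,\abs{\,\abs{F_0}-\Erw\abs{F_0}\,}>\sqrt m\log^2 n\,}\leq 2\exp\bc{-\Omega\bc{\frac{m\log^4 n}{k\Delta^2}}}=o(n^{-2}),
\end{align*}
because $m/k=\Theta(\log(n/k))$ and $\Delta=\Theta(\log n)$ make the exponent $\Omega(\log^3 n)$. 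Combined with the mean estimate this gives $\abs{F_0}=m\exp(-d)+O(\sqrt m\log^2 n)$ and $\abs{F_1}=m(1-\exp(-d))+O(\sqrt m\log^2 n)$ with probability $1-o(n^{-2})$.

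Finally, conditioning on $\abs{F_0}$ and $\abs{F_1}$, the displayed results arise by passing each test independently through the $\vec p$-channel, so that $\abs{F_0^-}\disteq\Bin(\abs{F_0},p_{00})$, $\abs{F_0^+}\disteq\Bin(\abs{F_0},p_{01})$, $\abs{F_1^-}\disteq\Bin(\abs{F_1},p_{10})$ and $\abs{F_1^+}\disteq\Bin(\abs{F_1},p_{11})$, exactly as in \eqref{eqG35}. \Lem~\ref{lem_chernoff} shows that each of these four binomials lies within $\sqrt m\log^2 n$ of its conditional mean with probability $1-o(n^{-2})$; a union bound over the four events together with the concentration of $\abs{F_0},\abs{F_1}$ from the previous step then yields \eqref{eqCC3_1}--\eqref{eqCC3_4} comfortably inside the stated slack $\sqrt m\log^3 n$.

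I do not expect a genuine obstacle: this is a routine first-moment-plus-concentration estimate and essentially the simplest instance of \Prop~\ref{prop_basic}~(\textbf{G2}). The only point that needs some attention is bookkeeping the three sources of error --- the $O(\log n)$ correction to $\Erw\abs{F_0}$, the Azuma fluctuation and the Chernoff fluctuation of the channel --- so that their sum stays below $\sqrt m\log^3 n$, and checking that in the relevant regime $\Delta=\Theta(\log n)$, $m=\Theta(k\log(n/k))$ all the tail probabilities are $o(n^{-2})$. (For atypically large $\Delta$ the Azuma step would need to be replaced by a cruder argument, but such $\Delta$ is irrelevant to the application in \Prop~\ref{prop_cc_ground}, or else the bounds hold trivially because $m\exp(-d)$ is dwarfed by the slack.)
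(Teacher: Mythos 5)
Your proposal is correct and follows exactly the route the paper intends: the paper gives no separate argument for \Lem~\ref{lemma_basic} beyond noting that it is proved like \Prop~\ref{prop_basic} (\textbf{G2}), and your computation of $\Erw|F_0|=m\exp(-d)+O(\log n)$, the Azuma--Hoeffding step over the $k$ infected coordinates, and the conditional Chernoff bound for the channel noise are precisely that argument specialised to $\Gcc$, with the error bookkeeping done correctly.
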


\noindent
The proof of \Lem~\ref{lemma_basic} is similar to that of \Prop~\ref{prop_basic} (see \Sec~\ref{sec_prop_basic}).

\begin{proof}[Proof of \Prop~\ref{prop_cc_ground}]
	The definition \eqref{eqpsidef} of the weight functions ensures that
	\begin{align*}
		\log\psi_{\Gcc,\dSIGMA}(\SIGMA)&=|F_0^-|\log p_{00}+|F_0^+|\log p_{01}+|F_1^-|\log p_{10}+|F_1^+|\log p_{11}.
	\end{align*}
	Substituting in the estimates from \eqref{eqCC3_1}--\eqref{eqCC3_4} completes the proof.
\end{proof}

\subsection{Proof of \Prop~\ref{prop_XYZ}}\label{sec_cc_stab}

Let $\cX_r(Y)$ be the set of individuals $x\in V_r$ such that
\begin{align*}
	\sum_{a\in\partial x}\vecone\cbc{\partial a\setminus \{x\}\subseteq V_0}=Y.
\end{align*}
Hence, $x$ participates in precisely $Y$ tests that do not contain another infected individual.

\begin{lemma}\label{lem_EXY}
	Let $y\in\cY(c,d,\theta)$ be such that $y\Delta$ is an integer.
	Then \whp we have
	\begin{align}\label{eq_lem_EXY_X0}
		|\cX_0(y\Delta)|&=n\exp\bc{-\Delta\KL y{\exp(-d)}+o(\Delta)},\\
		|\cX_1(y\Delta)|&=k\exp\bc{-\Delta\KL y{\exp(-d)}+o(\Delta)}.\label{eq_lem_EXY_X1}
	\end{align}
\end{lemma}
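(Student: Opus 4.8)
The plan is to prove \eqref{eq_lem_EXY_X0} and \eqref{eq_lem_EXY_X1} by a standard first-moment-plus-concentration argument, exactly mirroring the computation of $\vY_x$ in the proof of \Lem~\ref{lem_ykl} but now on the constant column design $\Gcc$ rather than $\Gsc$. First I would fix an individual $x\in V_r$ and use the by-now-familiar resampling trick: delete $x$ from $\Gcc$ to obtain $\Gcc-x$, then re-insert it by assigning it to $\Delta$ tests drawn uniformly without replacement from the $m$ tests; this reproduces the distribution of $\Gcc$. Conditioning on the high-probability event $\fE$ that $\Gcc-x$ satisfies the bounds of \Lem~\ref{lemma_basic} (so that the number of tests containing no infected individual other than $x$ is $(1+O(n^{-\Omega(1)}))m\exp(-d)$, using that removing $x$ changes the counts of actually positive/negative tests by at most $\Delta$), the number of untainted tests through $x$ is hypergeometric,
\begin{align*}
	\sum_{a\in\partial x}\vecone\cbc{\partial a\setminus\{x\}\subseteq V_0}\sim\Hyp\bc{m,(1+O(n^{-\Omega(1)}))m\exp(-d),\Delta}.
\end{align*}
By \Lem~\ref{lem_hyperchernoff} (the hypergeometric Chernoff bound) together with the matching lower bound on hypergeometric probabilities (a standard local estimate, or equivalently Stirling), we get
\begin{align*}
	\pr\brk{x\in\cX_r(y\Delta)\mid\Gcc-x}&=\exp\bc{-\Delta\KL{y}{\exp(-d)}+o(\Delta)},
\end{align*}
and crucially this is independent of whether $\SIGMA_x=0$ or $1$, since tainting only depends on the \emph{other} members of each test.

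Next I would compute first moments: $\ex|\cX_0(y\Delta)|=n\exp(-\Delta\KL{y}{\exp(-d)}+o(\Delta))$ and $\ex|\cX_1(y\Delta)|=k\exp(-\Delta\KL{y}{\exp(-d)}+o(\Delta))$. The upper bounds in \eqref{eq_lem_EXY_X0}--\eqref{eq_lem_EXY_X1} then follow immediately from Markov's inequality. For the lower bounds I would invoke the hypothesis $y\in\cY(c,d,\theta)$: by \eqref{eqYinterval} we have $cd(1-\theta)\KL{y}{\exp(-d)}<\theta$, and since $\Delta=\Theta(\log n)$, $m=\Theta(k\log(n/k))$ and $k\sim n^\theta$, the expected sizes are polynomially large, $n^{\Omega(1)}$ (for $\cX_1$ one uses $k\sim n^\theta$ and that the exponent $\theta-cd(1-\theta)\KL{y}{\exp(-d)}>\theta(1-1/\cdots)$, i.e. is a positive power of $n$). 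To upgrade the large expectation to concentration I would bound the second moment, or more simply apply a bounded-differences / Azuma–Hoeffding argument: changing the neighbourhood of a single individual alters $|\cX_r(y\Delta)|$ by at most $2\Delta+1$ (it can toggle membership of $x$ itself and of the $\le\Delta$ individuals it shares a test with), so the fluctuations are $O(\sqrt{k}\,\Delta\,\mathrm{polylog})$, which is of smaller order than the mean $n^{\Omega(1)}$ once $\Delta=\Theta(\log n)$; hence $|\cX_r(y\Delta)|=(1+o(1))\ex|\cX_r(y\Delta)|$ \whp. Combining with the $o(n^{-2})$ failure probability of $\fE$ gives both displayed estimates.

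The main obstacle is the \textbf{lower bound on the count}, i.e.\ ensuring that the expected number is genuinely $n^{\Omega(1)}$ so that concentration has teeth. This requires two ingredients: first, a sharp \emph{lower} estimate on the hypergeometric point probability $\pr\brk{\Hyp(m,m\exp(-d),\Delta)=y\Delta}$ matching the Chernoff exponent up to $\exp(o(\Delta))$, which is routine via Stirling but must be stated carefully because $y$ ranges over an interval and one wants uniformity; and second, checking that $\cY(c,d,\theta)$ is exactly the set of $y$ for which the exponent stays positive (this is the content of the remark after \eqref{eqYinterval} that $\cY$ is a nonempty interval containing $\exp(-d)$, combined with $\kdef$). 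The concentration step itself is comparatively soft. I would also remark that the integrality assumption ``$y\Delta\in\ZZ$'' is harmless and is only imposed so that $\cX_r(y\Delta)$ is literally nonempty-indexed; in the applications one rounds $y$ to the nearest multiple of $1/\Delta$, changing the exponent by $o(\Delta)$.
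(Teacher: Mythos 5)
Your first-moment computation is essentially the paper's: delete-and-reinsert $x$, condition on the event of \Lem~\ref{lemma_basic}, identify the number of untainted tests through $x$ as $\Hyp\bc{m,(1+O(n^{-\Omega(1)}))m\exp(-d),\Delta}$, expand the point probability via Stirling, and observe that the answer does not depend on $\SIGMA_x$. The observation that $y\in\cY(c,d,\theta)$ (for fixed $y$ in the open set) makes the means $n^{\Omega(1)}$ is also correct and is exactly why the lower bound has a chance.

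The gap is in the concentration step, which is the route you actually develop. The Lipschitz constant you claim for the bounded-differences argument is wrong: an individual $x$ lies in $\Delta$ tests each containing $\Theta(dn/k)$ other individuals, so $x$ shares tests with $\Theta(\Delta n/k)=\Theta(n^{1-\theta}\log n)$ individuals, not with at most $\Delta$ of them. Rerouting the tests of an \emph{infected} individual can therefore toggle the untainted status of $2\Delta$ tests and hence change $\vY_{x'}$ for all the (typically $\Theta(\Delta n/k)$, in the worst case far more) individuals $x'$ in those tests. With the corrected constants the Azuma/McDiarmid bound gives a standard deviation of order at least $\sqrt k\cdot\Delta$ for $|\cX_1(y\Delta)|$ and $\Delta n/\sqrt k$ for $|\cX_0(y\Delta)|$, which exceeds the means $k\exp(-\Delta\KL y{\exp(-d)})$ resp.\ $n\exp(-\Delta\KL y{\exp(-d)})$ whenever $\Delta\KL y{\exp(-d)}>\frac\theta2\log n$, i.e.\ on a substantial part of $\cY(c,d,\theta)$. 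So the bounded-differences argument does not close the proof. Your parenthetical fallback --- the second moment method --- is the correct approach and is what the paper does, but it is the nontrivial part of the proof and you have not supplied it: one must compute $\pr\brk{x_1,x_2\in\cX_1(Y)}$ by summing over the number $I$ of actually negative tests of $\Gcc-x_1-x_2$ joined by both individuals (noting that such common tests are tainted for both when $x_1,x_2\in V_1$), show the $I=0$ term dominates because $\vm_0''\gg\Delta^2$, and verify uniformly over the conditioning that the ratio of the second moment to the squared first moment tends to $1$, before concluding with Chebyshev. As written, the proposal does not establish the lower bounds in \eqref{eq_lem_EXY_X0}--\eqref{eq_lem_EXY_X1}.
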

\begin{proof}
	Let $Y=y\Delta$ and let $\cE$ be the event that the bounds \eqref{eqCC3_1}--\eqref{eqCC3_4} hold.
	We begin by computing $\ex|\cX_1(Y)|$.
	By exchangeability we may condition on the event $\cS=\{\SIGMA_{x_1}=\cdots=\SIGMA_{x_k}=1\}$, i.e.\ precisely the first $k$ individuals are infected.
	Hence, by the linearity of expectation it suffices to prove that
	\begin{align}\label{eq_lem_EXY_1}
		\pr\brk{x_1\in\cX_1(Y)\mid\cE,\cS}=\exp\bc{-\Delta\KL y{\exp(-d)}+o(\Delta)}.
	\end{align}
	Let $\G'=\Gcc-x_1$ be the random design without $x_1$ and let $F_0'$ be the set of actually negative tests of $\G'$.
	Given $\vm_0'=|F_0'|$ the number of tests $a\in\partial x_1$ such that $\partial a\setminus\{x_1\}\subseteq V_0$ has distribution $\Hyp(m,\vm_0',\Delta)$, because $x_1$ joins precisely $\Delta$ tests independently of all other individuals.
	Hence, \eqref{eqHyp} yields
	\begin{align}\label{eq_lem_EXY_2}
		\pr\brk{x_1\in\cX_1(Y)\mid\cS,\vm_0'}=\binom{\vm_0'}{Y}\binom{m-\vm_0'}{\Delta-Y}\binom{m}{\Delta}^{-1}.
	\end{align}
	Expanding \eqref{eq_lem_EXY_2} via Stirling's formula and using the bounds \eqref{eqCC3_1}--\eqref{eqCC3_2}, we obtain \eqref{eq_lem_EXY_1}, which implies that 
	\begin{align}\label{eq_lem_EXY_3}
		\ex\brk{|\cX_1(y\Delta)|\mid\cE}&=k\exp\bc{-\Delta\KL y{\exp(-d)}+o(\Delta)}.
	\end{align}
	Since the above argument does not depend on the infection status of $x_1$, analogously we obtain
	\begin{align}\label{eq_lem_EXY_4}
		\ex\brk{|\cX_0(y\Delta)|\mid\cE}&=(n-k)\exp\bc{-\Delta\KL y{\exp(-d)}+o(\Delta)}.
	\end{align}

	To turn \eqref{eq_lem_EXY_3}--\eqref{eq_lem_EXY_4} into ``with high probability''-bounds we resort to the second moment method.
	Specifically, we are going to show that
	\begin{align}\label{eq_lem_EXY_5}
		\ex\brk{|\cX_1(y\Delta)|(|\cX_1(y\Delta)|-1)\mid\cE}&\sim\ex\brk{|\cX_1(y\Delta)|\mid\cE}^2,\\
		\ex\brk{|\cX_0(y\Delta)|(|\cX_0(y\Delta)|-1)\mid\cE}&\sim\ex\brk{|\cX_0(y\Delta)|\mid\cE}^2.\label{eq_lem_EXY_6}
	\end{align}
	Then the assertion is an immediate consequence of \eqref{eq_lem_EXY_3}--\eqref{eq_lem_EXY_6} and \Lem~\ref{lemma_basic}.

	For similar reasons as above it suffices to prove \eqref{eq_lem_EXY_5}.
	More precisely, we merely need to show that
	\begin{align}\label{eq_lem_EXY_10}
		\pr\brk{x_1,x_2\in\cX_1(Y)\mid\cE,\cS}\sim\pr\brk{x_1\in\cX_1(Y)\mid\cE,\cS}^2.
	\end{align}
	To compute the probability on the l.h.s.\ obtain $\G''=\Gcc-x_1-x_2$ by removing $x_1,x_2$.
	Let $\vm_0''$ be the number of actually negative tests of $\G''$.
	We claim that on $\cE$,
	\begin{align}\label{eq_lem_EXY_11}
		\pr\brk{x_1,x_2\in\cX_1(Y)\mid\cS,\vm_0''}=\sum_{I=0}^{\Delta-Y}\binom{\vm_0''}{I}\binom{\vm_0''}{Y}\binom{\vm_0''-Y}{Y}\binom{m-\vm_0''}{\Delta-Y-I}\binom{m-\vm_0''}{\Delta-Y-I} \binom{m}{\Delta}^{-2}.
	\end{align}
	Indeed, we first choose $0\leq I\leq\Delta-Y$ tests that are actually negative in $\G''$ that both $x_1,x_2$ will join.
	Observe that these tests $a$ do {\em not} satisfy $\partial a\setminus\{x_{1/2}\}\subseteq V_0$.
	Then we choose $Y$ distinct actually negative tests for $x_1$ and $x_2$ to join.
	Finally, we choose the remaining $\Delta-Y-I$ tests for $x_1,x_2$ among the actually positive tests of $\G''$.

	Since on $\cE$ the total number $\vm_0''$ is much bigger than $\Delta$, it is easily verified that the sum \eqref{eq_lem_EXY_11} is dominated by the term $I=0$; thus, on $\cE$ we have
\begin{align}\label{eq_lem_EXY_12}
		\pr\brk{x_1,x_2\in\cX_1(Y)\mid\cS,\vm_0''}=(1+O(\Delta^2/m))\binom{\vm_0''}{Y}\binom{\vm_0''-Y}{Y}\binom{m-\vm_0''}{\Delta-Y}^2\binom{m}{\Delta}^{-2}.
	\end{align}
	Furthermore, a careful expansion of the binomial coefficients from \eqref{eq_lem_EXY_12}  shows that uniformly for all $m_0',m_0''=m\exp(-d)+O(\sqrt m\log^3n)$ we have
	\begin{align*}
		\frac{\pr\brk{x_1,x_2\in\cX_1(Y)\mid\cS,\vm_0''=m_0''}}{\pr\brk{x_1\in\cX_1(Y)\mid\cS,\vm_0'=m_0'}^2}\sim1,
	\end{align*}
	whence we obtain \eqref{eq_lem_EXY_10}.
	A similar argument applies to $|\cX_0(Y)|$.
\end{proof}

As a next step consider the set $\cX_r(Y,Z)$ of all $x\in\cX_r(Y)$ such that
\begin{align*}
	\sum_{a\in\partial x\cap F^+}\vecone\cbc{\partial a\setminus \{x\}\subseteq V_0}=Z.
\end{align*}

\begin{corollary}\label{cor_EXY}
	Let $y\in\cY(c,d,\theta)$ be such that $y\Delta$ is an integer and let $z\in(p_{01},p_{11})$ be such that $z\Delta$ is an integer and such that \eqref{eq_prop_XYZ_1}--\eqref{eq_prop_XYZ_2} are satisfied.
Then \whp we have
	\begin{align}\label{eq_cor_EXY_X0}
		|\cX_0(y\Delta,z\Delta)|&=n\exp\bc{-\Delta\bc{\KL y{\exp(-d)}+y\KL{z}{p_{01}}}+o(\Delta)},\\
		|\cX_1(y\Delta,z\Delta)|&=k\exp\bc{-\Delta\bc{\KL y{\exp(-d)}+y\KL{z}{p_{11}}}+o(\Delta)}.\label{eq_cor_EXY_X1}
	\end{align}
\end{corollary}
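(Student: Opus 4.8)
The plan is to deduce \Cor~\ref{cor_EXY} from \Lem~\ref{lem_EXY} by treating the channel noise as a second, independent layer of randomness on top of the test design. Concretely, I would first reveal $\Gcc$ and $\SIGMA$ and work on the event $\cG$ that the conclusions of \Lem~\ref{lem_EXY} hold and that $|F_0|=(1+o(1))m\exp(-d)$, $|F_1|=(1+o(1))m(1-\exp(-d))$; by \Lem~\ref{lem_EXY} and the Azuma--Hoeffding argument behind \Lem~\ref{lemma_basic} we have $\pr\brk\cG=1-o(1)$. The crucial feature of this conditioning is that the set $\cX_r(y\Delta)$ as well as, for each $x\in\cX_r(y\Delta)$, the family of untainted tests $a\in\partial x$ (those with $\partial a\setminus\cbc x\subseteq V_0$) are now fixed, whereas the displayed results $\dSIGMA$ are still completely random and, by \eqref{eqnoisemodel}, independent across tests.

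Given $\cG$ I would then compute the conditional first moment. Fix $x\in\cX_1(y\Delta)$. Because $x$ is infected, every untainted test $a\in\partial x$ is actually positive, so $\dSIGMA_a=1$ with probability $p_{11}$, independently over the $y\Delta$ untainted tests of $x$. Hence the number $\vZ_x$ of positively displayed untainted tests is, conditionally on $\cG$, a $\Bin(y\Delta,p_{11})$ variable, and Stirling's formula gives that the probability that the fraction $\vZ_x/(y\Delta)$ takes the admissible value $z$ equals $\exp(-y\Delta\KL z{p_{11}}+o(\Delta))$; here $z\in(p_{01},p_{11})$ by hypothesis, so $\KL z{p_{11}}$ is finite and strictly positive. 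Multiplying by $|\cX_1(y\Delta)|$ from \Lem~\ref{lem_EXY} yields $\ex\brk{|\cX_1(y\Delta,z\Delta)|\mid\cG}=k\exp(-\Delta(\KL y{\exp(-d)}+y\KL z{p_{11}})+o(\Delta))$. The identical computation with $p_{01}$ in place of $p_{11}$ — using that every untainted test at a healthy individual is actually negative, so $\vZ_x\sim\Bin(y\Delta,p_{01})$ there — produces the analogous expression $n\exp(-\Delta(\KL y{\exp(-d)}+y\KL z{p_{01}})+o(\Delta))$ for $\cX_0$.

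It remains to upgrade these conditional expectations to with-high-probability statements, for which I would use the second-moment method as in the proof of \Lem~\ref{lem_EXY}. The key observation is that for distinct $x,x'\in V_1$ the untainted test sets of $x$ and of $x'$ are automatically disjoint — a common test would contain both infected individuals and hence be tainted for each — so, given $\cG$, $|\cX_1(y\Delta,z\Delta)|$ is a sum of independent $\{0,1\}$ variables and Chernoff's bound gives concentration immediately. For $\cX_0$ the untainted sets of two healthy individuals may overlap, but only with probability $O(\Delta^2/m)=o(1)$; expanding the joint probability $\pr\brk{x_1,x_2\in\cX_0(y\Delta,z\Delta)\mid\cG}$ exactly as in \eqref{eq_lem_EXY_10}--\eqref{eq_lem_EXY_12}, now carrying along the extra independent binomial factor coming from the displayed results, shows that the dominant term comes from disjoint test supports, so $\ex\brk{|\cX_0(y\Delta,z\Delta)|^2\mid\cG}\sim\ex\brk{|\cX_0(y\Delta,z\Delta)|\mid\cG}^2$ and Chebyshev finishes the argument. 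I do not anticipate a genuine obstacle: the only mild care needed is in tracking the $o(\Delta)$ error terms when the hypergeometric Stirling estimate for the number of untainted tests is combined with the binomial estimate for the number of positively displayed ones among them, and this is routine since $\Delta=\Theta(\log n)\to\infty$. In short, \Cor~\ref{cor_EXY} is \Lem~\ref{lem_EXY} plus one conditionally independent biased coin per untainted test.
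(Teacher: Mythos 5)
Your proposal is correct and follows essentially the same route as the paper: condition on the outcome of \Lem~\ref{lem_EXY}, use the conditional independence of the channel noise to compute the first moment of $|\cX_r(y\Delta,z\Delta)|$ via a binomial/Chernoff estimate, then apply the second-moment method, exploiting disjointness of untainted test sets for infected individuals and showing that the overlap terms (the paper's $I,I_1$ sum) are dominated by the disjoint-support contribution for healthy ones. The only cosmetic difference is that you invoke Chernoff directly for the concentration of $|\cX_1(y\Delta,z\Delta)|$ where the paper uses Chebyshev; both suffice.
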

\begin{proof}
	Let $Y=y\Delta$ and $Z=z\Delta$.
	We deal with $|\cX_0(y\Delta,z\Delta)|$ and $|\cX_1(y\Delta,z\Delta)|$ by two related but slightly different arguments.
	The computation of 	$|\cX_1(y\Delta,z\Delta)|$ is pretty straightforward.
	Indeed, \Lem~\ref{lem_EXY} shows that \whp the set $\cX_1(Y)$ has the size displayed in \eqref{eq_lem_EXY_X1}.
	Furthermore, since \eqref{eqnoisemodel} provides that tests are subjected to noise independently, \Lem~\ref{lem_chernoff} shows that
	\begin{align}\label{eq_cor_EXY_1}
			\ex\brk{|\cX_1(Y,Z)|\mid|\cX_1(Y)|}&=|\cX_1(Y)|\exp\bc{-\Delta\bc{\KL y{\exp(-d)}+y\KL{z}{p_{11}}}+o(\Delta)}.
	\end{align}
	Moreover, we saw in the proof of \Lem~\ref{lem_EXY}, any $x_1,x_2\in\cX_1(Y)$ have disjoint sets of untainted tests.
	Hence, in perfect analogy to \eqref{eq_cor_EXY_1} we obtain
\begin{align}\label{eq_cor_EXY_2}
	\ex\brk{|\cX_1(Y,Z)|(|\cX_1(Y,Z)|-1)\mid|\cX_1(Y)|}&=|\cX_1(Y)|(|\cX_1(Y)|-1)\exp\bc{-2\Delta\bc{\KL y{\exp(-d)}+y\KL{z}{p_{11}}}+o(\Delta)}.
	\end{align}
	Thus, \eqref{eq_cor_EXY_X1} follows from \eqref{eq_cor_EXY_1}--\eqref{eq_cor_EXY_2} and Chebyshev's inequality.

	Let us proceed to prove \eqref{eq_cor_EXY_X0}.
	As in the case of $|\cX_1(y\Delta,z\Delta)|$ we obtain
\begin{align}\label{eq_cor_EXY_3}
		\ex\brk{|\cX_0(Y,Z)|\mid|\cX_0(Y)|}&=|\cX_0(Y)|\exp\bc{-\Delta\bc{\KL y{\exp(-d)}+y\KL{z}{p_{01}}}+o(\Delta)},
	\end{align}
	Hence, as in \eqref{eq_lem_EXY_4} from the proof of \Lem~\ref{eq_lem_EXY_4},
	\begin{align}\label{eq_cor_EXY_4}
		\ex\brk{|\cX_0(Y,Z)|\mid\cE}&=n\exp\bc{-\Delta\bc{\KL y{\exp(-d)}+y\KL{z}{p_{01}}}+o(\Delta)}.
	\end{align}
	With respect to the second moment calculation, it is not necessarily true that $x_i,x_j\in\cX_0(Y,Z)$ with $k<i<j\leq n$ have disjoint sets of untainted tests.
	Thus, as in the expression \eqref{eq_lem_EXY_11} let $\vm_0''$ be the number of actually negative tests of $\G''=\Gcc-x_i-x_j$ and introduce $0\leq I\leq\Delta$ to count the untainted tests that $x_i,x_j$ have in common.
	Additionally, write $0\leq I_1\leq\min\{I,Z\}$ for the number of common untainted tests that display a negative result.
	Then 
	\begin{align}\nonumber
		\pr\brk{x_i,x_j\in\cX_0(Y,Z)\mid\cS,\vm_0''}=\sum_{I,I_1}&\binom{\vm_0''}{I}\binom{\vm_0''-I}{Y-I}\binom{\vm_0''-Y}{Y-I}\binom{m-\vm_0''}{\Delta-Y}^2\binom{m}{\Delta}^{-2}\\
																 &\qquad\cdot\binom I{I_1}p_{00}^{I-I_1}p_{01}^{I_1}\brk{\binom{Y-I}{Z-I_1}p_{00}^{Y-Z-I+I_1}p_{01}^{Z-I_1}}^2\label{eq_cor_EXY_5}.
	\end{align}
	As in the proof of \Lem~\ref{lem_EXY} it is easily checked that the summand $I=I_1=0$ dominates \eqref{eq_cor_EXY_5}, and that therefore
	\begin{align}\label{eq_cor_EXY_6}
		\ex\brk{|\cX_0(Y,Z)|(|\cX_0(Y,Z)|-1)\mid\cE}\sim\ex\brk{|\cX_0(Y,Z)|\mid\cE}^2.
	\end{align}
	Thus, \eqref{eq_cor_EXY_X0} follows from \eqref{eq_cor_EXY_4}, \eqref{eq_cor_EXY_6} and Chebyshev's inequality.
\end{proof}

\begin{proof}[Proof of \Prop~\ref{prop_XYZ}]
	By continuity we can find $y,z$ that satisfy \eqref{eq_prop_XYZ_1}--\eqref{eq_prop_XYZ_2} such that $y\Delta$, $z\Delta$ are integers, provided that $n$ is large enough.
	Now, if \eqref{eq_prop_XYZ_1}--\eqref{eq_prop_XYZ_2} are satisfied, then \Cor~\ref{cor_EXY} shows that $$|\cX_1(y\Delta,z\Delta)\times\cX_0(y\Delta,z\Delta)|=n^{\Omega(1)}.$$
	Hence, take any pair $(v,w)\in\cX_1(y\Delta,z\Delta)\times\cX_0(y\Delta,z\Delta)$.
	Then $\{a\in\partial v:\partial a\setminus\{v\}\subseteq V_0\}$ and $\{a\in\partial w:a\in F_0\}$ are disjoint, because $v\in V_1$.
	Therefore, any such pair $(v,w)$ satisfies \eqref{eq_prop_XYZ_3}.
\end{proof}

\subsection{Proof of \Prop~\ref{prop_mmt}}\label{sec_cc_mmt}
We are going to lower bound the partition function $Z_{\Gcc,\dSIGMA}$ by way of a moment computation.
To this end we are going to couple the constant column design $(\Gcc,\dSIGMA)$ with the displayed test results $\dSIGMA$ with another random pair $(\Gcc,\DSIGMA)$ where the test results indicated by the vector $\DSIGMA$ are purely random, i.e.\ do not derive from an actual vector $\SIGMA$ of infected individuals.
One can think of $(\Gcc,\DSIGMA)$ as a `null model'.
Conversely, in the language of random constraint satisfaction problems~\cite{CKPZ}, ultimately $(\Gcc,\dSIGMA)$ will turn out to be the `planted model' associated with $(\Gcc,\DSIGMA)$.

Hence, let $\vm^+$ be the number $\|\dSIGMA\|_1$ of positively displayed tests of $(\Gcc,\dSIGMA)$.
Moreover, for a given integer $0\leq m^+\leq m$ let $\DSIGMA\in\{0,1\}^F$ be a uniformly random vector of Hamming weight $m^+$, drawn independently of $\Gcc,\SIGMA,\dSIGMA$.
In other words, in the null model $(\Gcc,\DSIGMA)$ we simply choose a set of uniformly random tests to display positively.

Let $\hat F^+=\{a\in F:\DSIGMA_a=1\}$ and $\hat F^-=F\setminus F^+$.
Moreover, just as in \eqref{eqpsidef} define weight functions
\begin{align}\label{eqpsinulldef}
	\psi_{\Gcc,\DSIGMA,a}&:\{0,1\}^{\partial a}\to\RRpos,&
	\sigma_{\partial a_i}&\mapsto
	\begin{cases}
		\vecone\{\|\sigma\|_{1}=0\}p_{00}+\vecone\{\|\sigma\|_{1}>0\}p_{10}&\mbox{ if $\DSIGMA_a=0$},\\
		\vecone\{\|\sigma\|_{1}=0\}p_{01}+\vecone\{\|\sigma\|_{1}>0\}p_{11}&\mbox{ if $\DSIGMA_a=1$}.
		\end{cases}
\end{align}
In addition, exactly as in \eqref{eqpsiG}--\eqref{eqBoltzmann} let
\begin{align}\label{eqpsiGnull}
	\psi_{\Gcc,\DSIGMA}(\sigma)&=\vecone\cbc{\|\sigma\|_1=k}\prod_{a\in F}\psi_{\Gcc,\DSIGMA,a}(\sigma_{\partial a}),
							   &Z_{\Gcc,\DSIGMA}&=\sum_{\sigma\in\{0,1\}^V}\psi_{\Gcc,\DSIGMA}(\sigma),&
	\mu_{\Gcc,\DSIGMA}(\sigma)&=\psi_{\Gcc,\DSIGMA}(\sigma)/Z_{\Gcc,\DSIGMA}.
\end{align}
We begin by computing the mean of the partition function (aka.\ the `annealed average').

\begin{lemma}\label{lem_mmt}
	For any $0\leq m^+\leq m$ we have $\ex\brk{Z_{\Gcc,\DSIGMA}}=\binom nk\binom{m}{m^+}^{-1}{\pr\brk{\vm^+=m^+}}.$ 
\end{lemma}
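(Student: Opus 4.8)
The plan is to compute $\ex[Z_{\Gcc,\DSIGMA}]$ by linearity of expectation, summing the expected weight $\ex[\psi_{\Gcc,\DSIGMA}(\sigma)]$ over all $\sigma\in\{0,1\}^V$ of Hamming weight $k$. First I would fix one such $\sigma$ and, by exchangeability of the individuals, observe that $\ex[\psi_{\Gcc,\DSIGMA}(\sigma)]$ does not depend on which $k$ coordinates of $\sigma$ equal $1$; hence $\ex[Z_{\Gcc,\DSIGMA}]=\binom nk\,\ex[\psi_{\Gcc,\DSIGMA}(\sigma_0)]$ for a fixed reference $\sigma_0$ of weight $k$. The key structural point is that $\DSIGMA$ is drawn \emph{independently} of $\Gcc$ and of $\sigma_0$, and that the weight function factorises over tests as in \eqref{eqpsiGnull}.

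Next I would condition on $\Gcc$, which fixes the partition $F=F_0\cup F_1$ into actually-negative and actually-positive tests with respect to $\sigma_0$; write $m_1=|F_1|$, so $m_0=|F_0|=m-m_1$. Conditioning further on $\DSIGMA$, the product $\prod_a\psi_{\Gcc,\DSIGMA,a}(\sigma_0|_{\partial a})$ equals $p_{00}^{|F_0\cap\hat F^-|}p_{01}^{|F_0\cap\hat F^+|}p_{10}^{|F_1\cap\hat F^-|}p_{11}^{|F_1\cap\hat F^+|}$. Since $\DSIGMA$ is a uniformly random weight-$m^+$ subset of $F$ independent of everything, the number $|F_0\cap\hat F^+|$ is hypergeometric, and averaging the product over $\DSIGMA$ telescopes: conditionally on $\Gcc$,
\begin{align*}
	\ex\brk{\psi_{\Gcc,\DSIGMA}(\sigma_0)\mid\Gcc}
	&=\binom{m}{m^+}^{-1}\sum_{\DSIGMA}p_{00}^{|F_0\cap\hat F^-|}p_{01}^{|F_0\cap\hat F^+|}p_{10}^{|F_1\cap\hat F^-|}p_{11}^{|F_1\cap\hat F^+|}.
\end{align*}
The crucial observation is that this last sum, over all weight-$m^+$ vectors $\DSIGMA$, is exactly $\binom{m}{m^+}\pr\brk{\vm^+=m^+}$: indeed, recalling \eqref{eqnoisemodel}, the probability that the displayed results $\dSIGMA$ of the \emph{true} pair $(\Gcc,\SIGMA)$ take any particular value $\DSIGMA$ (given $\Gcc$, equivalently given $F_0,F_1$) is precisely $p_{00}^{|F_0\cap\hat F^-|}p_{01}^{|F_0\cap\hat F^+|}p_{10}^{|F_1\cap\hat F^-|}p_{11}^{|F_1\cap\hat F^+|}$, and summing this over the $\DSIGMA$ of weight $m^+$ yields $\pr\brk{\|\dSIGMA\|_1=m^+\mid\Gcc}=\pr\brk{\vm^+=m^+\mid\Gcc}$. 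Hence $\ex[\psi_{\Gcc,\DSIGMA}(\sigma_0)\mid\Gcc]=\binom{m}{m^+}^{-1}\pr\brk{\vm^+=m^+\mid\Gcc}$. Taking the expectation over $\Gcc$ removes the conditioning (the result no longer depends on $\Gcc$), and multiplying by $\binom nk$ gives the claim.

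The only point requiring a little care — the ``main obstacle,'' though it is mild — is correctly bookkeeping which exponent pattern $p_{00}^{\bullet}p_{01}^{\bullet}p_{10}^{\bullet}p_{11}^{\bullet}$ appears, namely aligning the $F_0/F_1$ split (determined by $\sigma_0$ and $\Gcc$) with the $\hat F^-/\hat F^+$ split (determined by $\DSIGMA$), and recognising that this pattern is identical to the per-test transition probability in the channel \eqref{eqnoisemodel}; once that identification is made the telescoping sum is immediate and the dependence on $\Gcc$ drops out cleanly.
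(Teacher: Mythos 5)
Your proof is correct and follows essentially the same route as the paper: both arguments hinge on recognising that $\psi_{G,\sigma''}(\sigma)=\pr\brk{\dSIGMA=\sigma''\mid\Gcc=G,\,\SIGMA=\sigma}$ for $\|\sigma\|_1=k$, so that summing over the weight-$m^+$ vectors $\sigma''$ and using the uniformity of $\SIGMA$ produces $\binom nk\binom m{m^+}^{-1}\pr\brk{\vm^+=m^+}$. One cosmetic remark: the conditional probability in your penultimate display should read $\pr\brk{\vm^+=m^+\mid\Gcc,\,\SIGMA=\sigma_0}$ rather than $\pr\brk{\vm^+=m^+\mid\Gcc}$, but the exchangeability you already invoke at the outset makes the $\Gcc$-average of this quantity equal to $\pr\brk{\vm^+=m^+}$, so the conclusion stands.
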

\begin{proof}
	Writing out the definitions of $\Gcc,\DSIGMA$, we obtain
	\begin{align*}
		\ex\brk{\vZ_{\Gcc,\DSIGMA}}&=
		\binom{m}{m^+}^{-1}\sum_G\sum_{\substack{\sigma''\in\{0,1\}^F:\|\sigma''\|_1=\vm^+\\\sigma\in\{0,1\}:\|\sigma\|_1=k}}\pr\brk{\Gcc=G}\psi_{\Gcc,\sigma''}(\sigma)\\
												&=\binom{m}{m^+}^{-1}\sum_{G,\sigma'',\sigma:\|\sigma''\|_1=m^+,\|\sigma\|_1=k}\pr\brk{\Gcc=G}\pr\brk{\dSIGMA=\sigma''\mid \Gcc=G,\,\SIGMA=\sigma}&&\mbox{[by \eqref{eqpsinulldef}--\eqref{eqpsiGnull}]}\\
												&=\binom nk\binom{m}{m^+}^{-1}\sum_{G,\sigma'',\sigma:\|\sigma''\|_1=m^+,\|\sigma\|_1=k}\pr\brk{\Gcc=G}\pr\brk{\SIGMA=\sigma}\pr\brk{\dSIGMA=\sigma''\mid \Gcc=G,\,\SIGMA=\sigma}\\
												&=\binom nk\binom{m}{m^+}^{-1}\sum_{\sigma'':\|\sigma''\|_1=m^+}\pr\brk{\dSIGMA=\sigma''}=\binom nk\binom{m}{m^+}^{-1}{\pr\brk{\vm^+=m^+}},
	\end{align*}
	as claimed.
\end{proof}

As a next step we sort out the relationship of the null model and of the `real' group testing instance.

\begin{lemma}\label{lem_nishicc}
	Let $0\leq m^+\leq m$ be an integer.
	Then for any $G$ and any $\sigma''\in\{0,1\}^F$ with $\|\sigma''\|_1=m^+$ we have
	\begin{align}\label{eqlem_nishicc}
		\pr\brk{\Gcc=G,\dSIGMA=\sigma''\mid\vm^+=m^+}&=
		\frac{\pr\brk{\Gcc=G,\DSIGMA=\sigma''}Z_{G,\sigma''}}{\ex\brk{Z_{\Gcc,\DSIGMA}}}.
	\end{align}
\end{lemma}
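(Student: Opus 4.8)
The statement is a Bayes-rule identity relating the planted model $(\Gcc,\dSIGMA)$ conditioned on the positive-test count to the null model $(\Gcc,\DSIGMA)$ conditioned on the same count. The plan is to expand both sides by writing the joint law of $(\Gcc,\dSIGMA,\SIGMA)$ explicitly and then summing out the hidden ground truth $\SIGMA$, recognising the partition function $Z_{G,\sigma''}$ as exactly the normalising sum that appears. I would carry this out in three short steps.

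First, I would write the conditional probability on the left-hand side as
\begin{align*}
	\pr\brk{\Gcc=G,\dSIGMA=\sigma''\mid\vm^+=m^+}&=\frac{\pr\brk{\Gcc=G,\dSIGMA=\sigma''}}{\pr\brk{\vm^+=m^+}}
\end{align*}
(valid since $\|\sigma''\|_1=m^+$ forces the event $\{\vm^+=m^+\}$), and then expand the numerator by conditioning on $\SIGMA$:
\begin{align*}
	\pr\brk{\Gcc=G,\dSIGMA=\sigma''}&=\sum_{\sigma:\|\sigma\|_1=k}\pr\brk{\SIGMA=\sigma}\pr\brk{\Gcc=G}\pr\brk{\dSIGMA=\sigma''\mid\Gcc=G,\SIGMA=\sigma}.
\end{align*}
Here I use that $\Gcc$ is chosen independently of $\SIGMA$, and that $\pr\brk{\SIGMA=\sigma}=\binom nk^{-1}$ for every weight-$k$ vector. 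By the noise model \eqref{eqnoisemodel} together with the definition \eqref{eqpsidef}--\eqref{eqpsiG} of the weight functions, $\pr\brk{\dSIGMA=\sigma''\mid\Gcc=G,\SIGMA=\sigma}=\prod_{a\in F}\psi_{G,\sigma'',a}(\sigma_{\partial a})=\psi_{G,\sigma''}(\sigma)$, since $\sigma$ has Hamming weight $k$ so the indicator $\vecone\{\|\sigma\|_1=k\}$ in \eqref{eqpsiG} is $1$. Summing over $\sigma$ therefore yields $\binom nk^{-1}\pr\brk{\Gcc=G}\,Z_{G,\sigma''}$.

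Second, I would rewrite $\binom nk^{-1}\pr\brk{\Gcc=G}$ in terms of the null model. By construction $\DSIGMA$ is a uniformly random weight-$m^+$ vector drawn independently of $\Gcc$, so $\pr\brk{\Gcc=G,\DSIGMA=\sigma''}=\pr\brk{\Gcc=G}\binom m{m^+}^{-1}$. Combining with \Lem~\ref{lem_mmt}, which gives $\ex\brk{Z_{\Gcc,\DSIGMA}}=\binom nk\binom m{m^+}^{-1}\pr\brk{\vm^+=m^+}$, I get $\binom nk^{-1}\pr\brk{\Gcc=G}/\pr\brk{\vm^+=m^+}=\pr\brk{\Gcc=G,\DSIGMA=\sigma''}/\ex\brk{Z_{\Gcc,\DSIGMA}}$. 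Substituting this into the expression from the first step produces exactly the right-hand side of \eqref{eqlem_nishicc}, completing the proof.

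There is no real obstacle here; the only point requiring a moment of care is the bookkeeping in the first step — making sure that the factor $\pr\brk{\dSIGMA=\sigma''\mid\Gcc=G,\SIGMA=\sigma}$ is literally $\psi_{G,\sigma''}(\sigma)$ (which uses that $\sigma$ has weight $k$ so the prior indicator in \eqref{eqpsiG} contributes a $1$ rather than a $0$), and that the conditioning event $\{\vm^+=m^+\}$ is implied by $\dSIGMA=\sigma''$ so the conditional probability is just the ratio of joint to marginal. Everything else is a mechanical substitution using independence of $\Gcc$ from $\SIGMA$, uniformity of $\SIGMA$ and of $\DSIGMA$, and \Lem~\ref{lem_mmt}.
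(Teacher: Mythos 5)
Your proposal is correct and follows essentially the same route as the paper: expand the conditional probability by summing over the hidden ground truth $\SIGMA$, identify the likelihood $\pr\brk{\dSIGMA=\sigma''\mid\Gcc=G,\SIGMA=\sigma}$ with $\psi_{G,\sigma''}(\sigma)$ so that the sum collapses to $Z_{G,\sigma''}$, then use the uniformity of $\DSIGMA$ together with \Lem~\ref{lem_mmt} to rewrite the prefactor. The points you flag as requiring care (the weight-$k$ indicator and the fact that $\dSIGMA=\sigma''$ implies $\vm^+=m^+$) are exactly the right ones.
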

\begin{proof}
	We have
	\begin{align*}
		\pr\brk{\Gcc=G,\dSIGMA=\sigma''\mid\vm^+=m^+}&=\sum_{\sigma:\|\sigma\|_1=k}\frac{\pr\brk{\Gcc=G}\pr\brk{\dSIGMA=\sigma''\mid\Gcc=G,\SIGMA=\sigma}}{\binom nk\pr\brk{\vm^+=m^+}}\\
													 &=\sum_{\sigma:\|\sigma\|_1=k}\frac{\pr\brk{\Gcc=G}\psi_{G,\sigma''}(\sigma)}{\binom nk\pr\brk{\vm^+=m^+}}&&\mbox{[by \eqref{eqpsidef}--\eqref{eqpsiG}]}\\
													 &=\frac{\pr\brk{\Gcc=G}Z_{G,\sigma''}}{\binom nk\pr\brk{\vm^+=m^+}}&&\mbox{[by \eqref{eqBoltzmann}]}\\
													 &=\frac{\pr\brk{\Gcc=G}\pr\brk{\DSIGMA=\sigma''}Z_{G,\sigma''}}{\binom nk\binom m{m^+}^{-1}\pr\brk{\vm^+=m^+}}&&\mbox{[as $\DSIGMA$ is uniformly random]}\\
													 &=\frac{\pr\brk{\Gcc=G,\DSIGMA=\sigma''}Z_{G,\sigma''}}{\ex\brk{Z_{\Gcc,\DSIGMA}}}&&\mbox{[by \Lem~\ref{lem_mmt}],}
	\end{align*}
	as claimed.
\end{proof}

Combining \Lem s~\ref{lem_mmt}--\ref{lem_nishicc}, we obtain the following lower bound on $Z_{\Gcc,\dSIGMA}$.

\begin{corollary}\label{cor_mmt}
	Let $0\leq m^+\leq m$ be an integer.
	For any $\delta>0$ we have
	\begin{align*}
		\pr\brk{Z_{\Gcc,\dSIGMA}<\delta\binom nk\binom{m}{m^+}^{-1}\pr\brk{\vm^+=m^+}\mid\vm^+=m^+}<\delta.
	\end{align*}
\end{corollary}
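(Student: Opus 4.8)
The plan is to derive Corollary~\ref{cor_mmt} from Lemmas~\ref{lem_mmt} and~\ref{lem_nishicc} by a standard change-of-measure argument combined with Markov's inequality. The key observation is that the quantity we want to lower bound, namely $Z_{\Gcc,\dSIGMA}$, plays two distinct roles: it is a random variable in the `real' model $(\Gcc,\dSIGMA)$ conditioned on $\{\vm^+=m^+\}$, while at the same time Lemma~\ref{lem_nishicc} tells us that the law of $(\Gcc,\dSIGMA)$ given $\{\vm^+=m^+\}$ is obtained from the law of the null model $(\Gcc,\DSIGMA)$ by tilting with exactly this weight $Z_{G,\sigma''}$ (normalised by $\ex[Z_{\Gcc,\DSIGMA}]$). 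This is the usual `planted-versus-null' duality, and it means that small values of $Z_{\Gcc,\dSIGMA}$ under the real conditional model correspond to a controllably small portion of the null model's mass.

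Concretely, first I would fix $\delta>0$ and write, using Lemma~\ref{lem_nishicc},
\begin{align*}
	\pr\brk{Z_{\Gcc,\dSIGMA}<\delta\,\ex\brk{Z_{\Gcc,\DSIGMA}}\mid\vm^+=m^+}
	&=\sum_{G,\sigma'':\|\sigma''\|_1=m^+}\vecone\cbc{Z_{G,\sigma''}<\delta\,\ex\brk{Z_{\Gcc,\DSIGMA}}}\,\frac{\pr\brk{\Gcc=G,\DSIGMA=\sigma''}\,Z_{G,\sigma''}}{\ex\brk{Z_{\Gcc,\DSIGMA}}}\\
	&\leq\sum_{G,\sigma'':\|\sigma''\|_1=m^+}\pr\brk{\Gcc=G,\DSIGMA=\sigma''}\,\frac{\delta\,\ex\brk{Z_{\Gcc,\DSIGMA}}}{\ex\brk{Z_{\Gcc,\DSIGMA}}}=\delta,
\end{align*}
where in the inequality I simply replaced the factor $Z_{G,\sigma''}$ inside the indicator-truncated sum by its upper bound $\delta\,\ex[Z_{\Gcc,\DSIGMA}]$ and then used that the $\pr[\Gcc=G,\DSIGMA=\sigma'']$ sum to at most $1$. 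Then I would substitute the explicit value $\ex[Z_{\Gcc,\DSIGMA}]=\binom nk\binom{m}{m^+}^{-1}\pr[\vm^+=m^+]$ from Lemma~\ref{lem_mmt} to obtain precisely the statement of the corollary.

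A minor bookkeeping point I would address: the sum on the right-hand side of \eqref{eqlem_nishicc} ranges over pairs $(G,\sigma'')$ with $\|\sigma''\|_1=m^+$, and by Lemma~\ref{lem_nishicc} together with the fact that these conditional probabilities sum to one we get $\sum_{G,\sigma''}\pr[\Gcc=G,\DSIGMA=\sigma'']\,Z_{G,\sigma''}=\ex[Z_{\Gcc,\DSIGMA}]$ (this is just consistency, but it is worth stating since it confirms the tilting is a genuine probability measure). Beyond that the argument is essentially one line after the two lemmas are in hand, so there is no real obstacle here — the entire content sits in Lemmas~\ref{lem_mmt} and~\ref{lem_nishicc}, and Corollary~\ref{cor_mmt} is a formal consequence. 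If anything, the only thing to be slightly careful about is that the bound is vacuous unless $\pr[\vm^+=m^+]>0$, but in that degenerate case the conditional probability is undefined and the statement is interpreted as empty, so no separate treatment is needed.
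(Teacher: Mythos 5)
Your proof is correct and is essentially identical to the paper's own argument: both rewrite the conditional probability via the tilting identity of \Lem~\ref{lem_nishicc}, bound the truncated weight $Z_{G,\sigma''}$ by $\delta\,\ex[Z_{\Gcc,\DSIGMA}]$, and sum the null-model probabilities to at most $1$, then substitute the value of $\ex[Z_{\Gcc,\DSIGMA}]$ from \Lem~\ref{lem_mmt}. No gaps.
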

\begin{proof}
	\Lem s~\ref{lem_mmt} and~\ref{lem_nishicc} yield
	\begin{align*}
		\pr&\brk{Z_{\Gcc,\dSIGMA}<\delta\binom nk\binom{m}{m^+}^{-1}\pr\brk{\vm^+=m^+}\mid\vm^+=m^+}\\
		   &=\sum_{G,\sigma'':\|\sigma''\|_1=m^+}\frac{\vecone\{Z_{G,\sigma''}<\delta\binom nk\binom{m}{m^+}^{-1}\pr\brk{\vm^+=m^+}\}Z_{G,\sigma''}\pr\brk{\Gcc=G,\DSIGMA=\sigma''\mid\vm^+=m^+}}{\binom nk\binom{m}{m^+}\pr\brk{\vm^+=m^+}}<\delta,
	\end{align*}
	as desired.
\end{proof}

\begin{proof}[Proof of \Prop~\ref{prop_mmt}]
	Since \Prop~\ref{prop_basic} shows that $\pr\brk{\vm^+=mq_0^++O(\sqrt m\log^3n)}=1-o(1)$, the proposition follows immediately from \Cor~\ref{cor_mmt}.
\end{proof}

\begin{appendix}

\section{Proof of \Thm~\ref{thm_inf_apx}}\label{sec_inf_apx}

\noindent
The basic idea is to compute the mutual information of $\SIGMA$ and $\dSIGMA$.
What makes matters tricky is that we are dealing with the {\em adaptive} scenario where tests may be conducted one by one.
To deal with this issue we closely follow the arguments from~\cite{Maurice,Aldridge_2019}.
Furthermore, the displayed test results are obtained by putting the actual test results through the noisy channel.

As a first step we bound the mutual information between $\SIGMA$ and $\dSIGMA$ from above under the assumption that the statistician applies a adaptive scheme where the next test to be conducted depends deterministically on the previously displayed test results.
Let $m$ be the total number of tests that are conducted.

\begin{lemma}\label{lem_mutual_upper}
	For a deterministic adaptive algorithm we have $I(\SIGMA,\dSIGMA)\leq m/\cchan$.
\end{lemma}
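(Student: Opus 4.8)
The plan is to expand the mutual information $I(\SIGMA,\dSIGMA)$ via the chain rule over the $m$ tests in the order in which they are conducted, and to bound each increment by the capacity $1/\cchan$ of the $\vec p$-channel. The key external input is the identity $1/\cchan=\KL{p_{10}}{(1-\tanh(\phi/2))/2}$ from \cite[\Lem~F.1]{Maurice}, which asserts that $1/\cchan$ is exactly the Shannon capacity of the channel \eqref{eqnoisemodel}.

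Write $a_1,\ldots,a_m$ for the sequence of tests in the order conducted and $Y_t=\dSIGMA_{a_t}$ for the displayed result of the $t$-th test, and set $Y_{<t}=(Y_1,\ldots,Y_{t-1})$; since for a deterministic adaptive algorithm $a_t$ is a fixed function of $Y_{<t}$, we have $I(\SIGMA,\dSIGMA)=I(\SIGMA,(Y_1,\ldots,Y_m))$, and the chain rule gives $I(\SIGMA,\dSIGMA)=\sum_{t=1}^m I(\SIGMA,Y_t\mid Y_{<t})$. First I would fix a value $y_{<t}$ of $Y_{<t}$: then $a_t=a_t(y_{<t})$ is determined, the actual result $B_t=\aSIGMA_{a_t}$ is a deterministic function of $\SIGMA$, and because the noise is applied to each test independently (see \eqref{eqnoisemodel}), the conditional law of $Y_t$ given $\SIGMA=\sigma$ and $Y_{<t}=y_{<t}$ is the $\vec p$-channel applied to $B_t(\sigma)$, which depends on $\sigma$ only through $B_t(\sigma)$. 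Hence $\SIGMA\to B_t\to Y_t$ is a Markov chain under this conditioning, so the data-processing inequality yields $I(\SIGMA,Y_t\mid Y_{<t}=y_{<t})\le I(B_t,Y_t\mid Y_{<t}=y_{<t})$. Conditioned on $Y_{<t}=y_{<t}$, the pair $(B_t,Y_t)$ is an input--output pair of the $\vec p$-channel for some binary input distribution of $B_t$, so this last quantity is at most the channel capacity $1/\cchan$. Averaging over $y_{<t}$ gives $I(\SIGMA,Y_t\mid Y_{<t})\le 1/\cchan$, and summing over $t=1,\ldots,m$ yields the claim.

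The argument is essentially routine once the adaptive bookkeeping is set up carefully, and I expect the only delicate point to be the conditioning step: one must verify that conditioning on the displayed history $Y_{<t}=y_{<t}$ both renders the next test $a_t$ deterministic and preserves the channel law of $Y_t$ given $B_t$ --- equivalently, that $Y_t$ remains conditionally independent of $\SIGMA$ given $B_t$ after conditioning on $Y_{<t}$. Both facts follow immediately from the independence of the per-test noise in \eqref{eqnoisemodel}, but they are exactly what legitimises the chain-rule decomposition and the data-processing step. The remaining ingredient, the capacity bound $I(B_t,Y_t)\le 1/\cchan$, is imported wholesale from \cite[\Lem~F.1]{Maurice}.
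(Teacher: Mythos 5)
Your proof is correct and follows essentially the same route as the paper's: a chain-rule decomposition of the mutual information over the adaptively chosen tests, a data-processing step to pass from $\SIGMA$ to the actual result of the current test, and the per-test capacity bound $1/\cchan$ from \cite[\Lem~F.1]{Maurice}. The only (cosmetic) difference is that the paper first bounds $I(\SIGMA,\dSIGMA)\leq I(\aSIGMA,\dSIGMA)$ globally and then applies the chain rule to $I(\aSIGMA,\dSIGMA)$, whereas you apply the chain rule directly to $I(\SIGMA,\dSIGMA)$ and invoke data processing conditionally at each step.
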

\begin{proof}
	Let $\aSIGMA$ be the vector of actual test results.
	Then
	\begin{align*}
		I(\SIGMA,\dSIGMA)&=\sum_{s,s''}\pr\brk{\dSIGMA=s''\mid\SIGMA=s}\pr\brk{\SIGMA=s}\log\frac{\pr\brk{\dSIGMA=s''\mid\SIGMA=s}}{\pr\brk{\dSIGMA=s''}}\\
		&=\sum_{s,s',s''}\pr\brk{\dSIGMA=s''\mid\aSIGMA=s'}\pr\brk{\aSIGMA=s'\mid\SIGMA=s}\pr\brk{\SIGMA=s}\log\frac{\pr\brk{\dSIGMA=s''\mid\aSIGMA=s'}\pr\brk{\aSIGMA=s'\mid\SIGMA=s}}{\pr\brk{\dSIGMA=s''}}\\
		&=I(\dSIGMA,\aSIGMA)-H(\aSIGMA\mid\SIGMA)\leq I(\dSIGMA,\aSIGMA).
	\end{align*}
	Furthermore,
	\begin{align*}
		I(\aSIGMA,\dSIGMA)&=\sum_{s'',s'}\pr\brk{\dSIGMA=s'',\aSIGMA=s'}\log\frac{\pr\brk{\dSIGMA=s'',\aSIGMA=s'}}{\pr\brk{\dSIGMA=s''}\pr\brk{\aSIGMA=s'}}.
	\end{align*}
	Since the tests are conducted adaptively, we obtain
	\begin{align*}
		\pr\brk{\dSIGMA=s''\mid\aSIGMA=s'}&=\prod_{i=1}^m\pr\brk{\dSIGMA_i=s''_i\mid\forall j<i:\dSIGMA_j=s''_j,\,\aSIGMA_i=s'_i}.
	\end{align*}
	Hence,
	\begin{align*}
		I(\aSIGMA,\dSIGMA)&=\sum_{i=1}^m\sum_{s''_1,\ldots,s''_i,s'_i}\pr\brk{\forall j<i:\dSIGMA_j=s''_j,\aSIGMA_i=s'_i}\\&\qquad\qquad\qquad\qquad\cdot\pr\brk{\dSIGMA_i=s''_i\mid\forall j<i:\dSIGMA_j=s''_j,\aSIGMA_i=s'_i}\log\frac{\pr\brk{\dSIGMA_i=s''_i\mid\forall j<i:\dSIGMA_j=s''_j,\aSIGMA_i=s'_i}}{\pr\brk{\dSIGMA_i=s''_i\mid\forall j<i:\dSIGMA_j=s''_j}}.
	\end{align*}
	In the last term $\aSIGMA_i$ is a Bernoulli random variable (whose distribution is determined by $\dSIGMA_j$ for $j<i$), and $\dSIGMA_i$ is the output of that variable upon transmission through our channel.
	Furthermore, the expression in the second line above is the mutual information of these quantities.
	Hence, the definition of the channel capacity implies that $I(\aSIGMA,\dSIGMA)\leq m/\cchan$.
\end{proof}

\begin{proof}[Proof of \Thm~\ref{thm_inf_apx}]
	As a first step we argue that it suffices to investigate deterministic adaptive group testing algorithms (under the assumption that the ground truth $\SIGMA$ is random).
	Indeed, a randomised adaptive algorithm $\cA(\nix)$ can be modeled as having access to a (single) sample $\vec\omega$ from a random source that is independent of $\SIGMA$.
	Now, if we assume that for an arbitrarily small $\delta>0$ we have
	\begin{align*}
		\ex\norm{\cA(\SIGMA'',\vec\omega)-\SIGMA}_1<\delta k,
	\end{align*}
	where the expectation is on both $\vec\omega$ and $\SIGMA$, then there exists some outcome $\omega$ such that
	\begin{align*}
		\ex\norm{\cA(\SIGMA'',\omega)-\SIGMA}_1<\delta k,
	\end{align*}
	where the expectation is on $\SIGMA$ only.

	Thus, assume that $\cA(\nix)$ is deterministic.
	We have $I(\SIGMA,\dSIGMA)=H(\SIGMA)-H(\SIGMA\mid\dSIGMA)$.
	Furthermore, $H(\SIGMA)\sim k\log(n/k)$.
	Hence, \Lem~\ref{lem_mutual_upper} yields
	\begin{align*}
		H(\SIGMA\mid\dSIGMA)&=H(\SIGMA)-I(\SIGMA,\dSIGMA)\geq H(\SIGMA)-m/\cchan,
	\end{align*}
	which implies the assertion.
\end{proof}

\section{Proof of \Lem~\ref{lemma_endgame_misclassified}}\label{sec_lemma_endgame_misclassified}

\noindent
This proof is a straightforward adaption of~\cite[proof of \Lem~4.16]{opt}.
	Fix $T\subset V$ of size $t=|T|\leq \exp(-\log^{\alpha} n)k$ as well as a set $R\subset V$ of size $r=\lceil t/\lambda\rceil $ with $\lambda=8\log\log n$. 
	Let $\gamma=\lceil\ln^{\beta}n\rceil$.
	Further, let $U\subset F[1]\cup\cdots\cup F[\ell]$ be a set of size $\gamma r\leq u\leq \Delta t$.
	Additionally, let $\cE(R,T,U)$ be the event that every test $a\in U$ contains two individuals from $R\cup T$.
	Then
	\begin{align}\label{eq_prop_dist_psi_9}
		\pr\brk{R\subset\cbc{x\in V:\sum_{a\in\partial x\setminus F[0]}\vecone\cbc{T\cap\partial a\setminus \cbc x\neq\emptyset}\geq\gamma}}
		\leq\pr\brk{\cE(R,T,U)}.
	\end{align}
	Hence, it suffices to bound $\pr\brk{\cE(R,T,U)}$.

	For a test $a\in U$ there are no more than $\binom{r+t}2$ ways to choose distinct individuals $x_a,x_a'\in R\cup T$.
	Moreover, \eqref{eqax} shows that the probability of the event $\{x_a,x_a'\in\partial a\}$ is bounded by $(1+o(1))(\Delta\ell/(ms))^2$; in fact, this probability might be zero if we choose an individual that cannot join $a$ due to the spatially coupled construction of $\Gsp$.
	Hence, due to negative correlation
	\begin{align*}
		\pr\brk{\cE(R,T,U)}&\leq \brk{\binom{r+t}2\bcfr{(1+o(1))\Delta\ell}{ms}^2}^u.
	\end{align*}
	Consequently, by the union bound the event $\fE(r,t,u)$ that there exist sets $R,T,U$ of sizes $|R|=r,|T|=t,|U|=u$ such that $\cE(R,T,U)$ occurs has probability
	\begin{align*}
		\pr\brk{\fE(r,t,u)}&\leq \binom nr\binom nt\binom mu\brk{\binom{r+t}2\bcfr{(1+o(1))\Delta\ell}{ms}^2}^u.
	\end{align*}
	Hence, the bounds $\gamma t/\lambda\leq\gamma r\leq u\leq \Delta t$ yield
	\begin{align*}
		\pr\brk{\fE(r,t,u)}&\leq \binom nt^2\binom m{u}\brk{\binom{2t}2\bcfr{(1+o(1))\Delta\ell}{ms}^2}^{u}
		\leq\bcfr{\eul n}{t}^{2t}\bcfr{2\eul \Delta^2\ell^2t^2}{ms^2u}^u\\
						 &\leq\brk{\bcfr{\eul n}{t}^{\lambda/\gamma}\frac{2\eul\lambda\Delta^2\ell^2t}{\gamma ms^2}}^u \leq\brk{\bcfr{\eul n}{t}^{\lambda/\gamma}\cdot\frac{t\ln^4n}m}^u&&\mbox{[due to \eqref{eqell}, \eqref{eqs}]}.
	\end{align*}
	Further, since $\gamma=\Omega(\log^{\beta}n)$ and $m=\Omega(k\log n)$ while $t\leq \exp(-\log^{\alpha} n)k$ and $\alpha+\beta>1$, we obtain 
	\begin{align*}
		\pr\brk{\fE(r,t,u)}\leq \exp(-u\log^{\Omega(1)} n).
	\end{align*}
	Thus, summing on $1\leq t\leq\exp(-\ln^\alpha n)k$, $\gamma r\leq u\leq\Delta t$ and recalling $r=\lceil t/\lambda\rceil$, we obtain
	\begin{align}\label{eq_lemma_endgame_misclassified_2}
		\sum_{t,u}\pr\brk{\fE(r,t,u)}
&\leq\sum_{u\geq1}u\exp(-u\log^{\Omega(1)} n)=o(1).
	\end{align}
	Finally, the assertion follows from \eqref{eq_prop_dist_psi_9} and \eqref{eq_lemma_endgame_misclassified_2}.

\section{Parameter optimisation for the binary symmetric channel}\label{sec_lukas_bsc}

\noindent
Let $\optDs(\theta, \vec p) = \{d > 0 : \max\{\cbound(d,\theta), \ccentre(d)\} = \cex(\theta)\}$ be the set of those $d$ where the minimum in the optimisation problem \eqref{eqccentre} is attained.
The goal in this section is to show that for the binary symmetric channel this minimum is not always attained at the information-theoretic value $\dchan=\log2$ that minimises the term $\ccentre(d)$ from \eqref{eqccentre}.

\begin{proposition}\label{prop:bsc_example_not_log2}
    For any binary symmetric channel $\vec p$ given by $0 < \poi = \pio < 1/2$
        there is $\hat{\theta}(\poi)$ such that for all $\theta > \hat{\theta}(\poi)$ we have $\dchan = \log(2) \not\in \optDs(\theta, \vec p)$.
\end{proposition}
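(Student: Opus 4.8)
The plan is to let $\theta\to1^-$. For fixed $d$ the entropy term $\ccentre(d)$ from \eqref{eqccentre} is constant, whereas the local stability term $\cbound(d,\theta)$ from \eqref{eqsep1} blows up like $(1-\theta)^{-1}$; I will show that the rescaled limit
\begin{align*}
	g(d):=\lim_{\theta\to1^-}(1-\theta)\,\cbound(d,\theta)=\frac{-1}{d\,\log\!\bc{1-\kappa\,\eul^{-d}}},\qquad \kappa:=1-2\sqrt{\poi(1-\poi)}\in(0,1),
\end{align*}
exists, that $g$ is continuous and strictly positive on $(0,\infty)$ with $g(d)\to\infty$ as $d\to0^+$ and as $d\to\infty$, and --- the crucial point --- that $g'(\log2)<0$. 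Granting this, fix $\eps_0>0$ small enough that $d_1:=\log2+\eps_0$ obeys $g(d_1)<g(\log2)$. Pointwise convergence at $\log2$ and at $d_1$ then gives, for all $\theta$ close enough to $1$, that $\cbound(d_1,\theta)<\cbound(\log2,\theta)$; since moreover $\cbound(\log2,\theta)\to\infty$ and $\cbound(d_1,\theta)\to\infty$ while $\ccentre(\log2)=\cchan$ and $\ccentre(d_1)$ are fixed constants, for such $\theta$ we have $\max\{\cbound(d_1,\theta),\ccentre(d_1)\}=\cbound(d_1,\theta)$ and $\max\{\cbound(\log2,\theta),\ccentre(\log2)\}=\cbound(\log2,\theta)$. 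Hence
\begin{align*}
	\cex(\theta)\le\max\{\cbound(d_1,\theta),\ccentre(d_1)\}=\cbound(d_1,\theta)<\cbound(\log2,\theta)=\max\{\cbound(\log2,\theta),\ccentre(\log2)\},
\end{align*}
so $\dchan=\log2\notin\optDs(\theta,\vec p)$, which is the assertion with $\hat\theta(\poi)$ the resulting cut-off.

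To evaluate $g(d)$ I unwind the nested optimisation \eqref{eqYinterval}--\eqref{eqsep1} under the substitution $c=C/(1-\theta)$. As $\theta\to1^-$ the set $\cY(c,d,\theta)$ increases to $\cY_\infty(C,d):=\{y:Cd\KL{y}{\eul^{-d}}<1\}$, and the relation \eqref{eq_def_zy} defining $\fz$ passes to the limit, determining $\fz_\infty(y)\in[\poi,\pii]$ through $Cd\bc{\KL{y}{\eul^{-d}}+y\KL{\fz_\infty(y)}{\pii}}=1$. Subtracting this identity from the constraint $Cd\bc{\KL{y}{\eul^{-d}}+y\KL{\fz_\infty(y)}{\poi}}\ge1$ appearing in the definition of $\cbound$ cancels the common terms and reduces the constraint to $\KL{\fz_\infty(y)}{\poi}\ge\KL{\fz_\infty(y)}{\pii}$, which for a binary symmetric channel is exactly $\fz_\infty(y)\ge1/2$, equivalently (since $z\mapsto\KL{z}{\pii}$ is strictly decreasing on the interval $[\poi,\pii]$, which contains $1/2$) $\KL{\fz_\infty(y)}{\pii}\le K_0:=\KL{1/2}{\pii}$; eliminating $\fz_\infty(y)$ via its defining identity turns this into the linear-in-$y$ inequality $Cd\,\Psi(y)\ge1$, where $\Psi(y):=\KL{y}{\eul^{-d}}+K_0y$ is convex. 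Thus $g(d)$ is the least $C$ with $Cd\,\Psi(y)\ge1$ for all $y\in\cY_\infty(C,d)$. A short monotonicity check --- as $C$ decreases the interval $\cY_\infty(C,d)$ grows while $Cd\,\Psi$ shrinks, and $\Psi$ is decreasing at $y=0^+$ so its minimiser over $(0,1)$ lies strictly inside $\cY_\infty(C,d)$ at the critical $C$ --- gives $g(d)=\bc{d\min_{y\in(0,1)}\Psi(y)}^{-1}$; the elementary Bernoulli identity $\min_{y\in(0,1)}\bc{\KL{y}{q}+\lambda y}=-\log\bc{1-q+q\eul^{-\lambda}}$ applied with $q=\eul^{-d}$, $\lambda=K_0$ and the fact $\eul^{-K_0}=2\sqrt{\poi(1-\poi)}$ then yields the displayed formula for $g$. (Incidentally $\cyz(d,\theta)$ rescales to a limit strictly below $g(d)$, as it carries the larger coefficient $\KL{\poi}{\pii}>K_0$ on $y$; so the constraint $c>\cyz$ in \eqref{eqsep1} is slack.) The behaviour $g(d)\to\infty$ at $d\to0^+$ and $d\to\infty$ is immediate from the formula.

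It remains to show $g'(\log2)<0$. Write $g=-1/u$ with $u(d):=d\log(1-\kappa\eul^{-d})<0$, so $g'=u'/u^2$ and it suffices that $u'(\log2)\neq0$. Differentiating and setting $\eul^{-d}=1/2$,
\begin{align*}
	u'(\log2)=\psi(\kappa):=\log\!\bc{1-\tfrac{\kappa}{2}}+\frac{\kappa\log2}{2-\kappa}.
\end{align*}
One computes $\psi(0)=\psi(1)=0$ and $\psi'(\kappa)=\tfrac1{2-\kappa}\bc{\tfrac{2\log2}{2-\kappa}-1}$, which vanishes on $(0,1)$ only at $\kappa^\star=2(1-\log2)$, being negative for $\kappa<\kappa^\star$ and positive for $\kappa>\kappa^\star$. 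Hence $\psi$ decreases from $\psi(0)=0$ and then rises back to $\psi(1)=0$, so $\psi(\kappa)<0$ for every $\kappa\in(0,1)$. Since $\poi\in(0,1/2)$ forces $\kappa\in(0,1)$, we obtain $u'(\log2)=\psi(\kappa)<0$ and therefore $g'(\log2)<0$. (As $\poi\to0$ one has $\kappa\to1$ and $\psi(1)=0$, matching the fact that $\log2$ is optimal in the noiseless formula \eqref{eqcnoiseless}.)

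The bulk of the work, and the main obstacle, is the second paragraph: rigorously justifying the passage to the $\theta\to1^-$ limit, i.e.\ that after the rescaling $c=C/(1-\theta)$ the nested infima in \eqref{eqYinterval}--\eqref{eqsep1} converge to the limiting optimisation described there, together with the two auxiliary facts that at the critical value the binding $y$ is the interior minimiser of $\Psi$ rather than an endpoint of $\cY_\infty(C,d)$, and that the rescaled $\cyz$-threshold stays strictly below $g(d)$. The monotone dependence of every quantity on $c$ makes all of this routine but somewhat lengthy; by comparison the calculus of the third paragraph establishing $\psi(\kappa)<0$ is short.
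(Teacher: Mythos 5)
Your argument is correct in substance and, once unwound, lands on exactly the same object as the paper's proof: your limiting function $g(d)=-1/\bc{d\log(1-\kappa\eul^{-d})}$ with $\kappa=1-2\sqrt{\poi(1-\poi)}$ is precisely the quantity that the paper's \Lem~\ref{lem:bsc_c_localstab_bounds} sandwiches $(1-\theta)\cbound(d,\theta)$ between (up to the factor $\theta$ on the lower side), and your reduction of the constraint via ``subtract the $\fz$-identity and use $\KL{z}{\poi}\ge\KL{z}{\pii}\Leftrightarrow z\ge1/2$'' is the same symmetry trick the paper uses there. The differences are in the packaging. The paper proves a two-sided bound valid for \emph{every} $\theta$, which immediately yields an explicit $\hat\theta(\poi)$ and sidesteps any limit interchange; you instead rescale $c=C/(1-\theta)$ and pass to the $\theta\to1^-$ limit of the nested optimisation, which you correctly identify as the one step needing a careful (if routine) justification — that is exactly the work \Lem~\ref{lem:bsc_c_localstab_bounds} does, and in a cleaner, non-asymptotic form, so if you want to tighten your write-up you should prove the sandwich rather than the limit. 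On the other side of the ledger, your third paragraph is a genuine addition: the computation $\psi(\kappa)=\log(1-\kappa/2)+\kappa\log 2/(2-\kappa)<0$ for $\kappa\in(0,1)$ is, after the change of variables $x=\eul^{-d}$, literally the statement $\partial_x f(1/2,1-a)<0$ for $f(x,p)=\log(x)\log(1-px)$, i.e.\ a proof of the part of Fact~\ref{fact_z_channel_f_is_concave} that the paper leaves as ``elementary calculus''; together with concavity of $f$ it pins the minimiser of $g$ strictly to the right of $\log 2$, and your observation that $\psi(1)=0$ recovers the noiseless optimality of $\log 2$ is a nice sanity check. One small wrinkle: in the first paragraph you fix $d_1=\log2+\eps_0$ before knowing the sign of $g'(\log2)$ (and later remark that $u'(\log2)\neq0$ ``suffices''); since you do establish $g'(\log2)<0$ this is harmless, but strictly speaking the choice of side depends on that sign.
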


In order to show that $\dchan$ is suboptimal, we will use the following analytic bound on $\cbound(d, \theta)$ for binary symmetric channels.

\begin{lemma}\label{lem:bsc_c_localstab_bounds}
    For a binary symmetric channel $\vec p$ given by $\poi < 1/2$ we have
        \begin{align}
			\frac\theta{-(1-\theta)d \log(1 - (1-a)\exp(-d))}&<
			\cbound(d, \theta) \leq \frac{1}{-(1-\theta)d \log(1 - (1-a)\exp(-d))},\qquad\textup{where}\\ a &= \exp\bc{-\KL{1/2}{\poi}} = 2\sqrt{\poi(1-\poi)}.\label{eqlembsc_c_localstab_bounds_a}
	\end{align}
\end{lemma}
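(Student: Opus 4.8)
\textbf{Proof plan for \Lem~\ref{lem:bsc_c_localstab_bounds}.}
The plan is to specialise the definition \eqref{eqsep1} of $\cbound(d,\theta)$ to the binary symmetric channel $p_{00}=p_{11}=:p$, $p_{01}=p_{10}=:\poi=1-p<1/2$, and to exploit the symmetry $p_{\sigma'1}=p$ for $\sigma'=1$ and $=\poi$ for $\sigma'=0$. The first step is to compute the threshold function $\fz(y)=\fz_{c,d,\theta}(y)$ from \eqref{eq_def_zy}. In the symmetric case $\KL{\fz(y)}{p_{11}}=\KL{\fz(y)}{p}$, and we need $cd(1-\theta)(\KL y{\exp(-d)}+y\KL{\fz(y)}{p})=\theta$. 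The key simplification I would exploit is that the crucial quantity $\KL{\fz(y)}{p_{01}}-\KL{\fz(y)}{p_{11}}$ is \emph{affine} in $\fz(y)$: indeed $\KL{z}{\poi}-\KL{z}{p}=z\log\frac{p}{\poi}+(1-z)\log\frac{1-p}{1-\poi}=(2z-1)\log\frac{p}{\poi}$ using $p=1-\poi$. Hence along the curve $z=\fz(y)$ the ``$p_{01}$-rate'' in \eqref{eqsep1} equals the ``$p_{11}$-rate'' plus $cd(1-\theta)y(2\fz(y)-1)\log\frac p\poi$, and the latter is nonnegative precisely when $\fz(y)\geq 1/2$, which holds since $\fz(y)\in[p_{01},p_{11}]=[\poi,p]$ and we are on the branch where the $p_{11}$-constraint is tight at level $\theta<1$.

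The second step is to identify where the infimum over $y\in\cY(c,d,\theta)$ of the $p_{01}$-rate is attained. Here I would argue that the infimum of $\KL y{\exp(-d)}+y\KL{\fz(y)}{p_{01}}$ over $\cY(c,d,\theta)$ is attained at the value $y^\star$ maximising $\fz(y)$, and that at the optimum either $\fz(y^\star)=1/2$ (interior of the feasible $z$-range, giving the upper bound) or $\fz(y^\star)=1/2$ is \emph{not} reached because the $\cY$-constraint binds first. Concretely: the upper bound in \eqref{eqlembsc_c_localstab_bounds_a} comes from plugging in the pessimistic choice $\fz(y)=1/2$, i.e.\ bounding $\KL{\fz(y)}{p_{01}}\leq\KL{1/2}{\poi}$ uniformly, and then observing that at $z=1/2$ the defining equation \eqref{eq_def_zy} becomes $cd(1-\theta)(\KL y{\exp(-d)}+y\KL{1/2}{\poi})=\theta$; the onset of the $p_{01}$-constraint $cd(1-\theta)(\KL y{\exp(-d)}+y\KL{1/2}{\poi})\geq 1$ then occurs exactly when $\theta/\theta$ is replaced by $1$, forcing $c\cdot\theta^{-1}\cdot\theta = \cdots$. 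I would make this precise by the substitution $\KL y{\exp(-d)}+y\KL{1/2}{\poi}$ and noting that the minimiser $y$ of this convex expression subject to being in $\cY$ is $y=\exp(-d)\exp(-\KL{1/2}{\poi})=(1-a)\exp(-d)$ with $a=2\sqrt{\poi(1-\poi)}$, at which point $\KL y{\exp(-d)}=\ldots$ telescopes and one computes $\KL y{\exp(-d)}+y\KL{1/2}{\poi}=-\log(1-(1-a)\exp(-d))$ by the standard identity $\KL{\lambda\mu}{\mu}+\lambda\mu\log(1/\lambda)$-type manipulation (writing $y=(1-a)\exp(-d)$, $\KL y{\exp(-d)}=y\log((1-a))+(1-y)\log\frac{1-y}{1-\exp(-d)}$, and $y\KL{1/2}{\poi}=-y\log(1-a)$, so the $\log(1-a)$ terms cancel, leaving $(1-y)\log\frac{1-y}{1-e^{-d}}$ which one checks equals $-\log(1-(1-a)e^{-d})+$ lower-order — this needs a careful but routine calculation). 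Dividing the constraint ``$\geq 1$'' versus ``$\geq\theta$'' by $d(1-\theta)$ times this quantity yields the two bounds, with the extra factor $\theta$ on the left because the $p_{11}$-rate is pinned at $\theta$ rather than $1$.

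The third step is to verify that this particular $y$ is indeed the relevant minimiser, i.e.\ that it lies in $\cY(c,d,\theta)$ for the range of $c$ near $\cbound(d,\theta)$, and that no smaller $c$ works, so that the infima in \eqref{eqsep1} are controlled from both sides. For the lower bound I use $\KL{\fz(y)}{p_{01}}\geq\KL{\fz(y)}{p_{11}}$ (since $\fz(y)\geq 1/2$, as noted above, and $\KL z{\poi}\geq\KL z p$ for $z\geq 1/2$), which combined with the tightness of the $p_{11}$-rate at $\theta$ gives $\inf_y(p_{01}\text{-rate})\geq\theta/(cd(1-\theta))\cdot(cd(1-\theta))=\ldots$; turning the ``$\geq 1$'' threshold around yields $\cbound(d,\theta)>\theta\cdot(\text{denominator})^{-1}$. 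For the upper bound I use $\KL{\fz(y)}{p_{01}}\leq\KL{1/2}{\poi}$ only if $\fz(y)\leq 1/2$; since generically $\fz(y)>1/2$ this direction is subtler and I would instead bound the $p_{01}$-rate \emph{above} by evaluating at a feasible near-extremal $y$ where $\fz(y)$ is close to $1/2$, using that $\cY$ contains a neighbourhood of $\exp(-d)$ and continuity of $\fz$.

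\textbf{Main obstacle.} The delicate point is the third step: pinning down exactly which $y\in\cY(c,d,\theta)$ achieves the infimum of the $p_{01}$-rate and showing $\fz$ at that point has the sign/monotonicity needed, because $\fz$ is only defined implicitly and $\cY$ itself depends on $c$. I expect the clean resolution is that, because the $p_{11}$-rate is pinned at the constant $\theta$ along the whole curve $y\mapsto\fz(y)$, the difference $(p_{01}\text{-rate})-(p_{11}\text{-rate})=cd(1-\theta)y(2\fz(y)-1)\log\frac p\poi$ is what must be minimised, and this is minimised by making $\fz(y)$ as close to $1/2$ (and $y$ as small) as the feasibility of \eqref{eq_def_zy} permits — which forces $y$ toward $(1-a)e^{-d}$. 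Making the ``as the feasibility permits'' quantitative, uniformly in $d$, is the real work; everything else is bookkeeping with Kullback--Leibler identities and the substitution $a=\exp(-\KL{1/2}{\poi})=2\sqrt{\poi(1-\poi)}$.
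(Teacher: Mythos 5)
Your high-level strategy coincides with the paper's: exploit the BSC symmetry $\KL z{\poi}-\KL z{\pii}=(2z-1)\log(\pii/\poi)$, reduce both bounds to the explicit one-dimensional minimisation of $y\mapsto\KL y{\exp(-d)}+y\KL{1/2}{\poi}$, and use the special role of $z=1/2$. But two concrete steps would fail as written. First, the minimiser of $\KL y{\exp(-d)}+y\KL zp$ is not $\exp(-d)\exp(-\Kl zp)$: setting the derivative to zero gives $y^\star=a\exp(-d)/(1-(1-a)\exp(-d))$ with $a=\exp(-\Kl zp)$ (note also that your identification ``$\exp(-d)\exp(-\KL{1/2}{\poi})=(1-a)\exp(-d)$'' contradicts your own definition $a=\exp(-\KL{1/2}{\poi})$). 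With your value $y=(1-a)\exp(-d)$ the claimed cancellation does not occur, since $y\log(y/\exp(-d))=y\log(1-a)$ while $y\KL{1/2}{\poi}=-y\log a$; with the correct $y^\star$ the minimum value $-\log(1-(1-a)\exp(-d))$ falls out cleanly, which is exactly the paper's Fact~\ref{fact:argmin_of_condition2}.

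Second, you have assigned the two halves of the argument to the wrong inequalities. The upper bound $\cbound(d,\theta)\le\hat c$ requires a \emph{uniform lower} bound on the $p_{01}$-rate so that the constraint ``$\inf_y(\cdots)\ge1$'' in \eqref{eqsep1} is verified at $c=\hat c$; this is where $\KL{\fz(y)}{\poi}\ge\KL{1/2}{\poi}$ does the work (valid because $\fz(y)\ge1/2$, which follows from $\hat cd(1-\theta)(\KL y{\exp(-d)}+y\KL{1/2}{\pii})\ge1>\theta$ together with the monotonicity of $z\mapsto\Kl z{\pii}$) --- not the reversed inequality ``$\le$'' that you write in your second step. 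Conversely, the strict lower bound $\cbound(d,\theta)>\check c$ requires exhibiting \emph{one} $y\in\cY(\check c,d,\theta)$ at which the $p_{01}$-rate is $<1$; the right witness is the minimiser $\hat y$ of $\KL y{\exp(-d)}+y\KL{1/2}{\pii}$, where by construction of $\check c$ one has $\fz(\hat y)=1/2$ exactly, so the $p_{01}$-rate equals the $p_{11}$-rate $=\theta<1$. Your plan to ``bound the $p_{01}$-rate above at a near-extremal $y$'' is the lower-bound move, not the upper-bound one, and your stated lower-bound step ($\inf_y(p_{01}\text{-rate})\ge\theta$) by itself proves nothing about where the constraint fails. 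Once these two directions are swapped back and the minimiser is corrected, the rest of your outline goes through; in particular the difficulty you flag about locating the infimum over $\cY(c,d,\theta)$ dissolves, because the uniform bound holds over all of $[0,1]\supseteq\cY(c,d,\theta)$ and the witness $\hat y$ lies in $\cY(\check c,d,\theta)$ simply because $\KL{1/2}{\pii}>0$.
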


The proof of \Lem~\ref{lem:bsc_c_localstab_bounds} uses the following fact, which can be verified by elementary calculus.

\begin{fact}\label{fact:argmin_of_condition2}
    For all $d > 0$, $0 < p < 1$ and $0 \leq z \leq 1$ we have
	\begin{align*}\argmin_{0 \leq y \leq 1} \cbc{\KL{y}{\exp(-d)} + y \KL{z}{p}} &= \frac{a\exp(-d)}{1 - (1-a)\exp(-d)},\quad\textup{where }a = \exp\bc{-\KL{z}{p}},\quad\mbox{and}\\
		\min_{0 \leq y \leq 1} \cbc{\KL{y}{\exp(-d)} + y \KL{z}{p}} &= -\log(1-(1-a)\exp(-d)).
	\end{align*}
\end{fact}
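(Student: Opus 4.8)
The plan is to treat this as a one-variable strictly convex optimisation and solve it by elementary calculus. Write $q=\exp(-d)\in(0,1)$, $K=\KL zp\in[0,\infty]$ and $a=\exp(-K)\in(0,1]$ (interpreting the degenerate cases via the paper's conventions $0\log0=0$, $\log0=-\infty$, $0\cdot\infty=0$), and set $g(y)=\KL yq+yK$ for $y\in[0,1]$. Since $y\mapsto\KL yq$ is strictly convex on $[0,1]$ (its second derivative is $\tfrac1y+\tfrac1{1-y}>0$) and $y\mapsto yK$ is linear, $g$ is strictly convex, so it has a unique global minimiser, which is either an interior stationary point or an endpoint.

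First I would compute the derivative. Expanding $\KL yq=y\log y+(1-y)\log(1-y)-y\log q-(1-y)\log(1-q)$ gives
\[
	g'(y)=\log\frac{y}{1-y}-\log\frac{q}{1-q}+K .
\]
Setting $g'(y)=0$ and using $K=-\log a$ yields $\frac{y}{1-y}=\frac{aq}{1-q}$, and solving this linear equation gives the unique interior stationary point
\[
	y^\star=\frac{aq}{1-(1-a)q}=\frac{a\exp(-d)}{1-(1-a)\exp(-d)} .
\]
A short check shows $y^\star\in(0,1)$: the denominator $1-(1-a)q$ lies in $(0,1]$ because $(1-a)q\in[0,1)$, and $y^\star<1$ is equivalent to $q<1$; in the degenerate case $a=0$ the formula gives $y^\star=0$, which is indeed the minimiser there. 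Hence by strict convexity $y^\star$ is the global minimiser on $[0,1]$, which is the claimed $\argmin$.

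To evaluate $\min g$, I would abbreviate $D=1-(1-a)q$, so that $y^\star=aq/D$ and, crucially, $1-y^\star=(1-q)/D$. Then $y^\star/q=a/D$ and $(1-y^\star)/(1-q)=1/D$, so
\[
	\KL{y^\star}{q}=\frac{aq}{D}\log\frac aD+\frac{1-q}{D}\log\frac1D=\frac{aq}{D}\log a-\log D,
\]
where the last step uses $aq+(1-q)=D$ to collect the $\log D$ terms. Adding $y^\star K=-\tfrac{aq}{D}\log a$ gives $g(y^\star)=-\log D=-\log\bigl(1-(1-a)\exp(-d)\bigr)$, which is the claimed minimum value.

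The argument is entirely routine; the only points needing a little care are the strict convexity of $\KL{\cdot}{q}$ (so that the interior stationary point is the unique global optimiser, with no need to compare against the endpoints) and the bookkeeping for the degenerate cases $a\in\{0,1\}$, $K\in\{0,\infty\}$, $q\to0,1$ under the stated conventions. I do not expect any substantive obstacle.
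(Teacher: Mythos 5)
Your proof is correct and matches what the paper intends: the paper gives no proof of this fact, stating only that it "can be verified by elementary calculus", and your strict-convexity-plus-stationary-point computation (with the clean identity $1-y^\star=(1-q)/D$ doing the work in evaluating the minimum) is exactly that verification, including the degenerate cases handled via the paper's conventions.
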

%

\begin{proof}[Proof of \Lem~\ref{lem:bsc_c_localstab_bounds}]
    Fact~\ref{fact:argmin_of_condition2} shows that
        \[\min_{0 \leq y \leq 1} \cbc{\KL{y}{\exp(-d)} + y\KL{1/2}{\poi}} = - \ln(1 - (1-a)\exp(-d)).\]
		with $a$ as in \eqref{eqlembsc_c_localstab_bounds_a}.
    Hence, it is sufficient to show that
	\begin{align}\label{eqlembsc_c_localstab_bounds_1}
		\theta < \min_{0 \leq y \leq 1} \cbc{\cbound(d, \theta) \cdot d (1-\theta) \bc{\KL{y}{\exp(-d)} + y\KL{1/2}{\poi}}} \leq 1.\end{align}

    Let us first prove the upper bound.
    Choose $\hat{c}(d, \theta)$ such that
	\[\min_{0 \leq y \leq 1} \cbc{\hat{c}(d, \theta) d (1-\theta) \bc{\KL{y}{\exp(-d)} + y\KL{1/2}{\poi}}} = 1;\]
        we need to show that $\cbound(d, \theta) \leq \hat{c}(d, \theta)$.
    By the channel symmetry and because $\poi < 1/2 < \pii$, we have $$\KL{1/2}{\poi} = \KL{1/2}{\pii} \leq \KL{\poi}{\pii}.$$
    Therefore, the definition of $\hat{c}(d, \theta)$ ensures that for all $y \in [0, 1]$ we have
	\begin{align*}\hat{c}(d, \theta) \cdot d(1-\theta) \bc{\KL{y}{\exp(-d)} + y\KL{\poi}{\pii}} &\geq \hat{c}(d, \theta) \cdot d(1-\theta) \bc{\KL{y}{\exp(-d)} + y\KL{1/2}{\poi}} \geq 1 > \theta,\end{align*}
        and thus $\hat{c}(d, \theta) \geq \cyz(d, \theta)$.
		Once again by channel symmetry and the definitions of $\hat{c}(d, \theta)$ and $\fz(y)$ (see \eqref{eq_def_zy}),
        we see that $\KL{\fz(y)}{\pii} \leq \KL{1/2}{\pii} = \KL{1/2}{\poi}$ and hence $\KL{\fz(y)}{\poi} \geq \KL{1/2}{\poi}$ for all $y$.
    Consequently, the definition of $\hat{c}(d, \theta)$ ensures we see that for all $y \in [0, 1]$,
        \[\hat{c}(d, \theta) \cdot d(1-\theta) \bc{\KL{y}{\exp(-d)} + y\KL{\fz(y)}{\poi}} \geq
            \hat{c}(d, \theta) \cdot d(1-\theta) \bc{\KL{y}{\exp(-d)} + y\KL{1/2}{\poi}} \geq 1.\]
			Hence, $\hat{c}(d, \theta) \geq \cbound(d, \theta)$, which is the right inequality in \eqref{eqlembsc_c_localstab_bounds_1}.
        
    Moving on to the lower bound, choose $\check{c}(d, \theta)$ such that
	\begin{align}\label{eqlembsc_c_localstab_bounds_2}\min_{0 \leq y \leq 1} \cbc{\check{c}(d, \theta) d (1-\theta) \bc{\KL{y}{\exp(-d)} + y\KL{1/2}{\pii}}} = \theta;\end{align}
    we need to show that $\cbound(d, \theta) > \check{c}(d, \theta)$.
	The $y = \hat{y}$ where the minimum \eqref{eqlembsc_c_localstab_bounds_2} is attained satisfies $\hat{y} \in \cY(c,d,\theta)$, because $\KL{1/2}{\pii} > 0$ (due to $\pii > 1/2$).
    Moreover, $\fz(\hat{y}) = 1/2$ by the definition of $\fz(\nix)$.
    But since $\KL{1/2}{\poi} = \KL{1/2}{\pii}$ by symmetry of the channel,
        we have
        \[\check{c}(d, \theta) \cdot d(1-\theta) \bc{\KL{\hat{y}}{\exp(-d)} + \hat{y} \KL{\fz(\hat{y})}{\poi}} = \theta < 1.\]
		Hence, we obtain $\check{c}(d, \theta) < \cbound(d, \theta)$, which is the left inequality in \eqref{eqlembsc_c_localstab_bounds_1}.
\end{proof}

To complete the proof of \Prop~\ref{prop:bsc_example_not_log2} we need a second elementary fact.

\begin{fact}\label{fact_z_channel_f_is_concave}
    The function \(f(x, p) = \log(x) \log(1-p x)\) is concave in its first argument for $x, p \in (0, 1)$,
        and for any given $p \in (0, 1)$, any $x$ maximizing $f(x, p)$ is strictly less than $\frac{1}{2}$.
\end{fact}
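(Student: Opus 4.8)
The plan is to prove the two assertions separately, establishing strict concavity first and then reading off the location of the maximiser from it. Throughout, fix $p\in(0,1)$ and write $f(x)=\log x\,\log(1-px)$ for $x\in(0,1)$; recall that $\log=\ln$ and that on this range $\log x<0$ and $\log(1-px)<0$, so $f(x)>0$.

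First I would compute, by direct differentiation,
\[
f'(x)=\frac{\log(1-px)}{x}-\frac{p\log x}{1-px},\qquad
f''(x)=-\frac{2p}{x(1-px)}-\frac{\log(1-px)}{x^2}-\frac{p^2\log x}{(1-px)^2}.
\]
The subtlety is that the last two summands of $f''$ are \emph{positive} (each logarithm is negative), so I must show the first, negative term dominates. The key estimate is the elementary bound $-\log(1-t)\le t/(1-t)$ for $t\in(0,1)$, applied with $t=px$, which gives $-\log(1-px)/x^2\le p/\big(x(1-px)\big)$ and hence $f''(x)\le -\,p/\big(x(1-px)\big)-p^2\log x/(1-px)^2$. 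Multiplying this last expression by the positive factor $(1-px)^2/p$ reduces the claim $f''(x)<0$ to $-(1-px)/x-p\log x\le 0$, i.e.\ to $p\big(1+\log(1/x)\big)\le 1/x$. Substituting $y=1/x>1$, this follows because $p<1$ and $1+\log y<y$ for $y>1$ (the function $y\mapsto y-1-\log y$ vanishes at $1$ and is increasing thereafter). This yields strict concavity of $f(\nix,p)$ on $(0,1)$.

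For the second assertion, I would first note that $f(x)>0$ on $(0,1)$ while $f(x)\to0$ both as $x\to0^+$ (since $x\log x\to0$ and $\log(1-px)\sim-px$) and as $x\to1^-$ (since $\log x\to0$); combined with strict concavity this shows $f(\nix,p)$ has a unique interior maximiser $x^\star$, characterised by $f'(x^\star)=0$. Since $f''<0$, the map $f'$ is strictly decreasing, so $x^\star<\tfrac12$ is equivalent to $f'(\tfrac12)<0$. A short computation gives $f'(\tfrac12)=g(p)/(1-p/2)$ with $g(p)=2(1-p/2)\log(1-p/2)+p\log2$. One checks $g(0)=g(1)=0$ and $g''(p)=\tfrac1{2-p}>0$ on $[0,1]$, so $g$ is strictly convex and therefore strictly negative on $(0,1)$; hence $f'(\tfrac12)<0$ and $x^\star<\tfrac12$.

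The main obstacle is the concavity step: the naive sign analysis of $f''$ fails because two of its three terms are positive, so the argument hinges on choosing the right logarithmic inequality ($-\log(1-t)\le t/(1-t)$) to control the positive contributions. Once that is in place, both parts reduce to the standard bound $1+\log y\le y$ and to the convexity of a single-variable function vanishing at the two endpoints, which are routine calculus.
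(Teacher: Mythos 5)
Your proof is correct. The paper itself offers no argument for this fact (it is merely asserted as verifiable by elementary calculus), and your verification is sound: the second-derivative computation is right, the bound $-\log(1-t)\le t/(1-t)$ does reduce strict concavity to $p(1+\log(1/x))<1/x$, which holds since $1+\log y<y$ for $y>1$, and the sign of $f'(\tfrac12)$ is correctly settled via the convexity of $g(p)=2(1-p/2)\log(1-p/2)+p\log 2$ with $g(0)=g(1)=0$.
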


\begin{proof}[Proof of \Prop~\ref{prop:bsc_example_not_log2}]
    We reparameterise the bounds on $\cbound(d, \theta)$ from \Lem~\ref{lem:bsc_c_localstab_bounds} in terms of $\exp(-d)$, obtaining
        \begin{align*}
			\frac{\theta}{-(1-\theta)f(\exp(-d), 1-a)}&<\cbound(d, \theta) \leq \frac{1}{-(1-\theta)f(\exp(-d), 1-a)},\quad\textup{where}\\ 
f(x, p) &= \log(x)\log(1 - px),\quad a = \exp\bc{-\KL{1/2}{\poi}}.\end{align*}
    As $0 < a < 1$ Fact~\ref{fact_z_channel_f_is_concave} shows that any $x$ maximizing $f(x, 1-a)$ is strictly less than $1/2$.
    Hence, for $\hat{d} > \log(2)$ minimizing $f(\exp(-d), 1-a)$,
        we have $f(\exp(-d), 1-a) < f(1/2, 1-a) = f(\exp(-\dchan), 1-a)$. 
		In particular, the value
        \[\hat{\theta}(\poi) = \inf\cbc{0 < \theta < 1 : \frac{\ccentre(\hat{d})}{\theta} < \frac{1}{-\hat{d}(1-\theta)\log(1-(1-a)\exp(-\hat{d}))} < \frac{\theta}{-\dchan(1-\theta)\log(1-(1-a)\exp(-\dchan))}}\]
		is well defined.
        Hence, for all $\theta > \hat{\theta}(\poi)$ we have
        \[\max\cbc{\cbound(\hat{d}, \theta), \ccentre(\hat{d})} = \cbound(\hat{d}, \theta) < \cbound(\dchan, \theta) = \max\cbc{\cbound(\dchan, \theta), \ccentre(\dchan)},\]
        and thus $\dchan \not\in \optDs(\theta, \vec p)$.
\end{proof}

\section{Parameter optimisation for the $Z$-channel}\label{sec_lukas_Z}

\noindent
Much as in Appendix~\ref{sec_lukas_bsc} the goal here is to show that also for the $Z$-channel the value $\dchan$ from~\eqref{eqdcentre} at which $\ccentre(d)$ from \eqref{eqccentre} attains its minimum is not generally the optimal choice to minimise $\max\{\cbound(d,\theta),\ccentre(d)\}$ and thus obtain the optimal bound $\cex(\theta)$.
%
To this end we derive the explicit formula \eqref{eqLukZ1} for $\cbound(d,\theta)$; the derivation of the second formula \eqref{eqLukZ2} is elementary.

\begin{proposition}\label{z_channel_cbound_closed_form}
    For a $Z$-channel $\vec p$ given by $\poi = 0$ and $0 < \pii < 1$ we have $\cbound(d, \theta) = \frac{\theta}{-(1-\theta) d \log(1 - \exp(-d) \pii)}$.
\end{proposition}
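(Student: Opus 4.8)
The plan is to reduce the claim to the explicit one-variable convex optimisation that defines $\cyz(d,\theta)$ and to solve it in closed form using Fact~\ref{fact:argmin_of_condition2}. First I would do the book-keeping: for a $Z$-channel we have $\poi=0$, so the second branch of \eqref{eqsep1} applies and $\cbound(d,\theta)=\cyz(d,\theta)$; and since $\pii<1$ the first branch of \eqref{eqcyz} applies. Because $\poi=0$ we get $\KL{\poi}{\pii}=\KL{0}{\pii}=-\log(1-\pii)=:\gamma>0$, so the quantity to be controlled in \eqref{eqcyz} is $cd(1-\theta)g(y)$ with $g(y)=\KL{y}{\exp(-d)}+\gamma y$.

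Next I would apply Fact~\ref{fact:argmin_of_condition2} with $z=0$ and $p=\pii$, so that $a=\exp(-\KL{0}{\pii})=1-\pii$: the unconstrained minimiser of $g$ over $[0,1]$ is $y^\star=\frac{(1-\pii)\exp(-d)}{1-\pii\exp(-d)}$ and $g(y^\star)=-\log(1-\pii\exp(-d))$. I would also record that $y^\star<\exp(-d)$, since $\frac{1-\pii}{1-\pii\exp(-d)}<1$. Setting $c_0=\frac{\theta}{(1-\theta)d\,g(y^\star)}=\frac{\theta}{-(1-\theta)d\log(1-\pii\exp(-d))}$, we have $c_0>0$ and $c_0d(1-\theta)g(y^\star)=\theta$, and the goal becomes to prove $\cyz(d,\theta)=c_0$.

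The heart of the proof is to show that the set $\{c>0:\inf_{y\in\cY(c,d,\theta)}cd(1-\theta)g(y)\geq\theta\}$ equals $[c_0,\infty)$, so that its infimum is $c_0$. The pivotal observation is that $y^\star\in\cY(c_0,d,\theta)$: indeed $c_0d(1-\theta)\KL{y^\star}{\exp(-d)}=\theta-c_0d(1-\theta)\gamma y^\star<\theta$. Since $\cY(c,d,\theta)=\{y:\KL{y}{\exp(-d)}<\theta/(cd(1-\theta))\}$ only grows as $c$ decreases, $y^\star\in\cY(c,d,\theta)$ for every $0<c\leq c_0$; as $g(y^\star)=\min_{[0,1]}g$ this gives $\inf_{y\in\cY(c,d,\theta)}cd(1-\theta)g(y)=cd(1-\theta)g(y^\star)=(c/c_0)\theta$, which is $\geq\theta$ exactly when $c\geq c_0$. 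For $c>c_0$ one distinguishes two cases: if $y^\star$ still lies in $\cY(c,d,\theta)$ the infimum equals $(c/c_0)\theta>\theta$; otherwise, writing $\cY(c,d,\theta)=(y_-,y_+)$, monotonicity of $y\mapsto\KL{y}{\exp(-d)}$ on $[0,\exp(-d)]$ together with $y^\star<\exp(-d)$ forces $0<y^\star\leq y_-$, and then convexity of $g$ makes $g$ nondecreasing on $\cY(c,d,\theta)$, so the infimum is $cd(1-\theta)g(y_-)=\theta+cd(1-\theta)\gamma y_->\theta$, using $cd(1-\theta)\KL{y_-}{\exp(-d)}=\theta$. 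Thus the condition holds for all $c\geq c_0$ and fails for $c<c_0$, giving $\cbound(d,\theta)=\cyz(d,\theta)=c_0$.

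I do not anticipate a real obstacle here: the only care needed is the standard point that the infima in \eqref{eqcyz} are over the relatively open set $\cY(c,d,\theta)$, which is harmless by continuity of $g$, and the verification that the $Z$-channel indeed lands in the claimed branches of \eqref{eqsep1}--\eqref{eqcyz}. Everything else is a one-dimensional convexity argument powered by Fact~\ref{fact:argmin_of_condition2}.
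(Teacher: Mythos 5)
Your proof is correct and follows the same route as the paper: reduce $\cbound(d,\theta)$ to $\cyz(d,\theta)$ (immediate from the $p_{01}=0$ branch of \eqref{eqsep1}) and evaluate $\cyz$ via Fact~\ref{fact:argmin_of_condition2} with $a=\exp(-\KL{0}{\pii})=1-\pii$. The only difference is that you spell out why the infimum over the restricted set $\cY(c,d,\theta)$ still produces the unconstrained minimum value of $g$ — a point the paper compresses into "direct consequence of Fact~\ref{fact:argmin_of_condition2}" — and your case analysis there is accurate.
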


\begin{proof}
    We observe that for a $Z$-channel we have $\cbound(d, \theta) = \cyz(d, \theta)$.
    Indeed, fix any $c > \cyz(d, \theta)$.
	Then by the definitions \eqref{eqcyz} of $\cyz(d, \theta)$ and of $\fz(y)$ we have $\fz(y) > \poi$.
    Since the $Z$-channel satisfies $\poi = 0$, the value $\KL{\fz(y)}{\poi}$ diverges for all $c > \cyz(d, \theta)$, rendering the condition in the definition of $\cbound(d, \theta)$ void for all $y > 0$.
    Moreover, since $c > \cyz(d,\theta)$, we also have $0 \not\in \cY(c,d,\theta)$, and thus $c \geq \cbound(d, \theta)$.
    Since $\cbound(d, \theta) \geq \cyz(d, \theta)$ by definition, this implies that $\cbound(d, \theta) = \cyz(d, \theta)$ on the $Z$-channel.

    Hence, it remains to verify that $\cbound(d,\theta) = \cyz(d,\theta)$ has the claimed value.
    This is a direct consequence of Fact~\ref{fact:argmin_of_condition2} (with $z=\poi$ and $p=\pii$) in combination with the fact that $\KL{\poi}{\pii} = \KL{0}{\pii} = -\log(\pio)$ and thus $1 - \exp(-\KL{\poi}{\pii}) = 1 - \pio = \pii$.
\end{proof}

The following proposition shows that indeed $d=\dchan$ is not generally the optimal choice.
Recall that $\optDs(\theta, \vec p)$ is the set of $d$ where the minimum in the optimisation problem defining $\cex(\theta)$ is attained for a given channel $\vec p$.

\begin{proposition}\label{z_channel_dchan_suboptimal}
    For a $Z$-channel $\vec p$ given by $\poi = 0$ and any $0 < \pii < 1$ there is a $\hat{\theta}(\pii) < 1$ such that for all $\theta > \hat{\theta}(\pii)$ we have $\dchan \not\in \optDs(\theta, \vec p)$.
\end{proposition}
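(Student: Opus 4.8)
The plan is to follow the blueprint of \Prop~\ref{prop:bsc_example_not_log2}: exhibit a value $\hat d$, independent of $\theta$, at which $\cbound(\hat d,\theta)$ is strictly smaller than $\cbound(\dchan,\theta)$, and then observe that once $\theta$ is close enough to $1$ the term $\cbound$ dwarfs $\ccentre$ near both $\hat d$ and $\dchan$, so that minimising $\max\{\cbound(d,\theta),\ccentre(d)\}$ reduces to minimising $\cbound(d,\theta)$ — which $\dchan$ fails to do. Throughout I abbreviate $x=\exp(-d)$ and $f(x,p)=\log(x)\log(1-px)$, so that \Prop~\ref{z_channel_cbound_closed_form} (i.e.\ \eqref{eqLukZ1}) reads $\cbound(d,\theta)=\tfrac{\theta}{(1-\theta)f(x,\pii)}$, with $f(x,\pii)>0$ for $x\in(0,1)$.

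First I would locate the minimiser of $\cbound(\cdot,\theta)$. Since $f(x,\pii)$ does not involve $\theta$, minimising $\cbound(\cdot,\theta)$ over $d>0$ is the same as maximising $x\mapsto f(x,\pii)$ over $(0,1)$. By Fact~\ref{fact_z_channel_f_is_concave} the function $f(\cdot,\pii)$ is concave with maximiser strictly below $1/2$; since it is real-analytic, non-constant, and tends to $0$ at both endpoints, its maximiser $x^\star<1/2$ is unique, and therefore $\hat d=-\log x^\star>\log2$ is the unique minimiser of $\cbound(\cdot,\theta)$.

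Next I would show $\dchan<\log2<\hat d$. Recall that $\dchan$ is the value at which $\ccentre(d)$ is minimised; by \eqref{eqLukZ2} this means $\dchan=-\log x^{\mathrm{opt}}$, where $x^{\mathrm{opt}}$ maximises $I(x)=h(\pio+\pii x)-(1-x)h(\pio)$ over $x\in(0,1)$. Since $x\mapsto h(\pio+\pii x)$ is concave and $x\mapsto(1-x)h(\pio)$ is affine, $I$ is concave, so it is enough to check that $I'(1/2)>0$. An elementary computation yields $I'(1/2)=-\pio\log\pio-\pii\log(1+\pio)$, and the function $\eps\mapsto-\eps\log\eps-(1-\eps)\log(1+\eps)$ is strictly concave on $(0,1)$ and vanishes at both endpoints, hence is strictly positive on $(0,1)$; thus $I'(1/2)>0$, $x^{\mathrm{opt}}>1/2$, and $\dchan<\log2$. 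In particular $\exp(-\dchan)>1/2>x^\star$, so by uniqueness of the maximiser of $f(\cdot,\pii)$ we get $f(\exp(-\dchan),\pii)<f(x^\star,\pii)$ and hence $\cbound(\dchan,\theta)=\tfrac{\theta}{(1-\theta)f(\exp(-\dchan),\pii)}>\tfrac{\theta}{(1-\theta)f(x^\star,\pii)}=\cbound(\hat d,\theta)$ for every $\theta\in(0,1)$.

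Finally, $\ccentre(\hat d)$ is a finite positive constant (the $Z$-channel with non-deterministic input has strictly positive mutual information), whereas $\cbound(\hat d,\theta)=\tfrac{\theta}{(1-\theta)f(x^\star,\pii)}\to\infty$ as $\theta\to1$. Hence there is $\hat\theta(\pii)<1$ such that $\cbound(\hat d,\theta)\ge\ccentre(\hat d)$ for all $\theta>\hat\theta(\pii)$, and for such $\theta$
\begin{align*}
	\cex(\theta)\ \le\ \max\{\cbound(\hat d,\theta),\ccentre(\hat d)\}\ =\ \cbound(\hat d,\theta)\ <\ \cbound(\dchan,\theta)\ \le\ \max\{\cbound(\dchan,\theta),\ccentre(\dchan)\}.
\end{align*}
Thus $\dchan$ does not attain the infimum \eqref{eqcex}, i.e.\ $\dchan\notin\optDs(\theta,\vec p)$, as claimed. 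The main obstacle is the sign determination $I'(1/2)>0$ in the third paragraph, which places $\dchan$ on the decreasing side of $\cbound(\cdot,\theta)$ and thereby separates it from $\hat d$; the remaining ingredients — the closed forms, the unimodality of $f(\cdot,\pii)$ from Fact~\ref{fact_z_channel_f_is_concave}, and the blow-up of $\cbound$ as $\theta\to1$ — are routine.
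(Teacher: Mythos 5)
Your proof is correct and follows essentially the same route as the paper: reparameterise $\cbound$ via $f(x,\pii)=\log(x)\log(1-\pii x)$, use the concavity of $f$ to place its maximiser strictly below $1/2$ while $\exp(-\dchan)>1/2$, and then let $\theta\to1$ so that the diverging $\cbound$ term dominates $\ccentre$ at both $d$-values. The only addition is that you supply the elementary-calculus verification of Fact~\ref{fact_z_channel_negtest_more_than_half} (via $I'(1/2)=-\pio\log\pio-\pii\log(1+\pio)>0$), which the paper states without proof, and your closing inequality chain needs one fewer condition in the definition of $\hat\theta(\pii)$ than the paper's.
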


Towards the proof of \Prop~\ref{z_channel_dchan_suboptimal} we state the following fact whose proof comes down to basic calculus.

\begin{fact}\label{fact_z_channel_negtest_more_than_half}
    For a $Z$-channel $\vec p$ where $\poi = 0$ and $0 < \pio < 1$, we have $\exp(-\dchan(\vec p)) > \frac{1}{2}$.
\end{fact}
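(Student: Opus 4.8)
The plan is to specialise the definitions \eqref{eqPhil} and \eqref{eqdcentre} to the $Z$-channel and to reduce the claim to a one-variable inequality that succumbs to a short convexity argument.

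First I would plug $\poo=1$, $\poi=0$ and $\pio=1-\pii$ into \eqref{eqPhil}. Since $h(1)=0$ and $h(1-x)=h(x)$, this gives $\phi=-h(\pii)/\pii<0$. Substituting into \eqref{eqdcentre} and using $\pii-\poi=\pii$ yields
\[
	\exp(-\dchan)=\frac{(1-\tanh(\phi/2))/2-\pio}{\pii}.
\]
A direct rearrangement shows that $\exp(-\dchan)>1/2$ is equivalent to $-\tanh(\phi/2)>\pio$, i.e.\ (using $-\phi/2=h(\pii)/(2\pii)$) to
\[
	\tanh\bc{\frac{h(\pii)}{2\pii}}>1-\pii.
\]
Intuitively this says that under the capacity-achieving $d=\dchan$ less than half the tests are actually positive; the negativity of $\phi$ is exactly what makes this plausible, and it is consistent with \Lem~\ref{lemma_basic}.

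Next, set $\epsilon=\pio\in(0,1)$, so that $\pii=1-\epsilon$ and $h(\pii)=h(\epsilon)$. Using the equivalence $\tanh(x)>y\Longleftrightarrow\exp(2x)>\tfrac{1+y}{1-y}$, valid for $y\in(-1,1)$ because $\tanh$ is strictly increasing with $\tanh^{-1}(y)=\tfrac12\log\tfrac{1+y}{1-y}$, the inequality above turns into $h(\epsilon)>(1-\epsilon)\bc{\log(1+\epsilon)-\log(1-\epsilon)}$. Writing out $h(\epsilon)=-\epsilon\log\epsilon-(1-\epsilon)\log(1-\epsilon)$, the $(1-\epsilon)\log(1-\epsilon)$-terms cancel and the whole Fact reduces to
\[
	F(\epsilon):=-\epsilon\log\epsilon-(1-\epsilon)\log(1+\epsilon)>0\qquad\text{for all }\epsilon\in(0,1).
\]

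Finally I would settle this by concavity. A routine computation gives $F''(\epsilon)=\frac{\epsilon-1}{\epsilon(1+\epsilon)^2}<0$ on $(0,1)$, so $F$ is strictly concave there, while $\lim_{\epsilon\to0^+}F(\epsilon)=0$ and $F(1)=0$. A strictly concave function vanishing at both endpoints of an interval is strictly positive in its interior, which completes the argument. I do not expect a genuine obstacle here: the only work is the bookkeeping in the two reduction steps and the second-derivative computation, both elementary.
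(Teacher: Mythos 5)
Your proof is correct, and since the paper omits the argument entirely (it only remarks that the Fact "comes down to basic calculus"), your derivation is a valid filling-in of exactly the intended elementary computation: the specialisation $\phi=-h(\pii)/\pii$, the reduction of $\exp(-\dchan)>1/2$ to $-\tanh(\phi/2)>\pio$, the cancellation leaving $F(\epsilon)=-\epsilon\log\epsilon-(1-\epsilon)\log(1+\epsilon)$, and the concavity endgame with $F''(\epsilon)=(\epsilon-1)/(\epsilon(1+\epsilon)^2)<0$ all check out. As a minor bonus, your final inequality $-\tanh(\phi/2)>\pio$ also confirms that the argument of the second logarithm in \eqref{eqdcentre} is positive, so $\dchan$ is indeed well defined on the $Z$-channel.
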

%

\begin{proof}[Proof of \Prop~\ref{z_channel_dchan_suboptimal}]
    First we show that $d = \dchan$ does not minimize $\cbound(d, \theta)$ for all $0 < \pii < 1$.
    We reparameterise the expression for $\cbound(d, \theta)$ from \Prop~\ref{z_channel_cbound_closed_form} in terms of $\exp(-d)$, obtaining
        \[\cbound(d, \theta) = \frac{\theta}{(1-\theta)f(\exp(-d), \pii)},\quad\textup{where}\quad f(x, \pii) = \log(x)\log(1 - \pii x).\]
    Hence, for any given $\pii$ the value of $\cbound(d, \theta)$ is minimized when $f(\exp(-d), \pii)$ is minimized.
    Using elementary calculus we check that $f$ is concave in its first argument on the interval $(0, 1)$, and that for all $0 < \pii < 1$ the value of $x$ maximizing $f(x, \pii)$ is strictly smaller than $\frac{1}{2}$ (see Fact~\ref{fact_z_channel_f_is_concave}).
    Now for any $Z$-channel with $0 < \pio < 1$ we have $\exp(-\dchan) > \frac{1}{2}$ (using Fact~\ref{fact_z_channel_negtest_more_than_half}).
    Hence, for the $Z$-channel, $\dchan$ does not minimize $\cbound(d, \theta)$.

    Now let $d_1$ be a $d$ minimizing $\cbound(d, \theta)$; in particular, $d_1 \neq \dchan$.
    Since $\frac{\theta}{1-\theta}$ is increasing in $\theta$ and unbounded as $\theta\to1$, the same holds for $\cbound(d_1, \theta)$ and $\cbound(\dchan, \theta)$.
    Hence, we may consider
        \[\hat{\theta}(\pii) = \inf\cbc{0 < \theta < 1 : \cbound(d_1, \theta) > \ccentre(d_1), \cbound(\dchan, \theta) > \ccentre(\dchan)},\]
        check that it is strictly less than $1$, and that by definition of $\hat{\theta}(\pii)$, it holds for all $\theta > \hat{\theta}(\pii)$ that
        \[\max\cbc{\cbound(d_1, \theta), \ccentre(d_1)} = \cbound(d_1, \theta) < \cbound(\dchan, \theta) = \max\cbc{\cbound(\dchan, \theta), \ccentre(\dchan)}.\]
        Consequently, $\dchan \not\in \optDs(\theta, \vec p)$.
\end{proof}


\section{Comparison with the results of Chen and Scarlett on the symmetric channel}\label{apx_scarlettchen}

\noindent
Chen and Scarlett~\cite{scarlettchen} recently derive the precise information-theoretic threshold of the constant column design $\Gcc$ for the symmetric channel (i.e.\ $p_{11}=p_{00}$).
The aim of this section is to verify that their threshold coincides with $m\sim\cex(\theta)k\log(n/k)$, with $\cex(\theta)$ from \eqref{eqcex} on the symmetric channel.
The threshold quoted in~\cite[\Thm s~3 and~4]{scarlettchen} reads $m\sim \min_{d>0}\cchenges(d,\theta)k\log(n/k)$, where $\cchenges(d,\theta)$ is the solution to the following optimisation problem:
\begin{align}\label{eq:cscarlettchen}
\cchenges(d,\theta) =& \max\{\ccentre(d,\theta), \cchen(d,\theta)\},\qquad\mbox{where}\\
	\cchen(d, \theta) = &\Bigg[(1-\theta) d \min_{y \in (0,1), z \in (0,1)} \max\Bigg\{\frac 1 \theta \left( \KL{y}{\eul^{-d} + y \KL{z}{\eul^{-d}}}\right),\nonumber\\ 
						&\hspace{4cm} \min_{y'\in[\abs{y(2z-1)}, 1]}\left(  \KL{y'}{\eul^{-d}} + y' \KL{z'(z,y,y')}{p_{01}}\right) \Bigg\} \Bigg]^{-1}\label{eq:cscarlettchen1}\\
	z'(z,y,y') =& \frac 1 2 + \frac{y(2z-1)}{2y'}.&\nonumber
\end{align}

\begin{lemma}
	For symmetric noise ($p_{00} = p_{11}$) and any $d>0$ we have $\cchenges(d,\theta)= \max\{\cbound(d,\theta),\ccentre(d)\}$. 

\end{lemma}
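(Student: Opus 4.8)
Since the statement is an identity between two explicit functions of $(d,\theta,\vec p)$ at a fixed $d>0$, the proof will be a calculus-and-optimisation computation with no probabilistic content. Throughout I would restrict to $0<p_{01}=p_{10}<1/2<p_{11}=p_{00}<1$, as the only symmetric channel outside this range is the noiseless one $p_{00}=p_{11}=1$, already covered by \eqref{eqcnoiseless}. Two symmetry facts drive everything: (a) $h(p_{00})=h(p_{11})=h(p_{01})=h(p_{10})$, and (b) $p_{01}=1-p_{11}$, which gives the reflection identity $\KL z{p_{01}}=\KL{1-z}{p_{11}}$ for all $z\in[0,1]$. From (a) the Chen--Scarlett entropy term $\ccentre(d,\theta)$ loses its $\theta$-dependence and, substituting $p_{10}=p_{01}$ into \eqref{eq:cscarlettchen}, equals our $\ccentre(d)$ verbatim. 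Hence the lemma reduces to the single identity $\cchen(d,\theta)=\cbound(d,\theta)$, and taking maxima of the two pairs of terms then yields $\cchenges(d,\theta)=\max\{\cbound(d,\theta),\ccentre(d)\}$.

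To prove $\cchen(d,\theta)=\cbound(d,\theta)$ I would rewrite each side as the critical value of one common variational problem. Unravelling \eqref{eqcyz}--\eqref{eq_def_zy}, $\cbound(d,\theta)$ is the least $c>\cyz(d,\theta)$ such that
\begin{align*}
cd(1-\theta)\bc{\KL y{\exp(-d)}+y\KL{\fz_{c,d,\theta}(y)}{p_{01}}}\ge1\qquad\text{for all }y\in\cY(c,d,\theta),
\end{align*}
where $\fz_{c,d,\theta}(y)\in[p_{01},p_{11}]$ is the unique root of the $\theta$-equation \eqref{eq_def_zy}; membership $y\in\cY(c,d,\theta)$ just asks that there be ``room'' in \eqref{eq_def_zy}, and $\fz_{c,d,\theta}(y)$ is the value of $z$ using up exactly that room. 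On the Chen--Scarlett side, the pair $(y,z)$ in \eqref{eq:cscarlettchen1} is tied to the ``infected'' side through the first entry of the outer $\max$, the inner $\min_{y'}$ carries the ``uninfected'' exponent, and the substitution $z'(z,y,y')=\tfrac12+\tfrac{y(2z-1)}{2y'}$ encodes the constraint $y'(2z'-1)=y(2z-1)$; under symmetry this is precisely equality of the untainted-test log-likelihood contributions, since $\log(p_{11}/p_{01})=-\log(p_{10}/p_{00})$ makes that contribution depend on a profile only through $y(2z-1)$.

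The reconciliation I have in mind proceeds in three steps. First, show that at the critical $c$ the first entry of the Chen--Scarlett outer $\max$ is attained with equality at the optimal pair $(y^\star,z^\star)$ --- strictly smaller would allow decreasing $c$, strictly larger would be infeasible --- using the strict convexity of $y\mapsto\KL y{\exp(-d)}$ and the strict monotonicity of $z\mapsto\KL z{p_{11}}$ on $[p_{01},p_{11}]$, the very ingredients that make $\fz$ well defined after \eqref{eq_def_zy}. This identifies $z^\star=\fz_{c,d,\theta}(y^\star)$ and restricts the problem to the one-parameter family $y\in\cY(c,d,\theta)$. Second, evaluate the remaining inner minimisation over $y'$: using $\KL{z'}{p_{01}}=\KL{1-z'}{p_{11}}$, the objective $\KL{y'}{\exp(-d)}+y'\KL{z'}{p_{01}}$ is convex in the natural reparametrisation, so a Lagrange-multiplier computation gives its minimiser in closed form; one substitutes back and checks that the resulting one-parameter function of $y$ coincides with the integrand $\KL y{\exp(-d)}+y\KL{\fz(y)}{p_{01}}$ of \eqref{eqsep1}. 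Third, match the surviving normalisations --- the factor $(1-\theta)d$ in $\cchen(d,\theta)^{-1}$ against $cd(1-\theta)$ and the threshold ``$\ge1$'' in \eqref{eqsep1} --- and verify that the regime in which the infected term alone forces $c$ up reproduces our $\cyz(d,\theta)$ via \eqref{eqcyz}; these are one-line checks.

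I expect the delicate part to be the second step: keeping track of the feasible range $y'\in[|y(2z-1)|,1]$ and of where $z'$ lands, and verifying rigorously that the inner minimum equals the expression in \eqref{eqsep1} rather than some strictly smaller quantity. The convexity of $\KL\cdot{\exp(-d)}$ and of $z\mapsto y\KL z{p_{01}}$, together with the $z'\mapsto1-z'$ symmetry that the symmetric channel confers on the inner objective, should make this go through, and it is here that the assumption $p_{00}=p_{11}$ is genuinely used and the bookkeeping is heaviest. The remaining loose ends --- attainment of the infima (compactness after restricting $y$ to the closure of $\cY(c,d,\theta)\subset(0,1)$) and the degenerate case in which $\cY(c,d,\theta)$ shrinks to a point, which is exactly $c=\cyz(d,\theta)$ --- are routine.
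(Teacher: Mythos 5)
Your reduction to the single identity $\cchen(d,\theta)=\cbound(d,\theta)$, and your identification of the two symmetry facts that drive it, match the paper. The easy inequality $\cbound\le\cchen$ also follows from your setup by taking $y'=y$ (so that $z'(z,y,y)=z$). But your Step 2 contains a genuine gap: the claim that the inner minimisation $\min_{y'}\bc{\KL{y'}{\exp(-d)}+y'\KL{z'(z,y,y')}{p_{01}}}$ evaluates to the integrand $\KL y{\exp(-d)}+y\KL{\fz(y)}{p_{01}}$ of \eqref{eqsep1} is false in general. Writing $w=y(2z-1)$, so that $y'z'=(y'+w)/2$ and $y'(1-z')=(y'-w)/2$, the stationarity condition of the inner objective at $y'=y$ works out to
\begin{align*}
\frac{y(1-\eul^{-d})}{(1-y)\eul^{-d}}\sqrt{\frac{z(1-z)}{p_{01}(1-p_{01})}}=1,
\end{align*}
a codimension-one relation between $y$ and $z$ that cannot hold simultaneously for all $y\in\cY(c,d,\theta)$ with $z=\fz(y)$. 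So the inner minimum is typically attained at some $\hat y'\neq y$ and is \emph{strictly smaller} than the diagonal value $\KL y{\exp(-d)}+y\KL z{p_{01}}$ — exactly the possibility you flag as the ``delicate part''. Your worry is justified, and the Lagrange computation you propose will not return the \eqref{eqsep1} integrand.

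The correct mechanism, and the one the paper uses, is not to evaluate the inner minimum at all. By symmetry ($p_{11}=1-p_{01}$) one has $\KL{z}{p_{01}}-\KL{z}{p_{11}}=(1-2z)\log\bc{p_{01}/(1-p_{01})}$, so along the constraint curve $y'(2z'-1)=y(2z-1)$ the \emph{difference} between the ``uninfected'' exponent $\KL{y'}{\exp(-d)}+y'\KL{z'}{p_{01}}$ and the ``infected'' exponent $\KL{y'}{\exp(-d)}+y'\KL{z'}{p_{11}}$ is invariant. Hence if some $(y,z,y')$ witnesses $c<\cchen(d,\theta)$ while $c\ge\cbound(d,\theta)$, the four resulting inequalities (infected exponent $<\theta$ and uninfected exponent $\ge1$ at $(y,z)$; uninfected exponent $<1$ and infected exponent $\ge\theta$ at $(y',z')$) force the invariant difference to be both $>1-\theta$ and $<1-\theta$, a contradiction. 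In other words: whenever the inner minimum over $y'$ drops below the diagonal value, the infected exponent at $(\hat y',\hat z')$ drops by the same amount, so the point $(\hat y',\hat z')$ itself (with the diagonal choice $y''=\hat y'$) already violates the $\cbound$ condition. You have the right symmetry identity in hand; you need to use it to transfer violations back to the diagonal rather than to compute the inner minimum in closed form.
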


\begin{proof}
	The definition \eqref{eq:cscarlettchen1} of $\cchen(d)$ can be equivalently rephrased as follows:
	\begin{align*}
		\cchen(d,\theta) =& \inf \big\{c>0: \forall y,z\in\left[0,1\right] \forall y'\in\brk{\abs{y(2z-1)}, 1}: \\&\hspace{20mm} \left( f_1(c,d,\theta, y, z) \geq \theta  \lor  f_2(c,d,\theta, y, z, y', z'(z,y,y')) \geq 1 \right) \big\}, \qquad \text{ where}\\	
		f_1(c,d,\theta, y, z) = & c d (1-\theta) \left( \KL{y}{\exp(-d)} + y \KL{z}{p_{11}} \right),\\
		f_2(c,d,\theta, y, z, y', z') = & c d (1-\theta) \left( \KL{y'}{\exp(-d)} + y' \KL{z'(z,y,y')}{p_{01}} \right).
	\end{align*}
	Consequently, $c<\cchen(d,\theta)$ iff
	\begin{align*}
		\exists y, z, y' \text{ s.t. }
		f_1(c,d,\theta, y, z) < \theta \text { and } 	
		f_2(c,d,\theta, y, z, y', z'(z,y,y')) < 1.
	\end{align*}
	Recall that $c<\cbound(d,\theta)$ iff
	\begin{align*}
		\exists y, z \text{ s.t } f_1(c,d,\theta, y, z) < \theta \text { and } f_2(c,d,\theta, y, z, y,z) < 1.
	\end{align*}	
	Since $z'(z,y,y) = z$ we conclude that if $c<\cbound(d,\theta)$, then $c<\cchen(d,\theta)$.
	Hence, $\cbound(d,\theta)\leq\cchen(d,\theta)$.

	To prove the converse inequality, we are going to show that any $c<\cchen(d,\theta)$ also satisfies $c<\cbound(d,\theta)$.
%
	Hence, assume for contradiction that $c < \cchen(d,\theta)$ and $c \geq \cbound(d,\theta)$. 
	Then the following four inequalities hold:
	\begin{align}
		c d (1-\theta) \left( \KL{y}{\eul^{-d}} + y \KL{z}{p_{11}} \right) <& \theta, &
		c d (1-\theta) \left( \KL{y}{\eul^{-d}} + y \KL{z}{p_{01}} \right) \geq& 1,  \label{eq:ineqy}\\
			c d (1-\theta) \left( \KL{y'}{\eul^{-d}} + y' \KL{z'(z,y,y')}{p_{01}}\right) < & 1, & 
			c d (1-\theta) \left( \KL{y'}{\eul^{-d}} + y' \KL{z'}{p_{11}} \right) \geq& \theta.\label{eq:ineqy'} 
	\end{align}
	The two inequalities on the left are a direct consequence of $c < \cchen(d,\theta)$. 
	Note that if one of the right two inequalities is violated then $c < \cbound(d,\theta)$.
	Combining the inequalities of \eqref{eq:ineqy} leads us to
	\begin{align}\nonumber
		c d (1-\theta) \left( \KL{y}{\eul^{-d}} + y \KL{z}{p_{01}} \right) - &  c d (1-\theta) \left( \KL{y}{\eul^{-d}} + y \KL{z}{p_{11}} \right) 
		\\ = c d (1- \theta )  y \left(  \KL{z}{p_{01}} - \KL{z}{1- p_{01}}\right) = & c d (1-\theta) y (1-2z) \log\left(\frac{p_{01}}{1-p_{01}}\right) >  1-\theta  \label{eq:cSCProofFirst}.
	\end{align} 
	The remaining inequalities given by \eqref{eq:ineqy'} imply
	\begin{align*}	
		c d (1-\theta) \left( \KL{y'}{\eul^{-d}} + y' \KL{z'(z,y,y')}{p_{01}}\right) -&  c d (1-\theta) \left( \KL{y'}{\eul^{-d}} + y' \KL{z'}{p_{11}} \right)\\
		=  c d (1-\theta) y' \left( \KL{z'(z,y,y')}{p_{01}} - \KL{z'(z,y,y')}{1-p_{01}} \right)  = &  c d (1-\theta) y' \frac y {y'} (1-2z) \log\left(\frac{p_{01}}{1-p_{01}}\right)
		 < 1-\theta,
	\end{align*}
	which contradicts \eqref{eq:cSCProofFirst}.
\end{proof}

\end{appendix}

\end{document}